\numberwithin{equation}{section}
\numberwithin{equation}{subsection}
\renewcommand*{\theequation}{%
  \ifnum\value{subsection}=0 %
    \thesection
  \else
    \thesubsection
  \fi
  .\arabic{equation}%
}
\def\RR{{\mathbb{R}}}
\def\NN{{\mathbb{N}}}
\def\ZZ{{\mathbb{Z}}}
\def\dd{{\mathrm{d}}}
\def\DD{{\mathrm{D}}}
\def\diam{\mathrm{diam}\,}
\def\signN{{{\hat{n}}}}
\newcommand{\bpt}[1]{{\bm{#1}}}
\newcommand{\taubp}[0]{{\mathbf{T}}}
\newcommand{\forbidden}[0]{\textsc{Inaccessible}}
\newcommand{\threemat}[6]{\left(\begin{array}{ccc}
#1 & #4 & #6\\
 & #2 & #5\\
 & & #3
\end{array}\right)}
\newcommand{\sixmat}[1]{\left(\begin{array}{ccc}
#1
\end{array}\right)}
\newcommand\numberthis{\addtocounter{equation}{1}\tag{\theequation}}
\declaretheorem[name=Lemma, style=plain, numberwithin=section]{lemma}
\declaretheorem[name=Definition, style=definition, sibling=lemma]{definition}
\declaretheorem[name=Remark, style=remark, sibling=lemma ]{remark}
\declaretheorem[name=Theorem, style=plain]{thm}
\declaretheorem[name=Proposition, style=plain, sibling=lemma]{proposition}
\declaretheorem[name=Question, style=plain, sibling=lemma]{question}
\newenvironment{hlemma}[1]
  {  \innerhlemma}
  {\endinnerhlemma}
\newenvironment{hthm}[1]
  {  \innerhthm}
  {\endinnerhthm}
\newenvironment{hproposition}[1]
  {  \innerhproposition}
  {\endinnerhthm}
\def\[{\begin{equation*}}
\def\]{\end{equation*}}
\author{Bernhard~Brehm\footnote{Free University Berlin, \texttt{bbrehm@math.fu-berlin.de}}}
\title{Bianchi \textsc{VIII} and \textsc{IX} vacuum cosmologies: 
Almost every solution forms particle horizons and converges to the Mixmaster attractor}
\date{\vspace{-5ex}}
\begin{document}
\maketitle
\begin{abstract}
Bianchi models are posited by the BKL picture to be essential building blocks towards an understanding of generic cosmological singularities. 
We study the behaviour of spatially homogeneous anisotropic vacuum spacetimes of Bianchi type \textsc{VIII} and \textsc{IX}, as they approach the big bang singularity. 

It is known since 2001 that generic Bianchi \textsc{IX} spacetimes converge towards the so-called Mixmaster attractor as time goes towards the singularity. We extend this result to the case of Bianchi \textsc{VIII} vacuum.

The BKL picture suggests that particle horizons should form, i.e.~spatially separate regions should causally decouple. 
We prove that this decoupling indeed occurs, for Lebesgue almost every Bianchi \textsc{VIII} and \textsc{IX} vacuum spacetime.
\end{abstract}

\setcounter{tocdepth}{2}
\tableofcontents

\section{Introduction}\label{sect:intro}
\paragraph{Spatially homogeneous cosmological models.}
The behaviour of cosmological models is governed by the Einstein field equations, coupled with equations describing the presence of matter. Simpler models are obtained under symmetry assumption. The class of models studied in this work, Bianchi models, assume spatial homogeneity, i.e.~``every point looks the same''. Then, one only needs to describe the behaviour over time of any single point, and the partial differential Einstein field equations become a system of ordinary differential equations. 

Directional isotropy assumes that ``every spatial direction looks the same''. This leads to the well-known FLRW (Friedmann-Lemaitre-Robertson-Walker) models. These models describe 
an initial (``big bang'') singularity, followed by an expansion of the universe, slowed down by ordinary and dark matter and accelerated by a competing positive cosmological constant (``dark energy'').

We will assume spatial homogeneity, but relax the assumption of directional isotropy.
Spatial homogeneity assumes that there is a Lie-group $G$ of spacetime isometries, which foliates the spacetime into three-dimensional space-like hypersurfaces on which $G$ acts transitively: For every two points $\bpt x, \bpt y$ in the same hypersurface there is a group element $f\in G$ such that $f\cdot \bpt x=\bpt y$. The resulting ordinary differential equations depend on the Lie-algebra of $G$, the so-called Killing fields. 
The three-dimensional Lie-algebras have been classified by Luigi Bianchi in 1898, hence the name``Bianchi-models''; for a commented translation see \cite{Bianchi2001, jantzen2001editor}, and for a modern treatment see \cite[Section 1]{wainwright2005dynamical}.

The two most studied classes of spatially homogeneous anisotropic cosmological models are the Bianchi-types $\textsc{IX}$ ($so(3)$) and $\textsc{VIII}$ ($sl(2,\RR)$), which are the focus of this work.
Both of these models exhibit a big-bang like singularity in at least one time-direction, and a universe that initially expands from this singularity, until, in the case of Bianchi \textsc{IX}, it recollapses into a time-reversed big bang (``big crunch'')
The big-bang singularity is present even in the vacuum case, where matter is absent and only gravity self-interacts. According to conventional wisdom, ``matter does not matter'' near the singularity. For this reason, we simplify our analysis by considering only the vacuum case.

Note that the symmetry assumptions already restrict the global topology of the space-like hypersurfaces, and that the isotropic FLRW-models are not contained as a special case: The only homogeneous isotropic vacuum model is flat Minkowski space.

For a detailed introduction to Bianchi models, we refer to \cite{wainwright2005dynamical}. A short derivation of the governing ordinary differential Wainwright-Hsu equations \eqref{eq:ode} is given in Section \ref{sect:append:derive-eq}, and physical interpretations of some of our results are given in Section \ref{sect:gr-phys-interpret}. For an excellent survey on Bianchi cosmologies, we refer to \cite{heinzle2009mixmaster}, and for further physical questions we refer to \cite{uggla2003past, heinzle2009cosmological}.
%
\paragraph{The Taub-spaces.}
The dynamical behaviour of Bianchi \textsc{VIII} and \textsc{IX} spacetimes is governed by the so-called Wainwright-Hsu equations. 
This dynamical system contains an invariant set $\mathcal T$ of codimension two, which we call the Taub-spaces. Solutions in this set are also called Taub-NUT spacetimes, or LRS (locally rotationally symmetric) spacetimes. The latter name is descriptive, in the sense that these spacetimes have additional (partial, local) isotropy. Taub spacetimes behave, in several ways, different from general (i.e.~non-Taub) Bianchi spacetimes. For a detailed description, we refer to \cite{ringstrom2001bianchi}.

\paragraph{The Mixmaster attractor.}
The Bianchi dynamical system also contains an invariant set $\mathcal A$, called the Mixmaster attractor, consisting of Bianchi Type \textsc{II} and \textsc{I} solutions. There are good heuristic arguments that $\mathcal A$ really is an attractor for time approaching the singularity, and that the dynamics on and near $\mathcal A$ can be considered chaotic (sometimes also called ``oscillatory''). 
It has been rigorously proven only in Bianchi \textsc{IX} models that $\mathcal A$ is actually attracting, with the exception of some Taub solutions, c.f.~\cite{ringstrom2001bianchi}. 

\paragraph{Stable Foliations.}
One may ask for a more precise description of how solutions get attracted to $\mathcal A$. 
Certain heteroclinic chains in $\mathcal A$ are known to attract hypersurfaces of codimension one, c.f.~\cite{liebscher2011ancient, beguin2010aperiodic}.
Reiterer and Trubowitz \cite{reiterer2010bkl} claim related results, for a much wider class of heteroclinic chains, but with less focus on the regularity or codimension of the attracted sets.
This general class of constructions, i.e.~partial stable foliations over specific solutions in $\mathcal A$, is not the focus of this work. Instead, we describe and estimate solutions and the solution operator (i.e.~flow) directly, without explicitly focussing on the symbolic description of $\mathcal A$.

\paragraph{The question of particle horizons.}
One of the most salient features of relativity is \emph{causality}: The state of the world at some point in spacetime is only affected by states in its past light-cone and can only affect states in its future light-cone. 

Suppose for this paragraph that we orient our spacetime $M$ such that the big bang singularity is situated in the past.
Two points $\bpt p, \bpt q\in M$ are said to \emph{causally decouple} towards the singularity if their past light-cones are disjoint, i.e.~if there is no past event which causally influences both $\bpt p$ and $\bpt q$. The past communication cone of $\bpt p$ is defined to be the set of points, which do not causally decouple from $\bpt p$. The cosmic horizon, also called particle horizon, is the boundary of the past communication cone. Hence, everything beyond the horizon is causally decoupled.

\[\numberthis\label{eq:intro:comcone}\begin{aligned}
\omit\rlap{Past light cone of $\bpt p$:}&\\
J^-(\bpt p) &= \{\bpt q:\, \text{there is } \gamma:[0,1]\to M\,\,\text{with}\, \gamma(0)=\bpt p, \gamma(1)=\bpt q,\,\text{time-like past directed}\}\\
\omit\rlap{Future light cone of $\bpt p$:}&\\
J^+(\bpt p) &= \{\bpt q:\, \text{there is } \gamma:[0,1]\to M\,\,\text{with}\, \gamma(0)=\bpt p, \gamma(1)=\bpt q,\,\text{time-like future directed}\}\\
\omit\rlap{Past communication cone of $\bpt p$:}&\\
J^+(J^-(\bpt p)) &= \bigcup_{\bpt q \in J^-(\bpt p)}J^+(\bpt q) = \{\bpt q:\, {J^-(\bpt q)}\cap {J^{-}(\bpt p)}\neq\emptyset\}\\
\omit\rlap{Past cosmic horizon of $\bpt p$:}&\\
\partial J^+(J^-(\bpt p))& = \mathrm{closure}\,J^{+}(J^-(\bpt p))\,\setminus\, \mathrm{interior}\, J^{+}(J^-(\bpt p)). 
\end{aligned}\]
In Figure \ref{fig:intro:comm-cone}, we illustrate the formation of particle horizons. An example where no particle horizons form is given by the (flat, connected) Minkowski-space $M=\RR\times \RR^3$: There is no singularity, and $M= J^+(J^-(\bpt p))$ for all $\bpt p$, and hence $\partial J^+(J^-(\bpt p)) = \emptyset$.  

Apart from the question of convergence to $\mathcal A$, the next important physical question in the context of Bianchi cosmologies is that of locality of light-cones:
\begin{enumerate}
\item Do nonempty particle horizons $\partial J^+(J^-(\bpt p))\neq \emptyset$ form towards the singularity? Does this happen if $\bpt p$ is sufficiently near to the singularity?
\item Are the spatial hypersurfaces $\{t= t_0=\mathrm{const}\}\cap (J^+(J^-(\bpt p)), \partial J^+(J^-(\bpt p))$, considered as three-dimensional manifolds with boundary, homeomorphic to the three dimensional unit ball $(B_1(0), \partial B_1(0))$? 
\item Are the past communication cones of $\bpt p$ spatially bounded? Do they shrink down to a point, as $\bpt p$ and $t_0$ go towards the singularity?
\end{enumerate}
The first question is formulated completely independently of the foliation of the spacetime into space-like hypersurfaces. The second question depends on the foliation, but is at least easy to clearly state. The third question is not clearly stated here, because it requires us to choose a way of comparing spatial extents of the communication cones at different times. The subtleties of this are discussed in Section \ref{sect:gr-phys-interpret}. 

Since this work is concerned only with proving affirmative answers to these questions, we will conflate them: We say that a solution forms a particle horizon if all these questions are answered with ``yes''.

\begin{figure}[hbt]
\centering
\begin{subfigure}[b]{\textwidth}
\includegraphics[width=\textwidth]{./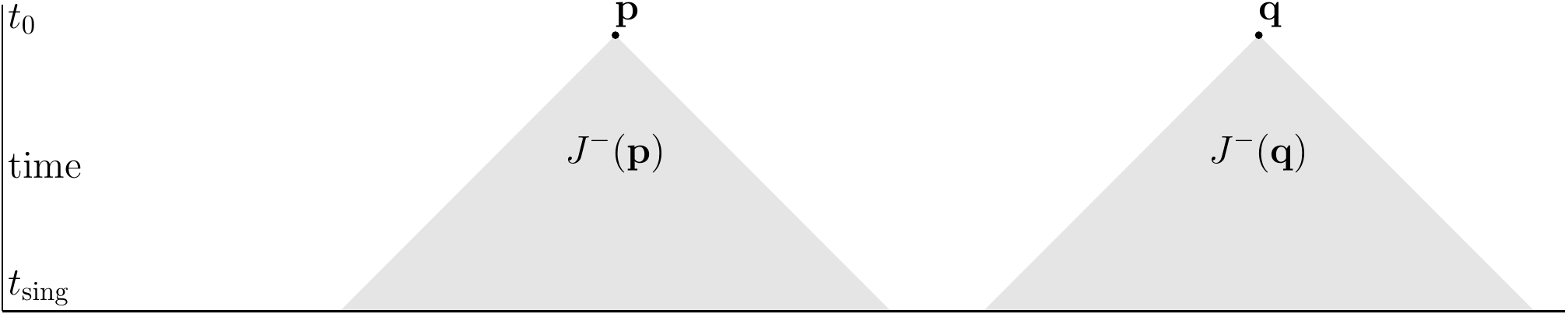}
\caption{Two points $\bpt p$, $\bpt q$ that decouple towards the singularity. Their past light-cones are disjoint.}
\end{subfigure}
\begin{subfigure}[b]{\textwidth}
\includegraphics[width=\textwidth]{./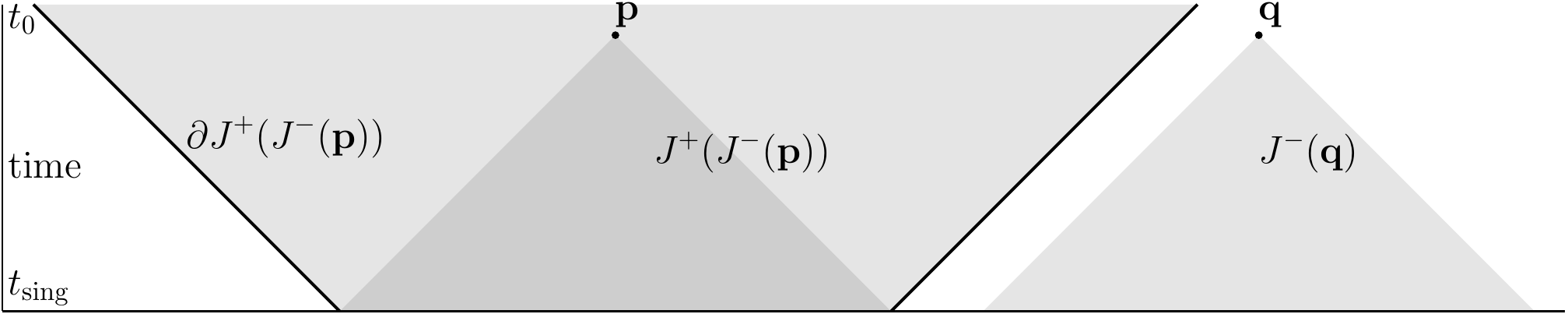}
\caption{Two points $\bpt p$, $\bpt q$ that decouple. The point $\bpt q$ lies outside the communication cone of $\bpt p$.}
\end{subfigure}
\begin{subfigure}[b]{\textwidth}
\includegraphics[width=\textwidth]{./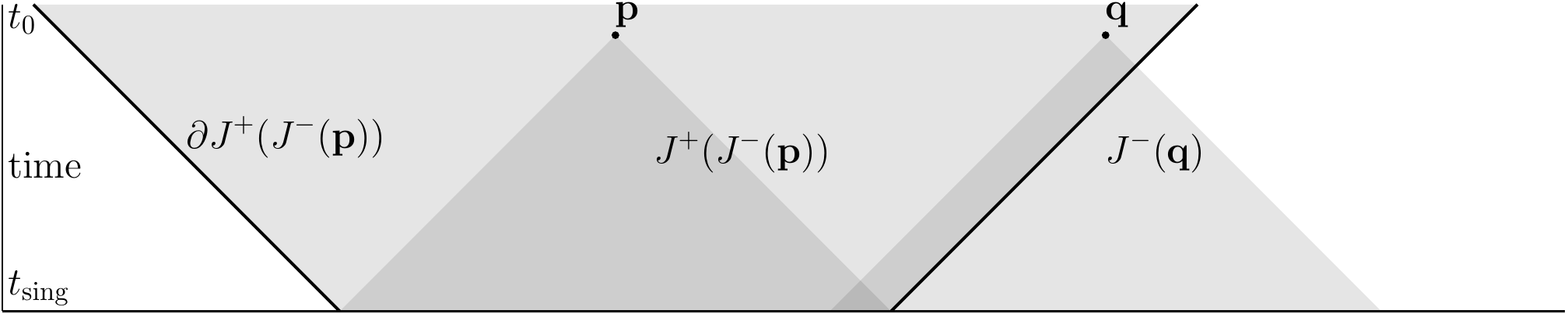}
\caption{Two points $\bpt p$, $\bpt q$ that do not decouple towards the singularity. Their past light-cones have nonempty intersection (the darkest shaded region). The point $\bpt q$ lies inside the communication cone of $\bpt p$.}
\end{subfigure}
\caption{Examples of decoupling and non-decoupling towards the singularity at $t=t_{\mathrm{sing}}$, and of particle horizons.}
\label{fig:intro:comm-cone}
\end{figure}

Originally, Misner \cite{misner1969mixmaster} suggested that no particle horizons should form in Bianchi \textsc{IX}. This was proposed as a possible explanation of the observed approximate homogeneity of the universe: If the homogeneity is due to past mixing, then different observed points in our current past light-cone must themselves have a shared causal past. Misner later changed his mind to the current consensus intuition that typical Bianchi \textsc{VIII} and \textsc{IX} solutions should form particle horizons. Some more details on this are given in Section \ref{sect:gr-phys-interpret}. Further discussion of these questions can be found in e.g.~\cite[Chapter 5]{wald1984general}, \cite[Chapter 5]{hawking1973large}.

%

\paragraph{The BKL picture.}
Spatially homogeneous spacetimes, and especially the question of particle horizons, play an essential role in the so-called BKL picture (also often called BKL-conjecture). The BKL picture is due to Belinskii, Khalatnikov and Lifshitz (\cite{belinskii1970oscillatory}), and describes generic cosmological singularities in terms of homogeneous spacetimes. This picture roughly claims the following: 
\begin{enumerate}
\item Generic cosmological singularities ``are curvature-dominated'', i.e.~behave like the vacuum case. More succinctly, ``matter does not matter''.
\item Generic cosmological singularities are ``oscillatory'', which means that the directions that get stretched or compressed switch over time.
\item Generic cosmological singularities ``locally behave like'' spatially homogeneous ones, especially Bianchi \textsc{IX} and \textsc{VIII}. By this, one means that:
\begin{enumerate}
\item Different regions causally decouple towards the singularity, i.e.~particle horizons form.
\item If one restricts attention to a single communication cone, then, as time goes towards the singularity, the spacetime can be well approximated by a homogeneous one.
\item Different spatial regions may have different geometry towards the singularity (since they decouple). This kind of behaviour has been described as ``foam-like''.
\end{enumerate}
\end{enumerate}
Boundedness of the communication cones, i.e.~formation of particle horizons, in spatially homogeneous models is a necessary condition for the consistency of the BKL-picture: $(3.a)$ claims that different spatial regions causally decouple towards the big bang, and $(3.b)$ claims that such decoupled regions behave ``like they were homogeneous''; hence, homogeneous solutions should better allow for different regions to spatially decouple.

\paragraph{Previous results.}
One way of viewing the formation of particle horizons is as a race between the expansion of the universe (shrinking towards the singularity) and the eventual blow-up at the singularity. If the blow-up is faster than the expansion, particle horizons form; otherwise, they don't. In the context of the Wainwright-Hsu equations, the question can be boiled down to: \emph{Do solutions converge to $\mathcal A$ sufficiently fast? If yes, then particle horizons form. If not, then the questions of particle horizons may have subtle answers.}

The aforementioned solutions constructed in \cite{liebscher2011ancient, beguin2010aperiodic}, with initial conditions on certain hypersurfaces of codimension one, all converge essentially uniformly exponentially to $\mathcal A$, which is definitely fast enough for particle horizons to form. 

Reiterer and Trubowitz claim in \cite{reiterer2010bkl} that the solutions constructed therein also converge to $\mathcal A$ fast enough for this to happen. The claimed results in \cite{reiterer2010bkl}
are somewhat nontrivial to parse; let us give a short overview: They construct solutions converging to certain parts of the Mixmaster attractor $\mathcal A$. These parts of the Mixmaster attractor have full (one-dimensional) volume, and all these constructed solutions form particle horizons. Claims about full-dimensional measure (or Hausdorff-dimensions, etc) are not made in \cite{reiterer2010bkl}. 

The solutions constructed in \cite{liebscher2011ancient, beguin2010aperiodic} were the first known nontrivial solutions that could be proven to form particle horizons. 
It is still unknown whether there exist nontrivial counterexamples, i.e. non-Taub solutions that fail to form particle horizons.

\paragraph{Main results.}
The first main result of this work extends Ringstr\"om's Bianchi \textsc{IX} attractor theorem (Theorem \ref{farfromA:thm:b9-attract}, c.f.~also \cite{ringstrom2001bianchi, heinzle2009new}) to the case of Bianchi \textsc{VIII} vacuum. It can be summarized in the following:
\begin{hthm}{\ref{thm:local-attractor}, \ref{thm:b9-attractor-global}, \ref{thm:b8-attractor-global} and \ref{thm:b8-attractor-global-genericity}}[Paraphrased Attractor Theorem]
With certain exceptions, solutions in Bianchi \textsc{IX} and \textsc{VIII} vacuum converge to the Mixmaster attractor $\mathcal A$.

Lower bounds on the speed of convergence are given, but are insufficient to ensure the formation of particle horizons.

The dynamics of the exceptional solutions is described. The set of initial conditions corresponding to the exceptional solutions is nongeneric, both in the sense of Lebesgue (it is a set of zero Lebesgue measure) and Baire (it is a meagre set).
\end{hthm}
Apart from the applicability to Bianchi \textsc{VIII}, this also extends the Ringstr\"om's previous result by providing lower bounds on the speed of convergence, and provides a new proof. 

The most important result of this work is the following:
\begin{hthm}{\ref{thm:horizon-formation}}[Almost sure formation of particle horizons]
Almost every solution in Bianchi \textsc{VIII} and \textsc{IX} vacuum forms particle horizons towards the big bang singularity. A more rigorous formulation of the theorem is on page \pageref{thm:horizon-formation}.
\end{hthm}
The question remains open of whether particle horizons form for initial conditions that are generic in the sense of Baire\footnote{A set is called generic in the sense of Baire, if it is \emph{co-meagre}, i.e.~it contains a countable intersection of open and dense sets. Then its complement is called \emph{meagre}. By construction, countable intersections of co-meagre sets are co-meagre and countable unions of meagre sets are meagre. Baire's Theorem states that co-meagre subsets of complete metric spaces are always dense and especially nonempty.}. 
We strongly suspect that the answer is no, i.e.~particle horizons fail to form for a co-meagre set of initial conditions, for reasons which will be explained in future work. 


%
\paragraph{Structure of this work.}
We will give the Wainwright-Hsu equations in Section \ref{sect:eqs}, as well as some notation and transformations that will be needed later on. The most referenced equations are also summarized in Appendix \ref{sect:eq-cheat-sheet}, page \pageref{sect:eq-cheat-sheet}, for easier reference. A derivation of the Wainwright-Hsu equations from the Einstein field equations of general relativity is given Section \ref{sect:append:derive-eq}.

An overview of the dynamical behaviour and some first proofs will be given in Section \ref{sect:farfromA}. Sections \ref{sect:near-A} and \ref{sect:near-taub} will describe in detail two different regimes in the neighborhood of $\mathcal A$; these two descriptions are synthesized into general attractor theorems in Section \ref{sect:global-attract}. 
The measure-theoretic results are all contained in Section \ref{sect:volume-form}. Section \ref{sect:gr-phys-interpret} relates dynamical properties of solutions to the Wainwright-Hsu equations to physical properties of the corresponding spacetime.

\paragraph{Strategy.}
Our analysis of the behaviour of solutions of the Wainwright-Hsu equation is structured around two invariant objects: The Mixmaster-attractor $\mathcal A$ and the Taub-spaces $\mathcal T$. We will measure the distances from these sets by functions $\delta(\bpt x) \sim d(\bpt x, \mathcal A)$ and $r(\bpt x)\sim d(\bpt x, \mathcal T)$.

There exist standard heuristics based on normal hyperbolicity, cf.~\cite{heinzle2009mixmaster}. According to these, solutions near $\mathcal A$: 
  \begin{enumerate}
  \item can be described by the so-called Kasner-map (see Section \ref{sect:farfromA}),
  \item converge exponentially to $\mathcal A$, and 
  \item their associated spacetimes form particle-horizons. 
  \end{enumerate}
These heuristics break down near the Taub-spaces $\mathcal T \cap \mathcal A$, where two eigenvalues pass through zero.

In Section \ref{sect:far-from-taub} we will formally prove the validity of the heuristic description of solutions near $\mathcal A$, that stay bounded away from $\mathcal T$.
More precisely, we will show in Proposition \ref{prop:farfromtaub-main} that for any $\epsilon_T>0$, there exists $\epsilon_d>0$ such that all the previously mentioned hyperbolicity heuristics apply for solutions $\bpt x: [0,T] \to \{\bpt y:\, \delta(\bpt y) <\epsilon_d,\,r(\bpt y) > \epsilon_T\}$.

To provide a complete picture of the dynamics, we still need to control solutions in a neighborhood of $\mathcal T$. This is the goal of Section \ref{sect:near-taub}.
It is well known, that $\mathcal T$ is transient, i.e.~solutions may approach $\mathcal T$ but cannot converge to $\mathcal T$; they must leave a neighborhood of $\mathcal T$ again. 
Every component of $\mathcal T\cap \mathcal A$ consists of two equilibria connected by a heteroclinic orbit $-\taubp \to +\taubp$; the standard heuristics described in Section \ref{sect:far-from-taub} break down at $+\taubp$, but continue to work near $-\taubp$ and near the heteroclinic orbit $-\taubp \to +\taubp$.
Thus, solutions can leave the region controlled by normal hyperbolicity only by approaching first $-\taubp$ and then $+\taubp$.

\label{paragraph:intro:strategy:quotients} The most important quantity in the local analysis near $+\taubp$ is the quotient $\frac{\delta}{r}$ of the distances to $\mathcal A$ and $\mathcal T$. As long as this quotient is small, we can prove exponential decay of $\delta(t)$ (Proposition \ref{prop:neartaub:main}) and slow growth of $r(t)$; hence, the quotient $\frac{\delta}{r}$ continues to decrease.
The structure of this kind of estimate is not surprising; in fact, 
it is trivial (by varying $\epsilon_d$ and $\epsilon_T$) to construct continuous nonnegative functions $\hat\delta$ and $\hat r$ with $\mathcal A=\{\bpt x: \hat\delta(\bpt x)=0\}$ and $\mathcal T=\{\bpt x: \hat r(\bpt x)=0\}$, such that estimates of the above form are true near $+\taubp$.
However, our particular choice of functions $\delta$ and $r$ (explicitly given in Section \ref{sect:polar-coords}) allows the \emph{same} quotient $\frac{\delta}{r}$ to be controlled near $-\taubp$ and near $+\taubp$.
We do not know of any reason to a priori expect this fortuitous fact; it is, however, easily verified by direct calculation.

The analysis from Section \ref{sect:near-taub} fits together with the analysis from Section \ref{sect:far-from-taub}: Solutions near $\mathcal A$ (i.e.~$\delta\ll 1$) that leave regions with $r>\epsilon_T$ to enter the neighborhood of $\mathcal T$ with $r\le\epsilon_T$ must have, at the moment where $r=\epsilon_T$, a very small quotient $\delta/r < \epsilon_T^{-1}\epsilon_d \ll 1$. 
This gives rise to a local attractor Theorem \ref{thm:local-attractor}: If $\widetilde \delta=\max(\delta, \frac{\delta}{r})$ is small enough for some initial condition, then $\widetilde \delta$ converges to zero. Hence, we provide a family of forward invariant neighborhoods of $\mathcal A\setminus \mathcal T$ attracted to $\mathcal A$. 
Using the local attractor Theorem \ref{thm:local-attractor}, it is rather straightforward to adapt the ``global'' (i.e.~away from $\mathcal A$) arguments from \cite{ringstrom2001bianchi} in order to produce the global attractor Theorems \ref{thm:b9-attractor-global} and \ref{thm:b8-attractor-global}.

Note that as in previous works \cite{ringstrom2001bianchi}, a ``global attractor theorem'' does not imply that \emph{all} solutions converge to $\mathcal A$, but rather it describes the exceptions, i.e.~solutions where the local attractor Theorem \ref{thm:local-attractor} does not eventually hold. 
In Bianchi \textsc{IX} models, i.e.~in Theorem \ref{thm:b9-attractor-global}, these exceptions are exactly solutions contained in the lower-dimensional Taub-spaces $\mathcal T$ (originally shown in \cite{ringstrom2001bianchi}; we provide an alternative proof).
In Bianchi \textsc{VIII} models, i.e.~in Theorem \ref{thm:b8-attractor-global}, these exceptions are either contained in the lower-dimensional Taub-spaces $\mathcal T$ or must follow the very particular asymptotics described in Theorem \ref{thm:b8-attractor-global}, case $\textsc{Except}$.
As we show in Theorem \ref{thm:b8-attractor-global-genericity}, this exceptional case $\textsc{Except}$ applies at most for a set of initial conditions, which is meagre (small Baire category) and has Lebesgue measure zero. It is currently unknown 
whether this exceptional case \textsc{Except} is possible at all.

These genericity results (Theorem \ref{thm:b8-attractor-global-genericity}) rely on measure theoretic considerations in Section \ref{sect:volume-form}: We provide a volume-form $\omega_4$, which is expanded under the flow (given in \eqref{eq:omega5-def}, \eqref{eq:omega4-def}). This is in itself not surprising: 
 Hamiltonian systems preserve their canonical volume form. Since the Wainwright-Hsu-equations derive from the (Hamiltonian) Einstein Field equations by an essentially monotonous time-dependent rescaling, we expect to find a volume form $\omega_4$, that is essentially monotonously expanding.
Such an expanding volume form has the useful property that all forward invariant sets must either have infinite or zero $\omega_4$-volume. 

We prove the genericity Theorem \ref{thm:b8-attractor-global-genericity} by noting that the exceptional solutions form an invariant set; using their detailed description, we can show that its $\omega_4$-volume is finite and hence zero. 

The results the formation of particle horizons are also proved in Section \ref{sect:volume-form}.
In our language, the primary question is whether $\int_0^{\infty} \delta(t)\dd t <\infty$. If this integral is finite, then particle horizons form and the singularity is ``local''; this behaviour is both predicted by the BKL picture and required for its consistency. Heuristically, time spent away from $\mathcal T$ helps convergence of the integral (since $\delta$ decays uniformly exponentially in these regions), while time spent near $\mathcal T$ gives large contributions to the integral. Looking at our analysis near $+\taubp$ (Proposition \ref{prop:neartaub:main}), we get a contribution to the integral of order $\frac{\delta}{r^2}$. 
The local attractor Theorem \ref{thm:local-attractor} is insufficient to decide the question of locality, since it can only control the quotient $\frac{\delta}{r}$ (and hence the integral $\int \delta^2\dd t<\infty$). 

However, we can show by elementary calculation that the set $\{\bpt x:\,\delta>r^4 \}$ has finite $\omega_4$-measure. Using some uniformity estimates on the volume expansion, we can infer that the (naturally invariant) set $\textsc{Bad}$ of initial conditions that fail to have $\delta< r^4$ for all sufficiently large times, has finite and hence vanishing $\omega_4$-volume. Therefore, for Lebesgue almost every initial condition, $\delta < r^4$ holds eventually, allowing us to bound the contribution of the entire stay near $+\taubp$ by $\frac{\delta}{r^2}<\sqrt{\delta}$, which decays exponentially in the ``number of episodes near Taub-points'' (also called Kasner-eras).
This gives rise to Theorem \ref{thm:horizon-formation}: Lebesgue almost every initial condition in Bianchi \textsc{VIII} and \textsc{IX} forms particle horizons towards the big bang singularity.

A finer look at the integral even allows us to bound certain $L^p_{\textrm{loc}}(\omega_4)$ integrals in Theorem \ref{thm:horizon-formation-alpha-p}.

{
\paragraph{Acknowledgements}
This work has been partially supported by the Sonderforschungsbereich  ``Raum, Zeit, Materie''. This is a preliminary version of a dissertation thesis. For helpful comments and discussions, I would like to thank, in lexicographic order:
Lars Andersson,
Bernold Fiedler,
Hanne Hardering,
Julliette Hell,
Stefan Liebscher,
Alan Rendall,
Hans Ringstr\"om, and
Claes Uggla.

Comments, especially those pointing out major or minor errors, are particularly welcome.

}

\section{Setting, Notation and the Wainwright-Hsu equations}\label{sect:eqs}
The subject of this work, i.e.,~the behaviour of homogeneous anisotropic vacuum space-times with Bianchi Class A homogeneity under the Einstein field equations of general relativity, can be described by a system of ordinary differential equations, called the Wainwright-Hsu equations \eqref{eq:ode-unpacked}. 
 
In Section \ref{sect:equations}, we will introduce the Wainwright-Hsu ordinary differential equations and various auxiliary quantities and definitions, and provide a rough summary of their dynamics.
Then we transform the Wainwright-Hsu equations into polar coordinates in Section \ref{sect:polar-coords}, which are essential for the analysis in Section \ref{sect:near-taub}.

There are multiple equivalent formulations of the Wainwright-Hsu equations in use by different authors, which differ in sign and scaling conventions, most importantly the direction of time. This work uses reversed time, such that the big bang singularity is at $t=+\infty$. 
The relation of the Wainwright-Hsu equations to the Einstein equations of general relativity will be relegated to Section \ref{sect:append:derive-eq}. The relation between properties of solutions to the Wainwright-Hsu equations and physical properties of the corresponding spacetimes is discussed in Section \ref{sect:append:derive-eq}.
\paragraph{General Notations.}
In this work, we will often use the notation $\bpt x = (x_1,\ldots,x_n)$ in order to emphasize that a variable $\bpt x$ refers to a point and not to a scalar quantity. If we consider a curve $\bpt x(t)$ into a space where different coordinates have names, e.g.~$\bpt x: \RR\to \RR^5=\{(\Sigma_+,\Sigma_-,N_1,N_2,N_3)\}$, then we will in an abuse of notation write $N_1(t)=N_1(\bpt x(t))$ in order to refer to the $N_1$-coordinate of $\bpt x(t)$.

We will use $\pm$ to refer to either $+1$ or $-1$, and different occurrences of $\pm$ are always unrelated, such that e.g.~$(\pm,\pm,\pm)\in\{(+,+,+),(-,+,+), (+,-,+), (-,-,+),\ldots\}$. We will use $*$ to refer to either $+1$,$-1$ or $0$, also such that different occurrences of $*$ are unrelated. 

\subsection{Spatially Homogeneous Spacetimes}\label{sect:equations}
We study the behaviour of homogeneous spacetimes, also called Bianchi-models. These are Lorentz four-manifolds, foliated by space-like hypersurfaces on which a group of isometries acts transitively, subject to the vacuum Einstein Field equations. That is, we assume that we have a frame of four linearly independent vectorfields $e_0=\partial_t, e_1,e_2,e_3$, where $e_1,e_2,e_3$ are Killing fields, with dual co-frame $\dd t, \omega_1,\omega_2,\omega_3$, such that the metric has the form
\[
g = g_{00}(t)\dd t\otimes \dd t + g_{11}(t)\omega_1\otimes \omega_1 + g_{22}(t)\omega_2\otimes \omega_2 + g_{33}(t)\omega_3\otimes \omega_3,
\]
and the commutators (i.e.~the Lie-algebra of the spatial homogeneity) has the form
\[
[e_i,e_j]=\sum_k \gamma_{ij}^k e_k \qquad \gamma_{ij}^k=\signN_k \epsilon_{ijk},
\]
where $\epsilon_{ijk}$ is the usual Levi-Civita symbol ($\epsilon_{ijk}=+1$ if $(ijk)\in \{(123), (231), (312)\}$, $\epsilon_{ijk}=-1$ if $(ijk)\in \{(132), (321), (213)\}$ and $\epsilon_{ijk}=0$ otherwise). The signs $\signN_i\in \{+1,-1,0\}$ determine the Bianchi Type of the cosmological model, according to Table \ref{table:bianchi-types}.

The metric is described by the seven Hubble-normalized variables $H,\widetilde N_i, \Sigma_i$, with $i\in\{1,2,3\}$, according to 
\begin{equation}\label{eq:metric-in-wsh}
g_{00}= - \frac 1 4 H^{-2}\qquad g_{ii}=\frac 1 {48}\frac{H^{-2}}{\widetilde N_j \widetilde N_k},
\end{equation}
where $(i,j,k)$ is always assumed to be a permutation of $\{1,2,3\}$, and subject to the linear and sign constraints
\[\numberthis\label{eq:gauge-constraints}
\Sigma_1+\Sigma_2+\Sigma_3=0,\qquad H < 0, \qquad \widetilde N_i >0 \quad\text{for all }i\in\{1,2,3\}.\]
The variable $H$ corresponds to the Hubble scalar, i.e.~the expansion speed of the cosmological model, i.e.~ the mean curvature of the surfaces $\{t = \textrm{const}\}$ of spatial homogeneity. The ``shears'' $\Sigma_i$ correspond to the trace-free Hubble-normalized principal curvatures (hence, the linear ``trace-free'' constraint). The condition $H<0$ corresponds to our choice of the direction of time: We choose to orient time such that the universe is shrinking, i.e.~the  singularity (big bang) lies in the future; this unphysical choice of time-direction is just for convenience of notation. 

The vacuum Einstein Field equations state that the space-time is Ricci-flat. If we express the normalized trace-free principal curvatures $\Sigma_i$ as time-derivatives of the metric variables $\widetilde N_i$, then the the Einstein Field equations become the Wainwright-Hsu equations \eqref{eq:ode-from-gr}, which are a system of seven ordinary differential equations, subject to one linear constraint equation ($\Sigma_1+\Sigma_2+\Sigma_3=0$) and one algebraic equation, called the Gauss-constraint $G=1$ \eqref{eq:gauge-constraints}): 

\[\numberthis\label{eq:ode-from-gr}\begin{aligned}
H'&=\frac 1 2 (1+2\Sigma^2)H\\
\Sigma_i' &= (1-\Sigma^2)\Sigma_i +\frac 1 2 S_i\\
\widetilde N_i'&=-(\Sigma^2+\Sigma_i)\widetilde N_i\\
 1 &\overset{!}{=} \Sigma^2+N^2 := G,
\end{aligned}\]
where we used the shorthands
\[\begin{aligned}
N_i &:= \hat n_i \widetilde N_i \\
\Sigma^2 &:= \frac{1}{6}(\Sigma_1^2+\Sigma_2^2+\Sigma_3^2) \\
N^2&:= N_1^2+N_2^2+N_3^2 -2(N_1N_2+N_2N_3+N_3N_1)\\
S_i &:= 4\left(N_i(2N_i - N_j -N_k)-(N_j-N_k)^2\right).
\end{aligned}\]
\begin{table}
\fbox{ \[\begin{array}{cclcrclc}
\signN_1 & \signN_2 & \signN_3 \qquad & \text{Bianchi Type} &\qquad
\signN_1 & \signN_2 & \signN_3 \qquad & \text{Bianchi Type}\\
+&+&+\qquad& \text{\textsc{IX}}&\qquad
+&-&+\qquad& \text{\textsc{VIII}}\\
+&+&0&\text{$\textsc{VII}_0$} &
+&-&0& \text{$\textsc{VI}_0$}\\
+&0&0& \text{\textsc{II}} &
0&0&0& \text{\textsc{I}}.
\end{array}\]}
\caption[Bianchi Types]{Bianchi Types of Class A, depending on $\signN_i$ (up to permutation and simultaneous sign-reversal)}
\label{table:bianchi-types}
\end{table}
\subsection{The Wainwright-Hsu equations}
The equation for $H$ is decoupled from the remaining equations. Thus, we can drop the equation for $H$, solve the remaining equations, and afterwards integrate to obtain $H$. Likewise, we can stick with the equations for $N_i$ instead of $\widetilde N_i$, such that $\hat n_i = \mathrm{sign}\, N_i$; for Bianchi-types \textsc{VIII} and \textsc{IX} this already determines the metric, and for the lower Bianchi types we can again integrate afterwards. This yields a standard form of the Wainwright-Hsu equations from \eqref{eq:ode-from-gr}, as used in \cite{heinzle2009mixmaster}, \cite{heinzle2009new}, up to constant factors. The most useful equations are also summarized in Section \ref{sect:eq-cheat-sheet}.

It is useful to solve for the linear constraint $\Sigma_1+\Sigma_2+\Sigma_3=0$, introducing $\bpt \Sigma=(\Sigma_+,\Sigma_-)$ by
\[\numberthis \label{eq:taub-def}\begin{gathered}
\taubp_1 = (-1,0) \qquad
\taubp_2 = \left(\frac 1 2, -\frac 1 2 \sqrt{3}\right) \qquad
\taubp_3 = \left(\frac 1 2, \frac 1 2 \sqrt{3} \right)\\
\Sigma_i = 2\langle \taubp_i, \bpt \Sigma \rangle \qquad \Sigma_+ = -\frac 1 2 \Sigma_1 \qquad \Sigma_- = \frac 1 {2\sqrt 3}(\Sigma_3-\Sigma_2),
\end{gathered}\]
which turns the vacuum Wainwright-Hsu differential equations into a system of five ordinary differential equations on $\RR^5=\{(\Sigma_+,\Sigma_-, N_1,N_2,N_3)\}=\{\bpt{\Sigma}, \bpt{N}\}$, with one algebraic constraint equation \eqref{eq:constraint}. The three points $\taubp_1,\taubp_2,\taubp_3$ are called Taub-points. We will, in an abuse of notation, consider the Taub-points both as points in $\RR^2$, and as points in $\RR^5$ (where all three $N_i$ vanish). The Wainwright-Hsu equations are then given by the differential equations
\begin{subequations}
\label{eq:ode}\begin{align}
N_i' &=  -(\Sigma^2 +2 \langle \taubp_i,\bpt\Sigma\rangle)N_i\label{eq:ode-ni} \\
&=-\left(\left|\bpt\Sigma+\taubp_i\right|^2 -1\right)N_i\label{eq:ode2-ni}\\
\bpt\Sigma' &= N^2 \bpt\Sigma + 2 \left(N_1^2 \taubp_1 + N_2^2\taubp_2 + N_3^2\taubp_3 + N_1N_2\taubp_3 + N_2N_3\taubp_1 + N_3N_1\taubp_2\right)\label{eq:ode2-sigma}\\
&=N^2 \bpt\Sigma + 2 \threemat{\taubp_1}{\taubp_2}{\taubp_3}{\taubp_3}{\taubp_1}{\taubp_2}[\bpt N, \bpt N],
\end{align}\end{subequations}
and the Gauss constraint equation
\begin{equation}\label{eq:constraint}
1\overset{!}{=} \Sigma^2 + N^2 =:G(\bpt x),
\end{equation}
where we used the shorthands
\[
\Sigma^2=\Sigma_+^2+\Sigma_-^2,\qquad N^2 = N_1^2+N_2^2+N_3^2 - 2(N_1N_2+N_2N_3+N_3N_1).
\]
We can unpack these equations with unambiguous notation into 
\begin{subequations}\label{eq:ode-unpacked}\begin{align}
N_1' &=  -(\Sigma^2 -2\Sigma_+)N_1 \\
N_2' &=  -(\Sigma^2 +\Sigma_+ - \sqrt{3}\Sigma_-)N_2 \\
N_3' &=  -(\Sigma^2 +\Sigma_+ + \sqrt{3}\Sigma_-)N_3 \\
\Sigma_+' &= N^2 \Sigma_+ -2 N_1^2  +N_2^2 + N_3^2 + N_1N_2 -2 N_2N_3 + N_1N_3 \\
\Sigma_-' &= N^2\Sigma_- +\sqrt{3}\left(- N_2^2 +N_3^2 + N_1N_2 -N_1N_3  \right),
\end{align}\end{subequations}
which is, up to constant factors, the form of the Wainwright-Hsu equations used in \cite{ringstrom2001bianchi}, \cite{liebscher2011ancient} and \cite{beguin2010aperiodic}.

It is occasionally useful to fully tensorize the Wainwright-Hsu equations, yielding the form
\begin{equation}\label{eq:ode-fulltensor}\begin{aligned}
\bpt N' &= -\langle\bpt \Sigma, \bpt \Sigma\rangle \bpt N - \bpt D[\bpt \Sigma,\bpt N]\\
\bpt \Sigma'&= Q[\bpt N, \bpt N]\bpt \Sigma + \bpt T[\bpt N, \bpt N]\\
G(\bpt \Sigma, \bpt N)&= \langle \bpt \Sigma, \bpt \Sigma\rangle + Q[\bpt N, \bpt N]\overset{!}{=}1,
\end{aligned}\end{equation}
where $Q:\RR^3\times \RR^3\to \RR$ and $\bpt T: \RR^3\times \RR^3\to\RR^2$ and $\bpt D: \RR^2\times \RR^3 \to \RR^3$. 
We write $Q$ as a $3\times 3$-matrix with entries in $\RR$ such that $Q[\bpt N,\bpt M]=\bpt N^T Q\bpt N=N^2$; we write $\bpt T$ as a similar $3\times 3$-matrix with entries in $\RR^2$. We write $\bpt D$ as a $3\times 3$-matrix with entries in $\RR^2$ such that $\bpt D[\bpt \Sigma, \bpt N] = (\bpt D\bpt N)\cdot \bpt \Sigma$, where $\bpt D\bpt N$ is the usual matrix product (with entries in $\RR^2$) and the dot-product is evaluated component wise. Then the tensors $Q,\bpt T, \bpt D$ can be written as

\begin{equation}\label{eq:ode-fulltensor-tensors}\begin{aligned}
Q= \sixmat{1 & -2 & -2\\  &1 & -2 \\ & & 1}\qquad
\bpt T = \sixmat{2\taubp_1 & 2\taubp_3 & 2\taubp_2\\ & 2\taubp_2 & 2\taubp_1\\  &  & 2\taubp_3} \qquad
\bpt D = \sixmat{2\taubp_1 &  & \\  & 2\taubp_2 & \\  &  & 2\taubp_3}.
\end{aligned}\end{equation}

\paragraph{Permutation Equivariance.}
The equations \eqref{eq:ode-from-gr} are equivariant under permutations $\sigma:\{1,2,3\}\to\{1,2,3\}$ of the three indices. This permutation invariance also applies to \eqref{eq:ode} as
\[
(\bpt \Sigma, N_1,N_2,N_3)\to (A_\sigma\bpt\Sigma, N_{\sigma(1)}, N_{\sigma(2)}, N_{\sigma(3)}),
\]
where $A_\sigma:\RR^2\to\RR^2$ is the linear isometry with  $A_\sigma^{T}\taubp_i = \taubp_{\sigma(i)}$. The equations are also equivariant under $(\bpt\Sigma, \bpt N)\to (\bpt \Sigma, -\bpt N)$, as can be seen directly from \eqref{eq:ode-fulltensor}.

\paragraph{Invariance of the Constraint.}
The signs of the $N_i$ are preserved under the flow, because $N_i'$ is a multiple of $N_i$ and therefore $N_i=0$ implies $N_i'=0$. The quantity $G$ is preserved under the flow. This can best be seen from \eqref{eq:ode-fulltensor}:
\[\begin{aligned}
\DD_t G &= 2\langle \bpt \Sigma, \bpt \Sigma'\rangle + Q[\bpt N, \bpt N'] + Q[\bpt N', \bpt N]\\
&= 2Q[\bpt N, \bpt N]\langle \bpt \Sigma, \bpt \Sigma\rangle + 2\bpt \Sigma \cdot \bpt T[\bpt N, \bpt N] - 2 \langle \bpt \Sigma, \bpt \Sigma\rangle Q[\bpt N, \bpt N] \\
&\qquad - Q[\bpt D[\bpt \Sigma, \bpt N], \bpt N] - Q[\bpt N, \bpt D[\bpt \Sigma,\bpt N]]\\
&= \Sigma\cdot \bpt N^T \left[2 \bpt T - \bpt D^T Q - Q \bpt D\right]\bpt N.
\end{aligned}\]
Using $\taubp_1+\taubp_2+\taubp_3=0$ and \eqref{eq:ode-fulltensor-tensors}, it is a simple matter of matrix multiplication to verify that $2\bpt T-\bpt D^TQ-Q\bpt D=0$ and hence $\DD_t G=0$. Therefore, sets of the form $\{\bpt x\in\RR^5: G(\bpt x) = c\}$ are invariant for any $c\in\RR$ and especially for the physical $c=1$. 

The set $\mathcal M=\{\bpt x \in \RR^5: G(\bpt x)=1\}$ is a smooth embedded submanifold; this is apparent from the implicit function theorem, since (if $\bpt x\neq 0$)
\[\begin{multlined}
\frac{1}{2}\dd G = \Sigma_+\dd\Sigma_+ + \Sigma_-\dd \Sigma_- \\+ (N_1-N_2-N_3)\dd N_1 + (N_2-N_3-N_1)\dd N_2 + (N_3-N_1-N_2)\dd N_1\neq 0.
\end{multlined}\]

\paragraph{Named invariant sets.}
There are several recurring important sets, which require names and are listed in Table \ref{table:inv-sets}.
\begin{table}
\fbox{\small \[\begin{aligned}
\mathcal M &= \{\bpt x\in \RR^5: \; G(\bpt x)=1\}&&\qquad\text{the physically relevant Phase-space}\\
\mathcal M_{\signN} &= \{\bpt x\in \mathcal M: \; \mathrm{sign}\,N_i = \signN_i\}&&\qquad\text{a specific octant of the Phase-space}\\
\mathcal K &= \mathcal M_{000}=\{\bpt x\in\mathcal M: \; \bpt N=0\} &&\qquad\text{the Kasner circle}\\
\mathcal A &= \{\bpt x\in\mathcal M: \; \text{at most one }N_i\neq 0\} && \qquad\text{the Mixmaster attractor}\\
\mathcal A_{\signN} &= \overline{\mathcal M_{\signN}}\cap \mathcal A && \qquad\text{a specific octant of $\mathcal A$}\\
\mathcal T_i &=\{\bpt x\in\mathcal M: N_j = N_k,\, \langle \taubp_j,\bpt \Sigma\rangle = \langle \taubp_k,\bpt \Sigma\rangle\}&&\qquad\text{a Taub-space}\\
\mathcal T &=\mathcal T_1 \cup \mathcal T_2 \cup\mathcal T_3 && \qquad\text{all three Taub-spaces}\\
\mathcal{TL}_i &=\{\bpt x\in\mathcal M: N_j = N_k,\,N_i=0,\,\bpt \Sigma = \taubp_i\}\subseteq \mathcal T_i&& \qquad\text{a Taub-line}\\
\mathcal{T}^G_i&=\{\bpt x\in\mathcal M: |N_j| = |N_k|,\, \langle \taubp_j,\bpt \Sigma\rangle = \langle \taubp_k,\bpt \Sigma\rangle\}&&\qquad\text{a generalized Taub-space; }\\
&&&\qquad\text{  only invariant if $\mathrm{sign}\,N_j=\mathrm{sign}\,N_k$}
\end{aligned}\]}
\caption[Named Subsets]{Named subsets. Here $(i,j,k)$ stands for a permutation of $\{1,2,3\}$ and $\signN\in\{+,0,-\}^3$.
All of these sets, except for $\mathcal{T}^G_i$, are invariant.}
\label{table:inv-sets}
\end{table}

The set $\mathcal M$ is invariant because $G$ is a constant of motion. The Taub-space $\mathcal T_i$ is invariant because of the equivariance under exchange of the two other indices $j$ and $k$. The invariance of the Taub-lines $\mathcal{TL}_i$ can be seen by considering \eqref{eq:ode-unpacked} for $i=1$ and applying the permutation invariance for $\mathcal{TL}_2$ and $\mathcal{TL}_3$. The generalized Taub-spaces $\mathcal T^G_i$ are not invariant if $\signN_j\neq\signN_k$. The other sets are invariant because the signs $\signN_i=\mathrm{sign}\,N_i$ are fixed.

Recall that the signs of the $N_i$ correspond to the Bianchi Type of the Lie-algebra associated to the homogeneity of the cosmological model and are given in Table \ref{table:bianchi-types} (up to index permutations).

\paragraph{Auxiliary Quantities.}
The following quantities turn out to be useful later on (where $(i,j,k)$ is a permutation of $(1,2,3)$):
\begin{subequations}\label{eq:delta-r-def}
\begin{align} 
\delta_i &= 2\sqrt{|N_jN_k|} \label{eq:delta-r-def-delta}\\
r_i &= \sqrt{(|N_j|-|N_k|)^2 + \frac{1}{3}\langle\taubp_j-\taubp_k, \bpt\Sigma\rangle^2}\label{eq:delta-r-def-r}\\
\nonumber \psi_i &\quad\text{such that:} \\
r_1\cos\psi_1&=\frac{1}{\sqrt{3}}\langle\taubp_{3}-\taubp_{2}, \bpt \Sigma\rangle=\Sigma_- 
&r_1\sin\psi_1 &= |N_2|-|N_3| \\
r_2\cos\psi_2&=\frac{1}{\sqrt{3}}\langle\taubp_{1}-\taubp_{3}, \bpt \Sigma\rangle=-\frac{\sqrt 3}{2}\Sigma_+ -\frac{1}{2}\Sigma_-
&r_2\sin\psi_2 &= |N_3|-|N_1| \\
r_3\cos\psi_3&=\frac{1}{\sqrt{3}}\langle\taubp_{2}-\taubp_{1}, \bpt \Sigma\rangle=\frac{\sqrt 3}{2}\Sigma_+ -\frac{1}{2}\Sigma_-
&r_3\sin\psi_3 &= |N_1|-|N_2|. 
\end{align}\end{subequations}
The auxiliary products $\delta_i^2$ can be used to measure the distance from the Mixmaster attractor $\mathcal A = \{\bpt x:\, \max_i \delta_i(\bpt x)=0\}$. The $r_i$ and can be used to measure the distance from the generalized Taub-space $\mathcal T_i^G = \{\bpt x:\,r_i(\bpt x)=0\}$, and the $(r_i,\psi_i)$-pairs form polar coordinates around the generalized Taub-spaces.

The products $\delta_i$ obey an especially geometric differential equation, similar to \eqref{eq:ode2-ni}:
\begin{equation}\label{eq:ode2-delta}
\delta_i' =  -\left(\left|\bpt\Sigma-\frac{\taubp_i}{2}\right|^2-\frac{1}{4}\right)\delta_{i}.
\end{equation}

%
%
\subsection{The Wainwright-Hsu equations in polar coordinates}\label{sect:polar-coords}
Near the generalized Taub-spaces $\mathcal T^G_i$, it is possible to use polar coordinates \eqref{eq:delta-r-def}. Without loss of generality we will only transform \eqref{eq:ode} into these coordinates around the Taub-space $\mathcal T^G_1$ (the other ones can be obtained by permuting the indices and rotating or reflecting $\bpt \Sigma$).

The use of polar coordinates near the Taub-spaces $\mathcal T_i$ for Bianchi \textsc{IX}, i.e. $\mathcal M_{+++}$, is by no means novel (c.f.~e.g.~\cite{ringstrom2001bianchi}, \cite{heinzle2009mixmaster}). However, to the best of our knowledge, polar coordinates around the generalized Taub-spaces $\mathcal T^G_i$ have not been used previously in the case where the generalized Taub-space fails to be invariant.

We only use use polar coordinates on $\mathcal M=\{\bpt x:\,\,G(\bpt x)=1\}$.
%
%
\paragraph{Polar Coordinates around the invariant Taub-spaces.}
Consider the case $\mathcal M_{*++}$, where $N_2,N_3>0$ are positive and we are interested in a neighborhood of $\mathcal T_1 = \{\bpt x:\, \Sigma_-=0,\,N_2-N_3=0\}$. The sign of $N_1$ does not significantly matter.

We use the additional shorthands
\[ N_- = N_2-N_3\qquad N_+=N_2+N_3,
\]
such that (with \eqref{eq:delta-r-def}):
\[\begin{aligned}
r_1\ge 0 &:& r_1^2 &= \Sigma_-^2 + N_-^2\\
\psi &:& N_-&= r_1\sin\psi&\Sigma_-&=r_1\cos\psi\\
&&N_+^2 &= N_-^2+\delta_1^2 &N^2&= N_-^2+N_1(N_1-2N_+).
\end{aligned}\]


\noindent This gives us the differential equations (using $\Sigma^2+N^2=1$):
\[\begin{aligned}
N_-' &= (N^2 -1 - \Sigma_+)N_- +\sqrt{3}\Sigma_-N_+\\
N_+'&= (N^2 -1 -\Sigma_+)N_+ +\sqrt{3}\Sigma_-N_-\\
\Sigma_-' &= N^2\Sigma_- - \sqrt{3}N_-\left(N_+ - N_1\right),
\end{aligned}\]
allowing us to further compute
\begin{subequations}\label{eq:neartaub-b9-q}
\begin{align}
\frac{r_1'}{r_1}&= \frac{\Sigma_-\Sigma_-' + N_-N_-'}{r_1^2} 
= N^2 - (\Sigma_++1) \frac{N_-^2}{r_1^2} +\sqrt{3} N_1 \frac{\Sigma_- N_-}{r_1^2}\label{eq:neartaub-b9-q:r}\\
\psi' &=\frac{\Sigma_-N_-' - N_-\Sigma_-'}{r_1^2}
=\sqrt{3}N_+ - (\Sigma_++1)\frac{N_-\Sigma_-}{r_1^2} -\sqrt{3}N_1\frac{N_-^2}{r_1^2}\label{eq:neartaub-b9-q:psi}\\
\frac{\delta_1'}{\delta_1} &= N^2-(\Sigma_++1)\label{eq:neartaub-b9-q:delta}\\
\partial_t\log \frac{\delta_1}{r_1}&= -(\Sigma_+ + 1)\frac{\Sigma_-^2}{r_1^2} -\sqrt{3}N_1\frac{\Sigma_- N_-}{r_1^2}.\label{eq:neartaub-b9-q:delta-r}
\end{align}\end{subequations}
%
%
Near $\taubp_1$, i.e. for $\Sigma_+\approx -1$, we can use $1+\Sigma_+=\frac{\Sigma_-^2+N^2}{1-\Sigma_+}$ in order to rearrange some terms in \eqref{eq:neartaub-b9-q}:
\begin{subequations}\label{eq:neartaub-b9-t}
\begin{align}
\frac{r_1'}{r_1} &= r_1^2\sin^2\psi \frac{-\Sigma_+}{1-\Sigma_+} + N_1\, h_r\label{eq:neartaub-b9-t:r}\\
\frac{\delta_1'}{\delta_1}&= \frac{-1}{1-\Sigma_+}r_1^2\cos^2\psi + \frac{-\Sigma_+}{1-\Sigma_+}r_1^2\sin^2\psi + N_1\,h_\delta \label{eq:neartaub-b9-t:delta}\\
\partial_t \log \frac{\delta_1}{r_1}&=\frac{-1}{1-\Sigma_+}r_1^2\cos^2\psi +N_1(h_\delta-h_r) \label{eq:neartaub-b9-t:delta-r}\\
\psi'&= \sqrt{3}r_1 \sqrt{\sin^2\psi + \frac{\delta_1^2}{r_1^2}} - \frac{r_1^2}{1-\Sigma_+} \sin\psi \cos\psi + N_1\sin\psi\, h_{\psi}, \label{eq:neartaub-b9-t:psi}
\end{align}\end{subequations}
where
\begin{subequations}\label{eq:neartaub-b9-t-h}\begin{align}
h_r &=+\sqrt{3}\frac{\Sigma_-N_-}{r_1^2} + (N_1-2N_+)\left(1+\frac{N_-^2}{(1-\Sigma_+)r_1^2}\right)\\
h_\delta &= (N_1-2N_+)\frac{-\Sigma_+}{1-\Sigma_+}\\
h_\psi &= -\sqrt{3}\sin\psi - \cos\psi \frac{N_1-2N_+}{1-\Sigma_+}.
\end{align}\end{subequations}
Let us point out that $|h_r|,|h_\delta|, |h_\psi| \le 5$ if $|N_1|, |N_+|, |N_-|\le \frac 1 2$ and $\Sigma_+<0$.
%
%
\paragraph{Polar Coordinates around the non-invariant generalized Taub-spaces.} 
Without loss of generality, assume that we are in $\mathcal M_{*+-}$, i.e.~$N_2>0>N_3$, and that we are interested in a neighborhood of $\mathcal T^G_1 = \{\bpt x:\, \Sigma_-=0,\,|N_2|-|N_3|=0\}$.

The assumption $N_2>0>N_3$ is incompatible with $N_3=N_2$; hence, $\mathcal M_{*+-}\cap \mathcal T_1=\emptyset$, which is why we have to work with $\mathcal T^G_1$. Since $\mathcal T^G_1$ is not invariant, we expect the corresponding equations for $r_1'$ and $\psi'$ to become singular near $r_1=0$.

There is a possible symmetry-based motivation for expecting usable polar coordinates around $\mathcal T^G_1$ that will be given at the end of this section. Regardless of this motivation, the usability of such polar coordinates is proven by our use of them.

We will proceed analogous to the case of $\mathcal T_1$, using the same names for quantities which fulfill the same function in this work, such that the definitions of e.g.~$N_+,N_-$ will depend on the signs $\signN_2,\signN_3$. Hence, we introduce shorthands
\[ N_- = |N_2|-|N_3|=N_2+N_3\qquad N_+=|N_2|+|N_3|=N_2-N_3,\]
such that (with \eqref{eq:delta-r-def}):
\[\begin{aligned}
r_1\ge 0 &:& r_1^2 &= \Sigma_-^2 + N_-^2\\
\psi &:& N_-&= r_1\sin\psi&\Sigma_-&=r_1\cos\psi\\
&&N_+^2 &= N_-^2+\delta_1^2 &N^2&= N_+^2+N_1(N_1-2N_-).
\end{aligned}\]

This gives us the differential equations (using $\Sigma^2+N^2=1$):
\[\begin{aligned}
N_-' &= (N^2 -1 - \Sigma_+)N_- +\sqrt{3}\Sigma_-N_+\\
N_+'&= (N^2 -1 -\Sigma_+)N_+ +\sqrt{3}\Sigma_-N_-\\
\Sigma_-' &= N^2\Sigma_- - \sqrt{3}\left(N_-N_+ - N_+N_1\right),
\end{aligned}\]
allowing us to further compute
\begin{subequations}\label{eq:neartaub-b8-q}
\begin{align}
\frac{r_1'}{r_1}&= \frac{\Sigma_-\Sigma_-' + N_-N_-'}{r_1^2} 
= N^2 - (\Sigma_++1) \frac{N_-^2}{r_1^2} +\sqrt{3} N_1 \frac{\Sigma_- N_+}{r_1^2} \label{eq:neartaub-b8-q:r}\\
\psi' &=\frac{\Sigma_-N_-' - N_-\Sigma_-'}{r_1^2}
=\sqrt{3}N_+ - (\Sigma_++1)\frac{N_-\Sigma_-}{r_1^2} -\sqrt{3}N_1\frac{N_-N_+}{r_1^2} \label{eq:neartaub-b8-q:psi}\\
\frac{\delta_1'}{\delta_1} &= N^2-(\Sigma_++1) \label{eq:neartaub-b8-q:delta}\\
\partial_t\log \frac{\delta_1}{r_1}&= -(\Sigma_+ + 1)\frac{\Sigma_-^2}{r_1^2} -\sqrt{3}N_1\frac{\Sigma_- N_+}{r_1^2}.\label{eq:neartaub-b8-q:delta-r}
\end{align}\end{subequations}
\noindent
Near $\taubp_1$, i.e. for $\Sigma_+\approx -1$, we can use $1+\Sigma_+=\frac{\Sigma_-^2+N^2}{1-\Sigma_+}$ in order to rearrange some terms in \eqref{eq:neartaub-b8-q}:

\begin{subequations}\label{eq:neartaub-b8-t}
\begin{align}
\frac{r_1'}{r_1} &= \frac{-\Sigma_+}{1-\Sigma_+}r_1^2\sin^2\psi +\delta_1^2\frac{\cos^2\psi-\Sigma_+}{1-\Sigma_+} + N_1\, h_r \label{eq:neartaub-b8-t:r}\\
\frac{\delta_1'}{\delta_1}&= \frac{-1}{1-\Sigma_+}r_1^2\cos^2\psi + \frac{-\Sigma_+}{1-\Sigma_+}r_1^2\sin^2\psi +\frac{-\Sigma_+}{1-\Sigma_+}\delta_1^2+ N_1\,h_\delta \label{eq:neartaub-b8-t:delta}\\
\partial_t \log \frac{\delta_1}{r_1}&=\frac{-1}{1-\Sigma_+}r_1^2\cos^2\psi - \delta_1^2\frac{\cos^2\psi}{1-\Sigma_+} + N_1(h_\delta-h_r) \label{eq:neartaub-b8-t:delta-r}\\
\psi'&= \sqrt{3}r_1 \sqrt{\cos^2\psi + \frac{\delta_1^2}{r_1^2}} - \frac{r_1^2+\delta_1^2}{1-\Sigma_+} \cos\psi \sin\psi + N_1\sin\psi\, h_{\psi},\label{eq:neartaub-b8-t:psi}
\end{align}\end{subequations}
where
\begin{subequations}\label{eq:neartaub-b8-t-h}\begin{align}
h_r &=-\sqrt{3}\frac{\Sigma_-N_+}{r_1^2} + (N_1-2N_-)\left(1-\frac{N_-^2}{(1-\Sigma_+)r_1^2}\right)\\
%
h_\delta &= (N_1-2N_-)\frac{-\Sigma_+}{1-\Sigma_+}\\
h_\psi &= -\sqrt{3}\sqrt{\sin^2\psi + \frac{\delta_1^2}{r_1^2}} - \sin\psi \frac{N_1-2N_-}{1-\Sigma_+}.
\end{align}\end{subequations}
Let us point out that, if $|N_1|, |N_+|, |N_-|\le \frac 1 2$ and $\Sigma_+<0$ and $\frac{\delta_1}{r_1}\le 1$, then $\frac{N_+}{r_1}\le \sqrt{2}$ and hence $|h_r|,|h_\delta|, |h_\psi| \le 5$.

\paragraph{Motivation for polar coordinates around $\mathcal T_1^G$.}
One possible motivation for a priori expecting useful equations from this approach is by a symmetry argument: The Taub-space is invariant since it is the fixed point space of the reflection $\sigma:(\Sigma_+,\Sigma_-, N_1,N_2,N_3)\to (\Sigma_+,-\Sigma_-, N_1, N_3, N_2)$, and \eqref{eq:ode} is equivariant under $\sigma$. This transformation $\sigma:\mathcal M_{\pm +-}\to \mathcal M_{\pm-+}$ does not map the quadrant $N_2>0>N_3$ into itself. Instead, we have $\mathcal T^G_1$ as the fixed point space of $\widetilde\sigma:(\Sigma_+,\Sigma_-, N_1,N_2,N_3)\to (\Sigma_+,-\Sigma_-, N_1, -N_3, -N_2)$.
The Wainwright-Hsu equations are not equivariant under this reflection $\widetilde\sigma$, and we therefore have no reason to expect the fixed-point space $\mathcal T^G_1$ of $\widetilde \sigma$ to be invariant. However, considering \eqref{eq:ode}, equivariance is only spoiled by terms of the form $N_2N_1$ and $N_3N_1$ changing their signs; hence, we expect $\mathcal T^G_1=\{\bpt x:\, \Sigma_-=0,\,|N_2|-|N_3|=0\}$ to be invariant up to terms of order $|N_2N_1|$ and $|N_3N_1|$. Such terms can be well controlled, as it will turn out in Section \ref{sect:near-taub}.
\section{Description of the Dynamics} \label{sect:farfromA}
We will now give an overview of the behaviour of trajectories of \eqref{eq:ode}. This overview will contain most of the classic results about Bianchi cosmological models. 

Our overview will be organized by first describing the simplest subsets named in Table \ref{table:inv-sets} and then progressing to the higher dimensional subsets, finally describing Bianchi Type \textsc{IX} ($\mathcal M_{+++}$) and Bianchi Type \textsc{VIII} ($\mathcal M_{+-+}$) solutions. Our approach in this section is very similar to \cite{ringstrom2001bianchi} and \cite{heinzle2009new}; unless explicitly otherwise stated, all observations in this section can be found therein.

\paragraph{A very short summary of relevant dynamics.}
The Kasner circle $\mathcal K$ is actually a circle and consists entirely of equilibria. The so-called Mixmaster attractor $\mathcal A$ consists of three $2$-spheres $\{\bpt x:\, \Sigma_+^2+\Sigma_-^2+N_i^2=1\,N_j=N_k=0\}$, which intersect in $\mathcal K$. Only half of these spheres are accessible for any trajectory, since the $\mathrm{sign}\,N_i$ are fixed. For this reason, these half-spheres are also called ``Kasner-caps'', i.e.~$\mathcal A_{+00}$ is the $N_1>0$-cap.

The dynamics on the Kasner-caps will be discussed in Section \ref{sect:lower-b-types}; each orbit in a Kasner-cap is a heteroclinic orbit connecting two equilibria on the Kasner-circle $\mathcal K$.

The long-time behaviour of the lower dimensional Bianchi-types (at least one $N_i=0$) is well-understood: All such solutions converge to an equilibrium $\bpt p \in\mathcal K$ as $t\to\infty$. The behavior in the highest-dimensional Bianchi Types \textsc{IX} and \textsc{VIII} is not yet fully understood, and is therefore of most interest in this work.

It is known (c.f.~\cite{ringstrom2001bianchi}) that Bianchi Type \textsc{IX} solutions that do not lie in a Taub-space converge to the Mixmaster attractor as $t\to+\infty$, i.e.~towards the big bang singularity. 
It has been conjectured that generic Bianchi Type \textsc{VIII} solutions share this behaviour; this will be proven in this work (Theorem \ref{thm:b8-attractor-global} and \ref{thm:b8-attractor-global-genericity}).

The question of particle horizons was already mentioned in the introduction and is further discussed from a physical viewpoint in Section \ref{sect:gr-phys-interpret}. In terms of the Wainwright-Hsu equations, the question can be formulated as (see Section \ref{thm:b8-attractor-global-genericity}, or c.f. e.g. \cite{heinzle2009future}):

\[
\text{Is }\quad
I(\bpt x) =   \max_i \int_{0}^\infty \delta_i(\phi(\bpt x, t))\dd t = 2\max_i\int_0^\infty \sqrt{|N_jN_k|}(t)\dd t<\infty\,?
\]
Here $\phi$ is the flow to \eqref{eq:ode}. The space-time associated to the solution $\phi(\bpt x, \cdot)$ forms a particle horizon if and only if $I<\infty$.

It is known that there exist solutions in Bianchi \textsc{IX} and \textsc{VIII}, where $I<\infty$ (c.f.~\cite{liebscher2011ancient}). It is not known, whether there exist any nontrivial solutions with $I=\infty$ (of course solutions starting in $\mathcal T$ that do not converge to $\mathcal A$ must have $I=\infty$). We prove that, in both Bianchi \textsc{IX} and \textsc{VIII} and for Lebesgue almost every initial condition, particle horizons develop ($I<\infty$) (Theorem \ref{thm:horizon-formation}).

\subsection{Lower Bianchi Types}\label{sect:lower-b-types}
\paragraph{Bianchi Type \textsc{I}: The Kasner circle.}
The smallest, i.e.~lowest dimensional, Bianchi-type is Type \textsc{I}, $\mathcal M_{000}=\mathcal K$, where all three $N_i$ vanish (see Table \ref{table:inv-sets}). By the constraint $N^2+\Sigma^2=1$, we can see that $\mathcal K$ is the unit circle in the $(\Sigma_+,\Sigma_-)$-plane, and consists entirely of equilibria.

The linear stability of these equilibria is given by the following
\begin{lemma}\label{farfromA:lemma:kasnermap-stability}
Let $\bpt p=(\Sigma_+,\Sigma_-,0,0,0)\in \mathcal K$; first consider the case $\bpt p\neq \taubp_i$ and without loss of generality $|\bpt p+\taubp_1|<1$. Then the vectorfield has one central direction given by $\partial_{\mathcal K}=(-\Sigma_-, \Sigma_+,0,0,0)$, one unstable direction given by $\partial_{N_1}=(0,0,1,0,0)$, and two stable directions given by $\partial_{N_2}=(0,0,0,1,0)$ and $\partial_{N_3}=(0,0,0,0,1)$.

The three Taub-points $\taubp_i$ have each one stable direction given by $\partial_{N_i}$ and three center directions given by $\partial_{\mathcal K}$, $\partial_{N_j}$ and $\partial_{N_k}$.
\end{lemma}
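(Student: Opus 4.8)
The plan is a straightforward linearization of the Wainwright--Hsu vectorfield at $\bpt p\in\mathcal K$, exploiting the block structure already visible in \eqref{eq:ode}--\eqref{eq:ode-unpacked}.

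\textbf{Step 1 (the Jacobian at $\bpt p$ is diagonal).} I would compute the $5\times 5$ Jacobian of the right-hand side of \eqref{eq:ode-unpacked} at $\bpt p=(\Sigma_+,\Sigma_-,0,0,0)$ and observe that every off-diagonal entry vanishes for structural reasons. Indeed, by \eqref{eq:ode-ni} each $N_i'$ is a scalar multiple of $N_i$, so $\partial_{N_j}N_i'=0$ for $j\neq i$ and $\partial_{\Sigma_\pm}N_i'=(\cdots)\,N_i=0$ at $\bpt p$; and in \eqref{eq:ode-unpacked} each $\Sigma_\pm'$ is, apart from the term $N^2\Sigma_\pm$, a homogeneous quadratic form in $\bpt N$, so $\partial_{N_i}\Sigma_\pm'$, $\partial_{\Sigma_\mp}\Sigma_\pm'$ and $\partial_{\Sigma_\pm}\Sigma_\pm'$ all vanish at $\bpt p$ (using $N^2=0$ and $\partial_{N_i}N^2=2N_i-2N_j-2N_k=0$ there). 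Hence the Jacobian at $\bpt p$ is diagonal, with entries $0,0$ in the two $\bpt\Sigma$-slots and $\lambda_i:=-\bigl(\Sigma^2+2\langle\taubp_i,\bpt\Sigma\rangle\bigr)$ in the $N_i$-slot; on $\mathcal K$ one has $\Sigma^2=1$ and $|\taubp_i|^2=1$, so by \eqref{eq:ode2-ni} this equals $\lambda_i=1-|\bpt\Sigma+\taubp_i|^2=-(1+2\langle\taubp_i,\bpt\Sigma\rangle)$.

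\textbf{Step 2 (restriction to $T_{\bpt p}\mathcal M$).} Differentiating the constraint \eqref{eq:constraint}, $\tfrac12\,\dd G=\Sigma_+\dd\Sigma_++\Sigma_-\dd\Sigma_-+\sum_{\text{cyc}}(N_i-N_j-N_k)\dd N_i$, which at $\bpt p$ reduces to $\Sigma_+\dd\Sigma_++\Sigma_-\dd\Sigma_-$. Therefore $T_{\bpt p}\mathcal M$ is spanned by $\partial_{\mathcal K}=(-\Sigma_-,\Sigma_+,0,0,0)$ — the $\bpt\Sigma$-direction tangent to the unit circle — together with $\partial_{N_1},\partial_{N_2},\partial_{N_3}$; all four are eigenvectors of the diagonal Jacobian, $\partial_{\mathcal K}$ with eigenvalue $0$ and $\partial_{N_i}$ with eigenvalue $\lambda_i$. (The discarded radial $\bpt\Sigma$-direction is transverse to $\mathcal M$, which is why the central direction surviving on $\mathcal M$ is $\partial_{\mathcal K}$.)

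\textbf{Step 3 (sign count).} Put $a_i:=\langle\taubp_i,\bpt\Sigma\rangle$, so $\lambda_i=-(1+2a_i)$ and $|\bpt p+\taubp_i|^2=2(1+a_i)$. Since $\taubp_1+\taubp_2+\taubp_3=0$ one has $a_1+a_2+a_3=0$, and $|a_i|\le 1$ by Cauchy--Schwarz. In the non-Taub case the hypothesis $|\bpt p+\taubp_1|<1$ reads exactly $a_1<-\tfrac12$, giving $\lambda_1>0$; moreover $a_2\le-\tfrac12$ would force $a_3=-a_1-a_2>1$, impossible, so $a_2>-\tfrac12$, i.e. $\lambda_2<0$, and symmetrically $\lambda_3<0$ — yielding one central, one unstable, two stable directions. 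For $\bpt p=\taubp_i$ one has instead $a_i=1$, $a_j=a_k=-\tfrac12$, so $\lambda_i=-3<0$ and $\lambda_j=\lambda_k=0$; together with $\partial_{\mathcal K}$ this gives one stable and three center directions. I do not anticipate any genuine obstacle: the computation is routine, and the only points requiring a little care are passing to the constraint manifold in Step 2 and the elementary sign-counting in Step 3 that pins down a single unstable direction.
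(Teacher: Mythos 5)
Your proof is correct and takes essentially the same route as the paper: both observe that the Jacobian at $\bpt p\in\mathcal K$ is diagonal with the $N_i$-eigenvalues $\lambda_i=1-|\bpt\Sigma+\taubp_i|^2$ and that $\{\partial_{\mathcal K},\partial_{N_1},\partial_{N_2},\partial_{N_3}\}$ is a basis of $T_{\bpt p}\mathcal M=\ker\dd G$. The only difference is that you replace the paper's appeal to Figure~\ref{fig:n-discs} for the sign count by the explicit algebraic argument via $a_1+a_2+a_3=0$ and $|a_i|\le 1$, which is a perfectly valid way to verify the same fact.
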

\begin{proof}
We first note that the four vectors $\partial_{\mathcal K}$ and $\partial_{N_i}$ form a basis of the tangent space  $T_{\bpt p}\mathcal M = \mathrm{ker}\,\dd G$. 

The stability of an equilibrium is determined by the eigenvalues and eigenspaces of the Jacobian of the vector field; a generalized eigenspace is \emph{central}, if its eigenvalue has vanishing real part, it is \emph{stable} if its eigenvalue has negative real part, and it is \emph{unstable} if its eigenvalue has positive real part.
The Jacobian $\DD f$ of the vector field $f$ given by \eqref{eq:ode} at $\bpt p\in\mathcal K$ is diagonal with three entries of the form $\DD f = \lambda_1 \partial_{N_1} \otimes \dd N_1 + \lambda_{N_2} \partial_2 \otimes \dd N_2 + \lambda_3 \partial_{N_3}\otimes \dd N_3$; we can read off the stability from \eqref{eq:ode2-ni} and Figure \ref{fig:n-discs}.
\end{proof}

\begin{figure}[hbpt]
 \centering
        \begin{subfigure}[b]{0.45\textwidth}
                \centering
                \includegraphics[width=\textwidth]{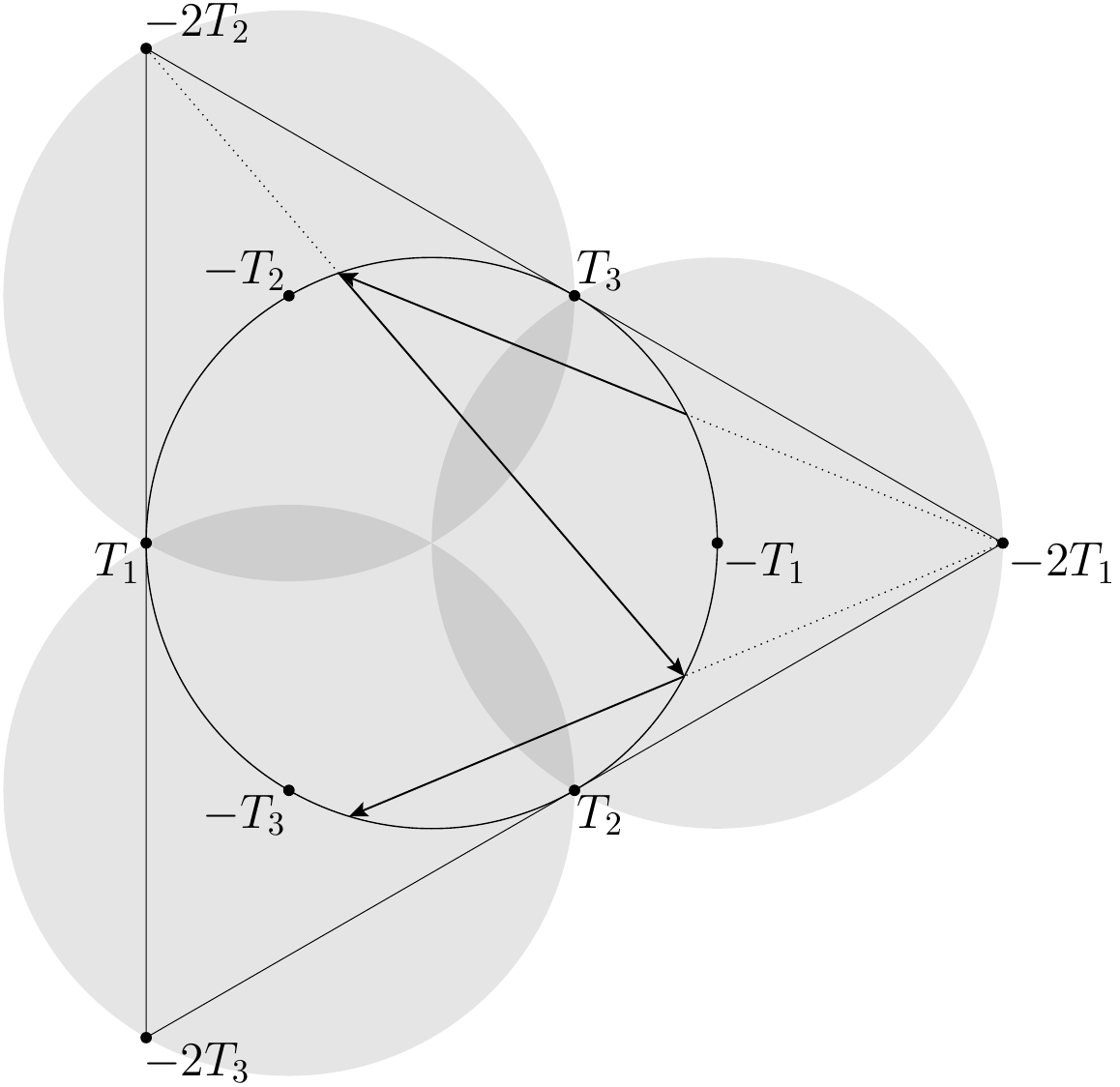}
                \caption{The three discs, where $N_i'/N_i>0$ and a short heteroclinic chain.}\label{fig:short-het}\label{fig:n-discs}
        \end{subfigure}~~%
        \begin{subfigure}[b]{0.45\textwidth}
        \centering
        \includegraphics[width=\textwidth]{./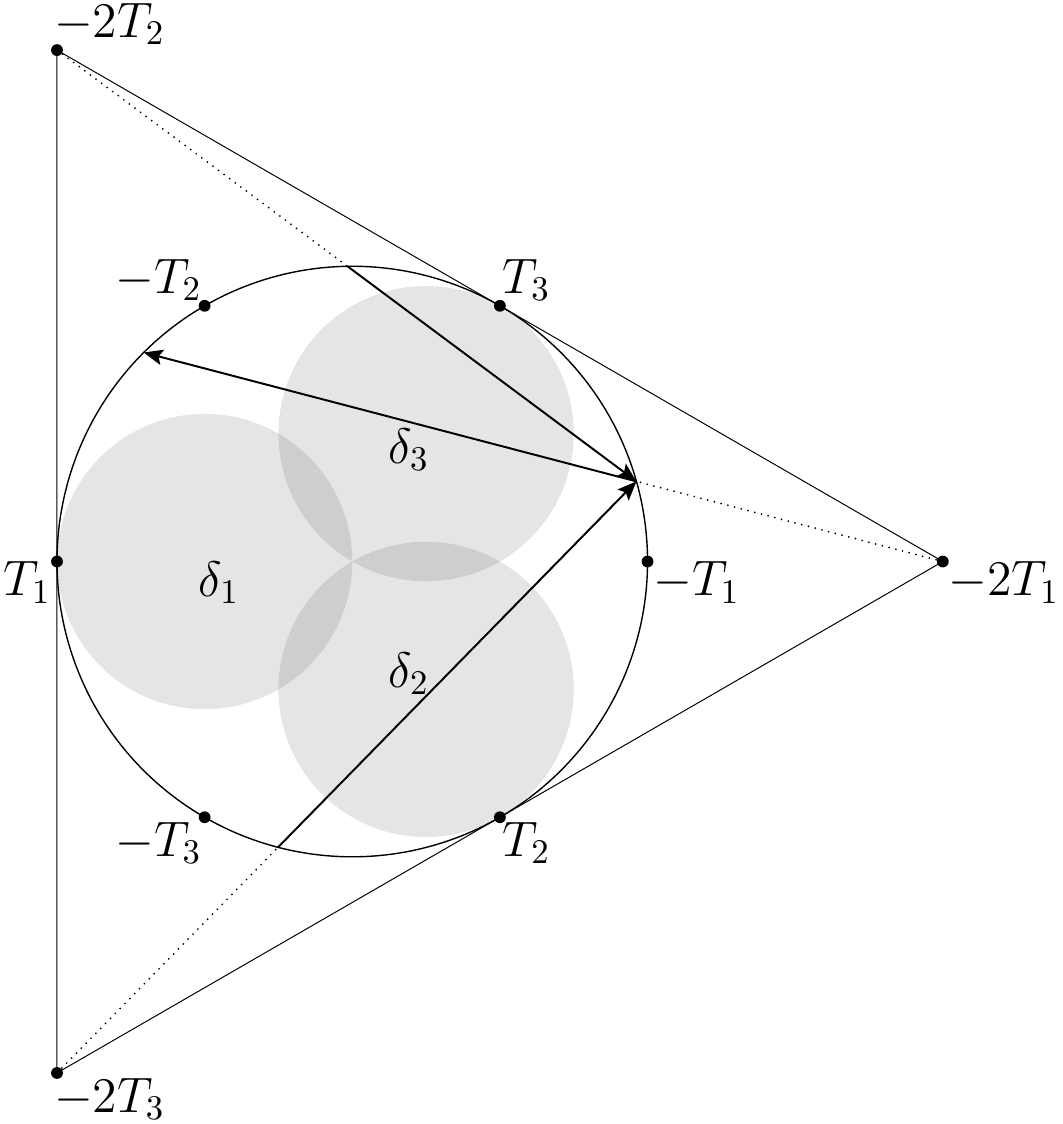}
        \caption{The three discs, where $\delta_i'/\delta_i >0$. Also, the Kasner-map $K$ is a double cover.}\label{fig:double-cover}\label{fig:delta-discs}
        \end{subfigure}
        \caption{Stability properties and the Kasner map. All figures are in $(\Sigma_+, \Sigma_-)$-projection.}\label{fig:kasner-segments}        
\end{figure}

\paragraph{The Taub-line.}
There exists another structure of equilibria, given by $\mathcal{TL}_i:= \{\bpt x\in\mathcal M: N_j = N_k,\,N_i=0,\,\bpt \Sigma = \taubp_i\}$. Up to index permutations and $\bpt N\to -\bpt N$, this set has the form $\mathcal{TL}_1\cup\mathcal M_{++0}=\{\bpt p:\, \bpt p =(-1,0,0,n,n),\, n>0\}$. We observe that this is a line of equilibria. Each such equilibrium has one stable direction (corresponding to $N_1$) and three center directions. 

\paragraph{Bianchi Type \textsc{II}: The Kasner caps.}
Consider without loss of generality the set $\mathcal M_{+00}$. The constraint $G=1$ then reads
\[1=N_1^2 + \Sigma_+^2+\Sigma_-^2,\]
i.e.~the so-called ``Kasner-cap'' $\mathcal M_{+00}$ forms a half-sphere with $\mathcal K$ as its boundary. Considering \eqref{eq:ode}, we can see that $\bpt \Sigma' = (\bpt \Sigma+2\taubp_1)N_1^2$ is a scalar multiple of $\bpt \Sigma+2\taubp_1$; hence, the $\bpt\Sigma$-projection of the trajectory stays on the same line through $-2\taubp_1$. Since $N_1^2\ge 0$, this trajectory is heteroclinic and converges in forward and backward times to the two intersections of this line with the Kasner circle $\mathcal K$, where the $\alpha$-limit is closer to $-2\taubp_1$. Such a trajectory is depicted in Figure \ref{fig:short-het}.

\paragraph{The Mixmaster-Attractor $\mathcal A$.}
The most relevant set for the long-time behavior of the Bianchi system is the Mixmaster attractor $\mathcal A$.
It consists of the union of all six Bianchi $\textsc{II}$ pieces and the Kasner circle. Since the signs of the $N_i$ stay constant along trajectories, it is useful to only study a piece of $\mathcal A$, given without loss of generality by:

\[
\mathcal A_{+++}=\mathcal M_{+00}\cup \mathcal M_{0+0}\cup \mathcal M_{00+}\cup \mathcal K.
\]
The set $\mathcal A_{+++}$ is given by the union of three perpendicular half-spheres (the three Kasner-caps), which intersect in the Kasner-circle. The dynamics in the Mixmaster-Attractor now consists of the Kasner circle $\mathcal K$ of equilibria and the three caps, consisting of heteroclinic orbits to $\mathcal K$. It is described in detail by the so-called Kasner-map.


\paragraph{The Kasner-map $K:\mathcal K\to\mathcal K$.}
We wish to describe which equilibria in $\mathcal K$ are connected by heteroclinic orbits. We can collect this in a relation $K\subseteq \mathcal K\times \mathcal K$, i.e.~we write $\bpt p_- K\bpt p_+$ if either $\bpt p_-=\bpt p_+=\taubp_i$ or there exists a heteroclinic orbit $\gamma:\RR\to \mathcal A$ such that $\bpt p_- = \lim_{t\to-\infty}\gamma(t)$ and $\bpt p_+=\lim_{t\to+\infty}\gamma(t)$.

We can see by Figure \ref{fig:n-discs} (or Lemma \ref{farfromA:lemma:kasnermap-stability}) that each non-Taub point $\bpt p_-$ has a one-dimensional unstable manifold, i.e. one trajectory in $\mathcal A_{\signN}$, which converges to $\bpt p_-$ in backwards time. Therefore, the relation $K$ can be considered as a (single-valued, everywhere defined) map.

This map is depicted in Figure \ref{fig:double-cover}, and has a simple geometric description in the $(\Sigma_+,\Sigma_-)$-projection: 
Given some $\bpt p_-\in\mathcal K$, we draw a straight line through $\bpt p_-$ and the nearest of the three points $-2\taubp_i$. This line has typically two intersections $\bpt p_-$ and $\bpt p_+\in\mathcal K$ with the Kasner-circle, one of which is nearer to $-2\taubp_i$, which is $\bpt p_-$, and one which is further away, which is $\bpt p_+$. At a $\taubp_i$, there are two possible choices of nearest $-2\taubp_j$ and $-2\taubp_k$ and the lines through these points are tangent to $\mathcal K$; we just set $K(\taubp_i)=\taubp_i$. We see from Figure \ref{fig:double-cover} that this map $K:\mathcal K\to\mathcal K$ is continuous and a double cover (i.e.~each point $\bpt p_+$ has two preimages $\bpt p_{-}^1$ and $\bpt p_-^2$, which both depend continuously on $\bpt p_+$).

Looking at Figure \ref{fig:short-het}, we can also see that the Kasner-map is expanding. Hence it is $C^0$-conjugate to either $[z]_{\ZZ}\to [2z]_{\ZZ}$ or $[z]_{\ZZ}\to [-2z]_{\ZZ}$; since it has three fixed points the latter case must apply. Hence we have

\begin{proposition}\label{prop:kasnermap-homeomorphism-class}
There exists a homeomorphism $\psi:\mathcal K \to \RR/3\ZZ$, such that $\psi(\taubp_i)=\left[i\right]_{3\ZZ}$ and
\[
\psi(K(\bpt p))=[-2 \psi(\bpt p)]_{3\ZZ}\qquad\forall \bpt p\in\mathcal K.
\]
\end{proposition}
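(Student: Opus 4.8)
The plan is to derive the proposition from the classical rigidity theory of expanding circle endomorphisms, using the fixed-point count to single out the correct linear model, and then the symmetries of that model to normalize the three Taub-points. All the structural work — that $K$ is continuous, a double cover, expanding, with fixed-point set exactly $\{\taubp_1,\taubp_2,\taubp_3\}$ — has already been done above, so only bookkeeping remains.

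First I would fix an orientation of the topological circle $\mathcal K$ and any orientation-preserving identification $\mathcal K\cong\RR/\ZZ$, turning $K$ into an expanding circle endomorphism of some degree $d$. Since $K$ is a double cover we have $|d|=2$. By the classical conjugacy theorem for expanding maps of the circle, $K$ is topologically conjugate to the linear endomorphism $\ell_d\colon x\mapsto dx$ of $\RR/\ZZ$. Now $\ell_{2}$ has a single fixed point (namely $[0]$), whereas $\ell_{-2}$ has exactly three fixed points (the solutions of $3x\equiv 0$); since $K$ has three fixed points, this forces $d=-2$. Composing with the homeomorphism $\RR/\ZZ\to\RR/3\ZZ$, $x\mapsto 3x$, which conjugates $\ell_{-2}$ on $\RR/\ZZ$ to the map $z\mapsto[-2z]_{3\ZZ}$ on $\RR/3\ZZ$, I obtain a homeomorphism $\psi_1\colon\mathcal K\to\RR/3\ZZ$ with $\psi_1\circ K=\ell_{-2}\circ\psi_1$, where from now on $\ell_{-2}(z)=[-2z]_{3\ZZ}$.

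It remains to arrange $\psi_1(\taubp_i)=[i]_{3\ZZ}$. The three points $\psi_1(\taubp_1),\psi_1(\taubp_2),\psi_1(\taubp_3)$ are distinct and fixed by $\ell_{-2}$, hence they exhaust the fixed-point set $\{[0],[1],[2]\}=\{[1],[2],[3]\}$ of $\ell_{-2}$ on $\RR/3\ZZ$. The centralizer of $\ell_{-2}$ in $\mathrm{Homeo}(\RR/3\ZZ)$ contains the dihedral group generated by $z\mapsto z+1$ and $z\mapsto -z$ — both commute with $z\mapsto-2z$ modulo $3$, as one checks directly — and this group acts on the three fixed points as the full symmetric group $S_3$. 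Hence there is a (unique) element $\rho$ of it with $\rho(\psi_1(\taubp_i))=[i]_{3\ZZ}$ for $i=1,2,3$. Setting $\psi=\rho\circ\psi_1$ gives a homeomorphism $\mathcal K\to\RR/3\ZZ$ with $\psi(\taubp_i)=[i]_{3\ZZ}$, and since $\rho$ commutes with $\ell_{-2}$ we get $\psi\circ K=\rho\circ\ell_{-2}\circ\psi_1=\ell_{-2}\circ\rho\circ\psi_1=\ell_{-2}\circ\psi$, i.e. $\psi(K(\bpt p))=[-2\psi(\bpt p)]_{3\ZZ}$ for all $\bpt p\in\mathcal K$, as claimed.

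The only genuine obstacle is ensuring that the notion of ``expanding'' read off from Figure~\ref{fig:short-het} is strong enough to invoke the rigidity theorem. A fully self-contained alternative avoids the theorem entirely: the three Taub-points cut $\mathcal K$ into three closed arcs, each of which is mapped homeomorphically by $K$ onto the union of the other two, giving a Markov partition; one then builds the conjugacy explicitly from the itineraries, which live in the subshift $\{(j_n)\in\{1,2,3\}^{\NN}\colon j_{n+1}\neq j_n\ \forall n\}$, with the usual mild care needed about the identifications at the (countably many) preimages of the Taub-points. Either route is routine once the properties of $K$ established above are in hand.
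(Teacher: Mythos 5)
Your main route leans on ``the classical conjugacy theorem for expanding maps of the circle'' (Shub), but that theorem needs \emph{uniform} expansion, $|K'|\ge\lambda>1$ everywhere, and $K$ does not have it. Lemma~\ref{lemma:kasnermap-expansion} gives $|K'(\bpt p_-)|=\frac{|\bpt p_+ +2\taubp_i|}{|\bpt p_- +2\taubp_i|}>1$ only pointwise off the Taub points, and this ratio tends to $1$ as $\bpt p_-\to\taubp_i$ (since then $\bpt p_+\to\taubp_i$ as well). In $u$-coordinates the fixed points are parabolic: near $u=\infty$ the map is $u\mapsto u-1$, a tangent-to-identity germ. So the off-the-shelf rigidity theorem does not apply; you would need a version tolerating parabolic periodic orbits (e.g.\ via topological expansivity), which is not ``classical'' and would need to be justified. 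You do flag exactly this concern, so the gap is acknowledged rather than hidden.

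Your fallback route --- Markov partition by the three Taub arcs and a conjugacy built from itineraries in $\{(s_n)\in\{1,2,3\}^\NN:\ s_{n+1}\ne s_n\}$ with the identification at Taub preimages --- is essentially the paper's own proof (Section~\ref{sect:appendix-kasner-map}). The paper constructs the coding $\Psi$ there, proves it is a homeomorphism by showing nested-cylinder diameters shrink to zero (this uses the pointwise strict contraction of inverse branches together with compactness, not uniform expansion --- precisely the point where your Shub route breaks), and then conjugates $D:[x]_{3\ZZ}\mapsto[-2x]_{3\ZZ}$ to the same shift with $\taubp_i\leftrightarrow i$, which handles the normalization implicitly. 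Your bookkeeping --- $d=-2$ from the fixed-point count (Lefschetz gives $|d-1|$ fixed points), the rescaling $\RR/\ZZ\to\RR/3\ZZ$, and the dihedral centralizer $\langle z\mapsto z+1,\ z\mapsto -z\rangle$ of $\ell_{-2}$ acting as $S_3$ on the fixed points to force $\psi(\taubp_i)=[i]_{3\ZZ}$ --- is all correct and is a cleaner, more explicit way of achieving the normalization than the paper's implicit choice of common coding. So: Markov route sound and equal to the paper's; Shub route, as stated, needs repair.
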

A formal proof of Proposition \ref{prop:kasnermap-homeomorphism-class} is a digression; for this reason, it is deferred until Section \ref{sect:appendix-kasner-map}, where we give a more detailed description of the Kasner map.

\paragraph{Basic Heuristics near the Mixmaster Attractor.}
Heuristically, the Kasner-map determines the behavior of solutions near $\mathcal A$: 
Consider an initial condition $\bpt x_0\in \mathcal M_{\pm\pm\pm}$ near $\mathcal A$, i.e.~an initial condition where none of the $N_i$ vanish. Then the trajectory $\bpt x(t)$ will closely follow the heteroclinic solution $\gamma_1$ passing near $\bpt x_0$. Let $\bpt p_1$ be the end-point of this heteroclinic; $\bpt x(t)$ will follow $\gamma_1$ and stay for some time near $\bpt p_1$ (since it is an equilibrium). However, if $\bpt p_1\neq \taubp_i$, then one of the $N_i$ directions is unstable; therefore, $\bpt x(t)$ will leave the neighborhood of $\bpt p_1$ along the unique heteroclinic emanating from $\bpt p_1$, and follow it until it is near $\bpt p_2=K(\bpt p_1)$. This should continue until $\bpt x(t)$ leaves the vicinity of $\mathcal A$, which should not happen at all (at least if the name ``Mixmaster Attractor'' is well deserved).

The expansion of the Kasner-map is the source of the (so far heuristic) chaoticity of the dynamics of the Bianchi system: The expansion along the Kasner-circle supplies the sensitive dependence on initial conditions, while the remaining two directions are contracting. Since we study a flow in four-dimensions, Lyapunov exponents need only account for three dimensions, the last one corresponding to the time-evolution.

\paragraph{Bianchi-Types $\textsc{VII}_0$ and $\textsc{VI}_0$.}
There are two Bianchi-types, where exactly one of the three $N_i$ vanishes: Types $\textsc{VII}_0$ and $\textsc{VI}_0$.
 In these Bianchi-Types, we have monotone functions (Lyapunov functions), which suffice to almost completely determine the long-time behaviour of trajectories. Without loss of generality, we focus on the case where $N_1=0$. Then we can write
\[
\Sigma_+' = (1-\Sigma^2)(\Sigma_++1)\qquad 1= \Sigma^2 + (N_2-N_3)^2.
\]
We can immediately see that $\Sigma_+$ is non-decreasing along trajectories; indeed, we must have $\lim_{t\to \pm \infty}\Sigma^2(t) = 1$ for all trajectories. Considering \eqref{eq:ode2-ni} and Figure \ref{fig:n-discs}, we can see that, for $t\to +\infty$ we must either have $\Sigma_+ \ge \frac 1 2$, since both $N_2$ and $N_3$ must be stable, or $\Sigma_+=-1$ all along; then $N_2=N_3$ and the trajectory lies in the Taub-line $\mathcal {TL}_1$. In backward time, we must have $\Sigma_+\to -1$: All other points on the Kasner-circle have one of the two $N_2,N_3$ unstable. These statements can be formalized as
\begin{lemma}\label{lemma:farfromA:b6}
Consider an initial condition $\bpt x_0$ with $\bpt x_0\in\mathcal M_{0-+}$. Then, in forward time, the trajectory $\bpt x(t)=\phi(\bpt x_0, t)$ converges to a point on the Kasner-circle $\lim_{t\to \infty}\bpt x(t)=\bpt p_+\in\mathcal K$ with $\Sigma_+(\bpt p_+)\ge \frac 1 2$. In backwards time, the trajectory converges to the Taub-point $\taubp_1=(-1,0,0,0,0)=\lim_{t\to -\infty}\bpt x(t)$.
\end{lemma}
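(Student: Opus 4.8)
The plan is to treat this as a LaSalle-type statement built around the monotone quantity $\Sigma_+$ and the Kasner linearisation of Lemma~\ref{farfromA:lemma:kasnermap-stability}. Two structural facts come first. The octant $\overline{\mathcal M_{0-+}}$ is invariant (the signs of the $N_i$ are preserved by the flow) and \emph{compact}: on it $N_1\equiv 0$, so the Gauss constraint reads $\Sigma^2+(N_2-N_3)^2=1$, and since $N_2\le 0\le N_3$ this gives $(|N_2|+|N_3|)^2=(N_2-N_3)^2\le 1$, so $|N_2|,|N_3|\le 1$; this compactness is special to $\mathcal M_{0-+}$ and fails in the $\textsc{VII}_0$ octant $\mathcal M_{0++}$. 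Hence $\alpha(\bpt x_0)$ and $\omega(\bpt x_0)$ are nonempty, compact, connected and invariant. Second, on $\overline{\mathcal M_{0-+}}$ one has $\Sigma^2\le 1$, hence $\Sigma_+\ge-1$ and $\Sigma_+'=(1-\Sigma^2)(1+\Sigma_+)\ge 0$; moreover $\Sigma_+'=0$ forces $\Sigma^2=1$ or $\Sigma_+=-1$, and in either case the constraint forces $N_2=N_3=0$, so $\{\Sigma_+'=0\}\cap\overline{\mathcal M_{0-+}}=\mathcal K$. A short computation shows $\mathcal M_{0-+}$ has no equilibria, so $\bpt x(\cdot)$ is non-constant and $\Sigma_+$ is strictly increasing along it.

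Next I would pin down the limit sets. Since $\Sigma_+$ is bounded and monotone, it converges as $t\to\pm\infty$; because $\Sigma_+'\ge 0$ is integrable over $\RR$ and uniformly continuous on the compact invariant set, $\Sigma_+'(t)\to 0$ as $t\to\pm\infty$ (the monotonicity principle). Therefore $\alpha(\bpt x_0),\omega(\bpt x_0)\subseteq\{\Sigma_+'=0\}\cap\overline{\mathcal M_{0-+}}=\mathcal K$, and in particular $N_2(t),N_3(t)\to 0$ as $t\to\pm\infty$. On $\mathcal K$ the function $\Sigma_+$ is $\cos\theta$ in the arc-length parametrisation, which is non-constant on every nondegenerate subarc; since $\Sigma_+$ is constant on the (connected) limit sets, each limit set is a single equilibrium: $\omega(\bpt x_0)=\{\bpt p_+\}$ and $\alpha(\bpt x_0)=\{\bpt p_-\}$, both on $\mathcal K$.

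It remains to identify $\bpt p_\pm$ using the instability data. By Lemma~\ref{farfromA:lemma:kasnermap-stability} and \eqref{eq:ode2-ni}, at $\bpt q\in\mathcal K$ the $\partial_{N_i}$-eigenvalue is $\lambda_i(\bpt q)=1-|\bpt q+\taubp_i|^2$; the three open arcs $\{\lambda_i>0\}$ partition $\mathcal K\setminus\{\taubp_1,\taubp_2,\taubp_3\}$, with $\{\lambda_1>0\}=\{\Sigma_+>\tfrac12\}$ (whose closure contains $\taubp_2,\taubp_3$, where $\Sigma_+=\tfrac12$), $\{\lambda_2>0\}\cup\{\lambda_3>0\}\subseteq\{\Sigma_+<\tfrac12\}$, and $\lambda_i(\taubp_i)=-3$, $\lambda_j(\taubp_i)=0$ for $j\ne i$. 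The engine is an elementary blow-up dichotomy: if $\bpt x(t)\to\bpt q$ as $t\to+\infty$ then $(\log|N_j|)'(t)=1-|\bpt\Sigma(t)+\taubp_j|^2\to\lambda_j(\bpt q)$, so $\lambda_j(\bpt q)>0$ makes $|N_j(t)|$ eventually increasing, contradicting $|N_j(t)|\to 0$; dually, if $\bpt x(t)\to\bpt q$ as $t\to-\infty$ then $\lambda_j(\bpt q)<0$ makes $|N_j(t)|$ eventually decreasing in $t$, so $|N_j(t)|\not\to 0$ as $t\to-\infty$, a contradiction. Taking $j\in\{2,3\}$: $\bpt p_+$ cannot lie on $\{\lambda_2>0\}$ or $\{\lambda_3>0\}$, and it cannot be $\taubp_1$ (that would force $\Sigma_+\equiv-1$), so $\bpt p_+\in\{\Sigma_+\ge\tfrac12\}$; and since our orbit has $N_1\equiv 0$ but $N_2,N_3\ne 0$, every Kasner point with $\lambda_2<0$ or $\lambda_3<0$ is excluded as a backward limit, which kills all non-Taub points (each has $\min(\lambda_2,\lambda_3)<0$) as well as $\taubp_2$ and $\taubp_3$ (where $\lambda_2=-3$, resp.\ $\lambda_3=-3$), leaving $\bpt p_-=\taubp_1=(-1,0,0,0,0)$.

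I expect the blow-up step to be the part most in need of care: before it can be applied one must know $N_2(t),N_3(t)\to 0$ along the orbit, and this is exactly where $N_1\equiv 0$ is used, since then $\Sigma^2=1-(N_2-N_3)^2$, so $\Sigma^2\to 1$ (which follows from $\Sigma_+'\to 0$ together with $\Sigma_+\not\to-1$) forces $N_2-N_3\to 0$ and hence $|N_2|+|N_3|\to 0$. Everything else is routine LaSalle-type bookkeeping together with reading the sign pattern of $\lambda_1,\lambda_2,\lambda_3$ off the three $N_i$-discs of Figure~\ref{fig:n-discs}, and the author's proof presumably proceeds the same way.
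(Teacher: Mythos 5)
Your argument is correct and follows exactly the line the paper sketches in the paragraph preceding the lemma: monotonicity of $\Sigma_+$, forcing $\Sigma^2\to 1$ and hence $\alpha(\bpt x_0),\omega(\bpt x_0)\subseteq\mathcal K$, and finally reading the admissible limits from the sign pattern of the $N_i$-eigenvalues via Figure~\ref{fig:n-discs}. You simply supply the details the paper leaves implicit—compactness of $\overline{\mathcal M_{0-+}}$, the integrability-plus-uniform-continuity step giving $\Sigma_+'\to 0$, connectedness of limit sets forcing convergence to a single point, and the blow-up dichotomy—so this is the same approach made rigorous, not a different route.
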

\begin{lemma}\label{lemma:farfromA:b7}
Consider an initial condition $\bpt x_0$ with $\bpt x_0\in\mathcal M_{0++}\setminus \mathcal{TL}_1$. Then, in forward time, the trajectory $\bpt x(t)=\phi(\bpt x_0, t)$ converges to a point on the Kasner-circle $\lim_{t\to \infty}\bpt x(t)=\bpt p_+\in\mathcal K$ with $\Sigma_+(\bpt p_+)\ge \frac 1 2$. In backwards time, the $\bpt\Sigma$-projection of the trajectory converges to the Taub-point $\taubp_1=(-1,0)=\lim_{t\to -\infty}\bpt\Sigma(\bpt x(t))$. No claim about the dynamics of the $N_i(t)$ for $t\to-\infty$ is made.
\end{lemma}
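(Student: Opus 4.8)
\emph{Proof plan.} Since $\mathrm{sign}\,N_i$ is invariant and $N_1(\bpt x_0)=0$, the orbit $\bpt x(t)=\phi(\bpt x_0,t)$ stays in $\mathcal M_{0++}$, where $N^2=(N_2-N_3)^2$, the constraint \eqref{eq:constraint} reads $\Sigma_+^2+\Sigma_-^2+(N_2-N_3)^2=1$, and (as noted before the lemma) $\Sigma_+'=N^2(1+\Sigma_+)\ge 0$. I would build the proof on three elementary facts on $\mathcal M_{0++}$: (i) $\Sigma_+$ is nondecreasing and bounded, so $\Sigma_+^\pm:=\lim_{t\to\pm\infty}\Sigma_+$ exist, and moreover $\Sigma_+=-1$ only on $\mathcal{TL}_1$, so $\bpt x_0\notin\mathcal{TL}_1$ gives $\Sigma_+(0)>-1$; (ii) from \eqref{eq:ode2-delta} together with $|\bpt\Sigma-\tfrac12\taubp_1|^2-\tfrac14=\Sigma^2+\Sigma_+=(1+\Sigma_+)-N^2$ one gets $\tfrac{d}{dt}\log\delta_1=-(1+\Sigma_+)+N^2$; (iii) since $\delta_1^2=4N_2N_3$, the algebraic identity $(N_2+N_3)^2=N^2+\delta_1^2$ holds, and $N^2\le 1$ everywhere.

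\emph{Forward time.} Set $c:=1+\Sigma_+(0)>0$. For $t\ge 0$ one has $1+\Sigma_+\ge c$, so by (i) $\int_0^\infty N^2\,dt\le c^{-1}(\Sigma_+^+-\Sigma_+(0))<\infty$, and by (ii) $\log\delta_1(t)\le\log\delta_1(0)-ct+c^{-1}(\Sigma_+^+-\Sigma_+(0))\to-\infty$, so $\delta_1\to 0$ and $\delta_1$ is bounded on $[0,\infty)$. By (iii) $N_2+N_3$ is then bounded, hence $\tfrac{d}{dt}N^2$ is bounded; a nonnegative, integrable, Lipschitz function tends to $0$, so $N^2(t)\to 0$, giving $N_2-N_3\to 0$, $\Sigma^2\to 1$, and then $(N_2+N_3)^2=N^2+\delta_1^2\to 0$, so $N_2,N_3\to 0$. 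Since $\Sigma_+\to\Sigma_+^+$ we get $\Sigma_-^2\to 1-(\Sigma_+^+)^2$, and $\Sigma_-$ either has eventually constant sign or tends to $0$, so $\bpt\Sigma(t)$ converges to some $\bpt p_+\in\mathcal K$ and $\bpt x(t)\to\bpt p_+$. Finally, if $|\bpt p_++\taubp_2|<1$ then $\tfrac{d}{dt}\log N_2=1-|\bpt\Sigma+\taubp_2|^2$ stays above a positive constant for large $t$, forcing $N_2\to\infty$, a contradiction; so $|\bpt p_++\taubp_j|\ge 1$ for $j=2,3$, which on $\mathcal K$ reads $\Sigma_+\pm\sqrt3\,\Sigma_-\ge-1$, i.e.\ $1+\Sigma_+\ge\sqrt3\,|\Sigma_-|=\sqrt3\sqrt{1-\Sigma_+^2}$, equivalent on $(-1,1]$ to $\Sigma_+\ge\tfrac12$. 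As $\Sigma_+(\bpt p_+)>-1$, this yields $\Sigma_+(\bpt p_+)\ge\tfrac12$.

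\emph{Backward time.} It suffices to show $\Sigma_+^-=-1$; then $\Sigma_-^2\le 1-\Sigma_+^2\to 0$, so $\Sigma_-(t)\to 0$ and $\bpt\Sigma(t)\to(-1,0)=\taubp_1$, which is exactly the assertion (with no claim about the $N_i$). Suppose instead $\Sigma_+^->-1$. Then $1+\Sigma_+\ge 1+\Sigma_+^->0$ on $(-\infty,0]$, so $\int_t^0(1+\Sigma_+)\ge(1+\Sigma_+^-)|t|\to\infty$ while $\int_{-\infty}^0 N^2\le(1+\Sigma_+^-)^{-1}(\Sigma_+(0)-\Sigma_+^-)<\infty$; by (ii) this gives $\log\delta_1(t)\to-\infty$ as $t\to-\infty$. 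Exactly as in the forward case, $\delta_1$ and $N_2+N_3$ are bounded on $(-\infty,0]$, $\tfrac{d}{dt}N^2$ is bounded, and $\int_{-\infty}^0 N^2<\infty$ forces $N^2(t)\to 0$; hence $\bpt\Sigma(t)\to\bpt q\in\mathcal K$ with $\Sigma_+(\bpt q)=\Sigma_+^->-1$ and $N_2,N_3\to 0$, i.e.\ the whole orbit converges, as $t\to-\infty$, to an equilibrium $\bpt q\in\mathcal K\setminus\{\taubp_1\}$. But if $|\bpt q+\taubp_2|>1$ then $\tfrac{d}{dt}\log N_2=1-|\bpt\Sigma+\taubp_2|^2$ is bounded away from $0$ and negative for all sufficiently negative $t$, forcing $N_2\to\infty$ as $t\to-\infty$; so $|\bpt q+\taubp_j|\le 1$ for $j=2,3$, which on $\mathcal K$ forces $\Sigma_+\le-1$, hence $\bpt q=\taubp_1$ — contradicting $\Sigma_+(\bpt q)>-1$. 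Therefore $\Sigma_+^-=-1$.

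\emph{Main obstacle.} The backward direction is the delicate one: a priori the orbit need not be precompact as $t\to-\infty$ (the $N_i$ can blow up), so one cannot simply invoke an $\alpha$-limit set. The identity $(N_2+N_3)^2=N^2+\delta_1^2$ is what rescues the argument — the decay $\delta_1\to 0$ (easy, from monotonicity of $\Sigma_+$) combined with the a priori bound $N^2\le 1$ yields boundedness of $N_2+N_3$, and this is precisely what upgrades $\int N^2<\infty$ to $N^2\to 0$, restoring genuine convergence of the orbit; the linear-stability picture of Lemma \ref{farfromA:lemma:kasnermap-stability} then closes both directions. Incidentally, no control on the $N_i$ as $t\to-\infty$ can be expected, since near $\taubp_1$ one has $\Sigma^2+\Sigma_+\to 0$, so $\delta_1$ and the $N_i$ themselves may drift to any limit in $[0,\infty]$.
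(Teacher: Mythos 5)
Your forward-time argument is sound and is in fact more detailed than the paper's: the paper asserts $\lim_{t\to\pm\infty}\Sigma^2(t)=1$ and reads the answer off the stability discs, whereas you derive $N^2\to 0$ rigorously from $\int N^2<\infty$, the decay of $\delta_1$, the identity $(N_2+N_3)^2=N^2+\delta_1^2$, and Barbalat; the same monotonicity/stability ideas are at work, just spelled out.

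The backward direction has a sign error that destroys the contradiction. From (ii), for $t<0$,
\[
\log\delta_1(t)=\log\delta_1(0)+\int_t^0\bigl(1+\Sigma_+\bigr)\,ds-\int_t^0 N^2\,ds,
\]
and under your hypothesis $\Sigma_+^->-1$ the first integral diverges to $+\infty$ linearly in $|t|$ while the second stays bounded, so $\log\delta_1(t)\to+\infty$, not $-\infty$: $\delta_1$ \emph{blows up} as $t\to-\infty$. Consequently neither $\delta_1$ nor $N_2+N_3$ is bounded on $(-\infty,0]$, the Lipschitz bound on $\tfrac{d}{dt}N^2$ fails, Barbalat is unavailable, and there is no limiting Kasner point at which to apply the stability count; the whole chain collapses.

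The failure is not merely technical. Consider $\bpt x(t)=(1,0,0,Ne^{-2t},Ne^{-2t})$ with $N>0$: one checks directly from \eqref{eq:ode-unpacked} that this solves the system, it lies in $\mathcal M_{0++}\setminus\mathcal{TL}_1$, and $\bpt\Sigma\equiv(1,0)=-\taubp_1\neq\taubp_1$, so $\Sigma_+^-=1>-1$ holds with \emph{no} contradiction — the lemma as stated is in fact false (as is the remark following it) on the anti-Taub line $\{(1,0,0,N,N):N>0\}$, i.e.\ the missing piece of $\mathcal T_1\cap\mathcal M_{0++}$ that the hypothesis ``$\notin\mathcal{TL}_1$'' does not exclude. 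Once that orbit (equivalently $\mathcal T_1$) is excluded, a workable backward argument must cope with unbounded $N_2,N_3$: introduce the polar pair $(r_1,\psi)$ from Section \ref{sect:polar-coords} with $r_1^2=\Sigma_-^2+N_-^2$, note $r_1^2\to 1-(\Sigma_+^-)^2>0$ and $\psi'\approx\sqrt 3\,N_+\to\infty$, so $\sin^2\psi$ averages to $\tfrac12$ over each rotation and hence $\int_{-\infty}^0 N_-^2\,dt=\int_{-\infty}^0 r_1^2\sin^2\psi\,dt=\infty$, contradicting the integrability of $N^2=N_-^2$. Your $\delta_1$-based route cannot reach this, because the quantity it tries to control is precisely the one that diverges.
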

\begin{proof}
Contained in the preceding paragraph.
\end{proof}
It can be shown that, in the case of Bianchi $\textsc{VII}_0$, i.e.~$\bpt x_0\in\mathcal M_{0++}\setminus \mathcal{TL}_1$, the limit $\lim_{t\to -\infty}\bpt x(t)=\bpt p\in \mathcal{TL}_1\setminus\{\taubp_1\}$ exists and does not lie on the Kasner circle. This claim follows directly from Lemma \ref{lemma:farfromtaub:no-delta-increase-near-taub}; however, the proof of Lemma \ref{lemma:farfromtaub:no-delta-increase-near-taub} is rather lengthy and not required for our main results.

\subsection{Bianchi-Types \textsc{VIII} and \textsc{IX} for large $N$}
As we have seen, the lower Bianchi types do not support recurrent dynamics. This is different in the two top-dimensional Bianchi-types \textsc{VIII} and \textsc{IX}. This section is devoted to describing the behaviour far from $\mathcal A$. 

\begin{lemma}[Long-Time Existence]
Every solution $\bpt x:[0,T)\to \mathcal M$ of $\eqref{eq:ode}$ has bounded $\Sigma^2(t)<C(\bpt x_0)$  for all $t\in[0,T)$ and has unbounded forward existence time (i.e.~no finite-time blow-up occurs towards the future, i.e.~towards the big bang singularity).

The product $|N_1N_2N_3|$ is non-increasing along solutions $\bpt x(t)$, since (from \eqref{eq:ode-unpacked}):
\[ \frac{\dd}{\dd t} \log|N_1N_2N_3| = -3\Sigma^2 \le 0. \]
\end{lemma}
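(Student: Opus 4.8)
The plan is to prove the three assertions in the natural order: first the elementary algebraic identity for $\frac{\dd}{\dd t}\log|N_1N_2N_3|$, then the uniform bound on $\Sigma^2$, and finally long-time existence. The identity is immediate: using \eqref{eq:ode-unpacked}, one has $\frac{N_i'}{N_i} = -(\Sigma^2 + 2\langle\taubp_i,\bpt\Sigma\rangle)$, and summing over $i=1,2,3$ gives $\frac{\dd}{\dd t}\log|N_1N_2N_3| = -3\Sigma^2 - 2\langle\taubp_1+\taubp_2+\taubp_3,\bpt\Sigma\rangle = -3\Sigma^2$, since $\taubp_1+\taubp_2+\taubp_3=0$. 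This is $\le 0$, so $|N_1N_2N_3|$ is non-increasing. (On the invariant octants of interest, the relevant $N_i$ are nonzero, so taking logs is legitimate; on lower Bianchi types where some $N_i$ vanishes the product is identically zero and there is nothing to prove.)

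For the bound on $\Sigma^2$, I would exploit the Gauss constraint \eqref{eq:constraint}: $\Sigma^2 = 1 - N^2$. Since $N^2 = N_1^2+N_2^2+N_3^2 - 2(N_1N_2+N_2N_3+N_3N_1)$ can be negative (e.g.\ in Bianchi \textsc{VIII} and \textsc{IX}), this alone does not bound $\Sigma^2$; one needs to control how negative $N^2$ can get. The key is that on $\mathcal M$, the quantity $\Sigma^2 \ge 0$ forces $N^2 \le 1$, but we want the reverse — a uniform upper bound on $\Sigma^2$, equivalently a uniform lower bound on $N^2$. Here I would use the differential inequality: from \eqref{eq:neartaub-b9-q:delta}-type computations, or more directly, compute $\frac{\dd}{\dd t}\Sigma^2$. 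Using \eqref{eq:ode2-sigma} together with the constraint, $\frac{\dd}{\dd t}\Sigma^2 = 2\langle\bpt\Sigma,\bpt\Sigma'\rangle = 2N^2\Sigma^2 + 2\langle\bpt\Sigma, \bpt T[\bpt N,\bpt N]\rangle$. Substituting $N^2 = 1-\Sigma^2$ gives $\frac{\dd}{\dd t}\Sigma^2 = 2(1-\Sigma^2)\Sigma^2 + (\text{bounded-in-}\bpt N\text{ term})$; since $N^2 = 1-\Sigma^2 \le 1$ restricts $\bpt N$ to a compact set once $\Sigma^2 \ge 0$, the cross term $\langle\bpt\Sigma, \bpt T[\bpt N,\bpt N]\rangle$ is bounded by $C|\bpt\Sigma|$ for an absolute constant $C$, hence by $C'(1+\Sigma^2)$. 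So $\frac{\dd}{\dd t}\Sigma^2 \le 2\Sigma^2 - 2\Sigma^4 + C'(1+\Sigma^2)$, and for $\Sigma^2$ large the $-2\Sigma^4$ term dominates, yielding an a priori bound $\Sigma^2(t) \le C(\bpt x_0) = \max(\Sigma^2(0), C_0)$ for a universal constant $C_0$ by a standard comparison-ODE argument.

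Given the bound on $\Sigma^2$ and the constraint $N^2 = 1-\Sigma^2$, the vector field \eqref{eq:ode} is bounded on the forward-invariant region $\{\Sigma^2 \le C(\bpt x_0)\} \cap \mathcal M$: each $N_i$ satisfies $|N_i'| \le (\Sigma^2 + 2|\taubp_i||\bpt\Sigma|)|N_i|$, and since $|N_i| \le \sqrt{N^2 + 2\sum|N_jN_k|} \le \ldots$ — more cleanly, on the constraint surface with $\Sigma^2$ bounded, each $|N_i|$ is itself bounded (the $N_i$ lie in a bounded set determined by $N^2 \le 1$ and, in Bianchi \textsc{VIII}/\textsc{IX}, the product bound from the first part plus $\Sigma^2 \ge 0$). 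Therefore the solution stays in a fixed compact subset of $\mathcal M$ for all forward time, and the standard escape-time criterion for ODEs gives forward completeness: $T = +\infty$. The main obstacle is the $\Sigma^2$ bound — specifically, verifying that the cross term $\langle\bpt\Sigma,\bpt T[\bpt N,\bpt N]\rangle$ grows at most linearly in $|\bpt\Sigma|$ on the constraint surface, which is where one must use the constraint $N^2 = 1-\Sigma^2$ to keep $\bpt N$ in a compact set, rather than anything about the sign of $N^2$. Everything else is routine.
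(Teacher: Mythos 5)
Your first step (the logarithmic derivative of $|N_1N_2N_3|$) is correct and matches the paper. But your argument for the $\Sigma^2$ bound has a genuine gap that cannot be patched along the lines you propose. The crux is your repeated claim that ``$N^2 \le 1$ restricts $\bpt N$ to a compact set.'' This is false. The quadratic form $N^2 = N_1^2+N_2^2+N_3^2 - 2(N_1N_2+N_2N_3+N_3N_1)$ is \emph{indefinite}: taking $N_1 = N_2 = M$, $N_3=\epsilon$ gives $N^2 = \epsilon^2 - 2M\epsilon$, which is bounded (even small) for arbitrarily large $M$ provided $\epsilon$ is small. So the constraint $N^2 = 1-\Sigma^2$ with bounded $\Sigma^2$ puts $\bpt N$ on an unbounded surface. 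Consequently your bound $|\langle\bpt\Sigma,\bpt T[\bpt N,\bpt N]\rangle| \le C|\bpt\Sigma|$ is unjustified (the term you need to control could a priori grow like $M^2$), the Riccati comparison for $\Sigma^2$ does not close, and the later assertion that ``the solution stays in a fixed compact subset of $\mathcal M$ for all forward time'' is over-claiming --- the paper does not and cannot assert this; indeed some solutions escape to infinity as $t\to\infty$.

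The paper's route avoids any appeal to compactness and instead bounds $N^2$ from below directly, which immediately gives $\Sigma^2 = 1 - N^2$ bounded from above. The mechanism is different in the two top Bianchi types. In Bianchi \textsc{VIII} (say $\mathcal M_{-++}$), one writes $N^2 = (N_1-N_2+N_3)^2 - 4N_1N_3$ and observes that the sign structure $\hat n_1 \hat n_3 = -1$ forces $-4N_1N_3 > 0$, hence $N^2 > 0$ and $\Sigma^2 < 1$ outright. In Bianchi \textsc{IX} ($\mathcal M_{+++}$), $N^2$ can be negative, but one uses AM--GM together with the monotonicity you already proved: if $N_2 \ge \max(N_1,N_3)$ then $N^2 \ge -4N_1N_3 \ge -4(N_1N_2N_3)^{2/3}$, so $\Sigma^2(t) \le 1 + 4(N_1N_2N_3)^{2/3}(0)$ with the right-hand side fixed by the initial data. (Note that the product bound you established is \emph{used} here; you mention it in passing but never actually invoke it in the $\Sigma^2$ argument.) Once $\Sigma^2$ is bounded, one has the linear growth estimate $|\bpt x'| \le C + C|\bpt x|$ from \eqref{eq:ode}, which precludes finite-time blow-up by the standard escape criterion --- no compactness needed, and none is true.

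To repair your proof you would need to either reproduce the paper's lower bound on $N^2$ via the sign structure / monotone-product argument, or else find an independent bound on $|\bpt T[\bpt N,\bpt N]|$ on the constraint surface that does not assume $\bpt N$ is bounded. The latter is possible in principle (the cancellations in $\bpt T$ when $N^2$ is small are real) but is substantially harder than the two-line argument the paper actually uses, and as written your sketch provides no such estimate.
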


\begin{proof}
The monotonicity of $|N_1N_2N_3|$ is already proven in the statement.

The only way that long-term existence can fail is finite-time blow-up, i.e.~$\lim_{t\to T_{\max}}|\bpt x(t)|=\infty$ for some $0<T_{\max}<\infty$. We cannot exclude this possibility a priori, since the vectorfield given by \eqref{eq:ode} is polynomial. However, it suffices to estimate $|\bpt x'(t)|\le C + C|\bpt x|$, with constants independent of of $t\ge 0$ (but possibly depending on $\bpt x_0$).

We first consider the case of Bianchi Type \textsc{VIII}, without loss of generality $\mathcal M_{-++}$.
Consider a maximal solution $\bpt x:[0,T_{\max})\to \mathcal M_{-++}$. Since $N^2=(N_1-N_2+N_3)^2 -4 N_1N_3 >0$, we can see that $N^2>0$ and (from $1=\Sigma^2+N^2$) that $\Sigma^2<1$ for all times $t<T_{\max}$. From this, we can easily estimate $|\bpt x'|\le C+C|\bpt x|$; therefore, solutions exist for all positive times (i.e.~$T_{\max}=\infty$).
 
Next, we consider the case of Bianchi Type \textsc{IX}, without loss of generality $\bpt x:[0,T_{\max})\to \mathcal M_{+++}$. If $N_2\ge \max(N_1,N_3)$, then we can see
\[N^2=(N_1-N_2+N_3)^2 -4 N_1N_3 \ge -4N_1N_3\ge -4 \left(N_1N_2N_3\right)^{\frac{2}{3}}\left(\frac{N_1N_3}{N_2N_2}\right)^{\frac 1 3}\ge-4 \left(N_1N_2N_3\right)^{\frac{2}{3}}. \]
By permutation symmetry, the above inequality holds regardless of which $N_i$ is largest.
Therefore, we have for all times $t>0$:
\[
N^2(t) \ge -4 \left(N_1N_2N_3\right)^{\frac{2}{3}}(0),\qquad \Sigma^2(t) = 1-N^2(t) \le 1+4 \left(N_1N_2N_3\right)^{\frac{2}{3}}(0).
\]
Unbounded time of existence follows as in the case of Bianchi \textsc{VIII}.
\end{proof}
Next, we show that $|N_1N_2N_3|\to 0$ as $t\to \infty$, and that this convergence is essentially uniformly exponential:
\begin{lemma}[Essentially exponential convergence of $|N_1N_2N_3|$]\label{farfrom-A:lemma:uniform}
For every $C_{N}>0$, there exist constants $C_1,C_0>0$ such that for all trajectories $\bpt x:[t_1,t_2]\to \mathcal M$ with $|N_1N_2N_3|(t_1)< C_N$ we have
\[
|N_1N_2N_3|(t_2) \le |N_1N_2N_3|(t_1) \exp\left(C_0 - C_1(t_2-t_1)\right).
\]
\end{lemma}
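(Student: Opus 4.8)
The plan is to reduce the assertion to a lower bound on the time-average of $\Sigma^2$. Since the preceding lemma records the identity $\frac{\dd}{\dd t}\log|N_1N_2N_3|=-3\Sigma^2$, integration along $\bpt x$ on $[t_1,t_2]$ gives
\[
\log\frac{|N_1N_2N_3|(t_2)}{|N_1N_2N_3|(t_1)}\;=\;-3\int_{t_1}^{t_2}\Sigma^2(\bpt x(t))\,\dd t ,
\]
so the claimed inequality is \emph{equivalent} to the existence of $c_1,c_0>0$, depending only on $C_N$, with $\int_{t_1}^{t_2}\Sigma^2\,\dd t\ge c_1(t_2-t_1)-c_0$ for every admissible segment. (Recall $|N_1N_2N_3|$ is non-increasing, so the bound $C_N$ persists; and the a priori bound from the preceding lemma gives $\Sigma^2(t)\le 1+4C_N^{2/3}$ throughout, which is uniform.)

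The mechanism is that $\Sigma^2$ is uniformly convex along the flow near its zero set $\{\bpt\Sigma=0\}$. Differentiating, $\frac{\dd^2}{\dd t^2}\Sigma^2=2|\bpt\Sigma'|^2+2\langle\bpt\Sigma,\bpt\Sigma''\rangle$, and at a point with $\bpt\Sigma=0$ equation \eqref{eq:ode-fulltensor} gives $\bpt\Sigma'=\bpt T[\bpt N,\bpt N]$, hence $\frac{\dd^2}{\dd t^2}\Sigma^2\big|_{\bpt\Sigma=0}=2\,|\bpt T[\bpt N,\bpt N]|^2$. Using $\taubp_1+\taubp_2+\taubp_3=0$ one verifies the factorization
\[
\bpt T[\bpt N,\bpt N]=2\big[(N_1-N_3)(N_1+N_3-N_2)\,\taubp_1+(N_2-N_3)(N_2+N_3-N_1)\,\taubp_2\big],
\]
so, as $\taubp_1,\taubp_2$ are independent, $\bpt T[\bpt N,\bpt N]=0$ forces $(N_1-N_3)(N_1+N_3-N_2)=0$ and $(N_2-N_3)(N_2+N_3-N_1)=0$; a short case check against the fixed sign patterns of Bianchi \textsc{VIII} and \textsc{IX} and of their boundary strata shows this is incompatible with $N^2=1$, the last step being that $N_j\taubp_j=N_k\taubp_k$ with $N_j,N_k$ of a common sign forces $N_j=N_k=0$. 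By \eqref{eq:constraint}, $\Sigma^2=0$ on $\mathcal M$ is exactly $N^2=1$, so $\bpt T[\bpt N,\bpt N]\neq0$ there; a continuity/compactness argument on the compact part of $\mathcal D_\epsilon:=\{\bpt x\in\mathcal M:\Sigma^2\le\epsilon,\ |N_1N_2N_3|\le C_N\}$ then upgrades this to a uniform $\frac{\dd^2}{\dd t^2}\Sigma^2\ge\kappa>0$ there, for $\epsilon$ small. On the \emph{non-compact} part of $\mathcal D_\epsilon$ — which the constraints force to lie near the ``points at infinity'' of the Taub lines (one $N_i$ of order $C_N/|\bpt N|^2$, the other two of a common sign, nearly equal, of order $|\bpt N|$) — the same factorization gives $|\bpt T[\bpt N,\bpt N]|\gtrsim|\bpt N|$ while $|\bpt\Sigma''|\lesssim|\bpt N|^2$, so $\frac{\dd^2}{\dd t^2}\Sigma^2\ge 2|\bpt\Sigma'|^2-2\sqrt\epsilon\,|\bpt\Sigma''|\gtrsim|\bpt N|^2\ge\kappa$. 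Hence $\frac{\dd^2}{\dd t^2}\Sigma^2\ge\kappa>0$ on all of $\mathcal D_\epsilon$, with $\epsilon,\kappa$ depending only on $C_N$.

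The rest is calculus: a $C^2$ function $g$ with $0\le g<\epsilon$ and $g''\ge\kappa$ on an interval has that interval of length at most $2\sqrt{2\epsilon/\kappa}$, so $\Sigma^2$ along any admissible trajectory cannot remain below $\epsilon$ for time exceeding $\tau_*:=2\sqrt{2\epsilon/\kappa}$. To convert this into the time-averaged bound one must exclude rapid oscillation through $\{\Sigma^2<\epsilon\}$ with vanishing time-average; in the compact part of $\mathcal D_\epsilon$ this follows from the complementary \emph{upper} bound on $\frac{\dd^2}{\dd t^2}\Sigma^2$ there (forcing a definite duration and contribution per crossing), and in the non-compact part from the fact that there $\Sigma_+'=N^2(\Sigma_++1)+3N_1(N_2+N_3-N_1)=N^2(\Sigma_++1)+O(C_N/|\bpt N|)$ is essentially nonnegative, so (as in Bianchi $\textsc{VI}_0$/$\textsc{VII}_0$, cf. Lemmas~\ref{lemma:farfromA:b6}--\ref{lemma:farfromA:b7}) $\bpt\Sigma$ is pushed monotonically outward and $\Sigma^2$ drops below $\epsilon$ only in a single brief episode, not back and forth. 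A bookkeeping argument then bounds the measure of $\{t\in[t_1,t_2]:\Sigma^2(\bpt x(t))<\epsilon\}$ by $(1-c)(t_2-t_1)+c'$, whence $\int_{t_1}^{t_2}\Sigma^2\ge\epsilon\,|\{\Sigma^2\ge\epsilon\}|\ge c\epsilon(t_2-t_1)-c'\epsilon$, which is the required linear lower bound and hence the lemma.

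\textbf{Main obstacle.} The delicate point is that, for $|N_1N_2N_3|$ merely bounded, the phase space is \emph{not} compact — already in Bianchi \textsc{IX}, and conspicuously in \textsc{VIII} — and the non-compact directions sit precisely where $\Sigma^2$ is small, near the non-compact parts of the Taub lines. Controlling those excursions (quantitatively: that $\Sigma^2$ is convex there with very large convexity, and that the $\textsc{VI}_0/\textsc{VII}_0$-like monotonicity of $\Sigma_+$ drives the trajectory outward so that it cannot dwell with $\Sigma^2$ small) is where the real work concentrates; the reduction and the calculus above are routine by comparison.
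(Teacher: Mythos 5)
Your proposal takes a genuinely different route from the paper. The paper proves the lower bound on the one--unit time average $s(\bpt x)=\int_0^1\Sigma^2(\phi(\bpt x,t))\,\dd t$ indirectly, by contradiction plus compactness: $s>0$ everywhere because there is no invariant set with $\Sigma^2=0$, and a sequence with $s(\bpt x_n)\to 0$ would have to escape to infinity with $\Sigma^2(\bpt x_n)\to 0$, where a direct look at the $\Sigma_+'$ equation (forcing $\Sigma_+'\approx 1$) gives the contradiction. Your argument is direct and quantitative, built on a new observation not in the paper: the second time--derivative of $\Sigma^2$ along the flow is uniformly positive near the zero set. The underlying factorization $\bpt T[\bpt N,\bpt N]=2\bigl[(N_1-N_3)(N_1+N_3-N_2)\taubp_1+(N_2-N_3)(N_2+N_3-N_1)\taubp_2\bigr]$ is correct, the case check showing $\bpt T[\bpt N,\bpt N]\neq 0$ on $\{N^2=1\}$ works (each of the four cases already contradicts $N^2=1$; the sign pattern is not really needed), and the lower bound $|\bpt T[\bpt N,\bpt N]|\gtrsim|\bpt N|$ in the non--compact regime (using $(N_2-N_3)^2\to 1-\Sigma^2$ and $N_1(N_2+N_3)\to 0$ forced by $|N_1N_2N_3|\le C_N$) is sound, so the convexity claim $\frac{\dd^2}{\dd t^2}\Sigma^2\ge\kappa>0$ on $\{\Sigma^2\le\epsilon,\ |N_1N_2N_3|\le C_N\}$ does hold for $\epsilon$ small.

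The gap is precisely the ``bookkeeping argument'' you flag at the end. The convexity lemma bounds the length of each \emph{connected} excursion of $\Sigma^2$ below $\epsilon$, but the lemma you need is a bound on the \emph{total measure} of $\{\Sigma^2<\epsilon\}$ as a fraction of $t_2-t_1$, and the convexity bound alone does not give this: with $\Sigma^{2\prime\prime}\ge\kappa$ only on the sub--level set $\{\Sigma^2\le\epsilon'\}$, nothing a priori prevents $\Sigma^2$ from making many very short dips below $\epsilon$ separated by very short excursions above $\epsilon'$, which would make the time average arbitrarily small. Ruling this out requires exactly the two extra inputs you mention --- an \emph{upper} bound on $\Sigma^{2\prime\prime}$ (equivalently on $|\Sigma^{2\prime}|$ near the level $\epsilon'$) to force a minimum dwell time per crossing, and the monotonicity of $\Sigma_+$ to prevent back--and--forth crossings --- but the first is only available on a compact part of $\mathcal D_\epsilon$ and the second only in the non--compact limit, and a single trajectory over $[t_1,t_2]$ can move repeatedly between the two regimes. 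Making the two arguments mesh (counting passages, bounding the time between excursions to infinity via the bounded rate of $\DD_t\log|N_i|$, etc.) is the actual content of the lemma and is not spelled out; until it is, the proof is incomplete. The paper's formulation avoids this combinatorial difficulty entirely by working with the continuous scalar $s(\bpt x)$ and applying compactness and a pointwise asymptotic estimate instead of a measure estimate --- at the price of being less explicit. You might find the cleanest way to finish your version is to hybridize: use your convexity bound to prove the single pointwise inequality $s(\bpt x)\ge\widetilde C$ for bounded $\bpt x$ (here the upper bound on $\Sigma^{2\prime\prime}$ is uniform so the calculus goes through), and keep the paper's monotonicity argument at infinity.
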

\begin{proof}
Consider the function $s:\mathcal M\to [0,\infty)$
\[s(\bpt x) = \int_0^1 \Sigma^2(\phi(\bpt x, t))\dd t,\]
where $\phi:\mathcal M\times \RR\to \mathcal M$ is the flow associated to \eqref{eq:ode}.
Fix some $C_N>0$. We will show that we find some constant $\widetilde C>0$ such that $s(\bpt x)> \widetilde C$ whenever $|N_1N_2N_3|(\bpt x)\le C_N$. From this, we can conclude
\[
\log\frac{|N_1N_2N_3|(t_2)}{|N_1N_2N_3|(t_1)} = -3\int_{t_1}^{t_2}\Sigma^2(t)\dd t \le -3 (s(t_1)+ s(t_1+1) +\ldots) \le -3 (t_2-t_1-1)\widetilde C.
\]
We will prove the estimate $s(\bpt x)> \widetilde C(C_N)$ by contradiction. Assume that we had a sequence $\{\bpt x_n\}$, such that $s(\bpt x_n)\to 0$ and $|N_1N_2N_3|(\bpt x_n)$ is bounded as $n\to\infty$.

At first, we show that this cannot happen in bounded regions of phase-space: We can see from \eqref{eq:ode} that there do not exist any invariant sets with $\Sigma^2=0$ (because then $N^2=1$ and $\Sigma'\neq 0$). Therefore $s(\bpt x)>0$ for all $\bpt x\in\mathcal M$. Since $s$ is continuous, the sequence $\bpt x_n$ cannot converge, and hence cannot be bounded (otherwise, there would be some convergent subsequence).

We can assume without loss of generality that $\Sigma^2(\bpt x_n)\to 0$ (since we assumed $\int_0^1 \Sigma^2(\phi(\bpt x_n, t))\dd t\to 0$). In order to avoid convergent subsequences, we must have $\max_{i}|N_i|(\bpt x_n)\to \infty$ as $n\to\infty$.

Consider first the case of Bianchi \textsc{VIII} with $\bpt x\in\mathcal M_{-++}$. Then 
\[
N^2 = (N_2-N_3)^2+N_1^2 -2 N_1 (N_2+N_3) = 1-\Sigma^2.
\]
All three terms in the middle are non-negative and hence bounded; therefore, we must have $|N_2|,|N_3|\to\infty$ and $|N_1|\to 0$. 
We can write
\[\DD_t\Sigma_+ = (1-\Sigma^2)(\Sigma_++1) + 3N_1(N_2+N_3-N_1)\]
We can estimate the terms involving $N_i$ as 
\[|N_1|(N_2+N_3+|N_1|) \le \frac{C_N}{\max(N_2,N_3)}\to 0\qquad\text{as $n\to\infty$}.\]
Since $\DD_t\log{|N_i|} \le |\bpt \Sigma|(2+|\bpt \Sigma|)$ is bounded, the estimate $N_2,N_3 \gg 1 \gg |N_1|$ and hence $|N_1|(N_2+N_3+|N_1|)\ll 1$ stay valid for at least one unit of time. 
We therefore cannot have $\lim_{n\to\infty}s(\bpt x_n)=0$: $s(\bpt x_n)\approx 0$ is only possible if $\max_{t\in[0,1]}\Sigma^2(\phi(\bpt x_n, t))\approx 0$. This is, however, impossible since then $\Sigma_+'\approx 1$. 

Consider now the case of Bianchi \textsc{IX}, i.e.~$\bpt x\in\mathcal M_{+++}$. Assume without loss of generality that $N_3\ge N_2\ge N_1$. Then we can write
\[
1-\Sigma^2=N^2 = (N_1+N_2-N_3)^2 -4 N_1N_2 \ge  (N_1+N_2-N_3)^2 -4 C_N^{\frac 2 3}.
\]
Therefore, we must have $N_2,N_3\to \infty$ and $N_1\to 0$. Apart from this, the same arguments as for Bianchi \textsc{VIII} apply.
\end{proof}
This result, i.e.~Lemma \ref{farfrom-A:lemma:uniform}, is not as explicitly stated in the previous works \cite{ringstrom2001bianchi, heinzle2009new}, and certainly not as extensively used, but is not a novel insight either. It directly proves that metric coefficients stay bounded, see Section \ref{sect:gr-phys-interpret}.

Using Lemma \ref{farfrom-A:lemma:uniform}, we can quickly see the following:
%
\begin{lemma}[Existence of $\omega$-limits]\label{lemma:farfromtaub:omega-existence}
For any initial condition $\bpt x_0 \in \mathcal M$, the $\omega$-limit set is nonempty, $\omega(\bpt x_0)\neq \emptyset$, i.e.~there exists a sequence of times $(t_n)_{n\in\NN}$ with $\lim_{n\to\infty}t_n=\infty$ such that the limit $\lim_{n\to\infty}\bpt x(t_n)$ exists.
\end{lemma}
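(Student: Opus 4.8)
The plan is to extract a convergent subsequence from a bounded trajectory, so the main task is to show that every forward trajectory of \eqref{eq:ode} eventually enters, and stays in, a bounded region of $\mathcal{M}$. Once boundedness along some sequence of times $t_n\to\infty$ is established, the Bolzano--Weierstrass theorem immediately furnishes a convergent subsequence $\bpt x(t_{n_k})$, whose limit lies in $\omega(\bpt x_0)$, so $\omega(\bpt x_0)\neq\emptyset$. In fact it suffices to produce a single sequence of times along which $|\bpt x(t_n)|$ stays bounded, rather than uniform boundedness for all large $t$.

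The key input is Lemma \ref{farfrom-A:lemma:uniform}: for any trajectory with $|N_1N_2N_3|(t_1)<C_N$ we have $|N_1N_2N_3|(t)\to 0$ essentially exponentially, hence certainly $|N_1N_2N_3|(t)$ is bounded for all $t\ge 0$. Combined with the Long-Time Existence lemma, which already gives a bound $\Sigma^2(t)<C(\bpt x_0)$ for all $t$ (and hence $N^2(t)=1-\Sigma^2$ is bounded, though possibly with the "wrong" sign), I would argue as follows. Write $N^2 = (N_i-N_j-N_k)^2 - 4N_jN_k$ for any choice of distinct indices (up to permutation): in Bianchi \textsc{VIII} two of the $N_i$ have the same sign and one has the opposite sign, so one of the cross terms $N_jN_k$ is negative and the quantity $(N_i-N_j-N_k)^2$ is bounded by $N^2 + 4|N_jN_k|$. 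Boundedness of $N^2$ together with boundedness of $|N_1N_2N_3|$ should force each $|N_i(t)|$ to stay bounded: if some $|N_i(t)|\to\infty$ along a subsequence, then since $|N_1N_2N_3|$ is bounded, the product of the other two tends to $0$, and tracking which combinations appear in $N^2$ (exactly as in the proof of Lemma \ref{farfrom-A:lemma:uniform}) yields a contradiction with the boundedness of the relevant square term, or else forces $\Sigma^2\to$ a value incompatible with the $\Sigma^2$-bound. The Bianchi \textsc{IX} case is in fact easier, since in $\mathcal M_{+++}$ one has $N^2 = (N_1+N_2-N_3)^2 - 4N_1N_2$ with $N_3\ge N_2\ge N_1$, and $N^2\ge -4(N_1N_2N_3)^{2/3}$ is bounded below while $\Sigma^2\le 1+4(N_1N_2N_3)^{2/3}(0)$ is bounded above, so all $N_i$ are bounded directly; this is already essentially done in the Long-Time Existence proof.

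The main obstacle is the Bianchi \textsc{VIII} case: there $N^2>0$ is not enough to bound the individual $N_i$ directly (two of them can a priori grow while the mixed term $-2N_1(N_2+N_3)$ keeps $N^2$ under control), so one must genuinely use the decay of $|N_1N_2N_3|$ from Lemma \ref{farfrom-A:lemma:uniform} to pin down the sizes, and it requires the same short-time-persistence argument ("$\Sigma_+'\approx 1$ for at least a unit of time if $\Sigma^2\approx 0$") that appeared in that lemma's proof. Given that machinery is already in place, the cleanest route is simply: fix any $t_1\ge 0$, note $|N_1N_2N_3|(t_1)$ is finite, apply Lemma \ref{farfrom-A:lemma:uniform} to get a uniform-in-$t$ bound on $|N_1N_2N_3|$, invoke the $\Sigma^2$-bound from Long-Time Existence, and then run the dichotomy "$|N_i|$ bounded or not" exactly as in Lemma \ref{farfrom-A:lemma:uniform} to conclude that $\sup_{t\ge 0}|\bpt x(t)|<\infty$; the existence of a convergent subsequence, and hence of $\omega(\bpt x_0)\neq\emptyset$, then follows from compactness.
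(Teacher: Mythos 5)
Your plan is to reduce the lemma to a claim that the forward trajectory is uniformly bounded, and then invoke Bolzano--Weierstrass. This is strictly stronger than what is needed (the paper only shows $|\bpt x(t)|\to\infty$ is impossible, which already gives a bounded subsequence), and it is where the proposal goes wrong: the "static" constraints you list do \emph{not} force each $|N_i|$ to stay bounded. Concretely, in $\mathcal M_{-++}$ take $N_2(t)\approx N_3(t)\to\infty$ with $N_1(t)\to 0$ so fast that $|N_1N_2N_3|$ and $|N_1|N_3$ stay small; then $N^2=(N_1-N_2+N_3)^2-4N_1N_3$ stays bounded, $\Sigma^2=1-N^2$ stays bounded, and $|N_1N_2N_3|$ stays bounded, yet the trajectory runs off to infinity. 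So "the product of the other two tends to zero" does not yield a contradiction with any square term in $N^2$: both factors in that vanishing product may behave oppositely ($N_1\to 0$, $N_3\to\infty$), and this is exactly the escape route. Your Bianchi~\textsc{IX} shortcut has the same flaw: bounding $\Sigma^2$ and $N^2$ only controls $|N_1+N_2-N_3|$ and $N_1N_2$, which again permits $N_2\approx N_3\to\infty$, $N_1\to 0$.

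The paper closes this gap by a genuinely dynamical argument that is absent from your proposal. Assuming $N_2,N_3\to\infty$, $N_1\to 0$, one tracks $\Sigma_+$ and $\delta_1=2\sqrt{|N_2N_3|}$ along the flow: since $\Sigma_+'=(1-\Sigma^2)(\Sigma_++1)+O(|N_1N_2N_3|)$, either $\Sigma_+\to-1$ and then the convergence is necessarily exponentially fast, whence $\int\partial_t\log\delta_1\,\dd t<\infty$ and $\delta_1$ stays bounded (contradiction); or $1+\Sigma_+>\epsilon$ eventually, whence $\int(1-\Sigma^2)\,\dd t<\infty$ and $\delta_1\to 0$ (contradiction). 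This is not the "$\Sigma^2$ cannot stay near zero for a unit of time" argument from Lemma~\ref{farfrom-A:lemma:uniform} that you point to --- that estimate gives the decay of $|N_1N_2N_3|$ but says nothing about the individual $|N_i|$. Without an argument of the paper's type (or an equivalent one), your proof does not establish boundedness, nor the weaker non-escape to infinity.
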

\begin{proof}
We begin again by considering the Bianchi \textsc{VIII} case of $\bpt x_0\in\mathcal M_{-++}$. The only way of avoiding the existence of an $\omega$-limit is to have $\lim_{t\to \infty}|\bpt x(t)|=\infty$. As in the proof of Lemma \ref{farfrom-A:lemma:uniform}, this is only possible via $N_2,N_3\to \infty$ and $N_1\to 0$. Then
\[\begin{aligned}
\Sigma_+'(t) &= (1-\Sigma^2)(\Sigma_++1) + 3N_1(N_2+N_3-N_1) \\
& \ge  (1-\Sigma^2)(\Sigma_++1) -9 |N_1N_2N_3|\\
&\ge (1-\Sigma^2)(\Sigma_++1) - C_1 e^{-C_2 t}\\
\frac{\delta_1'}{\delta_1}(t) &= -(\Sigma^2 +\Sigma_+),
\end{aligned}\]
where $\delta_1=2\sqrt{|N_2N_3|}$ as in \eqref{eq:delta-r-def} and \eqref{eq:ode2-delta}.
There are basically two possibilities with regards to the dynamics of $\Sigma_+$: If $\Sigma_+\to -1$ as $t\to+\infty$, then this convergence must happen exponentially, since $(1-\Sigma^2)(\Sigma_++1)\ge 0$. That is, if $\Sigma_+\to -1$, then we must have $|\Sigma_++1|(t) \le \frac{C_1}{C_2}e^{-C_2 t}$ for all sufficiently large times $t>t_0>0$. Then 
\[
\int_{t_0}^{\infty}\partial_t\log \delta_1(t)\dd t = \int_0^\infty -\Sigma_-^2 - \Sigma_+(1+\Sigma_+)\dd t\le 2\frac{C_1}{C_2^2}e^{-C_2 t_0} < \infty,
\]
and hence $\lim_{t\to\infty}\delta_1(t)<\infty$, contradicting our assumption. 

The other option is to have $1+\Sigma_+\not\to-1$ as $t\to\infty$. Then we must have
$1+\Sigma_+(t) > \epsilon$ for some $\epsilon>0$ for all sufficiently large times. Informally, we can see from Figure \ref{fig:delta-discs} that this contradicts $\delta_1\to\infty$. 
Formally, we can say: Since $\Sigma_+$ is bounded, $\int (1-\Sigma^2)\epsilon\dd t<\int \Sigma_+'\dd t + C < \infty$, and hence $\int(1-\Sigma^2)\dd t <\infty$. On the other hand, $\DD_t\log(\delta_1) = -\Sigma^2 -\Sigma_+ \le 1-\Sigma^2 -\epsilon$ and integration shows $\lim_{t\to\infty}\delta_1(t) = 0$, contradicting our assumption $N_2,N_3\to\infty$.

Next, we consider the Bianchi \textsc{IX} case of $\bpt x_0\in\mathcal M_{+++}$. Again, the only way of avoiding the existence of an $\omega$-limit is to have $\lim_{t\to \infty}|\bpt x(t)|=\infty$. As in the proof of Lemma \ref{farfrom-A:lemma:uniform}, this is only possible via $N_2,N_3\to \infty$ and $N_1\to 0$ for some permutation of indices. Using $1-\Sigma^2 \ge |1-\Sigma^2|-8|N_1N_2N_3|^{\frac 2 3}$ and replacing $1-\Sigma^2$ by $|1-\Sigma^2|$, we can use the same arguments as in the Bianchi \textsc{VIII} case.
\end{proof}
\subsection{The Bianchi \textsc{IX} Attractor Theorem}\label{sect:farfromA:attract}
The Mixmaster Attractor was named in the 60s. However, the first proof that $\mathcal A$ actually is an attractor was given in \cite[Theorem $19.2$, page 65]{ringstrom2001bianchi}, and simplified in \cite{heinzle2009new}. We shall state this important result:
\begin{thm}[Classical Bianchi \textsc{IX} Attractor Theorem]\label{farfromA:thm:b9-attract}
Let $\bpt x_0\in \mathcal M_{+++}\setminus \mathcal T$. Then \[\lim_{t\to\infty} \mathrm{dist}(\bpt x(t), \mathcal A)=0.\] Also, the $\omega$-limit set $\omega(\bpt x_0)$ does not consist of a single point.
\end{thm}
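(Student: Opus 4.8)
The plan is to follow the strategy of Ringström \cite{ringstrom2001bianchi} as simplified in \cite{heinzle2009new}, building on the monotonicity results already established in this section. First I would observe that, by Lemma \ref{lemma:farfromtaub:omega-existence}, the $\omega$-limit set $\omega(\bpt x_0)$ is nonempty, and since $|N_1N_2N_3|(t)\to 0$ (Lemma \ref{farfrom-A:lemma:uniform}), every point of $\omega(\bpt x_0)$ satisfies $N_1N_2N_3=0$, i.e.~at least one $N_i$ vanishes on the $\omega$-limit set. So the $\omega$-limit lies in the union of the lower Bianchi types. The task is then to upgrade ``at least one $N_i=0$'' to ``at most one $N_i\neq 0$'', i.e.~to show that at least two of the $N_i$ vanish on $\omega(\bpt x_0)$, which is exactly the statement that $\omega(\bpt x_0)\subseteq\mathcal A$.

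The key step uses the monotone (Lyapunov) functions available on the lower Bianchi types. The $\omega$-limit set is invariant and closed; if it contained a point with exactly two $N_i\neq 0$ (say $N_1=0$, $N_2,N_3\neq 0$, a genuine Bianchi $\textsc{VII}_0$ or $\textsc{VI}_0$ point), then by invariance the whole trajectory through that point lies in $\omega(\bpt x_0)$. In the Bianchi $\textsc{VII}_0$ / $\textsc{VI}_0$ invariant set we have the monotone function $\Sigma_+$ (with $\Sigma_+'=(1-\Sigma^2)(\Sigma_++1)$ when $N_1=0$), and $\Sigma^2\to 1$ along such trajectories by Lemmas \ref{lemma:farfromA:b6} and \ref{lemma:farfromA:b7}; the standard monotonicity-principle argument (a nonconstant monotone function cannot exist on an $\omega$-limit set unless the $\omega$-limit lies in its level-set boundary) forces $\omega(\bpt x_0)$ to lie in $\mathcal K\cup\mathcal{TL}_i$ relative to that stratum, hence to consist of Kasner points and Taub-line points only, i.e.~points with at most one $N_i\neq 0$. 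Since $N_3\ge N_2\ge N_1$ on a Bianchi \textsc{IX} trajectory forces the ``large'' pair to be exactly the one controlled, one rules out the $\textsc{VII}_0$-type limit in $\mathcal M_{+++}$ the same way, and the hypothesis $\bpt x_0\notin\mathcal T$ is what excludes the trajectory itself being trapped in a Taub-space (where $|N_1N_2N_3|$ need not go to zero). Combining these, $\omega(\bpt x_0)\subseteq\mathcal A$, which gives $\mathrm{dist}(\bpt x(t),\mathcal A)\to 0$ by compactness of $\mathcal A$ and nonemptiness/invariance of $\omega(\bpt x_0)$.

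For the final assertion that $\omega(\bpt x_0)$ is not a single point: suppose it were $\{\bpt p\}$ with $\bpt p\in\mathcal A$. If $\bpt p\in\mathcal K$ is a non-Taub Kasner point, it has a one-dimensional unstable manifold (Lemma \ref{farfromA:lemma:kasnermap-stability}), so a trajectory converging to it must, by the shadowing/escape argument near a hyperbolic-in-one-direction equilibrium, have entered along the center-stable manifold; but the relevant unstable $N_i$-coordinate is nonzero (since $\bpt x_0\notin$ the corresponding coordinate hyperplane and that sign is preserved), so the trajectory cannot converge to $\bpt p$ — it is pushed away along the $N_i$-direction, contradicting $\omega(\bpt x_0)=\{\bpt p\}$. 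If $\bpt p$ is a Taub point $\taubp_i$, then (because $\bpt x_0\notin\mathcal T$) the trajectory is not in the Taub-space, and a linearization/center-manifold analysis at $\taubp_i$ — where there is one stable direction $\partial_{N_i}$ and three center directions — shows the trajectory cannot converge to $\taubp_i$ either (this is the delicate point that \cite{ringstrom2001bianchi} handles with a careful estimate near the Taub points). If $\bpt p$ lies on a Kasner cap (exactly one $N_i\neq 0$, not on $\mathcal K$), it is not an equilibrium at all, so $\omega(\bpt x_0)=\{\bpt p\}$ is impossible since $\omega$-limits are invariant. Hence $\omega(\bpt x_0)$ is not a singleton.

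\textbf{Main obstacle.} The hard part is the behaviour near the Taub points $\taubp_i$: the standard normal-hyperbolicity heuristics break down there (two eigenvalues cross zero), so ruling out convergence of a non-Taub trajectory to a $\taubp_i$ requires a genuine local analysis — precisely the kind of delicate estimate near $\pm\taubp$ that the introduction flags as the crux of the whole paper and that the polar-coordinate equations \eqref{eq:neartaub-b9-t}, \eqref{eq:neartaub-b8-t} are designed to handle. Everything else (existence of $\omega$-limits, $|N_1N_2N_3|\to 0$, the monotonicity arguments on the lower strata) is already in hand or routine.
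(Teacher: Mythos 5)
Your opening reduction is sound: Lemmas \ref{lemma:farfromtaub:omega-existence} and \ref{farfrom-A:lemma:uniform} give a nonempty $\omega$-limit set contained in $\{N_1N_2N_3=0\}$, and the real work is upgrading this to $\omega(\bpt x_0)\subseteq\mathcal A$. The monotonicity argument you propose for that step, however, has a genuine gap and it is exactly at the crux of the theorem. The quantity $\Sigma_+$ is monotone only on the invariant stratum $\{N_1=0\}$, where $\Sigma_+'=(1-\Sigma^2)(\Sigma_++1)\ge 0$; on the complementary cap $\mathcal M_{+00}$ one computes from \eqref{eq:ode-unpacked} that $\Sigma_+'=N_1^2(\Sigma_+-2)<0$. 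So $\Sigma_+$ reverses monotonicity direction across the stratification of $\{N_1N_2N_3=0\}$, and the LaSalle-type principle you invoke (a monotone function is constant on $\omega$-limit sets) does not apply — it requires monotonicity on a neighborhood of $\omega(\bpt x_0)$, or an invariant open set containing all of it. Precisely the dangerous possibility the argument does not exclude is that $\omega(\bpt x_0)$ is a heteroclinic cycle through a Taub point, alternating $\textsc{VII}_0$-arcs where $\Sigma_+$ increases with cap-arcs where it decreases — the very cycle $\taubp_i\to W^s(-\taubp_i)\to -\taubp_i\to W^u(-\taubp_i)\to\taubp_i$ that survives as the $\textsc{Except}$ case in Bianchi \textsc{VIII}. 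Moreover, even granting the conclusion $\omega(\bpt x_0)\subseteq\mathcal K\cup\bigcup_i\mathcal{TL}_i$, that does not give $\omega(\bpt x_0)\subseteq\mathcal A$: a point of $\mathcal{TL}_1\setminus\mathcal K$ has $N_1=0$ but $N_2=N_3\neq 0$, hence \emph{two} nonzero $N_i$ and $\delta_1>0$, so the Taub lines (away from $\mathcal K$) lie outside $\mathcal A$.

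You have also misattributed where the hard estimate is needed. The ``main obstacle'' you flag — convergence of a non-Taub orbit to a single Taub point $\taubp_i$ — is in fact the easy case, handled by the short Lemma \ref{lemma:farfromA:taub-instability} (Taub space instability, a direct integration in polar coordinates), and is only used for the non-singleton claim. What the convergence $\mathrm{dist}(\bpt x(t),\mathcal A)\to 0$ itself requires is a bound on how much $\delta_i$ can \emph{grow} along a passage near $\taubp_i$; this is Lemma \ref{lemma:farfromtaub:no-delta-increase-near-taub}, the lengthy Ringstr\"om/Heinzle--Uggla averaging estimate. The paper's proof is purely quantitative: Lemma \ref{farfromA:lemma:limitpoint-on-A} gives $\liminf_{t\to\infty}\max_i\delta_i(t)=0$, and then the growth of $\delta_i$ is controlled away from $\taubp_i$ by Lemma \ref{lemma:farfromtaub:delta-increase-only-near-taub} and near $\taubp_i$ by Lemma \ref{lemma:farfromtaub:no-delta-increase-near-taub}, forcing $\limsup_{t\to\infty}\max_i\delta_i(t)=0$. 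No monotone quantity known on the lower strata replaces that averaging estimate. (As a minor aside, Lemma \ref{farfrom-A:lemma:uniform} holds for all Bianchi \textsc{IX} data including Taub solutions, so your parenthetical that $|N_1N_2N_3|$ ``need not go to zero'' on $\mathcal T$ is also incorrect; the role of the hypothesis $\bpt x_0\notin\mathcal T$ is to exclude orbits trapped in $\mathcal T$, which converge to a Taub point rather than to $\mathcal A$.)
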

The proofs of Theorem \ref{farfromA:thm:b9-attract} given in \cite{ringstrom2001bianchi} and \cite{heinzle2009new} require some subtle averaging arguments (summarized as Lemma \ref{lemma:farfromtaub:no-delta-increase-near-taub}), which are lengthy and fail to generalize to the case of Bianchi \textsc{VIII} initial data. We will now give the first steps leading to the proof of Theorem \ref{farfromA:thm:b9-attract}, up to the missing averaging estimates for Bianchi \textsc{IX} solutions. Then, we will state the missing estimates and show how they prove Theorem \ref{farfromA:thm:b9-attract}. Afterwards, we will give a high-level overview of how we replace Lemma \ref{lemma:farfromtaub:no-delta-increase-near-taub} in this work. Nevertheless, for the sake of completeness, we provide a proof of Lemma \ref{lemma:farfromtaub:no-delta-increase-near-taub} in Section \ref{sect:neartaub:forbidden-cones}.
\paragraph{Rigorous steps leading to Theorem \ref{farfromA:thm:b9-attract}.}
We first show that solutions cannot converge to the Taub-line $\mathcal{TL}_i$, if they do not start in the Taub-space $\mathcal T_i$:
\begin{lemma}[Taub Space Instability]\label{lemma:farfromA:taub-instability}
Let $\epsilon>0$ small enough. Then there exists a constant $C_{r,\epsilon}\in(0,1)$, such that the following holds:

Recall the definition of $r_1$, which measures the distance to $\mathcal T_1$ (see \eqref{eq:delta-r-def}). For any piece of trajectory $\bpt x:[t_1,t_2]\to \{\bpt x\in\mathcal M_{*++}: |\bpt \Sigma(\bpt x)-\taubp_1|\le \epsilon\}$, the following estimate holds:
\[\begin{aligned}
 r_1(\gamma(t_2))&\ge C_{r,\epsilon}h(\gamma(t_1)) r_1(\gamma(t_1)),\qquad \text{ where}\\
 h(\bpt x) &= |N_1| + |N_1|^2 + |N_1N_2N_3|.
 \end{aligned}\]
\end{lemma}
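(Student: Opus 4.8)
The goal is a lower bound on $r_1$ along a piece of trajectory that remains $\epsilon$-close (in the $\bpt\Sigma$-projection) to the Taub-point $\taubp_1$, inside an octant with $N_2,N_3>0$. The natural starting point is the logarithmic derivative of $r_1$ in polar coordinates around the invariant Taub-space $\mathcal T_1$, i.e.~equation \eqref{eq:neartaub-b9-q:r}:
\[
\frac{r_1'}{r_1}= N^2 - (\Sigma_++1)\frac{N_-^2}{r_1^2} + \sqrt{3}\,N_1\frac{\Sigma_- N_-}{r_1^2}.
\]
The first idea is that the ``bad'' term is $-(\Sigma_++1)\frac{N_-^2}{r_1^2}$, which is nonpositive (since $\Sigma_+\approx -1$ forces $\Sigma_++1\ge 0$ near $\taubp_1$, and this term is the one responsible for $\mathcal T_1$ being transient — solutions are pushed toward it, not away). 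The key observation is that $\frac{N_-^2}{r_1^2}=\sin^2\psi\le 1$ is bounded, so this term is bounded below by $-(\Sigma_++1)$, which is $O(\epsilon)$ near $\taubp_1$. The remaining terms $N^2$ and $\sqrt 3 N_1\frac{\Sigma_- N_-}{r_1^2}$ are controlled by $N^2=O(\epsilon)$ (again because $\Sigma^2+N^2=1$ and $\Sigma^2\approx 1$ near the Kasner circle) and by $|N_1|\frac{|\Sigma_-||N_-|}{r_1^2}\le |N_1|\cdot\frac12(\frac{\Sigma_-^2}{r_1^2}+\frac{N_-^2}{r_1^2})\le |N_1|$.

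So the crude estimate is $\frac{r_1'}{r_1}\ge -C(\epsilon + |N_1|)$ for a universal constant $C$, valid as long as the trajectory stays in the prescribed region (and $|N_1|$, etc., are bounded, which follows from Lemma on long-time existence — $\Sigma^2<C$ hence everything stays bounded on the octant). Integrating over $[t_1,t_2]$ gives
\[
r_1(t_2)\ge r_1(t_1)\exp\!\Big(-C\epsilon(t_2-t_1) - C\!\int_{t_1}^{t_2}\!|N_1(s)|\,\dd s\Big).
\]
The factor $\epsilon(t_2-t_1)$ is a problem: it is not bounded. This is where the structure of the claimed bound, $r_1(t_2)\ge C_{r,\epsilon}\,h(\gamma(t_1))\,r_1(t_1)$ with $h=|N_1|+|N_1|^2+|N_1N_2N_3|$, comes in — the factor $h$ is small, so we are allowed to lose a lot, but the loss must be expressed in terms of the \emph{initial} value of $N_1$ (and $\delta_1^2=4N_2N_3$), not in terms of elapsed time. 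The resolution must be to trade time for decay of $N_1$: near $\taubp_1$, $N_1$ is the unstable direction, so $\DD_t\log|N_1|=-(\Sigma^2+2\langle\taubp_1,\bpt\Sigma\rangle)=-(\Sigma^2-2\Sigma_+)\approx -(1+2)= -3$ (since $\Sigma^2\approx 1$, $\Sigma_+\approx -1$). More precisely, $\Sigma^2-2\Sigma_+ \ge$ some positive constant $c_0$ (e.g.~$c_0\approx 2$) in the $\epsilon$-neighborhood of $\taubp_1$ once $\epsilon$ is small. Hence $|N_1(s)|\le |N_1(t_1)|e^{-c_0(s-t_1)}$, so $\int_{t_1}^{t_2}|N_1|\,\dd s\le |N_1(t_1)|/c_0$ is bounded by (a constant times) the initial $|N_1|$ — good, this contributes a bounded multiplicative factor. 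The genuinely problematic term remains $C\epsilon(t_2-t_1)$ coming from the $N^2$ and $(\Sigma_++1)$ contributions.

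**The main obstacle**, then, is controlling $\int_{t_1}^{t_2}(N^2 + (\Sigma_++1))\,\dd s$ — equivalently $\int (1-\Sigma^2)\dd s$ plus $\int(\Sigma_++1)\dd s$ — in terms of the initial data rather than elapsed time. Here I expect one needs the more refined $\taubp_1$-adapted equations \eqref{eq:neartaub-b9-t}: rewriting $1+\Sigma_+=\frac{\Sigma_-^2+N^2}{1-\Sigma_+}$, equation \eqref{eq:neartaub-b9-t:r} becomes
\[
\frac{r_1'}{r_1}=r_1^2\sin^2\psi\,\frac{-\Sigma_+}{1-\Sigma_+} + N_1 h_r,
\]
where the first term is now \emph{nonnegative} (since $-\Sigma_+>0$ near $\taubp_1$) — so $r_1$ only decreases through the $N_1 h_r$ term, with $|h_r|\le 5$ on the relevant region. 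Integrating this cleaner identity:
\[
r_1(t_2)\ge r_1(t_1)\exp\!\Big(-5\!\int_{t_1}^{t_2}\!|N_1(s)|\,\dd s\Big)\ge r_1(t_1)\exp\!\big(-5|N_1(t_1)|/c_0\big),
\]
and since $|N_1(t_1)|$ is bounded on the octant, $\exp(-5|N_1(t_1)|/c_0)\ge$ const $>0$. But that only gives $r_1(t_2)\ge C\,r_1(t_1)$, \emph{without} the extra small factor $h(\gamma(t_1))$ — which is actually a \emph{stronger} statement than claimed (we don't need the $h$ factor). So I suspect the $h$ factor in the statement is there because the sharp version of \eqref{eq:neartaub-b9-t:r} is only valid once we know $\Sigma_+<0$ and the $N_i$ are genuinely small, and to bootstrap into that regime — or to handle the transition where the trajectory is near $\taubp_1$ in $\bpt\Sigma$ but the $N_i$ are still moderate — one picks up a factor controlled by $h$. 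The plan is therefore: (i) reduce to the regime $\Sigma_+<0$, $|N_i|\le\frac12$ using that $|\bpt\Sigma-\taubp_1|\le\epsilon$ forces $\Sigma_+<-1+O(\epsilon)<0$ and, combined with the constraint, $N^2=O(\epsilon)$ small; (ii) apply \eqref{eq:neartaub-b9-t:r} with the bound $|h_r|\le5$; (iii) bound $\int|N_1|\dd s$ via exponential decay of the unstable $N_1$-direction, getting $\int|N_1|\,\dd s\le C|N_1(t_1)|$; (iv) likewise if needed $|N_1N_2N_3|$ is monotone nonincreasing (stated Lemma), so its contribution is controlled by its initial value; (v) assemble. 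The delicate point requiring care is step (iii)–(iv): making the constant $c_0$ uniform and confirming that the \emph{only} losses incurred are of the advertised orders $|N_1|, |N_1|^2, |N_1N_2N_3|$ — this is the bookkeeping that makes $h$ appear, and it is the part I'd expect to occupy most of the real work.
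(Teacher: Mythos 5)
Your overall strategy — pass to the rewritten equation \eqref{eq:neartaub-b9-t:r}, observe that the non-$N_1$ term $r_1^2\sin^2\psi\,\tfrac{-\Sigma_+}{1-\Sigma_+}$ is nonnegative once $\Sigma_+<0$, and integrate the residual $N_1 h_r$ contribution against the exponential decay $\DD_t\log|N_1|\lesssim -3$ — is exactly the paper's approach, and the observation that the crude form \eqref{eq:neartaub-b9-q:r} cannot work (because of the $\epsilon(t_2-t_1)$ loss) is well taken. The gap is in step (i). You claim that $|\bpt\Sigma-\taubp_1|\le\epsilon$ together with the constraint forces ``$N^2=O(\epsilon)$ small'' and hence $|N_i|\le\tfrac12$. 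The first half is true, but the second does \emph{not} follow: $N^2 = N_1^2+N_2^2+N_3^2-2(N_1N_2+N_2N_3+N_3N_1)$ has cancellations and vanishes on large configurations (e.g.\ $N_1=4N_2=4N_3$ gives $N^2=0$ in $\mathcal M_{+++}$ for any scale; $N_2\approx N_3$ large, $N_1$ tiny gives $N^2\approx 0$ in $\mathcal M_{-++}$). Consequently the bound $|h_r|\le 5$ from the remark after \eqref{eq:neartaub-b9-t-h} is simply not available on the region the lemma actually concerns, and your later assertion that ``$|N_1(t_1)|$ is bounded on the octant'' is false for the same reason. This is not a bootstrapping subtlety to be deferred: it is the reason you obtained a cleaner conclusion than the statement claims — you were implicitly working on a compact piece of the octant.

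The repair, which is where the $h$-factor genuinely originates, is to expand $N_1 h_r$ directly without assuming $|N_+|\le\tfrac12$. From \eqref{eq:neartaub-b9-t-h}, using only $|\Sigma_-N_-|\le\tfrac12 r_1^2$, $N_-^2\le r_1^2$ and $1-\Sigma_+\ge1$, one gets $|N_1 h_r|\le C\bigl(|N_1|+|N_1|^2+|N_1|N_+\bigr)$. The problematic cross-term $|N_1|N_+$ is then controlled by the Gauss constraint $N_-^2+N_1^2-2N_1N_+=N^2=1-\Sigma^2=O(\epsilon)$: in $\mathcal M_{-++}$ all three summands are nonnegative so $|N_1|N_+\le O(\epsilon)$ and $N_-$ is tiny, whence $N_2N_3\approx\tfrac14N_+^2$ and for $N_+\ge1$ one gets $|N_1|N_+\le C|N_1N_2N_3|$; in $\mathcal M_{+++}$ the constraint reads $4N_2N_3=(N_+-N_1)^2+O(\epsilon)$, and splitting into $N_+\le2N_1$ versus $N_+>2N_1$ yields $|N_1|N_+\le C(|N_1|^2+|N_1N_2N_3|)$ for $\epsilon$ small. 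Either way one lands on $|N_1 h_r|\le C\,(|N_1|+|N_1|^2+|N_1N_2N_3|)$, and each of these three quantities decays at a uniformly exponential rate near $\taubp_1$ (for $|N_1|$ via \eqref{eq:ode2-ni}, for $|N_1N_2N_3|$ via $\DD_t\log|N_1N_2N_3|=-3\Sigma^2\le -3+O(\epsilon)$), so the time-integral is bounded by a constant times $h(\gamma(t_1))$. Integration of $\DD_t\log r_1\ge -|N_1h_r|$ then gives the claimed estimate. So the factor $h$ is not an artifact of bootstrapping into a good regime; it is exactly the $\int|N_1h_r|$ you have to pay, and your step (i) discarded the very terms that produce it.
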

\begin{proof}
This Lemma uses the polar coordinates introduced in Section \ref{sect:polar-coords}. We use \eqref{eq:neartaub-b9-t} to see
\[\begin{aligned}
\DD_t \log r_1&\ge r_1^2\sin^2\psi \frac{-\Sigma_+}{1-\Sigma_+} - C|N_1| - C |N_1^2| - C|N_1N_2N_3|,\\
\DD_t \log |N_1|&\approx -3 < -1,\\
\DD_t \log|N_1N_2N_3|&\approx -3 < -1.
\end{aligned}\]
The desired estimate follows by integration.
\end{proof}

Together with Lemma \ref{lemma:farfromtaub:omega-existence}, this allows us to see that there exist $\omega$-limit points on $\mathcal K$:
\begin{lemma}\label{farfromA:lemma:limitpoint-on-A}
Let $\bpt x_0\in\mathcal M_{+++}\setminus \mathcal T$. Then there exists at least one $\omega$-limit point 
$\bpt p\in \left(\mathcal K\setminus\{\taubp_1, \taubp_2,\taubp_3\}\right)\cap \omega(\bpt x_0)$.

Let $\bpt x_0\in\mathcal M_{+-+}\setminus \mathcal T_2$. Then there exists at least one $\omega$-limit point 
$\bpt p\in \left(\mathcal K\setminus\{\taubp_2\}\right)\cap \omega(\bpt x_0)$.
\end{lemma}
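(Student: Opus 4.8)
\emph{Setup.} By Lemma \ref{lemma:farfromtaub:omega-existence} the set $\omega(\bpt x_0)$ is nonempty, and by Lemma \ref{farfrom-A:lemma:uniform} (with $t_1=0$, $t_2\to\infty$) one has $|N_1N_2N_3|(t)\to 0$, so every $\bpt p\in\omega(\bpt x_0)$ satisfies $N_1(\bpt p)N_2(\bpt p)N_3(\bpt p)=0$: the $\omega$-limit set is confined to the lower Bianchi types. It is closed and flow-invariant, hence contains, with each of its points, the whole bi-infinite orbit and its $\alpha$- and $\omega$-limit sets; being an $\omega$-limit set, it is moreover internally chain transitive.

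\emph{Harvesting a Kasner point.} Let $\bpt p\in\omega(\bpt x_0)$. If $\bpt p\in\mathcal K$ we are done unless $\bpt p$ is one of $\taubp_1,\taubp_2,\taubp_3$. If $\bpt p$ lies on a Bianchi II cap (exactly two $N_i$ vanish), the orbit through $\bpt p$ is a chord of $\mathcal K$ through one of the points $-2\taubp_i$, whose far endpoint $\bpt p_+\in\mathcal K$ lies on an open arc missing two of the Taub points; if $\bpt p_+$ equals the third Taub point $\taubp_j$, then the orbit is the diameter chord, whose near endpoint $-\taubp_j\in\mathcal K$ is not a Taub point and lies in $\omega(\bpt x_0)$ --- done. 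If $\bpt p$ is a Bianchi $\textsc{VII}_0$ point off the Taub lines (for the \textsc{VIII} statement, a Bianchi $\textsc{VI}_0$ point), then Lemma \ref{lemma:farfromA:b7} (resp.\ Lemma \ref{lemma:farfromA:b6}) gives $\bpt p_+\in\mathcal K\cap\omega(\bpt x_0)$ with $\Sigma_+(\bpt p_+)\ge\tfrac12$, so $\bpt p_+\neq\taubp_1$; in the \textsc{VIII} case we are done unless $\bpt p_+=\taubp_2$, and in the \textsc{IX} case unless $\bpt p_+\in\{\taubp_2,\taubp_3\}$.

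\emph{Disposing of the Taub configurations (the crux).} What remains is that $\omega(\bpt x_0)$ might consist only of Taub points, Taub-line equilibria, and $\textsc{VII}_0$- or $\textsc{VI}_0$-orbits whose forward limit is a Taub point. Each such configuration puts into $\omega(\bpt x_0)$ a point $\bpt q$ with $\bpt\Sigma(\bpt q)=\taubp_j$ for some $j$ (which may be taken $j=1$ in the \textsc{IX} case; in the \textsc{VIII} case only $\taubp_2$ can occur, $\taubp_1$ and $\taubp_3$ being admissible limit points anyway); then $\Sigma^2=1$, hence $N^2=0$, hence $N_{j'}=N_{j''}$, so $\bpt q\in\mathcal T_j$ and $r_j(\bpt q)=0$. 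As $\bpt x_0\notin\mathcal T_j$ (for \textsc{IX}, $\bpt x_0\notin\mathcal T_i$ for all $i$), Lemma \ref{lemma:farfromA:taub-instability} applies at $\taubp_j$: picking $t_k\to\infty$ with $\bpt x(t_k)\to\bpt q$ and letting $s_k<t_k$ be the last time before $t_k$ at which $|\bpt\Sigma-\taubp_j|=\epsilon$, the estimate $r_j(\bpt x(t_k))\ge C_{r,\epsilon}\,h(\bpt x(s_k))\,r_j(\bpt x(s_k))$ together with $r_j(\bpt x(t_k))\to 0$ forces $h(\bpt x(s_k))r_j(\bpt x(s_k))\to 0$; passing to a subsequential limit $\bpt x(s_k)\to\bpt q'\in\omega(\bpt x_0)$ one gets $|\bpt\Sigma(\bpt q')-\taubp_j|=\epsilon$, $N_1N_2N_3(\bpt q')=0$ and $h(\bpt q')r_j(\bpt q')=0$, which (using $N_1N_2N_3(\bpt q')=0$) reduces in all cases to $N_j(\bpt q')=0$, so $\bpt q'$ is a point of the octant-closure $\{N_j=0\}$ at distance $\epsilon$ from $\taubp_j$. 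If $\bpt q'\in\mathcal K$ it is a non-Taub Kasner point (for $\epsilon$ small) --- done; if $\bpt q'$ is a genuine Bianchi II point, its forward limit in $\mathcal K$ is $O(\sqrt\epsilon)$-close to $\taubp_j$, hence distinct from the other two Taub points and from $\taubp_j$ (a tangency point of both II caps inside $\{N_j=0\}$) --- done; and in the remaining \textsc{IX} branch, in which $\bpt q'$ is a genuine $\textsc{VII}_0$ point, its forward limit is a Taub point $\taubp_{j'}$ with $j'\neq j$, now in $\omega(\bpt x_0)$, on which we repeat the argument. This recursion must terminate: otherwise $\omega(\bpt x_0)\subseteq\{N_1N_2N_3=0\}$ would contain Taub points but, near them, only equilibria, $\textsc{VII}_0$-orbits and Taub-line equilibria, so --- the Taub points having no unstable direction, hence no orbit of $\omega(\bpt x_0)$ leaving their neighbourhoods --- no forward $\epsilon$-chain inside $\omega(\bpt x_0)$ could escape a neighbourhood of such a Taub point, contradicting internal chain transitivity.

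\emph{Main obstacle.} Everything before the last paragraph is soft; the difficulty is entirely there. The delicate points are the entry-time bookkeeping and the extraction of $\bpt q'$; the degenerate cases where the trajectory is eventually trapped near a Taub point or is unbounded at the times $s_k$ (both disposed of by the \emph{non-degenerate} lower bound on $r_j$ in Lemma \ref{lemma:farfromA:taub-instability}, using $N_j>0$ resp.\ $N_j<0$ along Bianchi \textsc{IX} resp.\ \textsc{VIII} trajectories); and the termination of the Taub-point recursion via internal chain transitivity. That Lemma \ref{lemma:farfromA:taub-instability} furnishes control of $r_j$ which does not collapse as $\taubp_j$ is approached is exactly what makes this step work.
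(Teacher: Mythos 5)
Your proposal follows the same overall strategy as the paper's (very terse) proof: confine $\omega(\bpt x_0)$ to the lower Bianchi types via Lemma~\ref{lemma:farfromtaub:omega-existence} and Lemma~\ref{farfrom-A:lemma:uniform}, produce Kasner limit points by chasing $\alpha$/$\omega$-limits with Lemma~\ref{lemma:farfromA:b6}/\ref{lemma:farfromA:b7}, and fall back on Lemma~\ref{lemma:farfromA:taub-instability} when only Taub points remain. The entry-time bookkeeping and the extraction of $\bpt q'$ from $h(\bpt x(s_k))\,r_j(\bpt x(s_k))\to 0$ is, I believe, exactly the argument the paper has in mind but elides. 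Your treatment of the Bianchi~\textsc{VIII} statement is clean: a $\textsc{VI}_0$/$\textsc{VII}_0$ orbit in $\{N_2=0\}$ has forward limit with $\langle\taubp_2,\bpt\Sigma\rangle\le-\tfrac12$, so it can never be $\taubp_2$, and one extraction step suffices.

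There is, however, a genuine gap in the Bianchi~\textsc{IX} termination step. The claim ``the Taub points having no unstable direction, hence no orbit of $\omega(\bpt x_0)$ leaving their neighbourhoods'' is false. Each $\taubp_j$ has a three-dimensional center manifold (Lemma~\ref{farfromA:lemma:kasnermap-stability}), and orbits do escape its neighbourhood along center directions. In fact, the very $\textsc{VII}_0$ orbits that generate your recursion are orbits of $\omega(\bpt x_0)$ that start $\epsilon$-close to $\taubp_j$ and flow away to $\taubp_{j'}$; an $\epsilon$-chain from $\taubp_j$ can jump onto such an orbit and escape. So internal chain transitivity does not yield the contradiction you need, and the recursion is not shown to terminate. (The paper's own ``it suffices to exclude $\omega(\bpt x_0)\subseteq\mathcal{TL}_i\setminus\mathcal K$'' does not obviously dispose of this case either; the difficulty is real, but the chain-transitivity route as written does not close it.)

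Two smaller points. First, ``$h(\bpt q')r_j(\bpt q')=0$ reduces in all cases to $N_j(\bpt q')=0$'' is not quite right: $r_j(\bpt q')=0$ together with $N_1N_2N_3(\bpt q')=0$ also allows $N_{j'}(\bpt q')=N_{j''}(\bpt q')=0$ with $N_j(\bpt q')\neq 0$ (a point on the diameter Bianchi~\textsc{II} heteroclinic $-\taubp_j\to\taubp_j$). That case is actually good news — its $\alpha$-limit $-\taubp_j$ is non-Taub — but the stated reduction skips it. Second, passing to a subsequential limit of $\bpt x(s_k)$ requires compactness of the forward orbit's closure, which you wave at but do not establish; the lower bound of Lemma~\ref{lemma:farfromA:taub-instability} controls $r_j$, not $|\bpt x(s_k)|$, so it does not by itself furnish the needed boundedness.
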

\begin{proof}
We already know that there exists an $\omega$-limit point $\bpt y\in\omega(\bpt x_0)$; this point must have $|N_1N_2N_3|=0$ and hence be of a lower Bianchi type. In view of Lemma \ref{lemma:farfromA:b6} and Lemma \ref{lemma:farfromA:b7} and the fact that both $\alpha(\bpt y)\subseteq \omega(\bpt x_0)$ and $\omega(\bpt y)\subseteq \omega(\bpt x_0)$, it suffices to exclude the case where $\omega(\bpt x_0)\subseteq \mathcal T\mathcal L_i\setminus \mathcal K$ for some $i$. This possibility is excluded by Lemma \ref{lemma:farfromA:taub-instability}.
\end{proof}
Therefore, we know that $\liminf_{t\to\infty}\max_i\delta_i(t)=0$ and hence $\liminf_{t\to\infty}\mathrm{dist}(\bpt x(t), \mathcal A)=0$ for initial conditions in $\mathcal M_{\pm\pm\pm}\setminus \mathcal T$. While we presently lack the necessary estimates to prove the missing part of the attractor theorem, $\limsup_{t\to\infty}\max_i\delta_i(t)=0$, we can at least describe how this may fail: Each $\delta_i$ can only grow by a meaningful factor in the vicinity of a Taub-point $\taubp_i$:
\begin{lemma}\label{lemma:farfromtaub:delta-increase-only-near-taub}
There exists a constant $C>0$ such that, given $\epsilon_+>0$, we find $\epsilon_{123}=\epsilon_{123}(\epsilon_+)>0$, the following holds:

Suppose we have a piece of trajectory $\bpt x:[t_1,t_2]\to \mathcal M_{\pm\pm\pm}$, such that:
\begin{enumerate}
\item We have the product bound $|N_1N_2N_3|(t)<\epsilon_{123}$ for all $t\in[t_1,t_2]$
\item The first point of the partial trajectory is bounded away from the Taub-line $\mathcal {TL}_1$, i.e.~$1+\Sigma_+(t_1)>\epsilon_+$
\item The piece of trajectory has comparatively large $\delta_1$, in the sense $1\ge\delta_1^4(t)\ge|N_1N_2N_3|(t)$ for all $t\in[t_1,t_2]$, i.e.~$|N_1|\le 4\delta_1^2=16|N_2N_3|$ for all $t\in[t_1,t_2]$.
\end{enumerate}
Then $\delta_1$ can only increase by a bounded factor along this piece of trajectory, i.e.
\[\delta_1(t_2)\le \exp \left(\frac{C}{\epsilon_+}\right)\delta_1(t_1).\]
\end{lemma}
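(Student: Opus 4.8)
The goal is to bound $\delta_1(t_2)/\delta_1(t_1)$ along a piece of trajectory that stays near the Taub-point $\taubp_1$ (where growth of $\delta_1$ is possible) while also staying comparatively large (assumption $3$, i.e.\ $|N_1|$ small compared to $|N_2 N_3|$) and staying small in the product $|N_1 N_2 N_3|$ (assumption $1$). The natural starting point is the polar-coordinate differential equations \eqref{eq:neartaub-b9-t} (or \eqref{eq:neartaub-b8-t} in the Bianchi \textsc{VIII} case), which express $\delta_1'/\delta_1$ in terms of $r_1^2$, $\psi$, $\Sigma_+$ and an $N_1$-error term of bounded size ($|h_\delta|\le 5$ under the stated smallness of $|N_1|,|N_\pm|$). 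Since assumption $3$ forces $|N_1|\le 16|N_2N_3| \le 16 \le$ small, and the product bound plus smallness of $\delta_1$ keep everything in the regime where the $h$-terms are controlled, the $N_1 h_\delta$ contribution to $\int \delta_1'/\delta_1$ is bounded by $C\int |N_1|\,\dd t$, which is finite and small because $|N_1|$ decays essentially exponentially (combining $\DD_t\log|N_1|\approx -3$ near $\taubp_1$ with the product bound, as in Lemma \ref{lemma:farfromA:taub-instability}).

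\textbf{Key steps.} First, reduce to the regime $|\bpt\Sigma - \taubp_1|\le\epsilon$ (equivalently $\Sigma_+$ near $-1$): outside a neighborhood of $\taubp_1$ we have $\DD_t\log\delta_1 = -(\Sigma^2+\Sigma_+) < 0$ by Figure \ref{fig:delta-discs}, so $\delta_1$ cannot increase there, and the interesting part of $[t_1,t_2]$ is the portion spent near $\taubp_1$; one may have to split $[t_1,t_2]$ into the near-$\taubp_1$ and far-from-$\taubp_1$ pieces and handle each. Second, on the near-$\taubp_1$ piece, use \eqref{eq:neartaub-b9-t:delta}:
\[
\DD_t\log\delta_1 = \frac{-1}{1-\Sigma_+}r_1^2\cos^2\psi + \frac{-\Sigma_+}{1-\Sigma_+}r_1^2\sin^2\psi + N_1 h_\delta.
\]
The first term is nonpositive and can be discarded for an upper bound. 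The crucial observation is that the second (genuinely positive) term is controlled by the growth of $r_1$: from \eqref{eq:neartaub-b9-t:r}, $\DD_t\log r_1 = \frac{-\Sigma_+}{1-\Sigma_+}r_1^2\sin^2\psi + N_1 h_r$, so
\[
\DD_t\log\frac{\delta_1}{r_1} = \DD_t\log\delta_1 - \DD_t\log r_1 = \frac{-1}{1-\Sigma_+}r_1^2\cos^2\psi + N_1(h_\delta - h_r) \le C|N_1|.
\]
Hence $\log\bigl(\delta_1(t_2)/r_1(t_2)\bigr) - \log\bigl(\delta_1(t_1)/r_1(t_1)\bigr) \le C\int_{t_1}^{t_2}|N_1|\,\dd t$, and since $r_1\le\sqrt 2$ (bounded) while $r_1(t_1)$ may be small, this alone is not quite enough — we also need that $r_1$ itself does not grow too much, or rather we need a direct bound. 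So instead the cleaner route: integrate $\DD_t\log\delta_1$ directly, using that the positive part is $\le \frac{-\Sigma_+}{1-\Sigma_+}r_1^2\sin^2\psi \le \DD_t\log r_1 + C|N_1|$, giving $\log\delta_1(t_2) - \log\delta_1(t_1) \le \log r_1(t_2) - \log r_1(t_1) + C\int|N_1|$, which still has the bad $-\log r_1(t_1)$. The fix is to recall that near $\taubp_1$ one has $r_1 \gtrsim \delta_1$ is false in general — rather, $r_1^2 = \Sigma_-^2 + N_-^2$ while $\delta_1^2 = N_+^2 - N_-^2$, so in fact $\delta_1$ and $r_1$ are genuinely independent, and the honest bound must use assumption $3$ to control the time spent: $\int_{t_1}^{t_2}|N_1|\,\dd t \le \int_{t_1}^{t_2} 16|N_2N_3|\,\dd t = 4\int_{t_1}^{t_2}\delta_1^2\,\dd t$, and since $\delta_1\le 1$ and... this is still circular. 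The correct resolution, which I expect the author uses, is that near $\taubp_1$ the quantity $\psi$ advances at rate $\psi' \approx \sqrt 3 r_1|\sin\psi|$ (from \eqref{eq:neartaub-b9-t:psi}, dropping lower-order terms), so $r_1^2\sin^2\psi\,\dd t$ integrates like $\sin\psi\,\dd\psi / (\sqrt3/r_1)\cdot r_1 = $ a bounded quantity over one ``pass'' through $\psi\in(0,\pi)$; combined with $\epsilon_+$ controlling $1-\Sigma_+ \ge$ something and $\Sigma_+ \le -1 + $ (stuff), one gets $\int \frac{-\Sigma_+}{1-\Sigma_+}r_1^2\sin^2\psi\,\dd t \le C/\epsilon_+$.

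\textbf{Main obstacle.} The hard part is the last estimate: bounding $\int \frac{-\Sigma_+}{1-\Sigma_+}\,r_1^2\sin^2\psi\,\dd t$ by $C/\epsilon_+$. This is a shadowing-type estimate for the passage near $\taubp_1$: one must argue that the trajectory does not linger indefinitely near $\taubp_1$ with $\sin\psi$ bounded away from zero, using the $\psi$-equation to show $\psi$ keeps rotating (so $r_1^2\sin^2\psi$ is, in an averaged sense, controlled by the total variation of $\psi$ times $r_1$), while the role of assumption $2$ ($1+\Sigma_+(t_1) > \epsilon_+$) is to bound how long the trajectory can already be near $\taubp_1$ at the start — since $1+\Sigma_+$ can only decrease slowly (being $\ge 0$ and with controlled derivative), the trajectory entered the $\epsilon_+$-neighborhood of $\taubp_1$ only recently, limiting the ``accumulated'' growth. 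I expect the proof to combine: (i) monotonicity/near-monotonicity of $\Sigma_+$ near $\taubp_1$ to convert the time-integral into an integral over $1+\Sigma_+ \in (0,\epsilon_+)$; (ii) the relation $1+\Sigma_+ = \frac{\Sigma_-^2 + N^2}{1-\Sigma_+} \ge \frac{\Sigma_-^2}{2} = \frac{r_1^2\cos^2\psi}{2}$ to control $r_1$ in terms of $1+\Sigma_+$; and (iii) the error terms $N_1 h$, bounded via the essentially-exponential decay of $|N_1|$ from assumption $1$ and Lemma \ref{farfrom-A:lemma:uniform}. Putting these together yields $\delta_1(t_2) \le \exp(C/\epsilon_+)\,\delta_1(t_1)$, where the $1/\epsilon_+$ reflects that the positive contribution to $\log\delta_1$ is proportional to the length of the interval $(0,\epsilon_+)$ of admissible $1+\Sigma_+$ values divided by a rate that degenerates as $\epsilon_+ \to 0$.
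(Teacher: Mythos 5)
There is a genuine gap: you never close the argument, and the main line of reasoning you pursue — the polar-coordinate equations \eqref{eq:neartaub-b9-t} and the $\psi$-rotation averaging — is the wrong tool for this lemma. That machinery is what the paper deploys for the much harder Lemma \ref{lemma:farfromtaub:no-delta-increase-near-taub} (where no assumption of the form $|N_1|\lesssim\delta_1^2$ is available). The present lemma is handled entirely in the original $(\bpt\Sigma,\bpt N)$ coordinates, and the key quantity is not $r_1$, $\psi$, or the quotient $\delta_1/r_1$, but $\Sigma_+$ itself, used as a near-monotone Lyapunov function. You explicitly flag that your attempts are circular (``this is still circular'', ``is false in general''), and your last paragraph only \emph{guesses} at a combination of ingredients, with the $\Sigma_+$-monotonicity relegated to item (i) inside a framework that still assumes the polar averaging is needed. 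It is not.

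The actual proof is short. Reusing the formula from the proof of Lemma \ref{lemma:farfromtaub:omega-existence},
\[
\Sigma_+' = (1-\Sigma^2)(\Sigma_++1) + 3N_1(N_2+N_3-N_1),
\]
assumption $3$ ($|N_1|\le 4\delta_1^2$, hence $|N_1N_j|\le C\sqrt{|N_1N_2N_3|}$ once one observes that $|N_2|,|N_3|$ are bounded via the Gauss constraint) together with the standard $1-\Sigma^2\ge |1-\Sigma^2|-8|N_1N_2N_3|^{2/3}$ gives $\Sigma_+'\ge |1-\Sigma^2|(1+\Sigma_+) - C\sqrt{|N_1N_2N_3|}$. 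If $\epsilon_{123}$ is small enough, this bootstraps from assumption $2$ to $1+\Sigma_+(t)>\epsilon_+/2$ for \emph{all} $t\in[t_1,t_2]$ — note this is a forward-propagated lower bound, not the ``trajectory only recently arrived'' interpretation you proposed. Since $|\Sigma_+|\le 2$, integrating $\Sigma_+'\gtrsim |1-\Sigma^2|\epsilon_+$ over $[t_1,t_2]$ yields $\int|1-\Sigma^2|\,\dd t< C/\epsilon_+$. Finally $\DD_t\log\delta_1 = -\Sigma^2-\Sigma_+ = (1-\Sigma^2)-(1+\Sigma_+)<(1-\Sigma^2)-\epsilon_+/2$, and integrating gives $\log\delta_1(t_2)-\log\delta_1(t_1)<C/\epsilon_+$. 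No splitting of $[t_1,t_2]$ into near- and far-$\taubp_1$ pieces, no $r_1$, no $\psi$.
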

\begin{proof}
Recall the proof of Lemma \ref{lemma:farfromtaub:omega-existence}. From $\delta_1^4\ge|N_1N_2N_3|$, we know that $|N_1|\le \delta_1^2$ and therefore 
\[\begin{aligned}
\Sigma_+' &= (1-\Sigma^2)(\Sigma_++1) + 3N_1(N_2+N_3-N_1) \\
& \ge  |1-\Sigma^2|(\Sigma_++1) -C \sqrt{|N_1N_2N_3|}.
\end{aligned}\]
If $\epsilon_{123}>0$ is small enough, this allows us to see that $1+\Sigma_+(t)>\frac 1 2 \epsilon_+$ for all $t\in [t_0,t_1]$. 
Since $|\Sigma_+|\le 2$ is bounded, we see that $\int|1-\Sigma^2|\epsilon_+\dd t < C$  and hence $\int |1-\Sigma^2|\dd t < \frac{C}{\epsilon_+}$. On the other hand, $\DD_t \log \delta_1 < (1-\Sigma^2) - \frac 1 2\epsilon_+$, which yields the claim upon integration.
\end{proof}
\paragraph{Sketch of classic proofs of Theorem \ref{farfromA:thm:b9-attract}.}
The previous proofs of Theorem \ref{farfromA:thm:b9-attract}, both in \cite{ringstrom2001bianchi} and \cite{heinzle2009new}, rely on the following estimate (Lemma $3.1$ in \cite{heinzle2009new}, Section $15$ in \cite{ringstrom2001bianchi}):
\begin{hlemma}{\ref{lemma:farfromtaub:no-delta-increase-near-taub}}\label{hlemma:no-delta-increase-near-taub}
We consider without loss of generality the neighborhood of $\mathcal T_1$. Let $\epsilon>0$ small enough. Then there exists a constant $C_{\delta,\epsilon}\in(1,\infty)$, such that, for any piece of trajectory $\gamma:[t_1,t_2]\to \{\bpt x\in\mathcal M_{*++}: |\bpt \Sigma(\bpt x)-\taubp_1|\le \epsilon,\,|N_1|\le 10, |\delta_1|\le 10\}$, the following estimate holds:
\[ \delta_1(\gamma(t_2))\le C_{\delta,\epsilon}\delta_1(\gamma(t_1)). \]
\end{hlemma}
The proof of this Lemma \ref{lemma:farfromtaub:no-delta-increase-near-taub} requires some lengthy averaging arguments and will be deferred until Section \ref{sect:averaging-unneeded}, page \pageref{lemma:farfromtaub:no-delta-increase-near-taub}. We stress that Lemma \ref{lemma:farfromtaub:no-delta-increase-near-taub} is not actually needed for any of the results in this work, and is proven only for the sake of completeness of the literature review.
\begin{remark}
The above formulation of Lemma \ref{lemma:farfromtaub:no-delta-increase-near-taub} includes the Bianchi \textsc{VIII} case of $N_2,N_3>0>N_1$. The versions stated in \cite{heinzle2009new} and \cite{ringstrom2001bianchi} only consider the Bianchi \textsc{IX} case $N_1,N_2,N_3>0$; however, their proofs extend to this case virtually unchanged.
\end{remark}
\begin{proof}[Proof of Theorem \ref{farfromA:thm:b9-attract} using Lemma \ref{lemma:farfromtaub:no-delta-increase-near-taub}]
We begin by showing $\lim_{t\to\infty}\delta_1(t)=0$.
By Lemma \ref{farfromA:lemma:limitpoint-on-A}, we have $\liminf_{t\to\infty}\delta_1(t)=0$. Suppose $\limsup_{t\to\infty}\delta_1(t)>0$.
Then $\delta_1$ must increase from arbitrarily small values up to some finite nonzero infinitely often, and hence we find arbitrarily late subintervals $[T^L,T^R]\subseteq [0,\infty)$ such that $\delta_1^4>|N_1N_2N_3|$ and $\delta_1$ increases by an arbitrarily large factor. This contradicts Lemma \ref{lemma:farfromtaub:delta-increase-only-near-taub} and Lemma \ref{lemma:farfromtaub:no-delta-increase-near-taub}, which basically say that large increases of $\delta_1$ can neither happen with $1+\Sigma_+>\epsilon_+$ nor with $1+\Sigma_+<\epsilon_+$.

The same applies for $\delta_2$ and $\delta_3$. 

Suppose there was only a single $\omega$-limit point. This point cannot lie in $\mathcal K\setminus \{\taubp_i\}$, since at each of these points, at least one of the $N_i$ is unstable. The only remaining possibility is $\omega(\bpt x_0)=\taubp_i$ for some $i\in\{1,2,3\}$, which is excluded by Lemma \ref{lemma:farfromA:taub-instability}.
\end{proof}

\begin{remark}
The above proof also shows that $N_2N_3\to 0$ in the Bianchi \textsc{VIII} case $\bpt x_0\in\mathcal M_{-++}\setminus \mathcal T_1$. This generalization is directly possible while keeping \cite{heinzle2009new} virtually unchanged, even though it has not been explicitly noted therein.

The above proof also shows that in Bianchi $\textsc{VII}_0$, i.e.~for any $\bpt x_0\in\mathcal M_{0++}$,
 we must have $\lim_{t\to-\infty}\bpt x(t)=\bpt p_-$ with $\bpt p_-=(-1,0,0,N,N)$ for some $N>0$. This is false in the case of Bianchi $\textsc{VI}_0$: There we have for any $\bpt x_0\in\mathcal M_{0+-}$ that $\lim_{t\to-\infty}\bpt x(t)=(-1,0,0,0,0)$. Hence, $\delta_1$ can grow by an arbitrarily large factor near $\taubp_1$ in $\mathcal M_{*+-}$, and
no analogue of Lemma \ref{lemma:farfromtaub:no-delta-increase-near-taub} can hold in the Bianchi \textsc{VIII} models $\mathcal M_{*+-}$ and $\mathcal M_{*-+}$.

This difficulty is partially responsible for the fact that, for $\bpt x_0\in\mathcal M_{+-+}$, it was previously unknown whether $\lim_{t\to\infty}N_2N_3(t)\overset{?}{=}0$ and $\lim_{t\to\infty}N_1N_3(t)\overset{?}{=}0$. 
\end{remark}
\paragraph{Sketch of our replacement for Lemma \ref{lemma:farfromtaub:no-delta-increase-near-taub}.}\label{paragraph:farfromA:sketch}
In this work, we will replace the rather subtle averaging estimates from Lemma \ref{lemma:farfromtaub:no-delta-increase-near-taub} by the program outlined in this paragraph. Let us first repeat the reasons, why we want to avoid Lemma \ref{lemma:farfromtaub:no-delta-increase-near-taub}:
\begin{enumerate}
\item The analogue statement of Lemma \ref{lemma:farfromtaub:no-delta-increase-near-taub} in Bianchi \textsc{VIII} is wrong. Lemma \ref{lemma:farfromA:b6} shows that counterexamples to such a generalization can be found by taking any sequence $\{\bpt x_n\}$ of initial data converging to any point in $\mathcal M_{0-+}$. Therefore, any argument relying on Lemma \ref{lemma:farfromtaub:no-delta-increase-near-taub} has no chance of carrying over to the Bianchi \textsc{VIII} case.
\item The proof of Lemma \ref{lemma:farfromtaub:no-delta-increase-near-taub} is lengthy and requires subtle averaging arguments.
\item The complexity of the proof of Lemma \ref{lemma:farfromtaub:no-delta-increase-near-taub} in not unavoidable: Most of the effort is spent trying to understand asymptotic regimes \emph{that do not occur anyway}.
\end{enumerate}
Our replacement is described by the following program:
\begin{enumerate}
\item At first, we study pieces of trajectories $\bpt x:[0,T]\to\mathcal M_{\pm\pm\pm}$, which start near $\mathcal A$ and stay bounded away from the generalized Taub-spaces $\mathcal T^G_i$, i.e.~have all $r_i>\epsilon$. Along such partial solutions, all $\delta_i$ decay essentially exponentially (Proposition \ref{prop:farfromtaub-main}).
\item Next, consider how solutions near $\mathcal A$ can enter the neighborhood of the generalized Taub-spaces. This can only happen near some $-\taubp_i$ (Proposition \ref{prop:farfromtaub-main2}).
\item For such solutions entering the vicinity of $-\taubp_i$, the quotient $\frac{\delta_i}{r_i}$ is initially small, and stays small near $-\taubp_i$ and along the heteroclinic leading to $+\taubp_i$ (Proposition \ref{prop:near-taub:qc}).
\item Next, we study solutions near $\taubp_i$ for which $\frac{\delta_i}{r_i}$ is initially small. Then, $\frac{\delta_i}{r_i}$ stays small. This additional condition ($\delta_i \ll r_i$) allows us to describe solutions with easier averaging arguments and stronger conclusions than Lemma \ref{lemma:farfromtaub:no-delta-increase-near-taub}. Bianchi \textsc{VIII} solutions can be analyzed same way. This is done in Section \ref{sect:neartaub:neartaub}, leading to the conclusion that $\delta_i$ decays essentially exponentially, with nonuniform rate (Proposition \ref{prop:neartaub:main}).
\item Finally, we combine the previous steps in Section \ref{sect:global-attract} in order to prove Theorems \ref{thm:local-attractor}, \ref{thm:b9-attractor-global} and \ref{thm:b8-attractor-global}. These extend Theorem \ref{farfromA:thm:b9-attract} with somewhat finer control over solutions and provide an analogue in Bianchi \textsc{VIII}.
\end{enumerate}
%
%
%

%
\section{Dynamics near the Mixmaster-Attractor $\mathcal A$}\label{sect:near-A}\label{sect:far-from-taub}
Our previous arguments in Section \ref{sect:farfromA} about the dynamics of \eqref{eq:ode} have been of a rather qualitative and global character. We have established that there exist $\omega$-limit points on the Mixmaster-attractor $\mathcal A$.

We have also sketched the classical proof that trajectories converge to $\mathcal A$ in the case of Bianchi Type \textsc{IX} (Theorem \ref{farfromA:thm:b9-attract}) (where we deferred the proof of the crucial estimate Lemma \ref{lemma:farfromtaub:no-delta-increase-near-taub} to a later point). 

In this section, we will give a more precise description of the behaviour near $\mathcal A$. The goal of this section is to show that pieces of trajectories $\gamma:[0,T]\to \mathcal M$ near $\mathcal A$ converge to $\mathcal A$ essentially exponentially, at least as long as they stay bounded away from the Taub-points $\taubp_i$.

\begin{figure}[hbpt]
 \centering
        \begin{subfigure}[b]{0.45\textwidth}
                \centering
                \includegraphics[width=\textwidth]{./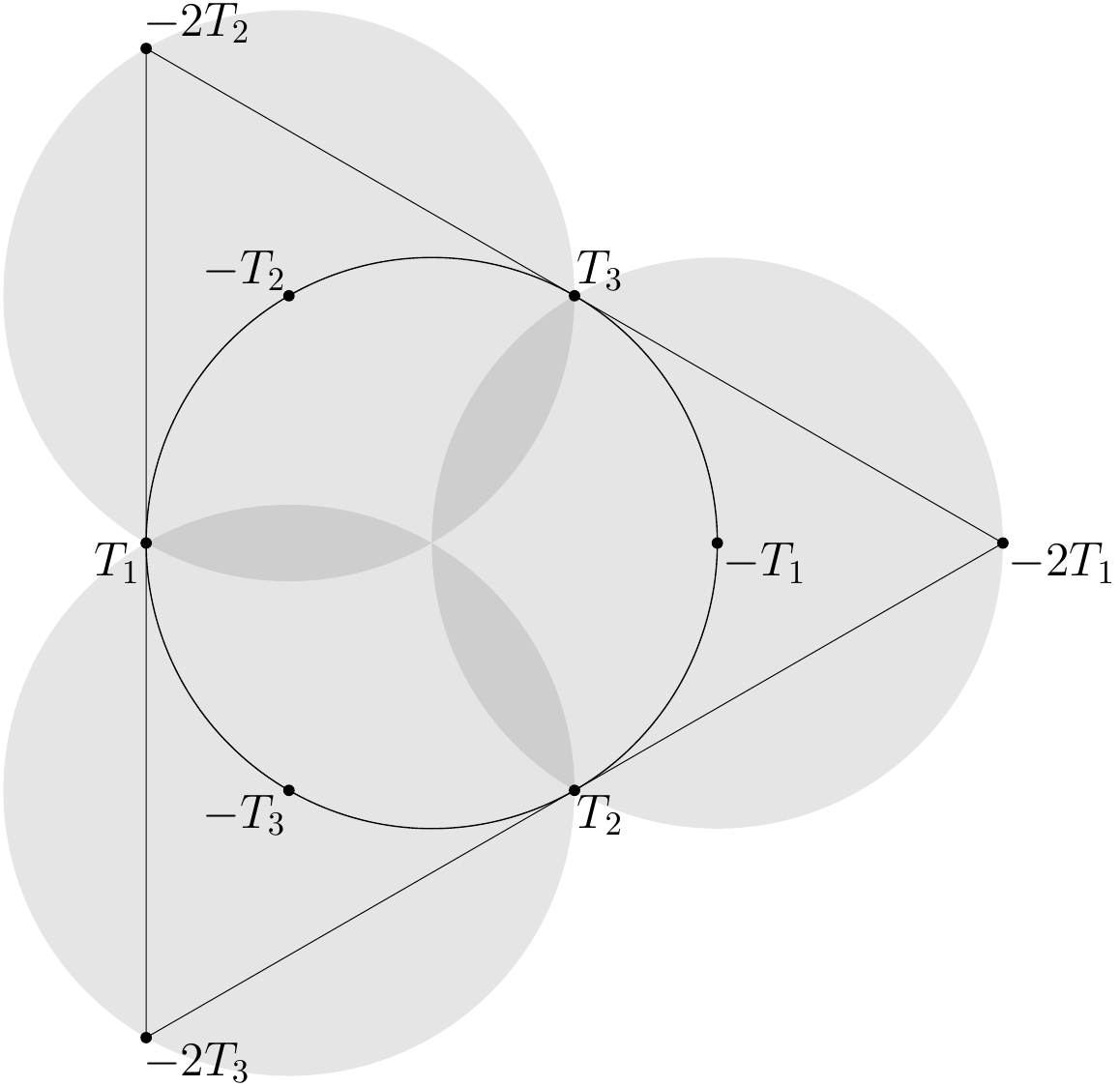}
                \caption{The stability defining discs for the $N_i$}\label{fig:n-discs2}
        \end{subfigure}~~%
        \begin{subfigure}[b]{0.45\textwidth}
                \centering
                \includegraphics[width=\textwidth]{./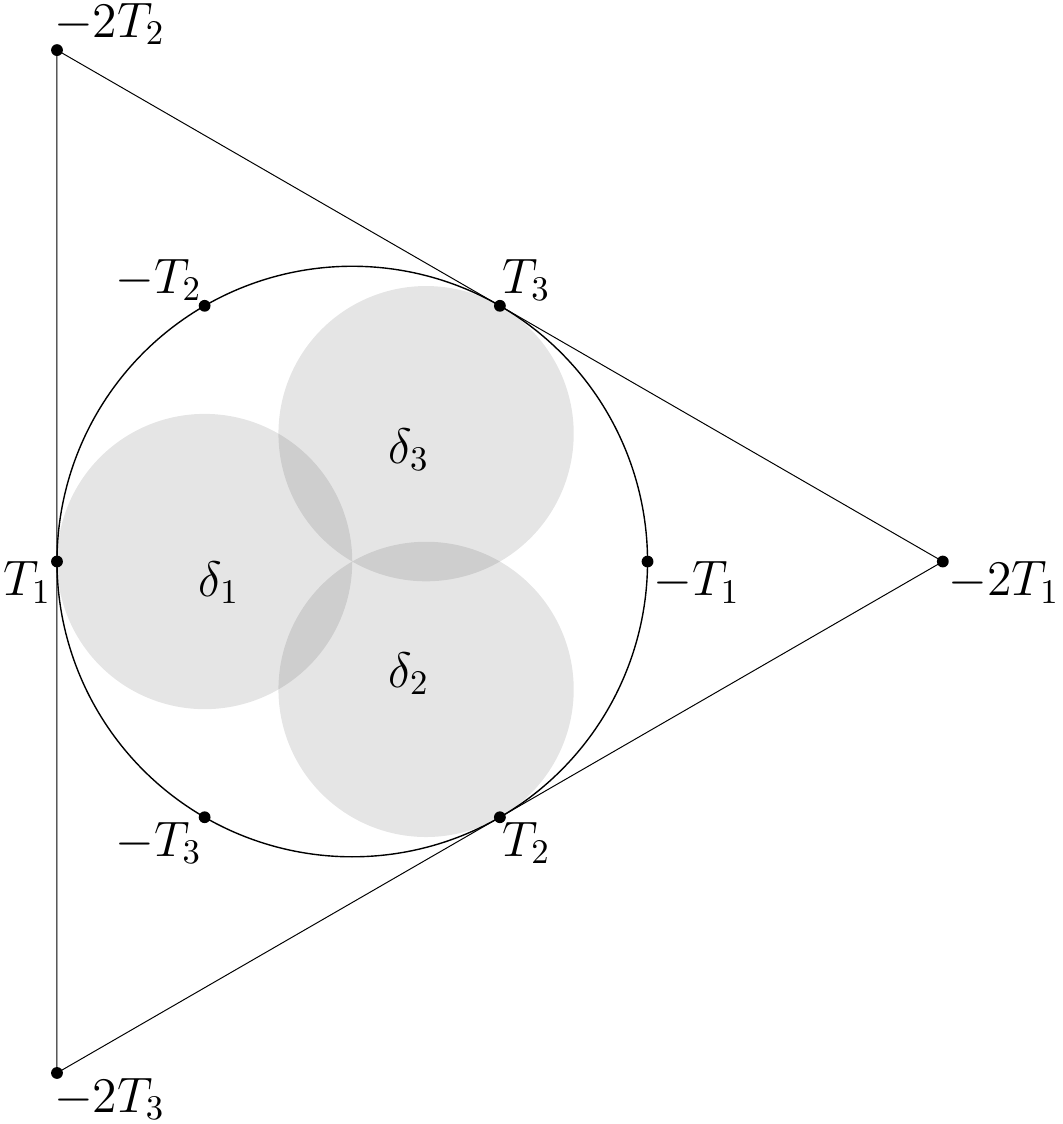}
                \caption{The stability defining discs for the $\delta_i$}\label{fig:delta-discs2}
        \end{subfigure}\\%
        \begin{subfigure}[b]{0.45\textwidth}
        \centering
        \includegraphics[width=\textwidth]{./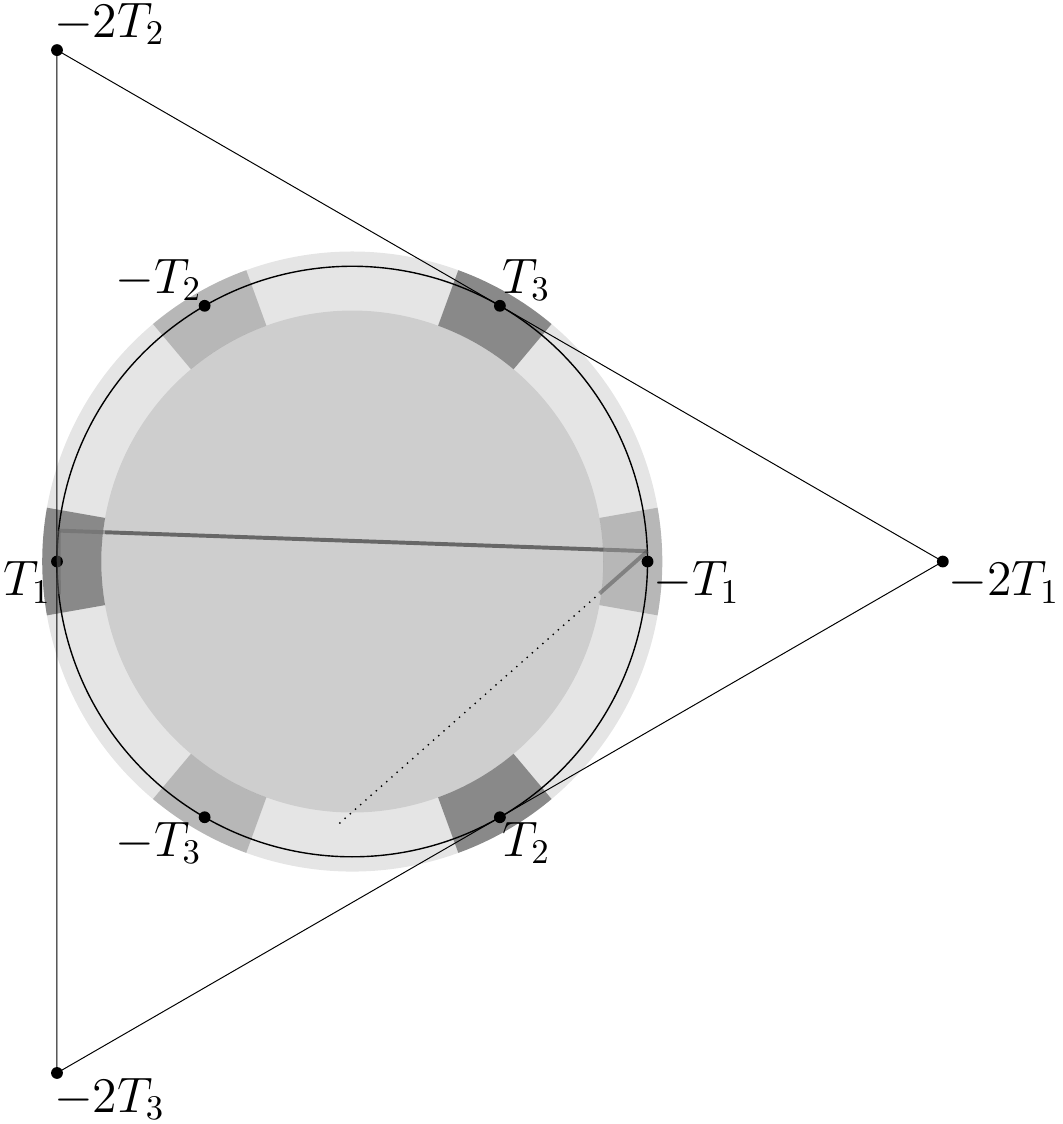}
        \caption{A trajectory leaving the controlled regions}\label{fig:evil-traj}\label{fig:nuclear-evil}
        \end{subfigure}~~%
        \begin{subfigure}[b]{0.4\textwidth}
        \centering
        \includegraphics[width=\textwidth]{./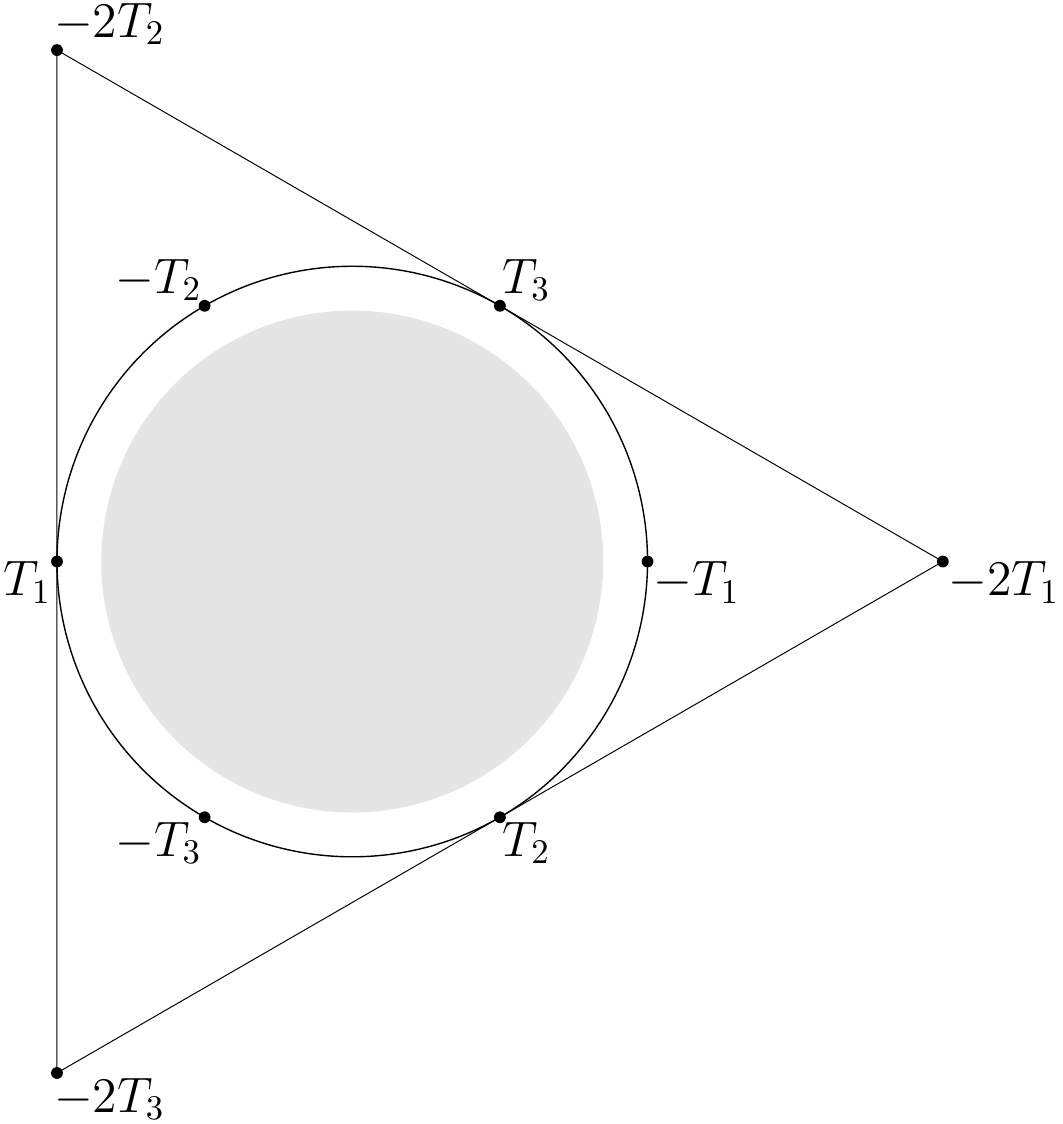}
        \caption{The region \textsc{Cap}}\label{fig:nuclear-cap}
        \end{subfigure}\\%
        \begin{subfigure}[b]{0.4\textwidth}
        \centering
        \includegraphics[width=\textwidth]{./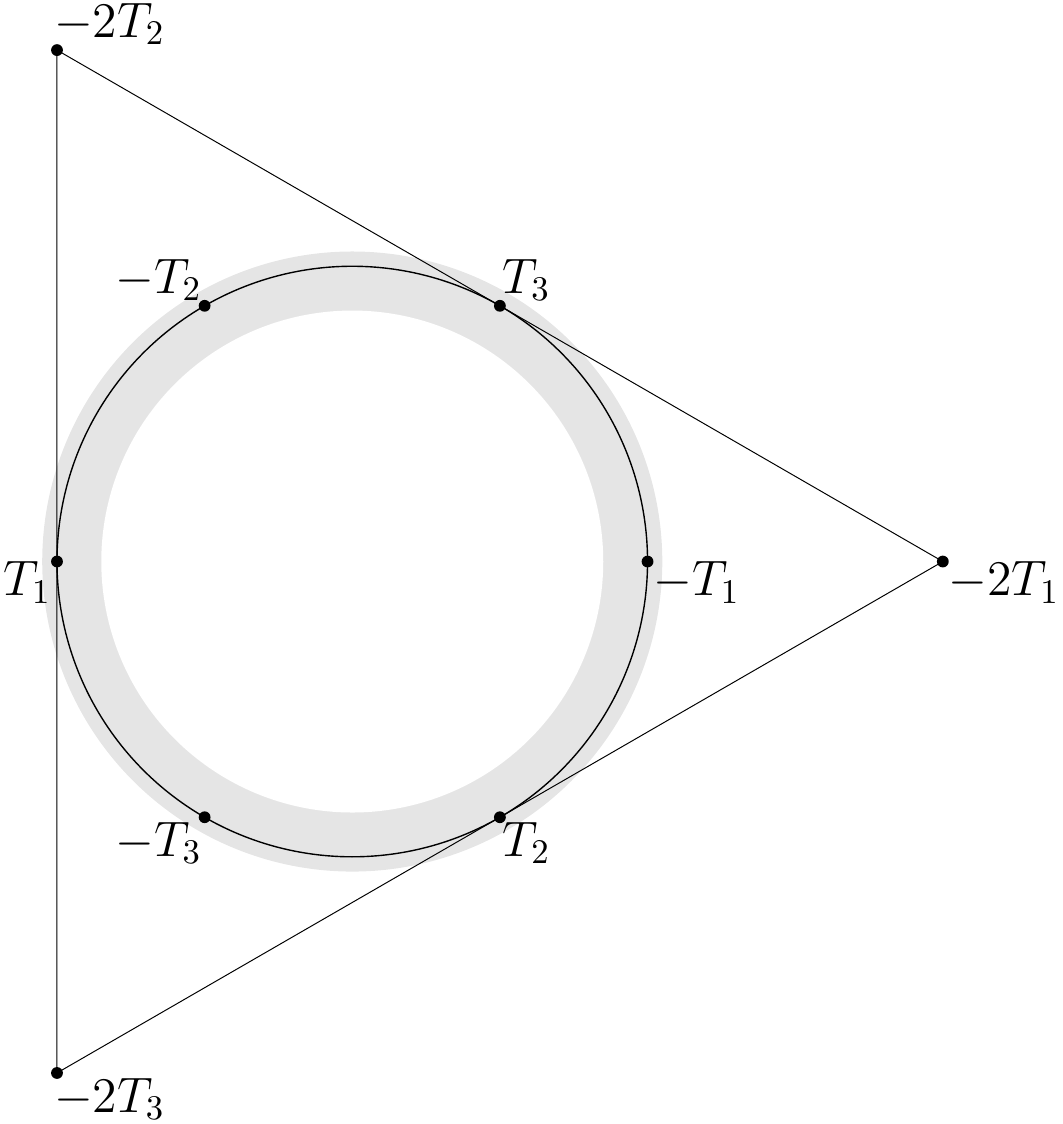}
        \caption{The region \textsc{Circle}}\label{fig:nuclear-circle}        
        \end{subfigure}~~
        \begin{subfigure}[b]{0.4\textwidth}
                \centering
                \includegraphics[width=\textwidth]{./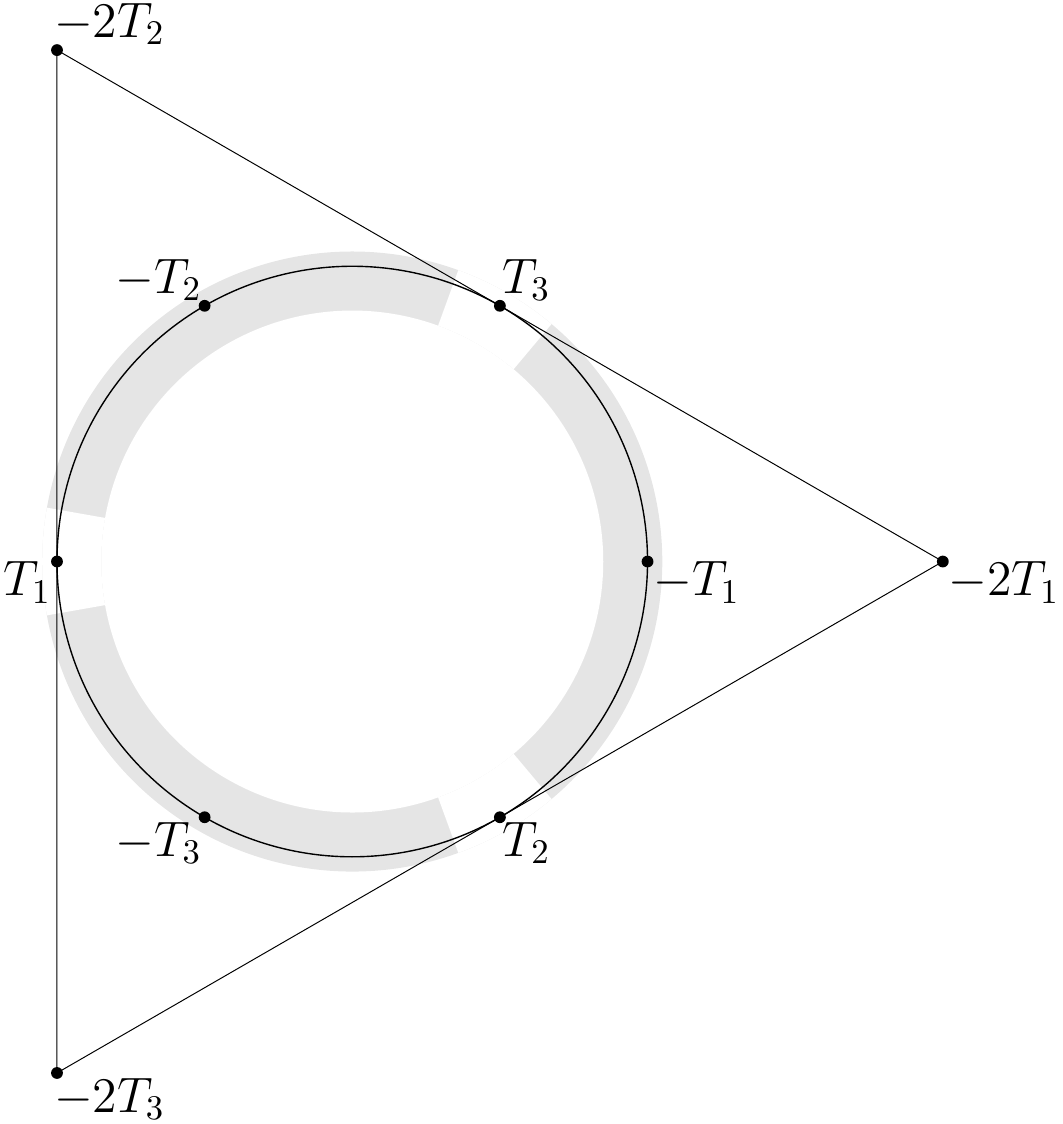}
                \caption{The region \textsc{Hyp}}
                \label{fig:nuclear-hyp}
        \end{subfigure}%
        \caption{The relevant regions are colored in gray and are not up to scale (except for Fig.\ref{fig:n-discs} and Fig.\ref{fig:delta-discs}). See also Fig.\ref{fig:kasner-segments}.}\label{fig:nuclear}        
\end{figure}

The goal of this section is to prove the following two propositions \ref{prop:farfromtaub-main} and \ref{prop:farfromtaub-main2}:
\begin{proposition}[Essentially uniform exponential convergence to $\mathcal A$ away from the Taub points]\label{prop:farfromtaub-main}
For any $\varepsilon_T>0$ small enough, there exist constants $\epsilon_d, \epsilon_N, \epsilon_s, C_0>0$ (depending on $\varepsilon_T$) and $C_T=5$ such that the following holds:

\begin{subequations}
Consider a trajectory $\bpt x:[0,T^*)\to\mathcal M_{\pm\pm\pm}$, such that, for all $t\in [0,T^*)$ the following inequalities hold:
\begin{align}
\max_i\delta_i(t)&< \epsilon_d\label{eq:nearA:capdist}\\
\min_i d(\bpt x(t), \taubp_i)&> \varepsilon_T\label{eq:nearA:taubdist}.
\end{align}
Assume further that for the initial condition $\bpt x_0=\bpt x(0)$, the following stronger estimate holds:
\begin{equation}
\max_i\delta_i(\bpt x_0)< C_0^{-1}\epsilon_d 
\end{equation}
\end{subequations}\begin{subequations}
Then, each $\delta_i$ is essentially uniformly exponentially decreasing in $[0,T^*)$, i.e.
\begin{equation}
\delta_i(t_2) \le C_0 e^{-\epsilon_s(t_2-t_1)}\delta_i(t_1)\qquad\forall\,0\le t_1\le t_2<T^*,\quad i\in\{1,2,3\}.\label{eq:nearA:exponential}
\end{equation}
Hence, if $T^*<\infty$ and one of the inequalities \eqref{eq:nearA:taubdist}, \eqref{eq:nearA:capdist} is violated at time $T^*$, it must be \eqref{eq:nearA:taubdist}, and \eqref{eq:nearA:capdist} must still hold at $T^*$.
\end{subequations}
\end{proposition}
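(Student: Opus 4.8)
The engine of the argument would be the closed scalar equation~\eqref{eq:ode2-delta}, rewritten as $\partial_t\log\delta_i=-(|\bpt\Sigma-\tfrac12\taubp_i|^2-\tfrac14)$, so that $\delta_i$ can grow only when $\bpt\Sigma$ lies in the open disc $D_i$ whose diameter is the segment $[\bpt{0},\taubp_i]$; since $|\taubp_i|=1$, this disc is internally tangent to the Kasner circle $\mathcal K$ and $\overline{D_i}\cap\mathcal K=\{\taubp_i\}$. First I would record the quantitative form of this tangency: for $\bpt p\in\mathcal K$, elementary geometry of two internally tangent circles of radii $1$ and $\tfrac12$ gives $|\bpt p-\tfrac12\taubp_i|^2-\tfrac14\ge c\,d(\bpt p,\taubp_i)^2$ with an absolute constant $c>0$. (The identity $\partial_t\log(\delta_1\delta_2\delta_3)=-3\Sigma^2$, i.e.~the known $\partial_t\log|N_1N_2N_3|=-3\Sigma^2$, is a convenient consistency check but is not actually needed.)

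Next I would decompose the admissible region $\{\max_i\delta_i<\epsilon_d,\ \min_i d(\bpt x,\taubp_i)>\varepsilon_T\}$. Because $\max_i\delta_i<\epsilon_d$ forces every pairwise product $|N_jN_k|<\tfrac14\epsilon_d^2$, at most one $|N_i|$ can exceed $\tfrac12\epsilon_d$. Fixing $\epsilon_N:=c_1\varepsilon_T$ for a small absolute $c_1$, the region splits into the \emph{collar} $\{|N_i|<\epsilon_N\ \forall i\}$ and the \emph{cap regions} $\{|N_i|\ge\epsilon_N\}$ (on which the other two $|N_j|$ are then $O(\epsilon_d^2/\epsilon_N)$) — precisely the \textsc{Circle}/\textsc{Hyp} and \textsc{Cap} regions of Figure~\ref{fig:nuclear}, whose union is everything. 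On the collar one has $\Sigma^2=1-N^2=1+O(\epsilon_N^2)$, so $\bpt\Sigma$ is $O(\epsilon_N^2)$-close to $\mathcal K$; as $|\bpt N|=O(\epsilon_N)$ there, the hypothesis $d(\bpt x,\taubp_i)>\varepsilon_T$ gives $d(\bpt\Sigma,\taubp_i)\gtrsim\varepsilon_T$, and combined with the tangency bound and $\epsilon_N\ll\varepsilon_T$ this yields $\partial_t\log\delta_i\le-\epsilon_c$ for \emph{every} $i$, with $\epsilon_c\sim\varepsilon_T^2$. So throughout the collar all three $\delta_i$ decay at one uniform exponential rate.

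For the cap regions I would show excursions are short and well separated. In $\{|N_i|\ge\epsilon_N\}$ one has $N^2=N_i^2+O(\epsilon_d^2/\epsilon_N)$ and, from~\eqref{eq:ode2-sigma}, $\bpt\Sigma'=N_i^2(\bpt\Sigma+2\taubp_i)+O(\epsilon_d^2/\epsilon_N)$; since $\tfrac12\le|\bpt\Sigma+2\taubp_i|\le 3$, the function $|\bpt\Sigma+2\taubp_i|$ increases at rate $\ge\tfrac14\epsilon_N^2$, so the orbit leaves the cap region within a time $\tau:=O(\epsilon_N^{-2})$. As $|\partial_t\log\delta_i|\le C_{\mathrm{rate}}$ is bounded everywhere on the admissible region, each cap excursion multiplies each $\delta_i$ by at most $C_*:=e^{C_{\mathrm{rate}}\tau}$. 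Crucially, on leaving the cap region the orbit sits near the Kasner point $\bpt p_+=K(\bpt p_-)$ with $\bpt\Sigma$ moving radially away from $-2\taubp_i$, so $N_i$ keeps decreasing, and re-entering any cap region requires some coordinate $N_m$ to climb from its current size $O(\epsilon_d^2/\epsilon_N)$ up to $\epsilon_N$; since $|\partial_t\log N_m|$ is bounded this costs collar time at least $\tau_0:=c_2\log(\varepsilon_T^2/\epsilon_d^2)$, and choosing $\epsilon_d$ small (depending on $\varepsilon_T$) makes $\theta:=C_*e^{-\epsilon_c\tau_0}<\tfrac12$.

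Finally I would assemble the estimate on an arbitrary $[t_1,t_2]\subseteq[0,T^*)$: cut it into maximal collar intervals and cap excursions, pair each excursion with the collar interval immediately after it — a pair of total length $T$ then contributes $\le C_*e^{-\epsilon_c(T-\tau)}\le\theta\,e^{-\epsilon_c(T-\tau-\tau_0)}\le e^{-\epsilon_s T}$ for a suitable $\epsilon_s\in(0,\epsilon_c]$ with $\epsilon_s(\tau+\tau_0)\le-\log\theta$ — and treat the leading collar segment ($e^{-\epsilon_s(\cdot)}$) and a trailing short excursion ($\le C_*e^{\epsilon_s\tau}e^{-\epsilon_s(\cdot)}$) separately; multiplying through telescopes to $\delta_i(t_2)\le C_0\,e^{-\epsilon_s(t_2-t_1)}\,\delta_i(t_1)$ with $C_0:=\max\{1,C_*e^{\epsilon_s\tau}\}$, which is~\eqref{eq:nearA:exponential}. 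Letting $t_2\uparrow T^*$ then gives $\max_i\delta_i(T^*)\le C_0\max_i\delta_i(\bpt x_0)<\epsilon_d$, so \eqref{eq:nearA:capdist} survives at $T^*$ and any violated inequality there must be~\eqref{eq:nearA:taubdist}. The one genuinely delicate step is the separation estimate for cap excursions — verifying, uniformly over all octants including Bianchi~\textsc{VIII}, that re-entering a cap region really does cost $\Omega(\log(1/\epsilon_d))$ of collar time (the awkward case being re-entry of the \emph{same} cap, handled via the fact that the just-fired coordinate is heading to $0$). This is where the dependence of $\epsilon_d$ on $\varepsilon_T$ is forced and is exactly what converts bounded per-excursion growth of $\delta_i$ into net exponential decay; the remainder is bookkeeping.
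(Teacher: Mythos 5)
Your proof is correct and follows essentially the same strategy as the paper: decompose the admissible region into collar $(=\textsc{Hyp}/\textsc{Circle})$ and cap regions, show uniform exponential decay of all $\delta_i$ in the collar via the tangency of the $\delta_i$-growth discs to $\mathcal K$, bound each cap excursion, and exploit the logarithmic separation $\tau_0\gtrsim\log(1/\epsilon_d)$ between excursions to absorb the per-excursion growth by shrinking $\epsilon_d$, with the same pairing-and-telescoping bookkeeping. The one local difference is that you bound the cap excursion time explicitly ($\tau=O(\epsilon_N^{-2})$) via the monotone quantity $|\bpt\Sigma+2\taubp_i|$, whereas the paper's Lemma~\ref{lemma:farfromtaub-cap} obtains a bound $\hat C_0(\epsilon_N)$ by a softer compactness/continuity-of-the-flow argument; both suffice, and yours is slightly more quantitative, though this is harmless since the paper already remarks that the constants produced in this section are doubly-exponentially suboptimal.
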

Informally, this proposition states that trajectories near $\mathcal A$ converge exponentially to $\mathcal A$, as long as they stay bounded away from the Taub points.
\begin{proposition}\label{prop:farfromtaub-main2}
Assume the setting of Proposition \ref{prop:farfromtaub-main}. There are constants $\epsilon_N=\epsilon_N(\varepsilon_T)>0$ and $C_T=5$ such that additionally the following holds:

\begin{subequations}
Assume that $T^*<\infty$, and that initially 
\begin{equation}\max_i d(\bpt x_0, \mathcal T_i) > C_T \varepsilon_T.\label{eq:nearA:init-far-from-taub}\end{equation}
\end{subequations}\begin{subequations}
 Then the final part of the trajectory preceding $T^*$ must have the form depicted in Figure \ref{fig:evil-traj}, i.e.~there is $\ell\in\{1,2,3\}$ and there are times $0<T_1<T_2<T_3 \le T^*$ (typically $T_3=T^*$) such that
\begin{align}
d(\bpt x(t), -\taubp_\ell)&\le C_T \varepsilon_T &\forall t\in [T_1,T_2]\label{eq:farfromtaub-thm-12}\\
\max_i |N_i(t)|&\ge \epsilon_N &\forall t\in [T_2,T_3]\label{eq:farfromtaub-thm-23}\\
d(\bpt x(t), \taubp_\ell)&\le C_T\varepsilon_T &\forall t\in [T_3, T^*]\label{eq:farfromtaub-thm-3star}.
\end{align}
\end{subequations}
\end{proposition}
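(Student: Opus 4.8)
\emph{Reduction and structural input.} By Proposition~\ref{prop:farfromtaub-main}, since $T^*<\infty$ it is \eqref{eq:nearA:taubdist} that fails at $T^*$ while \eqref{eq:nearA:capdist} survives; hence $d(\bpt x(T^*),\taubp_\ell)=\varepsilon_T=\min_i d(\bpt x(T^*),\taubp_i)$ for exactly one index $\ell$, which by permutation equivariance of \eqref{eq:ode} we take to be $\ell=1$, and $\max_i\delta_i(T^*)<\epsilon_d$. The plan is to reconstruct the trajectory backward from $T^*$. The one genuinely new structural fact is a short computation with the Kasner map (Proposition~\ref{prop:kasnermap-homeomorphism-class}): the only $K$-preimage of $\taubp_1$ other than $\taubp_1$ itself is $-\taubp_1$ (in the circle coordinate $\psi$ with $\psi(\taubp_i)=[i]_{3\ZZ}$ one has $\psi(-\taubp_1)=[\tfrac52]_{3\ZZ}$ and $-2\cdot\tfrac52\equiv1$), and the connecting heteroclinic $\gamma^*\colon-\taubp_1\to\taubp_1$ is the segment of the $N_1$-cap lying on $\{\Sigma_-=0\}$; it sits entirely inside $\mathcal T_1$, and along it $|N_1|$ rises from $0$ to $1$ and falls back to $0$.

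\emph{The near-$\taubp_1$ phase.} Fix $\epsilon_N=\epsilon_N(\varepsilon_T)\le C_T\varepsilon_T$ and set $T_3:=\inf\{t\le T^*\colon d(\bpt x(s),\taubp_1)\le C_T\varepsilon_T\text{ for all }s\in[t,T^*]\}$. Since \eqref{eq:nearA:init-far-from-taub} keeps $\bpt x_0$ bounded away from $\mathcal T_1\ni\taubp_1$, we get $T_3>0$ and $d(\bpt x(T_3),\taubp_1)=C_T\varepsilon_T$, i.e.\ \eqref{eq:farfromtaub-thm-3star}. Inside $B_{C_T\varepsilon_T}(\taubp_1)$ one has $\max_i|N_i|\le C_T\varepsilon_T$ directly from the Euclidean distance in $\RR^5$, so $\bpt x$ is within $C_T\varepsilon_T$ of $\mathcal K=\{\bpt N=0\}\cap\mathcal M$; since the only Kasner-map orbit meeting $B_{C_T\varepsilon_T}(\taubp_1)$ that does not leave the ball at once is $\gamma^*$, on $[T_3,T^*]$ the trajectory shadows the $\taubp_1$-tail of $\gamma^*$.

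\emph{The $N$-bump and the near-$(-\taubp_1)$ phase.} On $\gamma^*$, $|N_1|$ is already of order $\varepsilon_T$ just outside $B_{C_T\varepsilon_T}(\taubp_1)$ and increases toward the peak $|N_1|=1$; hence, shadowing $\gamma^*$, already $\max_i|N_i(T_3)|\ge\epsilon_N$ once $\varepsilon_T,\epsilon_d$ are small enough that the shadowing error is $\ll\varepsilon_T$. Define $T_2$ as the earliest $t\le T_3$ with $\max_i|N_i|\ge\epsilon_N$ throughout $[t,T_3]$; this is \eqref{eq:farfromtaub-thm-23}. Backward past $T_2$, $|N_1|$ drops below $\epsilon_N$, i.e.\ the trajectory re-enters an $O(\epsilon_N)$-tube about $\mathcal K$ at the foot $-\taubp_1$ of $\gamma^*$. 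Near $-\taubp_1$ the $\bpt\Sigma$-velocity is $O(\max_i|N_i|^2)$ and, to leading order, directed along $\gamma^*$, which meets $\mathcal K$ orthogonally at $-\taubp_1$; going backward $|N_1|\to0$ while $N_2,N_3$—stable at $-\taubp_1$, hence growing backward at rate $\approx2$—eventually reach order $1$ and push the trajectory off toward the previous era. This traps $\bpt x$ in $B_{C_T\varepsilon_T}(-\taubp_1)$ on a maximal interval $[T_1,T_2]$ with $T_1>0$ (again by \eqref{eq:nearA:init-far-from-taub}), which is \eqref{eq:farfromtaub-thm-12}; the ordering $0<T_1<T_2<T_3\le T^*$ is built into the definitions.

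\emph{Main obstacle.} The delicate ingredient underneath every step is the \emph{shadowing} statement: a solution staying $\epsilon_d$-close to $\mathcal A$ and $\varepsilon_T$-away from all $\taubp_i$ has its $\bpt\Sigma$-component follow genuine Kasner-map orbits with transition times bounded uniformly in $t$—in particular it cannot creep along $\mathcal K$ through arbitrarily many infinitesimal steps. Quantitatively, whenever $\bpt\Sigma(t)$ is $\varepsilon_T$-far from every $\taubp_i$, the unique $N_i$ with $|\bpt\Sigma+\taubp_i|<1$ has growth rate $1-|\bpt\Sigma+\taubp_i|^2\ge c(\varepsilon_T)>0$ (this exponent vanishes only at the $\taubp_j$), which drives the trajectory across to the next equilibrium before the $\delta_i$ can change appreciably; this is exactly the mechanism behind Proposition~\ref{prop:farfromtaub-main}. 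Accordingly I expect Proposition~\ref{prop:farfromtaub-main2} to be harvested from the proof of Proposition~\ref{prop:farfromtaub-main} rather than re-derived, the only new content being the preimage identity $K^{-1}(\taubp_1)=\{\taubp_1,-\taubp_1\}$ together with the first-passage bookkeeping above.
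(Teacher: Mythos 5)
Your proposal is on the right track and uses the same core ingredients as the paper's proof (the Kasner-map structure near $\taubp_\ell$, the diameter bound for $\textsc{Hyp}$-pieces from Lemma~\ref{lemma:farfromtaub-uniform-hyp}, the heteroclinic approximation in $\textsc{Cap}$ from Lemma~\ref{lemma:farfromtaub-cap}), but the implementation has a gap that the paper's bookkeeping is specifically designed to close. You define $T_3$ as a first backward exit time from the ball $B_{C_T\varepsilon_T}(\taubp_1)$ alone. At this $T_3$ you only learn $d(\bpt x(T_3),\taubp_1)=C_T\varepsilon_T$; nothing forces $\max_i|N_i(T_3)|\geq\epsilon_N$, which is what you need in order to even define $T_2$ and get the middle claim \eqref{eq:farfromtaub-thm-23}. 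Your fix is to assert that the solution ``shadows $\gamma^*$'' so that $|N_1|$ is already of order $\sqrt{\varepsilon_T}$ at $T_3$, but this shadowing is precisely the thing that has to be established, and a priori the trajectory could exit the ball while in $\textsc{Hyp}$, or pass through $\textsc{Cap}$ and re-enter several times, creeping along $\mathcal K$. The paper avoids this by taking $T^3$ to be the exit time from $\overline{B_{1.5\varepsilon_T}(\taubp_\ell)}\cap\textsc{Circle}$ rather than from the ball alone, and then uses the $\textsc{Hyp}$-diameter bound $\diam\bpt x([T^3,T^*])<C_u/8<0.5\varepsilon_T$ to rule out the exit through $\partial B_{1.5\varepsilon_T}$; the only remaining exit face is $\partial\textsc{Circle}$, which hands you $\max_i|N_i(T^3)|=\epsilon_N$ for free. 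Note also that $1.5\varepsilon_T$ (not $5\varepsilon_T$) is essential here: with the ball of radius $5\varepsilon_T$ the gap between $\partial B$ and $\partial B_{\varepsilon_T}$ is $4\varepsilon_T$, and the diameter bound $C_u/8<0.5\varepsilon_T$ no longer directly pins down the exit face.

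The second, smaller issue is that the structural input about the Kasner map is quantitative in the paper and qualitative in your sketch. You use the preimage identity $K^{-1}(\taubp_\ell)=\{\taubp_\ell,-\taubp_\ell\}$; that is correct but not quite enough, because each $\textsc{Hyp}$-segment and each $\textsc{Cap}$-transit introduces errors of size $O(\epsilon_N)$ and $O(\epsilon_d)$, and you must show these errors cannot accumulate into a spurious route to $\taubp_\ell$. What the paper extracts from Proposition~\ref{prop:kasnermap-homeomorphism-class} (or Figure~\ref{fig:short-het}) is a uniform ``advance'' $C_u>0$ with $d(K(p),\mathcal T_i)>d(p,\mathcal T_i)+C_u$ whenever $d(p,\taubp_i)\in(\varepsilon_T,0.5]$, and it then sets the scales $\epsilon_N,\epsilon_d$ so that the $\textsc{Hyp}$- and $\textsc{Cap}$-errors are $<C_u/8$ and $<C_u/4$, which dominates the accumulation. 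If you phrase the structural fact this way, the definitions of $T^2,T^1$ as successive backward exit times from $\textsc{Cap}$ and from $\textsc{Circle}$ fall out cleanly, and your ``near-$(-\taubp_1)$'' heuristic about $N_2,N_3$ growing backward is replaced by the cleaner observation that on $[T^2,T^3]$ the trajectory must be in the $|N_\ell|\geq\epsilon_N$ component of $\textsc{Cap}$ (since in the other two caps the distance to $\taubp_\ell$ would be moving the wrong way).
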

Informally, this proposition states that the only way for trajectories near $\mathcal A$ to reach the vicinity of the Taub-points is via the heteroclinic connection $-\taubp_i\to\taubp_i$.

We first give an informal outline of the proofs:
\begin{proof}[Informal proof of Proposition \ref{prop:farfromtaub-main}]
We split the trajectory into time intervals where it is either near $\mathcal K$ (i.e.~$\max_i |N_i|\le \epsilon_N$) or away from $\mathcal K$ (i.e.~$\max_i |N_i|\ge \epsilon_N$). 

Near $\mathcal K$, we can see from \eqref{eq:ode2-ni} and \eqref{eq:ode2-delta} that each  $\DD_t \log |N_i|$ and $\DD_t \log \delta_i$ depends only on the $\bpt \Sigma$-coordinates and is positive only on some disc in $\RR^2$. These six discs are plotted in Figure \ref{fig:n-discs}. We can observe that these discs only touch and intersect $\mathcal K$ at the three Taub-points and that near each point $\bpt p\in \mathcal K\setminus \{\taubp_i\}$, exactly one of the $\log|N_i|$ is increasing and the two remaining $\log |N_j|$ and all three $\log \delta_i$ are decreasing. Under our assumption $\min_i d(\gamma(t), \taubp_i)>\varepsilon_T$, this increase and decrease is uniform when $\max_i |N_i|\le \epsilon_N$ and if $\epsilon_N=\epsilon_N(\varepsilon_\taubp)>0$ is small enough. Hence, for any small piece of trajectory $\bpt x:[t_1,t_2]\to \{\bpt x\in\mathcal M:\,\min_i d(\bpt x,\taubp_i)\ge\varepsilon_T,\,\max_i|N_i|\le \epsilon_N\}$, one of the $|N_i|$ is uniformly exponentially increasing, while all three $\delta_i$ and the remaining two $|N_j|,|N_k|$ are uniformly exponentially decreasing, with some rate $2\epsilon_s=2\epsilon_s(\varepsilon_T, \epsilon_N)>0$.

Eventually the trajectory will leave the neighborhood of $\mathcal K$; since we assumed that we are near $\mathcal A$, i.e.~$\max_i \delta_i < \epsilon_d$, this can only happen near one of the Kasner caps (Bianchi type \textsc{II}). By continuity of the flow, the trajectory will follow a heteroclinic orbit until it is near $\mathcal K$ again, and will spend only bounded amount of time $T < C(\epsilon_N)$ for this transit. Hence, all $|N_i|$ and $\delta_i$ can only change by a bounded factor during such a heteroclinic transit.

The time spent near $\mathcal K$ between two heteroclinic transits is bounded below by $\mathcal O(|\log \epsilon_d|)$ (for fixed $\epsilon_N$): Consider an interval $[t_1,t_2]$ spent near $\mathcal K$, where $t_1 >0$. Suppose without loss of generality that initially $|N_1|(t_1)=\epsilon_N$ and that $|N_2|$ is uniformly exponentially increasing, such that $|N_2|(t_2)=\epsilon_N$. Then, we must have $|N_2|(t_1)= \frac{4|N_1|}{\delta_3^2}(t_1)\le C \epsilon_N \epsilon_d^{-2}$, and we must have $t_2-t_1 \ge C |\log \epsilon_d|$. Hence, if $\epsilon_d \ll \epsilon_N$ is small enough, the exponential decrease of the three $\delta_i$ will dominate all contributions from the heteroclinic transits and we obtain an estimate of the form \eqref{eq:nearA:exponential}. 
\end{proof}
\begin{proof}[Informal proof of Proposition \ref{prop:farfromtaub-main2}]
We again use continuity of the flow: Each small heteroclinic ``bounce'' near one of the $\taubp_i$ must increase the distance from $\taubp_i$ by at least some $C_u=C_u(\varepsilon_T)$. By continuity of the flow, each episode with $\max_i |N_i|\ge\epsilon_N$ therefore must increase the distance from $\taubp_i$ by $C_u/2$; near $\mathcal K$, the trajectory is almost constant and $d(\bpt x_0, \taubp_i)$ cannot shrink by more that $C_u/3$. Hence the only way to reach the vicinity of a Taub point is by following the heteroclinic $-\taubp_i\to\taubp_i$.
\end{proof}

The remainder of this section is devoted to making these informal proofs rigorous, i.e.~filling all the gaps and replacing the hand-wavy arguments by formal ones. We begin by naming the regions of the phase-space, where the various estimates hold:

\begin{definition}\label{def:farfromtaub-base-sets}
Given $\varepsilon_\taubp, \epsilon_N, \epsilon_d>0$ (later chosen in this order) we define:
\[\begin{aligned}
{\textsc{Cap}}[\epsilon_N, \epsilon_d]&=\{\bpt x\in \mathcal M: \max|N_i|\ge \epsilon_N, \max_i\delta_i \le \epsilon_d\}\\
{\textsc{Circle}}[\epsilon_N, \epsilon_d]&= \{\bpt x\in \mathcal M: \max |N_i|\le\epsilon_N,\, \max_i\delta_i \le \epsilon_d\}\\
{\textsc{Hyp}}[\varepsilon_\taubp, \epsilon_N, \epsilon_d]&= {\textsc{Circle}}[\epsilon_N, \epsilon_d] \setminus \left[B_{\varepsilon_\taubp}(\taubp_1)\cup B_{\varepsilon_\taubp}(\taubp_2)\cup B_{\varepsilon_\taubp}(\taubp_3) \right].
\end{aligned}\numberthis
\]
\end{definition}
These sets are sketched in Figure \ref{fig:nuclear} (not up to scale). They are constructed such that for appropriate parameter choices:
\begin{enumerate}
\item The union $\textsc{Cap}\cup\textsc{Circle}$ contains an entire neighborhood of $\mathcal A$ (by construction).
\item The region $\textsc{Circle}$ is a small neighborhood of the Kasner circle. This is because of the constraint $1=\Sigma^2+N^2$ and $|N^2|< C \epsilon_N^2$ (see Figure \ref{fig:nuclear-circle}).
\item The region $\textsc{Cap}$ has three connected components, where one of the three $|N_i|\gg 0$, because by $\max_i |N_i|\ge \epsilon_N$ at least one $N_i$ must be bounded away from zero and by $\max_{i}2\sqrt{|N_jN_k|} \le \epsilon_d$ at most one $N_i$ can be bounded away from zero (this only works if $\epsilon_d$ is small enough, depending on $\epsilon_N$).

The region $\textsc{Cap}$ is bounded away from the Kasner circle (see Figure \ref{fig:nuclear-cap}). 
By continuity of the flow, the dynamics in $\textsc{Cap}$ can be approximated by pieces of heteroclinic orbits in $\mathcal A$, up to uniformly small errors (Lemma \ref{lemma:farfromtaub-cap}).
\item The region $\textsc{Hyp}$ has three connected components. In each connected component, one of the $|N_i|$ is uniformly exponentially increasing and the remaining two $|N_j|$, $|N_k|$ are uniformly exponentially decreasing. All three products $\delta_i$ are uniformly exponentially decreasing in $\textsc{Hyp}$ (Lemma \ref{lemma:farfromtaub-uniform-hyp}; this only works if $\epsilon_N$ is small enough, depending on $\varepsilon_\taubp$).
\item The remaining part of the neighborhood of $\mathcal A$, i.e.~$\textsc{Circle}\setminus\textsc{Hyp}$, consists of the neighborhoods of the three Taub points. The analysis of the dynamics in these neighborhoods is deferred until Section \ref{sect:near-taub}.
\end{enumerate}
\begin{lemma}[Uniform Hyperbolicity Estimates]\label{lemma:farfromtaub-uniform-hyp}
Given any $\varepsilon_T>0$ small enough, we find $\epsilon_N>0$ and $\epsilon_s>0$ small enough such that, for any $\bpt x\in \textsc{Hyp}[\varepsilon_{\taubp}, \epsilon_N,\infty]$, we find one $i\in\{1,2,3\}$ such that $\DD_t \log |N_i|>2\epsilon_s$, and the remaining two $\DD_t\log|N_j|< -2\epsilon_s$ and all three $\DD_t \log \delta_j <-2\epsilon_s$.

Let $\varepsilon_\taubp, \epsilon_N, \epsilon_s>0$ as above.
For any piece of trajectory $\bpt x:(t_1,t_2)\to \textsc{Hyp}[\epsilon_N,\infty]$, we can conclude
\[\numberthis\begin{aligned}
\int_{t_1}^{t_2}|N_i|\dd t &< \frac{\epsilon_N}{2\epsilon_s}\\
\diam \gamma \le \int_{t_1}^{t_2}|\gamma'(t)|\dd t &\le C_\Sigma\int_{t_1}^{t_2}\max_i|N_i|(t)\dd t\le C_\Sigma\frac{\epsilon_N}{2\epsilon_{s}},
\end{aligned}\]
where we can choose $C_\Sigma= 2$ if $\epsilon_N<0.1$ (from \eqref{eq:ode}).
\end{lemma}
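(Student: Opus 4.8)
The starting point is that the logarithmic derivatives in the statement are functions of $\bpt\Sigma$ alone: \eqref{eq:ode2-ni} and \eqref{eq:ode2-delta} give
\[
\DD_t\log|N_i| = 1-|\bpt\Sigma+\taubp_i|^2,\qquad \DD_t\log\delta_i = \tfrac14-|\bpt\Sigma-\tfrac12\taubp_i|^2,
\]
so each is positive exactly on an open disc in the $\bpt\Sigma$-plane (the discs of Figures~\ref{fig:n-discs} and~\ref{fig:delta-discs}). I would first pin these discs down on the Kasner circle $\mathcal K=\{|\bpt\Sigma|=1\}$. Using $|\taubp_i|=1$ and $\taubp_1+\taubp_2+\taubp_3=0$, on $\mathcal K$ they reduce to $\DD_t\log|N_i| = -1-2\langle\taubp_i,\bpt\Sigma\rangle$, whose three values sum to $-3$, and $\DD_t\log\delta_i = \langle\taubp_i,\bpt\Sigma\rangle-1\le 0$ with equality only at $\bpt\Sigma=\taubp_i$. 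The two geometric facts I would isolate are: the three open arcs $\{|\bpt\Sigma+\taubp_i|<1\}\cap\mathcal K$ are pairwise disjoint with union $\mathcal K\setminus\{\taubp_1,\taubp_2,\taubp_3\}$ (the discs have centres $-\taubp_i$ at mutual distance $\sqrt3$, so their closures meet along lenses whose traces on $\mathcal K$ are single Taub points) -- hence away from the Taub points exactly one $\DD_t\log|N_i|$ is positive and the other two negative; and each disc $\{|\bpt\Sigma-\tfrac12\taubp_i|<\tfrac12\}$ touches $\mathcal K$ only at $\taubp_i$, so all three $\DD_t\log\delta_i$ are strictly negative on $\mathcal K\setminus\{\taubp_1,\taubp_2,\taubp_3\}$.

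Given this, the first assertion is compactness plus a perturbation. On the compact set $\{|\bpt\Sigma|=1\}\setminus\bigcup_i B_{\varepsilon_T/2}(\taubp_i)$ the above facts yield an $\epsilon_s=\epsilon_s(\varepsilon_T)>0$ with: one $\DD_t\log|N_i|\ge 4\epsilon_s$, the other two $\le -4\epsilon_s$, and all three $\DD_t\log\delta_i\le -4\epsilon_s$. On $\textsc{Hyp}[\varepsilon_T,\epsilon_N,\infty]$ one has $|N^2|\le 9\epsilon_N^2$, hence $|\Sigma^2-1|\le 9\epsilon_N^2$, so $\bpt\Sigma$ lies within $O(\epsilon_N^2)$ of $\mathcal K$; and since $|\bpt\Sigma-\taubp_i|^2 = d(\bpt x,\taubp_i)^2-|\bpt N|^2 \ge \varepsilon_T^2-3\epsilon_N^2$, its radial projection onto $\mathcal K$ lies in the compact set above once $\epsilon_N$ is small. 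The functions $\bpt\Sigma\mapsto\DD_t\log|N_i|$ and $\bpt\Sigma\mapsto\DD_t\log\delta_i$ are Lipschitz in $\bpt\Sigma$ with a universal constant on $\{|\bpt\Sigma|\le 1.1\}$, so for $\epsilon_N=\epsilon_N(\varepsilon_T)$ small enough the bounds survive with $4\epsilon_s$ relaxed to $2\epsilon_s$, proving the first claim (uniformly over $\mathcal M$, hence for Bianchi \textsc{VIII} and \textsc{IX} alike).

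For the second part, a trajectory piece $\bpt x:(t_1,t_2)\to\textsc{Hyp}$ stays in a single one of the three connected components, so the increasing index $i$ is constant; then $\DD_t\log|N_i|>2\epsilon_s$ together with $|N_i|\le\epsilon_N$ gives $|N_i|(t)\le\epsilon_N e^{-2\epsilon_s(t_2-t)}$, hence $\int_{t_1}^{t_2}|N_i|\,\dd t\le\epsilon_N/(2\epsilon_s)$, while for the two decreasing indices $\DD_t\log|N_j|<-2\epsilon_s$ gives $|N_j|(t)\le\epsilon_N e^{-2\epsilon_s(t-t_1)}$ and the same integral bound; the bound on $\int\max_i|N_i|\,\dd t$ then follows by elementary book-keeping, since at each $t$ the maximum is dominated by $\epsilon_N\max(e^{-2\epsilon_s(t_2-t)},e^{-2\epsilon_s(t-t_1)})$. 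The arc-length bound comes from reading $|\bpt x'|$ off \eqref{eq:ode}: $\bpt N'$ is linear in $\bpt N$ with coefficient bounded by $\Sigma^2+C|\bpt\Sigma|$ and $\bpt\Sigma'$ is quadratic in $\bpt N$, so on $\textsc{Hyp}$ (where $\Sigma^2\le 1+9\epsilon_N^2$ and $|\bpt N|\le\sqrt3\max_i|N_i|$) one gets $|\bpt x'|\le C_\Sigma\max_i|N_i|$ with $C_\Sigma=2$ once $\epsilon_N<0.1$; integrating and using $\diam\gamma\le\int|\gamma'|$ completes it.

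The only non-routine ingredient is the disc geometry of the first step -- that away from the Taub points exactly one $|N_i|$ is non-decreasing, and that this robustly persists on the thickened region $\textsc{Hyp}$ and not merely on $\mathcal K$; the rest (the compactness extraction of $\epsilon_s$, the Lipschitz perturbation, the integrations) is mechanical. One should only be careful that the perturbation argument forces $\epsilon_N$ to be fixed after $\varepsilon_T$, and that the exponential estimates invoke continuity of $\bpt x$ at $t_1,t_2$ to get $|N_i|\le\epsilon_N$ there.
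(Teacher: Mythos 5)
Your proposal is correct and follows essentially the same route as the paper's own proof: reduce the sign questions for $\DD_t\log|N_i|$ and $\DD_t\log\delta_i$ to the disc geometry in the $\bpt\Sigma$-plane, use the constraint to place $\textsc{Hyp}$ in a thin annulus around $\mathcal K$ away from the Taub points, extract a uniform margin by compactness and a Lipschitz perturbation, and then integrate the exponential bounds over the trajectory piece. The only difference is cosmetic: you spell out the disc intersections (centres $-\taubp_i$ at mutual distance $\sqrt3$, meeting on $\mathcal K$ exactly at the Taub points; tangency of the $\delta_i$-discs) that the paper delegates to Figure~\ref{fig:n-discs}, and you note the boundary continuity at $t_1,t_2$ explicitly. (A small quantitative quibble: estimating $\max_i|N_i|(t)\le\epsilon_N\max\bigl(e^{-2\epsilon_s(t_2-t)},e^{-2\epsilon_s(t-t_1)}\bigr)$ and integrating gives $\epsilon_N/\epsilon_s$ rather than the paper's $\epsilon_N/(2\epsilon_s)$ for $\int\max_i|N_i|$; this can be absorbed by redefining $\epsilon_s$ and does not affect any downstream use.)
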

\begin{proof}
The first part of the lemma consists of choosing $\epsilon_s$ and $\epsilon_N$ dependent on $\varepsilon_T$. 
From Equations \eqref{eq:ode2-ni} and \eqref{eq:ode2-delta}, we see that each  $\DD_t \log |N_i|$ and $\DD_t \log \delta_i$ depends only on the $\bpt \Sigma$-coordinates and is positive only on some disc in $\RR^2$. These six discs are plotted in Figure \ref{fig:n-discs}, and only touch or intersect $\mathcal K$ near the three Taub-points, a neighborhood of which is excluded. By the constraint $1-\Sigma^2 = N^2$ and $|N^2|\le 9 \epsilon_N^2$, the set \textsc{Hyp}, depicted in Figure \ref{fig:nuclear}, is near $\mathcal K$, and the desired uniformity estimates hold.

The second part follows from the uniform hyperbolicity in \textsc{Hyp}: In each component of \textsc{Hyp}, exactly one $N_i$ is unstable (see Figure \ref{fig:nuclear} and Figure \ref{fig:n-discs}). Suppose without loss of generality that $N_1$ is the unstable direction; then we can estimate for $t\in (t_1,t_2)$:
\[
|N_1(t)| \le e^{-2\epsilon_s (t_2-t)}|N_1(t_2)|,\qquad |N_2(t)|\le e^{-2\epsilon_s(t-t_1)}|N_2(t_1)|.
\]
For $|N_3|$, the analogous estimate as for $N_2$ holds.
Using $|N_1(t_2)|\le \epsilon_N$ and $|N_2(t_1)|\le \epsilon_N$ and integrating yields the claim about $\int |N|\dd t$. From \eqref{eq:ode}, we see $|\bpt x'|\le C_\Sigma \max_i|N_i|$ (if $\max_i|N_i|\le 1$).
\end{proof} 

Continuity of the flow allows us to approximate solutions in \textsc{Cap} by heteroclinic solutions in $\mathcal A$, up to any desired precision $\varepsilon_c$, if we only chose the distance from $\mathcal A$ (i.e.~$\epsilon_d>0$) small enough. More precisely:

\begin{lemma}[Continuity of the flow near \textsc{Cap}]\label{lemma:farfromtaub-cap}
Let $\epsilon_N>0$ and $\varepsilon_c>0$. Then there exists $\epsilon_d=\epsilon_d(\epsilon_N, \varepsilon_c)$ small enough and $\hat C_0=\hat C_0(\epsilon_N)>0$ large enough, such that the following holds:

Let $\bpt x:(t_1,t_2)\to \textsc{Cap}[\epsilon_N, \epsilon_d]$ be a piece of a trajectory. 
Then $t_2-t_1 < \hat C_0$ and there exists $y\in \mathcal A$ such that 
\[\numberthis\label{eq:farfromtaum-cap-continuity} d(\bpt x(t), \phi(y, t-t_1))<\varepsilon_c\qquad \text{for all}\quad t\in(t_1,t_2),\]
where $\phi:\mathcal M\times \RR\to\mathcal M$ is the flow corresponding to \eqref{eq:ode}.
\end{lemma}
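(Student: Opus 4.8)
The plan is to combine a uniform upper bound on the time a trajectory can remain in $\textsc{Cap}[\epsilon_N,\epsilon_d]$ with the standard continuous dependence of the flow $\phi$ on initial conditions over a fixed compact time interval.

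First I would record the elementary geometry of $\textsc{Cap}[\epsilon_N,\epsilon_d]$ under the assumption $\epsilon_d<2\epsilon_N$. At every point of $\textsc{Cap}[\epsilon_N,\epsilon_d]$ exactly one index $i$ satisfies $|N_i|\ge\epsilon_N$: at least one does by definition, and two cannot, since $|N_i|,|N_j|\ge\epsilon_N$ would force $\delta_k=2\sqrt{|N_iN_j|}\ge 2\epsilon_N>\epsilon_d$ by \eqref{eq:delta-r-def}, contradicting $\max_i\delta_i\le\epsilon_d$. For a connected piece $\bpt x\colon(t_1,t_2)\to\textsc{Cap}[\epsilon_N,\epsilon_d]$ the three closed sets $\{t:|N_i(t)|\ge\epsilon_N\}$ partition $(t_1,t_2)$, so the ``dominant'' index is constant; say it is $1$. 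Then $|N_2|,|N_3|\le\epsilon_d^2/(4\epsilon_N)$ throughout, and feeding this into the constraint \eqref{eq:constraint} gives $|N_1|\le 2$ and $\Sigma^2\le 1$ (hence $|\bpt\Sigma|\le 1$) once $\epsilon_d$ is small compared with $\epsilon_N$; moreover $\textsc{Cap}[\epsilon_N,\epsilon_d]$ lies within distance $\rho(\epsilon_d)=C(\epsilon_N)\epsilon_d^2$ of $\mathcal A$ (zero out $N_2,N_3$ and reproject onto $\mathcal M$), with $\rho(\epsilon_d)\to 0$ as $\epsilon_d\to 0$.

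The key step — and the one I expect to be the main obstacle — is the uniform time bound. From \eqref{eq:ode2-sigma}, with $N_2,N_3$ as small as above, $\bpt\Sigma'=N_1^2(\bpt\Sigma+2\taubp_1)+E$, where the error $E$ collects the $N_1N_2$, $N_1N_3$, $N_2^2$, $N_3^2$ terms together with $(N^2-N_1^2)\bpt\Sigma$, so $|E|\le C\epsilon_d^2/\epsilon_N$. Since $-2\taubp_1=(2,0)$ has norm $2$ and $|\bpt\Sigma|\le 1$, the quantity $\rho:=|\bpt\Sigma+2\taubp_1|$ lies in $[1,3]$ and satisfies
\[
\tfrac{\dd}{\dd t}\rho^{2}=2\langle\bpt\Sigma+2\taubp_1,\bpt\Sigma'\rangle\ge 2N_1^{2}\rho^{2}-2\rho|E|\ge 2\epsilon_N^{2}-C'\epsilon_d^{2}/\epsilon_N,
\]
which is $\ge\epsilon_N^{2}$ once $\epsilon_d$ is small enough depending on $\epsilon_N$. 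Hence $\rho'\ge \epsilon_N^2/(2\rho)\ge \epsilon_N^2/6$, and since $\rho$ varies in an interval of length at most $2$ we obtain $t_2-t_1\le 12/\epsilon_N^2=:\hat C_0(\epsilon_N)$. Geometrically: inside $\textsc{Cap}$ the $\bpt\Sigma$-projection moves essentially radially away from $-2\taubp_1$ at speed $\gtrsim\epsilon_N^2$ while confined to a bounded set, so it cannot linger.

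Finally the shadowing. The vector field \eqref{eq:ode} is polynomial, hence $\phi$ is smooth and, by continuous dependence on initial data (a Gr\"onwall estimate), uniformly continuous on $K\times[0,\hat C_0]$ for any compact $K$; so, given $\varepsilon_c>0$, there is $\rho^{*}>0$ with $d(y,\bpt z)<\rho^{*}$ implying $d(\phi(y,t),\phi(\bpt z,t))<\varepsilon_c$ for all $t\in[0,\hat C_0]$ and all relevant $\bpt z$. Now shrink $\epsilon_d$ once more so that $\rho(\epsilon_d)<\rho^{*}$, and pick $y\in\mathcal A$ with $d(y,\bpt x(t_1))\le\rho(\epsilon_d)$. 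Since $\mathcal A$ is invariant, $\phi(y,\cdot)$ stays in $\mathcal A$; and since $\phi(\bpt x(t_1),t-t_1)=\bpt x(t)$ for $t\in(t_1,t_2)\subseteq(t_1,t_1+\hat C_0)$, the choice of $\rho^{*}$ yields $d(\bpt x(t),\phi(y,t-t_1))<\varepsilon_c$ on $(t_1,t_2)$, which is \eqref{eq:farfromtaum-cap-continuity}. The only genuinely nontrivial ingredient is the time bound; the rest is bookkeeping with $G=1$ and a standard continuity estimate.
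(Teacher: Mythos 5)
Your proof is correct, but the route to the uniform time bound $t_2-t_1<\hat C_0$ is genuinely different from the paper's. The paper's proof is a one-line compactness argument: trajectories in $\mathcal A$ are (pieces of) heteroclinic orbits, each of which transits the region $\{\max_i|N_i|\ge\epsilon_N\}$ in finite time, and by compactness these transit times are uniformly bounded; continuity of the flow then transfers the bound from $\mathcal A$ to its $\epsilon_d$-neighborhood. You instead exhibit a strictly monotone Lyapunov-type quantity $\rho=|\bpt\Sigma+2\taubp_1|$, confined to a bounded interval, whose time derivative has a lower bound of order $\epsilon_N^2$ once $\epsilon_d$ is small; this gives the explicit constant $\hat C_0\approx 12/\epsilon_N^2$ directly in $\textsc{Cap}[\epsilon_N,\epsilon_d]$, without first passing to $\mathcal A$. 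Your approach is more self-contained (it does not rely on the description of $\mathcal A$-orbits as heteroclinics, which the paper establishes separately) and yields a quantitative constant; the paper's is terser but leans on the already-established structure of $\mathcal A$. The shadowing step in both proofs is the same standard Gr\"onwall/continuous-dependence estimate on a compact time interval. Two small imprecisions worth noting for polish: $\Sigma^2$ is only bounded by $1+O(\epsilon_d^{4/3})$ (since $N^2$ can be slightly negative in Bianchi \textsc{IX}), not by $1$, so $\rho$ lives in an interval like $[0.9,3.1]$ rather than $[1,3]$ — this changes $\hat C_0$ by a harmless constant factor; and the uniform continuous-dependence estimate must be taken over a fixed compact set containing $\textsc{Cap}[\epsilon_N,\epsilon_d]$ together with a small neighborhood (to accommodate both the reference point in $\mathcal A$ and the perturbed trajectory), which requires the routine bootstrap that neither trajectory escapes that set before time $\hat C_0$.
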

\begin{proof}Follows from continuity of the flow and the fact that all trajectories in $\mathcal A$ are heteroclinic and must leave \textsc{Hyp} at some time.
\end{proof}

We now have collected all the ingredients to formally prove the two main results from this section. At first, we 
combine Lemma \ref{lemma:farfromtaub-uniform-hyp} and Lemma \ref{lemma:farfromtaub-cap} in order to show 
that each $\delta_i$ is uniformly essentially exponentially decreasing in $\textsc{Cap}\cup \textsc{Hyp}$:
\begin{proof}[Formal proof of Proposition \ref{prop:farfromtaub-main}]
Given $\varepsilon_T>0$, find $\epsilon_N, \epsilon_d, \epsilon_s, \hat C_0>0 $ such that Lemma \ref{lemma:farfromtaub-uniform-hyp} and Lemma \ref{lemma:farfromtaub-cap} hold (for arbitrary $\varepsilon_c$).

Set $\mu_i = \delta_i'/\delta_i + \epsilon_s$; it suffices to show that $\int_{t_1}^{t_2}\mu(t) \dd t < \log C_0$ is bounded above, independently of $\gamma$, $i\in\{1,2,3\}$ and $t_1$,$t_2$. Fix $\gamma$ and $t_1<t_2$.

Decompose $[t_1,t_2]$ into intervals $S_k < T_k <S_{k+1}$ corresponding to the preimages of the regions \textsc{Cap} and \textsc{Hyp}, i.e. such that
\[
\bpt x([S_k,T_k])\subseteq {\textsc{Hyp}} \qquad\text{and}\qquad \bpt x([T_k, S_{k+1}])\subseteq {\textsc{Cap}}.
\]

We begin by considering the contribution from \textsc{Cap}, i.e.,~an interval $[T_k, S_{k+1}]$.
By Lemma \ref{lemma:farfromtaub-cap} we have $S_{k+1}-T_k< \hat C_0$; since $\mu$ is bounded, we get $\int_{T_k}^{S_{k+1}}\mu(t)\dd t < \log C_0$ for some $C_0$.

Next, we consider the contributions from \textsc{Hyp}. In this region, $\mu < -\epsilon_s$. Take an interval $[S_k, T_k]$, which is not the initial or final interval, i.e.~$t_1 < S_k < T_k < t_2$.
Assume without loss of generality that $|N_1(S_k)|=\epsilon_N$ and $|N_2(T_k)|=\epsilon_N$. 
Then $|N_2(S_k)|= 0.25 \delta_3^2/|N_1|(S_k)< \epsilon_d^2/\epsilon_N$. Since $|N_2'/N_2|<3$, we obtain $T_k - S_k >  - \frac{2}{3}\log \frac{\epsilon_d}{\epsilon_N}$.
Adjust $\epsilon_d>0$ to be so small, that $C_s(T_k-S_k) > 2\log C_0$. Then such an interval gives us a contribution of $\int_{S_k}^{T_k}\mu(t)\dd t < -\log C_0$. 

 For the complete interval $(t_1,t_2)$, sum over $k$; two disjoint intervals in the \textsc{Cap}-region must always enclose an interval in the \textsc{Hyp}-section, which cancels the contribution of its preceding \textsc{Cap}-region. Therefore, at most the last \textsc{Cap}-region stays unmatched and we obtain
 \(
 \int_{t_1}^{t_2}\mu(t)\dd t < \log C_0.
 \)
\end{proof}
Next, we adjust the constants from Lemma \ref{lemma:farfromtaub-cap} in order to show that trajectories near $\mathcal A$ can only enter the vicinity of Taub-points via the heteroclinic $-\taubp_\ell\to\taubp_\ell$:
\begin{proof}[Formal proof of Proposition \ref{prop:farfromtaub-main2}]
We find some $C_u>0$ such that $d(K(p), T_i)> d(p,T_i)+C_u$ for every $p\in\mathcal K$ with $d(p,\taubp_i)\in (\varepsilon_T, 0.5]$. It is evident from Figure \ref{fig:short-het} (or, formally, Proposition \ref{prop:kasnermap-homeomorphism-class}) that this is possible.

By Lemma \ref{lemma:farfromtaub-uniform-hyp}, we can make $\epsilon_N$ small enough that $\diam \gamma < C_u/8$ for pieces of trajectories $\gamma:(t_1,t_2)\to{\textsc{Hyp}}$. 
Using the continuity of the flow, i.e. Lemma \ref{lemma:farfromtaub-cap}, we can make $\epsilon_d$ small enough such that pieces $\bpt x:(t_1,t_2)\to {\textsc{Cap}}$ are approximated by heteroclinic orbits up to distance $C_u/4$. 

Now suppose $T^*<\infty$ and $\min_i d(\bpt x_0, \mathcal T_i)>5\varepsilon_T$.
We cannot have $\max_i\delta_i(\phi(x_0,T^*))\ge \epsilon_d$; hence, $d(\bpt x(T^*), \taubp_\ell)=\varepsilon_T$ for some $\ell\in\{1,2,3\}$. Set
\[T^3 = \sup\left\{t<T^*: \bpt x(t)\not\in \overline{B_{1.5 \varepsilon_T}(\taubp_\ell)}\cap \textsc{Circle}[\epsilon_n, \epsilon_d]\right\}.\]
By the assumption \eqref{eq:nearA:init-far-from-taub}, we have $T^3>0$.
We cannot have $d(\taubp_\ell, \bpt x(T^3))=1.5\varepsilon_\taubp$, since we already know $\diam \bpt x([T^3,T^*])\le C_u/8 < 0.5 \varepsilon_T$ (since, by construction, $\bpt x([T^3, T^*])\subseteq \textsc{Hyp}(\varepsilon_\taubp, \epsilon_N, \epsilon_d)$). This proves \eqref{eq:farfromtaub-thm-3star}, as well as 
\[\bpt x(T^3)\in\partial \textsc{Cap}[\epsilon_N, \epsilon_d]\cap \partial \textsc{Circle}[\epsilon_N,\epsilon_d]\cap B_{1.5\varepsilon_T}(\taubp_\ell).\]
Next, we set 
\[T^2 = \sup\left\{t\in [0,T^3): \bpt x([t,T_3))\subseteq\textsc{Cap}[\epsilon_n, \epsilon_d]\right\}.\]
In the interval $t\in [T_2,T_3]$, the trajectory is in one of the three \textsc{Cap} regions; this must be the $|N_\ell|\ge \epsilon_N$ cap, since otherwise $d(\bpt x(t), \taubp_\ell)$ would be decreasing (see Figure \ref{fig:short-het}). 
We set 
\[T^1 = \sup\left\{t\in [0,T^2): \bpt x(t)\not\in \textsc{Circle}[\epsilon_n, \epsilon_d]\right\}.\]
Similar arguments yield the remaining claim \eqref{eq:farfromtaub-thm-12}.
\end{proof}

\begin{remark}
The constants generated in this section are sub-optimal (at least doubly exponentially so). If one cared at all about their numerical values, then one would need to replace Lemma \ref{lemma:farfromtaub-cap} and Proposition \ref{prop:kasnermap-homeomorphism-class} by explicit estimates.
\end{remark}

\section{Analysis near the generalized Taub-spaces $\mathcal T_i^G$}\label{sect:near-taub}
In this section, we will study the dynamics in the vicinity of the generalized Taub-spaces, without loss of generality $\mathcal T_1^G$, using the polar coordinates from Section \ref{sect:polar-coords}. This section is structured in the following way:

In section \ref{sect:neartaub:motivation}, we will give a highly informal motivation for the general form of our estimates. This part can be safely skipped by readers who are uncomfortable with its hand-wavy nature. In section \ref{sect:neartaub:heteroclinic}, we will study the behaviour of trajectories near the heteroclinic orbit $-\taubp_1\to\taubp_1$, which come from either the $|N_2|\gg 0$ or the $|N_3|\gg 0$ cap. In section \ref{sect:neartaub:neartaub}, we will study the further behaviour near $\taubp_1$ of such trajectories. In section \ref{sect:neartaub:forbidden-cones}, we will study the behaviour of trajectories near $\taubp_1$ which do \emph{not} necessarily have the prehistory described in section \ref{sect:neartaub:heteroclinic}, and especially provide the deferred proof of Lemma \ref{lemma:farfromtaub:no-delta-increase-near-taub}. This section is mostly optional for our main results: Any trajectory which ever leaves the region where Section \ref{sect:neartaub:forbidden-cones} is necessary will never revisit this region, a fact which is proven without referring to any results from Section \ref{sect:neartaub:forbidden-cones}.

\subsection{Informal Motivation}\label{sect:neartaub:motivation}
We already alluded to the motivation for the estimates in this section in the introductory Section \ref{sect:intro} , page \pageref{paragraph:intro:strategy:quotients}: 
From Proposition \ref{prop:farfromtaub-main}, by varying $\varepsilon_\taubp$, we can control the behaviour of trajectories near $\taubp_1$ if $\delta_1 \ll r_1$ and obtain estimates of the following type for partial trajectories $\gamma:(t_1,t_2)\to B_{\epsilon}(\taubp_1)$ and continuous monotonous functions $\rho:(0,1]\to (0,1]$:
\begin{quote}
Suppose $\delta_1(t_1)<\rho_0(r_1(t_1))$;\\ then $\delta_1(t_2) < C_0 e^{-\rho_1(r_1(t_1))}\delta_1(t_1)$ and  $r_1(t_2)\ge \rho_2(r_1(t_1))$.
\end{quote}
The bounds will take the specific form $\rho_0(r)=Cr$ and $\rho_2(r) = C r$ and $\rho_1(r)= \frac{C}{r^2}$ (Proposition \ref{prop:neartaub:main}).

We know some prehistory of trajectories entering the vicinity of $\taubp_1$ (by Proposition \ref{prop:farfromtaub-main2}), which allows us to track backwards the condition $\delta_1<\rho_0(r_1)$. Further tracking back this condition, 
it is clear that trajectories entering the vicinity of $-\taubp_1$ must have $\delta_1 \le \epsilon_d \ll r_1\sim \epsilon_N$. 

At least in the Bianchi \textsc{IX}-like case, where $\mathrm{sign}\,N_2=\mathrm{sign}\,N_3$, the set $\{\bpt x:\,r_1=0\}$ is invariant, allowing us to get some $\rho_3:(0,1]\to (0,1]$ such that $\delta_1 < \rho_3(r_1)$ for \emph{any} trajectory entering the vicinity of $\taubp_1$ via the route in Proposition \ref{prop:farfromtaub-main2}, i.e.~via $-\taubp_1$ and some cap before.

These estimates combine well if we can make $\rho_3 < \rho_0$.
The estimates will take the specific form $\rho_3(r)= \epsilon r$, for arbitrarily small $\epsilon>0$ (Proposition \ref{prop:near-taub:qc}), which is precisely the required estimate at $\taubp_1$. 
In Bianchi \textsc{VIII}, we have no qualitative a-priori reason to expect bounds of the same form. Nevertheless, we will prove that they hold, which allows us to control any solution entering $\taubp_1$ as in Proposition \ref{prop:farfromtaub-main2}.

For the sake of brevity of arguments, we will present our analysis in the reverse order: We chronologically follow a trajectory from $-\taubp_1$ to $+\taubp_1$ and then until it leaves the vicinity of $+\taubp_1$, instead of tracking estimates backwards.

\subsection{Analysis near $-\taubp_1$ and near the heteroclinic $-\taubp_1\to\taubp_1$}\label{sect:neartaub:heteroclinic}
The behaviour of trajectories away from $\taubp_1$ is already partially described by Proposition \ref{prop:farfromtaub-main}; we only need to additionally estimate the quotient $\frac{\delta_1}{r_1}$ in this region. The necessary estimates can be summarized in the following:
\begin{proposition}\label{prop:near-taub:qc}
Let $\varepsilon_{T}\in (0,0.1)$. We can chose $\epsilon_N, \epsilon_d>0$ small enough, such that Propositions \ref{prop:farfromtaub-main} and \ref{prop:farfromtaub-main2} hold, as well as choose constants $C_1,\ldots,C_5>0$ large enough, such that the following holds:

Let $0< T_1 \le T_2$ and $\bpt x:[0,T_2)\to M_{\pm,\pm,\pm}$ be piece of trajectory, such that
\begin{subequations}
\begin{align}
\bpt x(0)&\in \partial \textsc{Circle}[\epsilon_N, \epsilon_d],\label{eq:near-taub-main-ass-initial} \\
\nonumber &\quad\text{i.e.}\quad \max_i\delta_i(0)<\epsilon_d,\ \max(|N_2|,|N_3|)(0)=\epsilon_N  \\
\bpt x([0,T_1))&\subseteq B_{2\varepsilon_{T}}(-\taubp_1)\cap \textsc{Circle}[\epsilon_N, \epsilon_d],\label{eq:near-taub-main-ass-q}\\
\nonumber &\quad\text{i.e.}\quad \max_i\delta_i(t)<\epsilon_d,\ \max_i|N_i|(t)\le\epsilon_N,\ d(\gamma(t), \taubp_1)\le 2\varepsilon_T\qquad\forall\,t\in [0,T_1]\\
\bpt x([T_1,T_2))&\subseteq\textsc{Cap}[\epsilon_N, \epsilon_d]\label{eq:near-taub-main-ass-c},\\
\nonumber &\quad\text{i.e.}\quad \max_i\delta_i(t)<\epsilon_d,\ |N_1|(t)\ge \epsilon_N \quad \forall\,t\in [T_1,T_2),
\end{align}\end{subequations}
i.e.~we are in the situation of the conclusion of Proposition \ref{prop:farfromtaub-main2}. 

Then the following estimates hold:
\begin{subequations}\begin{align}
\frac{\delta_1}{r_1}(0) &\le  C_1\epsilon_d&&\label{eq:near-taub-main-conc-initial}\\
\frac{\delta_1}{r_1}(t_2) &\le C_2 \frac{\delta_1}{r_1}(t_1)\qquad&\forall& 0 \le t_1 \le t_2 < T_2\label{eq:near-taub-main-conc-delta-r-qc}\\
\delta_1(t_2) &\le C_3 e^{- C_4^{-1}(t_2-t_1)}\delta_1(t_1)\qquad&\forall&0 \le t_1 \le t_2 < T_2\label{eq:near-taub-main-conc-delta-qc}\\
r_1(t_2) &\ge C_5^{-1} r_1(t_1)\qquad&\forall& T_1 \le t_1 \le t_2 < T_2\label{eq:near-taub-main-conc-r-c}.
\end{align}\end{subequations}
Alternatively to the assumption \eqref{eq:near-taub-main-ass-initial}, we can assume \eqref{eq:near-taub-main-conc-initial}; then the case $0=T_1$ is valid as well.
\end{proposition}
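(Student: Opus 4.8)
The plan is to follow the trajectory chronologically through the two regimes fixed by the hypotheses: the \emph{approach phase} $[0,T_1)$, spent near $-\taubp_1$ inside $\textsc{Circle}$, and the \emph{cap phase} $[T_1,T_2)$, spent inside $\textsc{Cap}$. Since near $-\taubp_1$ one has $\Sigma_+\approx 1$, I would work throughout with the \emph{unrearranged} polar equations \eqref{eq:neartaub-b9-q} (when $\mathrm{sign}\,N_2=\mathrm{sign}\,N_3$, so after a permutation/sign change $N_2,N_3>0$) resp.\ \eqref{eq:neartaub-b8-q} (when $\mathrm{sign}\,N_2\neq\mathrm{sign}\,N_3$, so $N_2>0>N_3$), which have no $1/(1-\Sigma_+)$ singularity. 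The structural fact driving everything is that in \eqref{eq:neartaub-b9-q:delta-r}/\eqref{eq:neartaub-b8-q:delta-r} the leading term $-(\Sigma_++1)\Sigma_-^2/r_1^2$ is nonpositive whenever $\Sigma_+\ge-1$, while the remaining term is $O(|N_1|)$ by the elementary inequality $2|\Sigma_-N_-|\le\Sigma_-^2+N_-^2=r_1^2$ (and, in the Bianchi \textsc{VIII} case, $|N_+|/r_1\le\sqrt2$ once $\delta_1\le r_1$, since $N_+^2=N_-^2+\delta_1^2$). Hence $\partial_t\log(\delta_1/r_1)\le C|N_1|$ with $C$ a fixed constant — this is precisely why the quotient can be controlled even though $r_1$ alone cannot be.

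The initial bound \eqref{eq:near-taub-main-conc-initial} is purely algebraic: on $\partial\textsc{Circle}$ with $\max(|N_2|,|N_3|)=\epsilon_N$, the constraint $\delta_1=2\sqrt{|N_2N_3|}<\epsilon_d$ forces $\min(|N_2|,|N_3|)<\epsilon_d^2/(4\epsilon_N)$, so $r_1\ge\bigl||N_2|-|N_3|\bigr|\ge\epsilon_N/2$ for $\epsilon_d\ll\epsilon_N$, and $(\delta_1/r_1)(0)<2\epsilon_d/\epsilon_N=:C_1\epsilon_d$. On the approach phase, $d(\bpt x,-\taubp_1)\le2\varepsilon_T$ gives $\Sigma_++1\ge1.8$ and $|\bpt\Sigma+\taubp_1|^2\le0.04$, while $\max_i|N_i|\le\epsilon_N$ gives $N^2=O(\epsilon_N^2)$. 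Integrating \eqref{eq:neartaub-b9-q}/\eqref{eq:neartaub-b8-q} then yields three facts: (i) $\delta_1'/\delta_1=N^2-(\Sigma_++1)\le-1$, so $\delta_1$ decays at rate $\ge1$; (ii) $\DD_t\log|N_1|=1-|\bpt\Sigma+\taubp_1|^2\ge0.9$, so $|N_1|$ increases monotonically up to $|N_1(T_1)|=\epsilon_N$ and therefore $\int_0^{T_1}|N_1|\,\dd t<2\epsilon_N$ no matter how long $[0,T_1)$ is; (iii) $\partial_t\log(\delta_1/r_1)\le C|N_1|$, so $\delta_1/r_1$ grows by at most the factor $e^{2C\epsilon_N}$ over the whole approach phase. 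In the Bianchi \textsc{VIII} case (iii) requires $\delta_1\le r_1$, which is recovered by the usual continuity/bootstrap argument: $(\delta_1/r_1)(0)\le C_1\epsilon_d$ together with the bounded growth keeps $\delta_1/r_1<1$ provided $\epsilon_d$ is small enough.

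On the cap phase, Lemma \ref{lemma:farfromtaub-cap} bounds the length by $\hat C_0$; since $\max_i\delta_i<\epsilon_d$ forces $N^2=N_1^2+O(\epsilon_d)\ge0$, one has $\Sigma^2\le1$, hence $\Sigma_++1\in[0,3]$, $|N_1|\le2$, $N^2\in[0,1]$. Feeding these bounds and the same elementary inequalities into \eqref{eq:neartaub-b9-q:delta}/\eqref{eq:neartaub-b8-q:delta}, \eqref{eq:neartaub-b9-q:r}/\eqref{eq:neartaub-b8-q:r}, \eqref{eq:neartaub-b9-q:delta-r}/\eqref{eq:neartaub-b8-q:delta-r} shows that $|\DD_t\log\delta_1|$, $|\DD_t\log r_1|$ and $|\partial_t\log(\delta_1/r_1)|$ are each bounded by a fixed constant, so integrating over this short interval changes each of $\delta_1$, $r_1$, $\delta_1/r_1$ by at most a fixed factor; this gives \eqref{eq:near-taub-main-conc-r-c} immediately (it is asserted only on $[T_1,T_2)$ precisely because $r_1$ is uncontrolled on the possibly-long approach phase). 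Stitching the two phases: $\partial_t\log(\delta_1/r_1)\le C|N_1|$ holds on all of $[0,T_2)$ with $\int_0^{T_2}|N_1|\,\dd t\le2\epsilon_N+2\hat C_0$, which gives \eqref{eq:near-taub-main-conc-delta-r-qc}; and $\delta_1$ decays at rate $\ge1$ on $[0,T_1)$ and changes by at most a fixed factor on the length-$\le\hat C_0$ interval $[T_1,T_2)$, so the time spent in $[0,T_1)$ between any $t_1\le t_2$ is at least $(t_2-t_1)-\hat C_0$ and one obtains $\delta_1(t_2)\le C_3e^{-C_4^{-1}(t_2-t_1)}\delta_1(t_1)$ with $C_4=1$ and $C_3$ absorbing $e^{O(\hat C_0)}$, which is \eqref{eq:near-taub-main-conc-delta-qc}. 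In the ``alternatively'' variant one simply assumes \eqref{eq:near-taub-main-conc-initial} outright and runs only the cap-phase argument with $T_1=0$.

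The main obstacle, and the only genuinely delicate point, is that $r_1$ itself is \emph{not} under control on $[0,T_1)$: its logarithmic derivative in \eqref{eq:neartaub-b9-q:r}/\eqref{eq:neartaub-b8-q:r} is not sign-definite, and $T_1$ can be arbitrarily large because a trajectory may enter the vicinity of $-\taubp_1$ with $|N_1|$ arbitrarily tiny. The resolution is the ``fortuitous'' sign structure of $\partial_t\log(\delta_1/r_1)$ noted at the outset, combined with the observation that the only potentially unbounded time is spent where $|N_1|$ is exponentially small, so that $\int|N_1|\,\dd t$ stays $O(\epsilon_N)$ regardless of duration. The Bianchi \textsc{VIII} case contributes only the bookkeeping of the extra $\delta_1^2$ terms and the $\delta_1\le r_1$ bootstrap, with no new qualitative difficulty.
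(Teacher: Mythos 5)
Your proof is correct and follows essentially the same route as the paper's: the algebraic bound on $(\delta_1/r_1)(0)$, the key estimate $\partial_t\log(\delta_1/r_1)\le C|N_1|$ integrated against a bounded $\int|N_1|\,\dd t$, the Bianchi \textsc{VIII} bootstrap to preserve $\delta_1\le r_1$, and bounded cap-phase duration for the $r_1$ lower bound. The only cosmetic difference is that the paper cites Lemma \ref{lemma:farfromtaub-uniform-hyp} and Proposition \ref{prop:farfromtaub-main} for the $\int|N_1|\,\dd t$ and $\delta_1$-decay estimates rather than rederiving them, as you do.
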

\begin{proof}
The claim \eqref{eq:near-taub-main-conc-delta-qc} is already proven in Proposition \ref{prop:farfromtaub-main}.
Assume without loss of generality that $|N_3|(0)=\epsilon_N$. Assuming that $\epsilon_d$ is small enough compared to $\epsilon_N$, we have $|N_2|(0)\le \delta_1^2(0)/\epsilon_N \le 0.5 \epsilon_N$ and hence $r_1(0) \ge |N_3|-|N_2|\ge 0.5\epsilon_N$. Therefore $\frac{\delta_1}{r_1}(0)\le C\delta_1(0)\le C\epsilon_d$ and claim \eqref{eq:near-taub-main-conc-initial} holds.
%
First, consider the Bianchi \textsc{IX}-like case $\signN_2=\signN_3$ in polar coordinates, i.e.~equations \eqref{eq:neartaub-b9-q}. We can immediately estimate $\partial_t\log\frac{\delta_1}{r_1}\le C|N_1|$. We already established in Section \ref{sect:far-from-taub} that $\int |N_1|\dd t$ is bounded for $t_1,t_2\in [0,T_1]$ (Lemma \ref{lemma:farfromtaub-uniform-hyp}) and that $T_2-T_1$ is bounded (Lemma \ref{lemma:farfromtaub-cap}), yielding \eqref{eq:near-taub-main-conc-delta-r-qc}.

Next, consider the Bianchi \textsc{VIII}-like case $\signN_2=+1$ and $\signN_3=-1$, i.e.~equations \eqref{eq:neartaub-b8-q:delta-r}.
Assume that we have for all $t\in [0,T_2]$,
\begin{equation}\label{eq:delta-r-one}\frac{\delta_1}{r_1}(t) < 1.\end{equation}
Under the assumption \eqref{eq:delta-r-one}, we can estimate \[\frac{N_+}{r_1}=\frac{\sqrt{N_-^2+\delta_1^2}}{r_1}=\sqrt{\frac{N_-^2}{N_-^2+\Sigma_-^2}+\frac{\delta_1^2}{r_1^2}}\le \sqrt{2},\] and hence $\partial_t\log\frac{\delta_1}{r_1}\le C|N_1|$, yielding \eqref{eq:near-taub-main-conc-delta-r-qc}.
When we adjust $\varepsilon_d$ such that $C_1C_2\varepsilon_d<1$, then this argument bootstraps to prove \eqref{eq:near-taub-main-conc-delta-r-qc}, without assuming a priori \eqref{eq:delta-r-one} (proof: Assume there was a time $T\in (0,T_2)$ such that \eqref{eq:delta-r-one} was violated; then $\frac{\delta_1}{r_1}(T)\le C_{2}\frac{\delta_1}{r_1}(0)\le C_2C_1\varepsilon_d<1$).

Next, we prove the remaining claim \eqref{eq:near-taub-main-conc-r-c}.
Since we know that $\frac{\delta_1}{r_1}(t) < 1$, we can estimate $\DD_t \log r < C$ for for all $t\in [T_1,T_2)$, both in Bianchi \textsc{VIII} and \textsc{IX}. Since $T_2-T_1$ is bounded, we obtain \eqref{eq:near-taub-main-conc-r-c}.

Considering the above proof, it is obvious that we can alternatively replace the assumption \eqref{eq:near-taub-main-ass-initial}  by \eqref{eq:near-taub-main-conc-initial}, and then also allow $0=T_1$.
\end{proof}

\begin{remark}
We excluded the set $\forbidden[\epsilon_v]$ from our analysis, given by
\[
\forbidden[\epsilon_v] = \{\bpt x \in \RR^5: \delta_1 \ge \epsilon_v r_1 \}, 
\]
i.e.~we described the dynamics outside of $\forbidden$ and showed that the set $\forbidden$ cannot reached by initial conditions described by Proposition \ref{prop:farfromtaub-main2}. 

Ignoring the constraint $G=1$, the set $\forbidden$ looks like a linear cone times $\RR^2$, since both $r_1$ and $\delta_1$ are homogeneous of first order in $N_2,N_3,\Sigma_-$ and independent of $N_1$ and $\Sigma_+$. 

Even though Bianchi \textsc{VIII} lacks an explicit invariant Taub-space, the $\forbidden$-cone around the generalized Taub-space $\mathcal T_1^G$ is a suitable ``morally backwards invariant'' replacement.
\end{remark}

\subsection{Analysis near $\taubp_1$}\label{sect:neartaub:neartaub}
Our analysis of the neighborhood of $\taubp_1$ can be summarized in the following
\begin{proposition}\label{prop:neartaub:main}
For any $\varepsilon_T\in (0,0.1)$, there exist constants $\epsilon_v>0$, $C_1,C_r,C_{\delta,r}>0$ and $C_e=10$ such that the following holds:

Let $\gamma:[0,T^*)\to B_{2\varepsilon_T}(\taubp_1)$ be a partial trajectory with
\begin{equation}
\frac{\delta_1}{r_1}(0)< \epsilon_v.
\end{equation}
Then, for all $0\le t_1\le t_2 <T^*$:
\begin{subequations}
\begin{align}
(|N_1|, \delta_2,\delta_3) (t_2) &\le C_1 e^{-C_e^{-1} (t_2-t_1)}\,(|N_1|, \delta_2,\delta_3)(t_1) \label{eq:neartaub:main:others}\\
\frac{\delta_1}{r_1}(t_2)&\le C_{\delta,r}\frac{\delta_1}{r_1}(t_1) \label{eq:neartaub:main:q}\\
r_1(t_2)&\ge C_r^{-1}r_1(t_1)\label{eq:neartaub:main:r}\\
\delta_1(t_2)&\le C_0\exp\left(-\frac{C_e^{-1}}{r_1^{2}(t_1)}(t_2-t_1)\right)\,\delta_1(t_1)\label{eq:neartaub:main:delta}\\
T^*&<\infty\label{eq:neartaub:main:T}.
\end{align}
\end{subequations}
\end{proposition}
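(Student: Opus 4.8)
The plan is to run every estimate as a bootstrap inside the ball $B_{2\varepsilon_T}(\taubp_1)$. For $\varepsilon_T$ small, all of $|N_1|,|N_2|,|N_3|$ and $r_1$ are small there, $\Sigma_+<0$ and $1-\Sigma_+\in[c,C]$, so the conditions that make $|h_r|,|h_\delta|,|h_\psi|\le 5$ in \eqref{eq:neartaub-b9-t-h}, \eqref{eq:neartaub-b8-t-h} hold as soon as $\delta_1/r_1\le 1$. I first fix $\epsilon_v$ so small that $\epsilon_v C_{\delta,r}\le\tfrac12$, with $C_{\delta,r}$ the constant produced in the next step; then \eqref{eq:neartaub:main:q}, once proved, closes a routine continuity bootstrap forcing $\delta_1/r_1<1$ on all of $[0,T^*)$ (this is only needed in the non-invariant \textsc{VIII}-like case, since in the $\signN_2=\signN_3$ case the $|h|$-bounds require no condition on $\delta_1/r_1$).

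The estimates \eqref{eq:neartaub:main:others}, \eqref{eq:neartaub:main:q}, \eqref{eq:neartaub:main:r} are then elementary. From $\delta_i'/\delta_i=-(|\bpt\Sigma-\tfrac12\taubp_i|^2-\tfrac14)$ and $N_1'/N_1=-(|\bpt\Sigma+\taubp_1|^2-1)$, evaluation at $\bpt\Sigma=\taubp_1$ gives $N_1'/N_1\to -3$ and $\delta_2'/\delta_2,\delta_3'/\delta_3\to -\tfrac32$, so by continuity all three are $<-C_e^{-1}$ on $B_{2\varepsilon_T}(\taubp_1)$ for $\varepsilon_T$ small; integrating yields \eqref{eq:neartaub:main:others} and in particular $\int_{t_1}^{t_2}|N_1|\,dt\le C$ uniformly. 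In \eqref{eq:neartaub-b9-t:delta-r}/\eqref{eq:neartaub-b8-t:delta-r} the $r_1^2\cos^2\psi$ and $\delta_1^2\cos^2\psi$ terms carry a minus sign (using $\Sigma_+<0$), so $\partial_t\log(\delta_1/r_1)\le 10|N_1|$; integrating against the previous bound gives \eqref{eq:neartaub:main:q} with $C_{\delta,r}=e^{10C}$, which is where the bootstrap closes. Symmetrically, in \eqref{eq:neartaub-b9-t:r}/\eqref{eq:neartaub-b8-t:r} the $r_1^2\sin^2\psi$ and $\delta_1^2(\cos^2\psi-\Sigma_+)$ terms are nonnegative, so $r_1'/r_1\ge -10|N_1|$, which integrates to \eqref{eq:neartaub:main:r}.

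The decay estimate \eqref{eq:neartaub:main:delta} is the crux, and I expect the averaging behind it to be the main obstacle. The obstruction is the term $\tfrac{-\Sigma_+}{1-\Sigma_+}r_1^2\sin^2\psi$ in \eqref{eq:neartaub-b9-t:delta} (and its analogue in the \textsc{VIII} system): it is positive and of the same order $\sim r_1^2$ as the good term $-\tfrac{1}{1-\Sigma_+}r_1^2\cos^2\psi$, so $\delta_1'/\delta_1$ is sign-indefinite and must be averaged over one rotation of $\psi$. By \eqref{eq:neartaub-b9-t:psi}/\eqref{eq:neartaub-b8-t:psi}, once $|N_1|$ has dropped below a small multiple of $r_1$ — which takes a bounded time, by the exponential decay of $|N_1|$ and $r_1\ge C_r^{-1}r_1(0)>0$ — the leading term $\sqrt3\,r_1\sqrt{\sin^2\psi+\delta_1^2/r_1^2}$ dominates the rest, so $\psi'>0$ and $\psi$ rotates monotonically, slowing down only in an $O(\delta_1/r_1)$-window around the degenerate angles; in particular the time spent per rotation with $\cos^2\psi\le\tfrac12$ is negligible compared to the rotation time. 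I would then pass to $g=\log\delta_1+f(\psi)$ with $f$ a bounded piecewise-smooth corrector chosen so that $f'(\psi)\cdot(\sqrt3\,r_1|\sin\psi|)$ cancels $\tfrac{-\Sigma_+}{1-\Sigma_+}r_1^2\sin^2\psi$; one can take $\sup|f|\lesssim r_1\lesssim\varepsilon_T$, and the error terms from $r_1'$ and $\Sigma_+'$ acting on $f$ are $O(r_1^3)+O(|N_1|)$, so $g'\le -c\,r_1^2\cos^2\psi+10|N_1|+O(r_1^3)$. Integrating and absorbing $2\sup|f|$, one gets $\log\tfrac{\delta_1(t_2)}{\delta_1(t_1)}\le -c\int_{t_1}^{t_2}r_1^2\cos^2\psi\,dt+C$; the averaging fact above gives $\int_{t_1}^{t_2}\cos^2\psi\,dt\ge c'(t_2-t_1)-C$, and \eqref{eq:neartaub:main:r} gives $r_1(t)\ge C_r^{-1}r_1(t_1)$, whence \eqref{eq:neartaub:main:delta}. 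The delicate point is that here the "rotation time'' itself depends on $\delta_1/r_1$, so this averaging is genuinely harder than the uniform-hyperbolicity estimates of Section \ref{sect:far-from-taub}.

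Finally, \eqref{eq:neartaub:main:T}: suppose $T^*=\infty$, so $\bpt x(t)$ stays in $\overline{B_{2\varepsilon_T}(\taubp_1)}$ and $\omega:=\omega(\bpt x(0))$ is a nonempty compact invariant set. By \eqref{eq:neartaub:main:others}, $\omega\subseteq\{N_1=0\}$, and by \eqref{eq:neartaub:main:r}, $r_1\ge C_r^{-1}r_1(0)>0$ on $\omega$, so $\omega$ misses the Taub line $\mathcal{TL}_1$ (where $r_1=0$) and the point $\taubp_1$. On $\{N_1=0\}\cap B_{2\varepsilon_T}(\taubp_1)$ the flow is of Bianchi type $\textsc{VII}_0$ or $\textsc{VI}_0$, and by Lemmas \ref{lemma:farfromA:b6}--\ref{lemma:farfromA:b7} every non-equilibrium trajectory converges forward to $\{\Sigma_+\ge\tfrac12\}$ and hence leaves the ball; so $\omega$ consists only of equilibria, i.e.\ points of $\mathcal K\cup\mathcal{TL}_1$. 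Being connected and disjoint from $\mathcal{TL}_1$ and $\taubp_1$, $\omega$ is a single point $\bpt p\in\mathcal K\cap B_{2\varepsilon_T}(\taubp_1)$ with $\bpt p\neq\taubp_1$. But at such $\bpt p$ exactly one of $N_2,N_3$ is linearly unstable (Lemma \ref{farfromA:lemma:kasnermap-stability} together with the disc picture of Figure \ref{fig:n-discs}), which is incompatible with $N_2,N_3\neq 0$ and the boundedness of $\bpt x(t)$. Hence $T^*<\infty$.
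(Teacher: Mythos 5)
Your treatment of \eqref{eq:neartaub:main:others}, \eqref{eq:neartaub:main:q}, \eqref{eq:neartaub:main:r} is essentially identical to the paper's: bootstrap $\delta_1/r_1<1$ from \eqref{eq:neartaub-b9-t:delta-r}/\eqref{eq:neartaub-b8-t:delta-r} via $\partial_t\log(\delta_1/r_1)\lesssim|N_1|$, then integrate against the exponentially decaying $|N_1|$. (Note that \eqref{eq:neartaub:main:delta} as printed has $r_1^{-2}(t_1)$ in the exponent; from both your estimate and the paper's, and from its use in Lemma~\ref{lemma:badrecurrent-small-int}, the exponent must be $-C_e^{-1}r_1^{2}(t_1)(t_2-t_1)$, so this appears to be a typo.)

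There is a genuine gap in your treatment of \eqref{eq:neartaub:main:delta}. You want $f'(\psi)\cdot\bigl(\sqrt3\,r_1|\sin\psi|\bigr)$ to cancel $\tfrac{-\Sigma_+}{1-\Sigma_+}r_1^2\sin^2\psi$, which forces $f'(\psi)\propto r_1|\sin\psi|$. But $\int_0^{2\pi}|\sin\psi|\,d\psi=4\neq 0$: this $f$ has a derivative with nonzero mean over one rotation, so $f(\psi)$ grows linearly in $\psi$ — roughly by $4cr_1$ per rotation — and the claim $\sup|f|\lesssim r_1$ is false. If you instead subtract the rotation-average of the bad term before defining the corrector (so that $f$ is bounded and periodic), the leftover in $g'$ is the time-weighted average of the bad term, which scales like $r_1^2/|\log(\delta_1/r_1)|$, not the $O(r_1^3)$ you wrote. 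Either way, the argument as written does not close. The paper avoids a corrector entirely: set $\mu=\partial_t\log\delta_1+0.1r_1^2$, split the $r_1^2$ terms into the positive part $\mu_+\sim r_1^2\sin^2\psi$ and the negative part $\mu_-\sim -r_1^2$, and estimate each over one $\psi$-rotation using $r_1|\sin\psi|\lesssim\psi'\lesssim r_1\sqrt{\sin^2\psi+\delta_1^2/r_1^2}$. The key is the \emph{logarithmic} gain from the slowdown of $\psi'$ near $\sin\psi=0$: $\int\mu_+\,dt\lesssim r_1$ per rotation, while $\int\mu_-\,dt\lesssim -r_1\log(r_1/\delta_1)\lesssim -r_1|\log\epsilon_v|$; choosing $\epsilon_v$ small makes the negative part dominate. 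Your corrector idea can be salvaged by the same logarithmic comparison — the linear drift of $f$ per rotation (of size $\sim r_1$) is beaten by the per-rotation contribution $\sim -r_1|\log\epsilon_v|$ of the good term — but that is exactly the estimate you need to carry out, and it cannot be replaced by "absorbing $2\sup|f|$''.

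For \eqref{eq:neartaub:main:T} you take a genuinely different route: the paper counts $\psi$-rotations directly (using $D_\psi\log r_1\gtrsim r_1|\sin\psi|$ once $|N_1|\ll r_1^2$, so infinitely many rotations force $r_1\to\infty$, a contradiction; hence $\psi$ converges, the trajectory limits to a Kasner point $\neq\taubp_1$, where one of $N_2,N_3$ is unstable), whereas you argue via the $\omega$-limit set and the classification of Bianchi $\textsc{VII}_0$/$\textsc{VI}_0$ dynamics. Your approach is essentially sound, and perhaps more robust, but the step "$\omega$ is a single point'' does not follow merely from connectedness and disjointness from $\mathcal{TL}_1\cup\{\taubp_1\}$: $\omega$ could a priori be a nontrivial arc of $\mathcal K$ on one side of $\taubp_1$. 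The conclusion still holds, because along any such arc the \emph{same} one of $N_2,N_3$ is uniformly unstable (Lemma~\ref{farfromA:lemma:kasnermap-stability} and Figure~\ref{fig:n-discs}), so the trajectory cannot accumulate on an arc either; you should make this explicit rather than reduce to a single point.
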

We begin by proving the first three of the claims, in a way analogous to the proof of Proposition \ref{prop:near-taub:qc}:
\begin{proof}[Proof of Proposition \ref{prop:neartaub:main}, conclusions \eqref{eq:neartaub:main:others}, \eqref{eq:neartaub:main:q}, \eqref{eq:neartaub:main:r}]
The exponential decay \eqref{eq:neartaub:main:others} follows trivially from \eqref{eq:ode2-ni}, see e.g.~Figures \ref{fig:n-discs2} and \ref{fig:n-discs}. 

In order to see \eqref{eq:neartaub:main:q}, we again need to bootstrap: First consider $T_1=\sup\{t\in [0,T^*): \frac{\delta_1}{r_1}(t') < 1\,\forall t'\in [0,t)\}$.
We can estimate $\partial_t\log \frac{\delta_1}{r_1} < C|N_1|$ for all $t\in [0,T_1)$, using \eqref{eq:neartaub-b9-t:delta-r} and \eqref{eq:neartaub-b8-t:delta-r} and the fact that the higher order terms $|h_r|,|h_\delta|, |h_\psi|$ are bounded (by $C=5$). By the exponential decay of $N_1$, this yields \eqref{eq:neartaub:main:q} upon integration, for all $t_2\le T_1$. By adjusting $\epsilon_v<C_{\delta,r}^{-1}$ we can then conclude $T_1=T^*$.

Using $\frac{\delta_1}{r_1}<1$, we can estimate $\partial_t r_1 > -C|N_1|$, which upon integration yields the claim \eqref{eq:neartaub:main:r}.
\end{proof}
The next estimate \eqref{eq:neartaub:main:delta} requires a slightly more involved averaging-style argument, similar to the proof of Proposition \ref{prop:farfromtaub-main}:
\begin{proof}[Proof of Theorem \ref{prop:neartaub:main}, conclusion \eqref{eq:neartaub:main:delta}]
We set
\[\mu = \DD_t\log\delta_1 + 0.1 r_1^2.\]
It suffices to prove that $\int_{t_1}^{t_2}\mu(t)\dd t\le \log C_0$ for some $C_0>0$.

\paragraph{Strategy.}
We will first consider times where $|N_1|\not\ll r_1$; the integral $\int \mu \dd t$ over these times will be bounded by $\int |N_1|\dd t$. Next we will split $\mu$ into a nonpositive and a nonnegative part; the nonnegative (bad) part will have a contribution for every $\psi$ rotation, which is bounded by $C r_1$, while the nonpositive (good) part will have a negative contribution for every $\psi$-rotation which scales with $r_1\log\frac{\delta_1}{r_1}$. Adjusting $\epsilon_v$ will then yield the desired estimate (after summing over $\psi$-rotations).

\paragraph{Estimates for large $|N_1|$.}
Choose $\widetilde T$ (possibly $\widetilde T= 0$) such that $|N_1(t)|\ge C r_1^2(t)$ (with $C=0.05$) for $t\in (0, \widetilde T]$ and $|N_1(t)|\le 0.1 r_1^2(t)$ for $t \in [\widetilde T, T]$. This is possible, since $\DD_t \log |N_1| < 2\DD_t \log r_1$. 
Then $|\mu(t)| \le C \sqrt{|N_1(t)|}$ for all $t\in [0,\widetilde T]$ and hence $\int_{0}^{\widetilde T}|\mu(t)|\dd t< C$.

\paragraph{Averaging Estimates.}
Consider without loss of generality $t_1,t_2\ge \widetilde T$. Using $\Sigma_+\approx -1$ and $\delta_1 \le 0.1 r_1$, we can estimate
\[\mu \le -0.4 r_1^2 \cos^2\psi + 0.6 r_1^2\sin^2\psi + 0.01 r_1^2 + 0.1 r_1^2 + C|N_1| \le -0.25 r_1^2 + r_1^2\sin^2\psi.\]
We can also estimate $\psi'$:
\[r_1 |\sin\psi| \le \psi' \le 2r_1\sqrt{\sin^2\psi + \frac{\delta_1^2}{r_1^2}}.\]
Let $\mu_+ = r_1^2\sin^2\psi$ be the positive (bad) part of $\mu$; take times $t_1<t_L<t_R<t_2$ with $|\psi(t_R)-\psi(t_L)|\le 2 \pi$. Then
\[\int_{t_L}^{t_R}\mu_+(t)\dd t 
\le \int_{\psi(t_L)}^{\psi(t_R)}\frac{\mu_+(t)}{\psi'(t)}\dd t
\le C_{r}^{-1}r_1(t_R)\int_{0}^{2\pi}|\sin\psi|\dd \psi \le C_+ r_1(t_R),\]
for some $C_+>0$.
On the other hand, let $\mu_-=-0.25 r_1^2$ be the negative (good) part of $\mu$. Take times $t_1<t_L<t_R<t_2$ with $\psi(t_L)=k\pi-0.1$ and $\psi(t_R)=k\pi +0.1$ for some $k\in\ZZ$; then we can estimate
\[\begin{aligned}
\int_{t_L}^{t_R}\mu_-(t)\dd t &\le \int_{-0.1}^{+0.1}\frac{-0.25\, r_1^2}{\psi'}\dd \psi\le \int_{-0.1}^{+0.1}\frac{-0.125\, r_1}{\sqrt{\sin^2\psi + \frac{\delta_1^2}{r_1^2}}}\dd \psi\\
&\le -0.2\, C_{r}^{-1}r_1(t_L)\int_{0}^{0.1}\frac{1}{\sqrt{x^2 + C_{\delta,r}^2 \epsilon_v^2}}\dd x
\end{aligned}\]
We can immediately see that for any $C_+>0$, we find $\epsilon_v>0$ such that $\frac{\delta_1}{r_1}(t_1)<\epsilon_v$ implies 
$\int_{t_L}^{t_R}\mu_-(t)\dd t < -C_+ r_1(t_L)$. Hence, by summing over $\psi$-rotations (and adjusting $\epsilon_v$), we can conclude the assertion \eqref{eq:neartaub:main:delta}.
\end{proof}
\begin{proof}[Proof of Proposition \ref{prop:neartaub:main}, conclusion \eqref{eq:neartaub:main:T}]
We need to show that solutions with small quotient $\frac{\delta_1} {r_1}$ cannot stay near $\taubp_1$ forever.

Assuming without loss of generality $|N_1|\ll r_1^2$ we have $\DD_\psi \log r_1 > C r_1 |\sin\psi|$. This shows that the only way never leaving the vicinity of $\taubp_1$ is for the angle $\psi\in \RR$ to stay bounded, i.e.~$\lim_{t\to\infty}\psi(t)=\psi^{**}$ and $\lim_{t\to\infty}\delta_1(t)=0$ (since otherwise $r_1$ increases by a too large amount during each rotation). This is impossible, since the possible limit-points lie on the Kasner-circle $\mathcal K\setminus \{\taubp_1\}$ and are not $\taubp_1$; hence, either $N_2$ or $N_3$ is unstable and since initially $N_2\neq 0\neq N_3$, the trajectory cannot converge to such a point.
\end{proof}
\subsection{Analysis in the \forbidden-cones}\label{sect:neartaub:forbidden-cones}\label{sect:averaging-unneeded}
Our whole approach aims at avoiding the much more tricky analysis of the dynamics in the \forbidden-cones, where possibly $\delta_1\ge r_1$: Since trajectories starting outside of these cones never enter them, it is unnecessary to know what happens in the \forbidden-cones. However, for various global questions, it is useful to collect at least some results inside of these cones. 

We already know that solutions in Bianchi \textsc{IX} cannot converge to the Taub-points; the same holds in Bianchi \textsc{VIII}, even for solutions in \forbidden:
\begin{lemma}\label{lemma:no-b8-taub-convergence}
For an initial condition $\bpt x_0\in \mathcal M_{+-+}$, it is impossible to have $\lim_{t\to\infty}\bpt x(t)=\taubp_1$.
\end{lemma}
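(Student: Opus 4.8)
The plan is to exhibit a quantity that controls the distance of $\bpt x(t)$ to $\taubp_1$ in the ``generalized Taub-space'' directions, show it is \emph{almost increasing} near $\taubp_1$, and contradict $\bpt x(t)\to\taubp_1$. By the permutation equivariance of \eqref{eq:ode} (the transposition $(2\,3)$ fixes $\taubp_1$ and carries $\mathcal M_{+-+}$ to $\mathcal M_{++-}$) I may as well assume $N_2>0>N_3$, so that the polar-coordinate equations \eqref{eq:neartaub-b8-q} around $\mathcal T_1^G$ apply directly; a direct computation from \eqref{eq:ode-unpacked} works equally well in the original octant, at the cost of an irrelevant sign. Set
\[ \rho^2 := \delta_1^2 + r_1^2 = \Sigma_-^2 + N_+^2, \qquad N_+=|N_2|+|N_3|. \]
On the open octant $\rho^2>0$ (since $N_2,N_3\neq0$), and $\bpt x(t)\to\taubp_1$ would force $\rho^2(t)\to0$. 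The first step is to compute $\frac{\dd}{\dd t}\rho^2$ from the equations for $N_+'$ and $\Sigma_-'$ preceding \eqref{eq:neartaub-b8-q}: the cross terms $\pm 2\sqrt3\,\Sigma_-N_-N_+$ cancel and one obtains
\[ \tfrac{\dd}{\dd t}\rho^2 = 2N^2\rho^2 - 2(1+\Sigma_+)N_+^2 + 2\sqrt3\,\Sigma_- N_+ N_1. \]

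The key algebraic step is to eliminate the middle term with the two identities valid on $\mathcal M$, namely $1+\Sigma_+ = \frac{\Sigma_-^2+N^2}{1-\Sigma_+}$ and $N^2-N_+^2 = N_1(N_1-2N_-)$ (the relation $\delta_1^2+N_-^2=N_+^2$ is precisely what absorbs the otherwise non-negligible cross term $N_2N_3$). A short manipulation gives
\[ \tfrac{\dd}{\dd t}\rho^2 \;=\; \frac{-2\Sigma_+ N^2}{1-\Sigma_+}\,\rho^2 \;+\; N_1\!\left(\frac{2\Sigma_-^2(N_1-2N_-)}{1-\Sigma_+} + 2\sqrt3\,\Sigma_- N_+\right). \]
Near $\taubp_1$ one has $\Sigma_+<0$, $1-\Sigma_+>0$ and $N^2\ge0$, so the first term is nonnegative; and since $|N_1|,|N_-|$ are then small, $|\Sigma_-|\le r_1\le\rho$, and $|\Sigma_-|N_+\le\tfrac12\rho^2$, the second term is bounded by $C|N_1|\rho^2$. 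Hence there is a constant $C>0$ and a neighborhood of $\taubp_1$ on which $\frac{\dd}{\dd t}\rho^2 \ge -C|N_1|\rho^2$.

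To conclude, suppose $\bpt x(t)\to\taubp_1$. Then for $t\ge T_0$ the trajectory stays in that neighborhood, and $N_1'/N_1 = -(|\bpt\Sigma+\taubp_1|^2-1)\to-3$, so (enlarging $T_0$) $N_1(t)\le N_1(T_0)e^{-2(t-T_0)}$ and $\int_{T_0}^\infty N_1(s)\,\dd s<\infty$. Integrating $\frac{\dd}{\dd t}\log\rho^2\ge -C N_1$ yields $\rho^2(t)\ge\rho^2(T_0)\exp\!\big(-C\!\int_{T_0}^\infty N_1\big)>0$ for all $t\ge T_0$, contradicting $\rho^2(t)\to0$. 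The same computation applies verbatim in Bianchi \textsc{IX} (where $\mathcal T_1$ is invariant), recovering the classical non-convergence to Taub points. The only real obstacle is the bookkeeping in the algebraic identity: one must verify that the choice $\rho^2=\delta_1^2+r_1^2$ — rather than, say, $\Sigma_-^2+N_2^2+N_3^2$ — is what absorbs the term $N_2N_3$, which in a Bianchi \textsc{VIII} octant is comparable in size to $N_2^2+N_3^2$ and would otherwise destroy the favourable sign of the leading term, and to track the octant-dependent signs in $N^2-N_+^2$ and in the $N_1$-correction.
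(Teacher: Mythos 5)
Your main argument is correct and takes a genuinely different route from the paper's. The paper's proof works with the single polar-coordinate equation for $\log\!\bigl(|N_1|\tfrac{\delta_1}{r_1}\bigr)$ from \eqref{eq:neartaub-b8-q}, and then splits into two cases according to whether $|N_1|\tfrac{\delta_1}{r_1}$ ever drops below $1$ (in which case $r_1\to0$ is impossible) or stays above it (in which case $\delta_1\to0$ is impossible). You instead build a single scalar $\rho^2=\delta_1^2+r_1^2=\Sigma_-^2+N_+^2$, verify the identity $\tfrac{\dd}{\dd t}\rho^2 = \tfrac{-2\Sigma_+N^2}{1-\Sigma_+}\rho^2 + N_1\bigl(\tfrac{2\Sigma_-^2(N_1-2N_-)}{1-\Sigma_+}+2\sqrt3\,\Sigma_-N_+\bigr)$ using the constraint $1+\Sigma_+=\tfrac{\Sigma_-^2+N^2}{1-\Sigma_+}$ together with the octant-specific relation $N^2-N_+^2=N_1(N_1-2N_-)$, note that in $\mathcal M_{*+-}$ one has $N^2>0$ so the leading term is nonnegative near $\taubp_1$, bound the correction by $C|N_1|\rho^2$, and conclude from $\int|N_1|\,\dd t<\infty$. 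I checked the algebra (including the cancellation of the $\pm2\sqrt3\,\Sigma_-N_-N_+$ cross terms and the absorption of $N_2N_3$ into $N_+^2-N_-^2=\delta_1^2$) and it is right. Your proof avoids the paper's case distinction and is somewhat more self-contained; the paper's proof has the advantage of reusing the already-displayed equation \eqref{eq:neartaub-b8-q:delta-r} and is shorter on the page, but the two are comparable in difficulty.

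The closing parenthetical remark — that the same computation applies verbatim in Bianchi \textsc{IX} and recovers the classical non-convergence to Taub points — is false, and the way it fails is instructive. In $\mathcal M_{*++}$ the roles of $N_\pm$ swap in the $N^2$-identity: one has $N^2 = N_-^2 + N_1(N_1-2N_+)$ rather than $N^2=N_+^2+N_1(N_1-2N_-)$, and $N^2$ is no longer of one sign near $\taubp_1$. Carrying out the same manipulation then produces an \emph{extra} term $-\tfrac{2\delta_1^2\Sigma_-^2}{1-\Sigma_+}$ that does not come with an $N_1$ prefactor and can dominate the nonnegative leading term (for instance when $N_2\approx N_3$). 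This is as it must be: in Bianchi \textsc{IX} the Taub space $\mathcal T_1$ is invariant and its solutions genuinely converge to $\taubp_1$, so no estimate of the form $\tfrac{\dd}{\dd t}\log\rho^2\ge -C|N_1|$ holding on the whole octant can be true; the Bianchi \textsc{IX} statement (Lemma \ref{lemma:farfromA:taub-instability}) requires the non-Taub hypothesis and a different argument. Since this is an aside, it does not affect the correctness of your proof of the lemma as stated, but you should drop or correct that sentence.
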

\begin{proof}
Suppose we have such a solution.
Using equation \eqref{eq:neartaub-b8-q}, we can write near $\taubp_1$:
\[
\DD_t \log |N_1|\frac{\delta_1}{r_1} \le \sqrt{3}\frac{|\Sigma_-|}{r_1} |N_1|\frac{\sqrt{N_-^2 + \delta_1^2} }{r_1} - 2.5.
\]
Hence, if ever $|N_1|\frac{\delta_1}{r_1}<1$, this inequality is preserved and $|N_1|\frac{\delta_1}{r_1}$ decays exponentially. Then we can estimate $\DD_t \log r_1 \ge -C |N_1| -C |N_1|\frac{\delta_1}{r_1}$; all the terms on the right hand side have bounded integral and $r_1\to 0$ is impossible.

On the other hand, if $r_1 < |N_1|\delta_1$ for all sufficiently large times, we can estimate $\DD_t\log \delta_1\ge -C r_1^2 - C |N_1| \ge -C |N_1|$, which has bounded integral and thus contradicts $\delta_1\to 0$.
\end{proof}

Unfortunately, this is all we can presently say in the \forbidden{} cone in Bianchi \textsc{VIII}. 

\noindent 
In the case of Bianchi \textsc{IX}, we can still average over $\psi$-rotations in order to show that $\frac{\delta_1}{r_1}$ decays, even in the \forbidden{} region:
\begin{lemma}\label{neartaub:lemma:joint-increase-in-forbidden}
Let $h>0$. There exists constants $\epsilon, C>0$ such that the following holds:

Let $\gamma:[0,T]\to \{\bpt x\in\mathcal M_{*++}:\, |N_1|\le r_1^5,\,r_1<\epsilon,\,r_1\le h\delta_1,\, d(\bpt x, \taubp_1)<0.1\}$ be a partial trajectory near $\taubp_1$. Then 
\[
\log \frac{\delta_1}{r_1}(0) -\log \frac{\delta_1}{r_1}(T) \le C (\log r_1(T)-\log r_1(0)),
\]
i.e.~the increase of $r_1$ and the decrease of $\frac{\delta_1}{r_1}$ have comparable rates.
\end{lemma}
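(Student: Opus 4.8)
The plan is to carry out, as everywhere in Section~\ref{sect:near-taub}, the analysis in the polar coordinates of Section~\ref{sect:polar-coords} adapted to $\mathcal T_1$ in the Bianchi~\textsc{IX}-like case $\signN_2=\signN_3=+1$, i.e.\ in the equations \eqref{eq:neartaub-b9-t}. On the region in the statement we have $d(\bpt x,\taubp_1)<0.1$, hence $\Sigma_+\in(-1.1,-0.9)$ and both $\tfrac1{1-\Sigma_+}$ and $\tfrac{-\Sigma_+}{1-\Sigma_+}$ lie in $(0.4,0.6)$; moreover $|h_r|,|h_\delta|,|h_\psi|\le 5$ and $|N_1|\le r_1^5\le\epsilon^3 r_1^2$, so for $\epsilon$ small every $N_1$-term in \eqref{eq:neartaub-b9-t} is at most $10\epsilon^3 r_1^2$. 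Writing $q=\delta_1/r_1$ (so $q\ge 1/h$, by $r_1\le h\delta_1$), the bound $\sqrt{\sin^2\psi+q^2}\ge q$ together with \eqref{eq:neartaub-b9-t:psi} gives, for $\epsilon$ small enough depending on $h$, $\tfrac{\sqrt3}{2h}r_1\le\tfrac{\sqrt3}{2}\delta_1\le\psi'\le 2\sqrt3\,r_1\sqrt{\sin^2\psi+q^2}$; in particular $\psi$ is strictly increasing in $t$ and may be used as the independent variable.

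First I would reduce the assertion to a single scalar averaging inequality. From \eqref{eq:neartaub-b9-t:delta-r} and \eqref{eq:neartaub-b9-t:r},
\[
-\partial_t\log\tfrac{\delta_1}{r_1}=\tfrac{r_1^2\cos^2\psi}{1-\Sigma_+}+N_1(h_r-h_\delta)\le r_1^2,
\qquad
\partial_t\log r_1=\tfrac{-\Sigma_+}{1-\Sigma_+}r_1^2\sin^2\psi+N_1h_r\ge 0.4\,r_1^2\sin^2\psi-10\epsilon^3 r_1^2 .
\]
Using $\int_0^T r_1^5\,\dd t\le\epsilon^3\int_0^T r_1^2\,\dd t$, it then suffices to prove
\[
\int_0^T r_1^2\,\dd t\ \le\ C'(h)\,\int_0^T r_1^2\sin^2\psi\,\dd t ,
\]
because this both bounds $-\Delta\log\tfrac{\delta_1}{r_1}\le\int_0^T r_1^2\,\dd t\le C'\int_0^T r_1^2\sin^2\psi\,\dd t$ and, after shrinking $\epsilon$ depending on $C'$, gives $\Delta\log r_1\ge 0.3\int_0^T r_1^2\sin^2\psi\,\dd t$, so that $-\Delta\log\tfrac{\delta_1}{r_1}\le C\,\Delta\log r_1$ with $C=C(h)$.

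The core is the averaging. Passing to $\psi$ as variable and using the two-sided bound on $\psi'$,
\[
\int_0^T r_1^2\,\dd t\le\tfrac2{\sqrt3}\int\frac{r_1\,\dd\psi}{\sqrt{\sin^2\psi+q^2}},
\qquad
\int_0^T r_1^2\sin^2\psi\,\dd t\ge\tfrac1{2\sqrt3}\int\frac{r_1\sin^2\psi\,\dd\psi}{\sqrt{\sin^2\psi+q^2}} .
\]
On any $\psi$-interval of length $2\pi$ one has $\bigl|\tfrac{\dd}{\dd\psi}\log r_1\bigr|,\bigl|\tfrac{\dd}{\dd\psi}\log q\bigr|\le 2r_1^2/\psi'=O(hr_1)=O(h\epsilon)$, so $r_1$ and $q$ are constant on such an interval up to a factor $e^{O(h\epsilon)}$, and (since $q\ge 1/h$) $q$ stays $\ge 1/(2h)$. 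Hence a full period contributes $\lesssim r_1\int_0^{2\pi}(\sin^2\psi+q^2)^{-1/2}\dd\psi$ to the left integral and $\gtrsim r_1\int_0^{2\pi}\sin^2\psi\,(\sin^2\psi+q^2)^{-1/2}\dd\psi$ to the right, and the ratio
\[
A(q)=\frac{\int_0^{2\pi}(\sin^2\psi+q^2)^{-1/2}\,\dd\psi}{\int_0^{2\pi}\sin^2\psi\,(\sin^2\psi+q^2)^{-1/2}\,\dd\psi}
\]
is continuous on $(0,\infty)$, tends to $2$ as $q\to\infty$, and is finite for each $q>0$; therefore $A$ is bounded on $[1/(2h),\infty)$ by some $A_0(h)<\infty$. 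Summing over all full periods contained in $[\psi(0),\psi(T)]$ proves the inequality for that part.

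I expect the main obstacle to be the last (incomplete) period — and, if $\psi(T)-\psi(0)<2\pi$, the whole interval. There $\int(\sin^2\psi+q^2)^{-1/2}\dd\psi$ is only $O(h)$, which is \emph{not} controlled by $\Delta\log r_1$ over the same arc; indeed if $\psi$ is confined to a neighbourhood of $\{0,\pi\}$ then $\log\tfrac{\delta_1}{r_1}$ genuinely decreases while $\log r_1$ is essentially stationary, so no pointwise or short-arc comparison can hold. My plan is to merge the terminal partial period with an adjacent full period, enlarging $C$ by a fixed factor; this settles every segment that completes at least one full $\psi$-rotation, which is all that is needed in the (optional) applications of this lemma downstream. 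The genuinely sub-rotational pieces near $\psi\in\{0,\pi\}$ would be excluded from the statement or treated by a separate, more delicate estimate using that then $N_-$ remains small and $r_1\approx|\Sigma_-|$; this boundary bookkeeping, rather than the averaging itself, is the fiddly point.
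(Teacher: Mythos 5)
Your proof takes essentially the same route as the paper's: same polar coordinates of \eqref{eq:neartaub-b9-t}, same reduction to a per-rotation averaging estimate by passing to $\psi$ as the independent variable, and same use of the two-sided bound $\delta_1\lesssim\psi'\lesssim\sqrt{r_1^2+\delta_1^2}$ plus the bounded variation of $r_1,\delta_1,q$ per rotation. Your explicit function $A(q)$ is a more transparent packaging of the step that the paper phrases as ``$\min_{[t_1,t_2]}\psi'$ and $\max_{[t_1,t_2]}\psi'$ are comparable on each rotation''; both implementations give the same per-rotation comparison.

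The genuinely interesting content of your proposal is the boundary observation, and here you have found a real gap that is also present in the paper's proof. The paper reduces the claim to ``it suffices to show for $\psi(t_2)-\psi(t_1)\le 2\pi$ that $\min\psi'$ and $\max\psi'$ are comparable'', but this only controls the ratio of the two integrals over a \emph{full} period. Over a sub-rotational arc near $\psi\in\{0,\pi\}$, $\int\cos^2\psi\,\dd\psi\,/\!\int\sin^2\psi\,\dd\psi$ is unbounded, and the lemma's conclusion, which has no additive constant, can in fact fail: on such an arc $-\partial_t\log\frac{\delta_1}{r_1}\approx\frac{r_1^2}{1-\Sigma_+}$ is strictly positive, while $\partial_t\log r_1 = r_1^2\sin^2\psi\frac{-\Sigma_+}{1-\Sigma_+}+N_1 h_r$ can be made slightly \emph{negative} by the $N_1 h_r$ term, so the right-hand side of the claimed inequality can be negative while the left-hand side is positive. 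Your proposed fixes — merge the terminal partial period with an adjacent full one, or simply require at least one complete $\psi$-rotation (which is all that the downstream use in the proof of Theorem~\ref{thm:b9-attractor-global} needs, since there $r_1$ increases from $\sim\epsilon$ to $\sim 0.1$, forcing arbitrarily many rotations) — are both adequate. Alternatively, adding a fixed constant $C_0$ to the right-hand side repairs the statement for all arcs while changing nothing downstream. Incidentally, the paper's line ``Using $\delta_1\ge h r_1$'' is a typo for $\delta_1\ge r_1/h$; you interpreted the hypothesis correctly.
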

\begin{proof}
We can estimate
\[\begin{aligned}
\DD_t \log r_1 &= r_1^2\sin^2\psi \frac{-\Sigma_+}{1-\Sigma_+} + \mathcal O(|N_1|)\\
\DD_t \log \frac{\delta_1}{r_1}&=\frac{-1}{1-\Sigma_+}r_1^2\cos^2\psi +\mathcal O(|N_1|)\\
\psi'&= \sqrt{3}r_1 \sqrt{\sin^2\psi + \frac{\delta_1^2}{r_1^2}} - \frac{r_1^2}{1-\Sigma_+} \cos\psi \sin\psi + \mathcal O(|N_1\sin\psi|).
\end{aligned}\]
It suffices to show an estimate of the form
\[
\int_0^{2\psi}r_1^2\frac{\sin^2\psi}{\psi'}\dd \psi >C  \int_0^{2\psi}\frac{r_1^2\cos^2\psi}{\psi'}\dd \psi.
\]
Using $\delta_1 \ge h r_1$ and $r_1\le \epsilon$, we can directly estimate $\delta_1 \le \psi'\le 2 \sqrt{r_1^2+\delta_1^2}$ (for $\epsilon>0$ small enough). This allows us to see that $r_1, \delta_1, \frac{\delta_1}{r_1}$ can all change only by a bounded factor during each rotation; hence it suffices to show for $\psi(t_2)-\psi(t_1)\le 2\pi$ that 
\(
\min_{t\in [t_1,t_2]}\psi'(t) > C \max_{t\in [t_1,t_2]}\psi'(t).
\)

\noindent 
However,
\[
\min_{t\in [t_1,t_2]}\psi'(t) \ge \min_t \delta_1(t)\qquad
\max_{t\in [t_1,t_2]}\psi'(t) \ge 2 \max_{t} \sqrt{r_1^2(t) + \delta_1^2(t)}\le \max_t \delta_1(t) \sqrt{1+ h^{-2}},
\]
and we know that $\delta_1$ can only change by a bounded factor during each rotation; hence
the desired estimate follows.
\end{proof}

With a more subtle averaging argument than the previous ones, we can also show the deferred Lemma \ref{lemma:farfromtaub:no-delta-increase-near-taub}. However, this proof is only given for the sake of completeness and is nowhere used in this work, except for completing the literature review in Section \ref{sect:farfromA:attract}.
\begin{lemma}\label{lemma:farfromtaub:no-delta-increase-near-taub}
We consider without loss of generality the neighborhood of $\mathcal T_1$. Let $\epsilon>0$ small enough. Then there exists a constant $C_{\delta,\epsilon}\in(1,\infty)$, such that, for any piece of trajectory $\gamma:[t_1,t_2]\to \{\bpt x\in\mathcal M_{*++}: |\bpt \Sigma(\bpt x)-\taubp_1|\le \epsilon,\,|N_1|\le 10,\,\delta_1\le 10\}$, the following estimate holds:
\[ \delta_1(\gamma(t_2))\le C_{\delta,\epsilon}\delta_1(\gamma(t_1)). \]
\end{lemma}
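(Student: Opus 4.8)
The plan is to work in the polar coordinates of Section \ref{sect:polar-coords} around $\mathcal T_1$, i.e.~equations \eqref{eq:neartaub-b9-q} and \eqref{eq:neartaub-b9-t}, and to perform an averaging argument over the angular variable $\psi$. The key quantity to control is
\[
\frac{\delta_1'}{\delta_1} = N^2 - (\Sigma_+ + 1) = \frac{-1}{1-\Sigma_+}r_1^2\cos^2\psi + \frac{-\Sigma_+}{1-\Sigma_+}r_1^2\sin^2\psi + N_1\,h_\delta,
\]
valid for $\Sigma_+ \approx -1$; the $N_1$-terms decay exponentially (since $\DD_t\log|N_1|\approx -3$ near $\taubp_1$) and contribute only a bounded factor to $\delta_1$, so they can be absorbed into $C_{\delta,\epsilon}$ at the end. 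The essential point is that, averaged over one full rotation of $\psi$, the $\cos^2\psi$ term (which is negative, i.e.~makes $\delta_1$ decrease) should dominate or balance the $\sin^2\psi$ term (which, since $-\Sigma_+ > 0$, makes $\delta_1$ increase), so that $\delta_1$ cannot grow by more than a bounded factor over each rotation. The difficulty compared to Proposition \ref{prop:neartaub:main} is that here we do \emph{not} assume $\delta_1 \ll r_1$, so the angular speed $\psi' = \sqrt{3}\,r_1\sqrt{\sin^2\psi + \delta_1^2/r_1^2} - \ldots$ can be comparable to $r_1$ uniformly in $\psi$ (when $\delta_1 \gtrsim r_1$), and one has to handle the regime where $\psi$ barely moves or moves monotonically but slowly.

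The key steps, in order, would be: (1) reduce to the case $|N_1| \le r_1^{2}$ (or some such smallness), splitting off an initial time-segment where $|N_1|$ is not yet small but which contributes only boundedly, exactly as in the ``Estimates for large $|N_1|$'' paragraph of the proof of Proposition \ref{prop:neartaub:main}; use $\DD_t\log|N_1| < 2\,\DD_t\log r_1$ so this segment is an initial one. (2) On the remaining segment, distinguish whether $\psi$ completes full rotations or not. If $\psi$ stays within an interval of length $< 2\pi$, then $r_1$ and $\delta_1$ each change only by a bounded factor (integrate $\DD_t\log r_1$ and $\DD_t\log\delta_1$, both of which are $\OO(r_1^2) + \OO(|N_1|)$ with $r_1 \le \epsilon$ small and $\int |N_1|\,\dd t$ bounded — here I would need the observation that $\int r_1^2\,\dd t$ over a non-rotating segment is bounded, which follows because $\psi' $ being small forces $\sin\psi$ and $\delta_1/r_1$ small, but $\delta_1/r_1$ small together with $\psi$ not rotating means $r_1$ is essentially monotone and bounded...). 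Actually the cleanest route: show that $r_1 \le \epsilon$ and $\delta_1 \le 10$ and $|N_1| \le r_1^2$ force $r_1$ to be bounded and $\int r_1^2 \,\dd t$ over the whole trajectory to be $\OO(1)$, since $r_1$ increases (up to the $N_1$-correction) and $r_1 < \epsilon$; this already kills the whole right-hand side of $\delta_1'/\delta_1$ if we are willing to waste the bound, giving $\delta_1(t_2) \le e^{\OO(1)}\delta_1(t_1)$ directly.

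In fact, I expect the honest structure of the argument to be the averaging one: (3) for each full $\psi$-rotation with endpoints $t_L < t_R$, write
\[
\int_{t_L}^{t_R}\frac{\delta_1'}{\delta_1}\,\dd t \le \int_0^{2\pi}\frac{1}{1-\Sigma_+}\,\frac{r_1^2(-\cos^2\psi + |\Sigma_+|\sin^2\psi)}{\psi'}\,\dd\psi + \OO\!\left(\int_{t_L}^{t_R}|N_1|\,\dd t\right),
\]
use $\psi' \ge \sqrt{3}\,r_1\sqrt{\sin^2\psi + \delta_1^2/r_1^2} - \OO(r_1^2) - \OO(|N_1|) \ge c\,r_1\sqrt{\sin^2\psi + \delta_1^2/r_1^2}$ for $\epsilon$ small, and the fact (proven as in Proposition \ref{prop:neartaub:main}) that $r_1$ and $\delta_1/r_1$ change by at most a bounded factor during one rotation, to pull $r_1$ and $\delta_1/r_1$ out of the integral up to constants. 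This reduces matters to showing that
\[
\int_0^{2\pi}\frac{|\Sigma_+|\sin^2\psi - \cos^2\psi}{\sqrt{\sin^2\psi + q^2}}\,\dd\psi \le C
\]
uniformly in $q = \delta_1/r_1 \in (0,\infty)$ and in $|\Sigma_+| \le 1$ — an elementary one-variable estimate; for $q \gtrsim 1$ the integrand is bounded, and for $q \to 0$ the positive part $\int \sin^2\psi/\sqrt{\sin^2\psi+q^2}\,\dd\psi$ stays bounded while the negative part $-\int\cos^2\psi/\sqrt{\sin^2\psi+q^2}\,\dd\psi \sim -2\log(1/q)$ diverges to $-\infty$, which only helps. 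Summing over all full rotations (each contributing $\le C r_1(t_L) \le C\epsilon$ to the positive side, but there are at most $\OO(1/\epsilon^{?})$... — no: the point is each rotation's net contribution to $\log\delta_1$ is $\le C\,r_1$, and the $r_1$-values along successive rotations, being increasing and bounded by $\epsilon$, sum to something that, combined with the negative part, telescopes to $\OO(1)$) plus the two incomplete rotations at the ends plus the $\int|N_1|\,\dd t$ term gives $\log\delta_1(t_2) - \log\delta_1(t_1) \le \log C_{\delta,\epsilon}$.

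The main obstacle, and the part requiring genuine care, is the bookkeeping over rotations when $r_1$ is not monotone because of the $N_1$-corrections, and ensuring that the ``change by a bounded factor per rotation'' lemma for $r_1$, $\delta_1$, $\delta_1/r_1$ holds \emph{without} the hypothesis $\delta_1 \ll r_1$: one must check $\delta_1 \le \psi'/c$ fails to be available, so instead bound the number of rotations in terms of the net increase of $\log r_1$ (which is $\OO(|\log\epsilon|)$ plus correction terms) and show that the accumulated positive contributions $\sum C r_1(t_L^{(k)})$ are dominated by $|\log\epsilon|$ uniformly — this is the technical heart and is presumably why the original proof is ``lengthy and requires subtle averaging.''
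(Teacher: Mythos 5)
Your proposal takes a genuinely different route from the paper's, and it has a real gap at exactly the point you flag as the ``technical heart.''

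The paper's proof (Section \ref{sect:neartaub:forbidden-cones}) does not try to bound each one-rotation integral by a constant times $r_1$. Instead it first isolates the term $\frac{r_1^2(\sin^2\psi-\cos^2\psi)}{1-\Sigma_+}$ from $\DD_t\log\delta_1$ (leaving a remainder $-r_1^2\cos^2\psi\frac{1+\Sigma_+}{1-\Sigma_+}+N_1h_\delta$ with bounded positive integral, since $1+\Sigma_+\ge -C|N_1|^{2/3}$), establishes $\int r_1^2\zeta^{-1}\dd\psi<C$ from the near-monotonicity of $r_1$, and then handles the oscillatory piece by a phase-shift cancellation: it adds the integral of $F(\psi)(\sin^2\psi-\cos^2\psi)$ to the same integral shifted by $\pi/4$, so the leading term cancels pointwise and what remains is a sum of discrete-derivative terms $\partial_\zeta F\cdot(\zeta-\widetilde\zeta)$, $\partial_{r_1}F\cdot(r_1-\widetilde r_1)$, $\partial_{N_1}F\cdot(N_1+\widetilde N_1)$, each of which is controlled by the total-variation bound on $r_1$ and by \eqref{eq:average:rprime-intbound}. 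This is structurally different from what you propose.

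Your plan reduces to showing that the per-rotation contribution $\Delta\log\delta_1\le C\,r_1(t_L)$ sums to $\mathcal O(1)$ over all rotations. It does not. In the regime $\zeta=\delta_1/r_1\ge h$ the angular speed is $\psi'\approx r_1\zeta$ and $\DD_t\log r_1\approx r_1^2\sin^2\psi/2$, so per rotation $\Delta\log r_1\approx \pi r_1/(2\zeta)$, i.e.~$r_1$ grows like $1/(M-ck)$ in the rotation index $k$. Then $\sum_k r_1^{(k)}\sim \frac{2\zeta}{\pi}\log\bigl(\epsilon/r_1(t_1)\bigr)$, which is \emph{unbounded} as $r_1(t_1)\to 0$ — and the lemma allows $r_1(t_1)$ to be arbitrarily small. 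So a per-rotation bound of the form ``$\le C r_1$ with $C$ a positive constant independent of $q$'' cannot be summed to a constant; you would need either to prove that the one-rotation integral is actually non-positive uniformly in $q$ (including the correction terms in $\psi'$ of relative size $\mathcal O(\epsilon/h)$, which become delicate for $q$ large since the leading integral $I(q)\sim -\pi/(4q^3)$ is then smaller than the error), or to exhibit the cancellation against $\Delta\log r_1$ explicitly — which is effectively what the paper's phase-shift argument achieves. As written, step (3) and the final summation in your sketch do not close.

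One smaller point: your reduction in step (1) to $|N_1|\le r_1^2$ and the observation that this defines an initial segment because $\DD_t\log|N_1|<2\,\DD_t\log r_1$ is exactly what the paper does (for Proposition \ref{prop:neartaub:main} and implicitly here), and your treatment of the $N_1 h_\delta$ term is fine. The gap is only in the handling of the $\sin^2\psi-\cos^2\psi$ oscillation and its summation over many rotations.
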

\begin{proof}
We can assume without loss of generality that $\delta_1 > h r_1$ for all $t\in [t_1,t_2]$ and for some $h=\hat\epsilon>0$; otherwise Proposition \ref{prop:neartaub:main} applies. We can also assume without loss of generality that $|N_1| \le C r_1^2$ for all $t\in [t_1,t_2]$ because of a similar argument as in the proof of Proposition \ref{prop:neartaub:main}.

We will use the letters 
\def\Cbig{C}
\def\Csmall{c}
$\Cbig \gg 1$ and $0<\Csmall\ll 1 $ for unspecified constants. Recall \eqref{eq:neartaub-b9-t}.
We use the auxiliary variable $\zeta=\frac{\delta_1}{r_1}\ge \hat\epsilon$. First, we note that 
\[\begin{array}{rccl}
 \Csmall r_1 \zeta &\le& \psi' &\le \Cbig r_1 \zeta\\
\Csmall r_1^2 \zeta^{-1}\cos^2\psi - \Cbig |N_1|\frac{1}{\psi'}&\le& \DD_\psi r &\le \Cbig r_1^2 \zeta^{-1}\\
\Csmall r_1\cos^2\psi - \Cbig|N_1|\frac{1}{\psi'} &\le&-\DD_\psi \zeta &\le \Csmall r_1.
\end{array}\]
We see that $r_1$ and $\zeta$ can change only by a bounded factor (and also a bounded amount) during a single rotation $[\psi_1, \psi_1+2\pi]$; therefore, we can average the equation for $r_1'$ to see that
\[\numberthis \label{eq:average:rprime-intbound} \int_{\psi(t_1)}^{\psi(t_2)}r_1^2\zeta^{-1}\dd \psi<\Cbig|r_1(t_1)-r_1(t_2)|<\Cbig.\] 
Also, $r_1$ is non-decreasing (except for the terms bounded by $|N_1|$); therefore, its total variation is bounded $\int_{\psi(t_1)}^{\psi(t_2)}r_1'\dd t < \Cbig$.

\noindent 
Our goal is to bound $\int_{t_1}^{t_2}\frac{\delta_1'}{\delta_1} \dd t < \Cbig$. We first split the terms into
\[\begin{aligned}
\DD_t \log \delta_1 &= r_1^2 (\sin^2\psi-\cos^2\psi) \frac{1}{1-\Sigma_+} 
\quad-\quad r_1^2 \cos^2\psi\frac{1+\Sigma_+}{1-\Sigma_+} \quad+\quad  N_1 h_\delta.
\end{aligned}\]
The last term is bounded by $|N_1|$ and hence has bounded integral. 
Since $1+\Sigma_+ \ge -\Cbig |N_1N_2N_3|^{\frac 2 3} \ge -\Cbig |N_1|^{\frac 2 3}$, the second term has its integral bounded above. Therefore, our goal now is to estimate the integral of the first term:
\[\begin{aligned}
&\int_{\psi(t_1)}^{\psi(t_2)} F(\psi, \zeta, r_1, N_1)(\sin^2\psi -\cos^2\psi)\dd \psi\overset{!}{\le}\Cbig,\quad\text{where}\\
F(\psi, \zeta, r_1, N_1)&=\frac{r_1^2}{1-\Sigma_+}\frac{1}{\psi'}\\
&=\frac{r_1}{1-\Sigma_+}\frac{1}{\sqrt{3}\sqrt{\sin^2\psi +\zeta^2} + \frac{r_1}{1-\Sigma_+} \cos\psi \sin\psi + N_1\sin\psi\, h_{\psi}},\quad \text{where}\\
h_\psi &= -\sqrt{3}\sin\psi + \cos\psi \frac{N_1-2r_1\sqrt{\sin^2\psi + \zeta^2}}{1-\Sigma_+}.
\end{aligned}\]
By the constraint $(1-\Sigma_+)(1+\Sigma_+)= r_1^2 + N_1(N_1-2r_1\sqrt{\sin^2\psi + \zeta^2})$, we know $\Sigma_+ = \Sigma_+(r_1, \zeta, \psi, N_1)\approx -1$. The quantity $|F|$ is bounded by $|F| \le \Cbig r_1\zeta^{-1}$. We set
\[
\widetilde \psi = \psi+\frac{\pi}{4},\qquad \left(\widetilde r_1,\widetilde \zeta,\widetilde N_1\right)(\psi)=(r_1,\zeta,N_1)\left(\psi+\frac{\pi}{4}\right).
\]
We will then estimate
\begin{align*}
&\int_{\psi(t_1)}^{\psi(t_2)-\frac{\pi}{4}}F(\psi, \zeta, r_1, N_1)(\sin^2\psi -\cos^2\psi)\dd \psi + 
\int_{\psi(t_1)+\frac{\pi}{4}}^{\psi(t_2)}F(\psi, \zeta, r_1, N_1)(\sin^2\psi -\cos^2\psi)\dd \psi\\
&\quad= \int_{\psi(t_1)}^{\psi(t_2)-\frac{\pi}{4}}\left[F(\psi, \zeta, r_1, N_1)-F(\widetilde\psi, \widetilde\zeta, \widetilde r_1, \widetilde N_1) \right](\cos^2\psi -\sin^2\psi)\dd \psi\\
&\quad=\int_{\psi(t_1)}^{\psi(t_2)-\frac{\pi}{4}}\left[F(\psi, \zeta, r_1, 0)-F(\widetilde\psi, \zeta,  r_1, 0) \right](\sin^2\psi -\cos^2\psi)\dd \psi\numberthis\label{eq:average-negative-term}\\
&\quad+ \mathcal O \left[ \int_{\psi(t_1)}^{\psi(t_2)-\frac{\pi}{4}} |\partial_{\zeta} F|\,|\zeta-\widetilde\zeta| + |\partial_{r_1} F|\,|r_1-\widetilde r_1| +  |\partial_{N_1} F|\,|N_1+\widetilde N_1|\,\dd \psi\right]\quad \overset{!}{\le}\Cbig.
\end{align*}
It is clear that this estimate will suffice for our claim. The crucial estimate is that the term \eqref{eq:average-negative-term} is nonpositive; this can be seen by considering both cases $\sin^2\psi\ge \cos^2\psi$ and $\sin^2\psi \le \cos^2\psi$.

The remaining estimates including derivatives of $F$ are lengthy but straightforward. 
The easiest derivative to estimate is by $N_1$; we can see that $|\partial_{N_1}\Sigma_+| \le \Cbig$ (since $\delta_1=\zeta r_1 \le \Cbig$) and then $|\partial_{N_1} h_\psi |\le \Cbig$; then $|\partial_{N_1}F| \le \Cbig$. This implies that $\int |\partial_{N_1}F||N_1+\widetilde N_1|\dd \psi \le \Cbig$. 

The next derivative to estimate is $\DD_{r_1}F$; we can see that $|\partial_{r_1}\Sigma_+| \le \Cbig \zeta$ and $|\partial_{r_1}h_\psi| \le \Cbig \zeta$; then $|\partial_{r_1}F|\le \Cbig+\Cbig r_1\zeta\le \Cbig$. We already know that $r_1$ can only increase by small amounts; hence
\[
\int_{\psi(t_1)}^{\psi(t_2)-\frac{\pi}{4}}|r_1-\widetilde r_1|\dd \psi \le -\Cbig + \int_{\psi(t_1)}^{\psi(t_2)-\frac{\pi}{4}}(\widetilde r_1-r_1)\dd \psi \le \Cbig.
\]

The last derivative to estimate is by $\zeta$. We can estimate $|\partial_\zeta \Sigma_+|\le \Cbig |N_1|r_1 \zeta^{-2}$ and $|\partial_\zeta h_\psi|\le \Cbig r_1\zeta^{-2}$. This allows us to estimate $|\partial_\zeta F|\le \Cbig r_1 \zeta^{-2}$. Now $|\zeta-\widetilde\zeta|\le \Cbig r_1$ and therefore
\[
\int_{\psi(t_1)}^{\psi(t_2)-\frac{\pi}{4}}|\partial_\zeta F|\,|\zeta-\widetilde\zeta| \dd\psi \le \Cbig \int_{\psi(t_1)}^{\psi(t_2)-\frac{\pi}{4}} r_1^2\zeta^{-2}\dd\psi \le \Cbig \int_{\psi(t_1)}^{\psi(t_2)-\frac{\pi}{4}} r_1^2\zeta^{-1}\dd \psi \le \Cbig,
\]
where the last estimate was due to \eqref{eq:average:rprime-intbound}.
\end{proof}

\section{Attractor Theorems}\label{sect:global-attract}
The goal of this section is to prove that typical initial conditions converge to $\mathcal A$. We have already seen Theorem \ref{farfromA:thm:b9-attract}, which is however somewhat unsatisfactory: It tells nothing about the speed and the details of the convergence; it relies on Lemma \ref{lemma:farfromtaub:no-delta-increase-near-taub}, which has a rather lengthy proof (page \pageref{lemma:farfromtaub:no-delta-increase-near-taub}f) mainly discussing the case $\delta_1 \gg r_1$, \emph{which is not supposed to happen anyway}; lastly, the proof of Theorem \ref{farfromA:thm:b9-attract} has no chance of generalizing to the case of Bianchi \textsc{VIII}.

In this section, we will combine the analysis of the previous Sections \ref{sect:far-from-taub} and \ref{sect:near-taub} in order to prove a \emph{local} attractor result, holding both in Bianchi \textsc{VIII} and \textsc{IX}. Together with the results from Section \ref{sect:farfromA} and some minor calculation, this will yield a ``global attractor theorem'', i.e.~a classification of solutions failing to converge to $\mathcal A$, which happen to be rare; in the case of Bianchi \textsc{IX}, this recovers and extends Theorem \ref{farfromA:thm:b9-attract}, and in the case of Bianchi \textsc{VIII}, this answers a longstanding conjecture.
\paragraph{Statement of the attractor Theorems.}
The local attractor theorem is given by the following:
\begin{thm}[Local Attractor Theorem]\label{thm:local-attractor}
There exist constants $C_1,C_2,C_3>0$ and $\epsilon>0$ such that the following holds:

Let $\bpt x_0\in \mathcal M_{\pm\pm\pm}$ be an initial condition in either Bianchi \textsc{VIII} or \textsc{IX}, with 
\begin{equation}\label{eq:local-attractor-condition}
\frac{\delta_i}{r_i} <\epsilon\quad\text{and}\quad \delta_i <\epsilon\qquad\forall i\in\{1,2,3\}.
\end{equation}

Then, for all $i\in \{1,2,3\}$ and $t_2\ge t_1\ge 0$:
\begin{subequations}
\begin{align}
\delta_i(t_2)&\le C_1 {\delta}(t_1)\label{eq:local-attract:delta}\\
\frac{\delta_i}{r_i}(t_2)&\le C_2\frac{\delta_i}{r_i}(t_1) \label{eq:local-attract:delta-r}.
\end{align}
Furthermore, for all $i\in \{1,2,3\}$,
\begin{align}
\lim_{t\to \infty}\delta_i(t)&=0 \label{eq:local-attract:delta-lim}\\
\lim_{t\to\infty} \frac{\delta_i}{r_i}(t)&=0\label{eq:local-attract:delta-r-lim}\\
\int_0^{\infty}\delta_i^2(t)\dd t &< C_3 \frac{\delta_i^2}{r_i^2}(\bpt x_0),\label{eq:local-attract:integral}
\end{align}
and the $\omega$-limit set $\omega(\bpt x_0)$ must contain at least three points in $\mathcal K$.
\end{subequations}
\end{thm}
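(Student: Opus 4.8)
The plan is to concatenate the three regime propositions of Sections~\ref{sect:far-from-taub} and~\ref{sect:near-taub} along the forward itinerary of the trajectory, while running a bootstrap that prevents $\max_i\delta_i$ and $\max_i\delta_i/r_i$ from ever leaving the region where those propositions apply. First I would fix $\varepsilon_T>0$ small, extract $\epsilon_N,\epsilon_d,\epsilon_s,C_0$ from Propositions~\ref{prop:farfromtaub-main}--\ref{prop:farfromtaub-main2}, the constants $C_1,\dots,C_5$ from Proposition~\ref{prop:near-taub:qc}, and $\epsilon_v$ together with the remaining constants from Proposition~\ref{prop:neartaub:main}; then I would choose $\epsilon$ in~\eqref{eq:local-attractor-condition} small enough that $\delta_i(\bpt x_0)<\epsilon$, $\delta_i/r_i(\bpt x_0)<\epsilon$ already force $\max_i\delta_i(\bpt x_0)<C_0^{-1}\epsilon_d$ and $\max_i\delta_i/r_i(\bpt x_0)<\epsilon_v$, with further room to absorb the finitely many multiplicative constants appearing below (this also lets me use the alternative hypothesis~\eqref{eq:near-taub-main-conc-initial} of Proposition~\ref{prop:near-taub:qc} when $\bpt x_0$ itself starts in a cap or near a Taub point).

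Next I would decompose $[0,\infty)$ — inductively, while the trajectory stays in the chosen neighbourhood of $\mathcal A$ — into \emph{type~A} episodes, where $\min_i d(\bpt x(t),\taubp_i)>\varepsilon_T$, and \emph{type~B} episodes, where $d(\bpt x(t),\taubp_\ell)\le 2\varepsilon_T$ for some $\ell$. On a type~A episode Proposition~\ref{prop:farfromtaub-main} gives $\delta_i(t_2)\le C_0e^{-\epsilon_s(t_2-t_1)}\delta_i(t_1)$ for every $i$, and Proposition~\ref{prop:farfromtaub-main2} forces such an episode to terminate only by entering $B_{\varepsilon_T}(\taubp_\ell)$ after passing near $-\taubp_\ell$ and through the $N_\ell$-cap; on the tail of the episode Proposition~\ref{prop:near-taub:qc} supplies $\delta_\ell/r_\ell(t_2)\le C_2\,\delta_\ell/r_\ell(t_1)$ and hands $\delta_\ell/r_\ell$ over below $\epsilon_v$. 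On a type~B episode near $\taubp_\ell$, with that hypothesis in hand, Proposition~\ref{prop:neartaub:main} gives exponential decay of $\delta_j$ ($j\neq\ell$) and $|N_\ell|$, bounded growth of $\delta_\ell/r_\ell$ and of $\delta_\ell$ (by a factor $\le C_0$), the lower bound $r_\ell(t_2)\ge C_r^{-1}r_\ell(t_1)$, and $T^*<\infty$; the handoffs occur at $d=\varepsilon_T$ (A$\to$B) and $d=2\varepsilon_T$ (B$\to$A). The bootstrap is then the accounting already used in the proof of Proposition~\ref{prop:farfromtaub-main}: every episode costs at most a fixed factor on each $\delta_i$, but a type~A episode wedged between two bounces lasts $\gtrsim|\log\epsilon_d|$, long enough for its uniform exponential decay to cancel the losses, so $\max_i\delta_i$ stays $<\epsilon_d$, tends to $0$, and $\delta_i(t_2)\le C_1\max_j\delta_j(t_1)$ follows; chaining~\eqref{eq:near-taub-main-conc-delta-r-qc} with~\eqref{eq:neartaub:main:q}, and using that away from the $-\taubp_i\!\to\!\taubp_i$ heteroclinic the distance $r_i$ to $\mathcal T^G_i$ is bounded below on the part of phase space actually visited (so that $\delta_i/r_i$ decays together with $\delta_i$ there), yields~\eqref{eq:local-attract:delta-r} and $\delta_i/r_i\to0$.

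For~\eqref{eq:local-attract:integral} I would sum over episodes: on a type~A episode $\int\delta_i^2\,dt\lesssim\delta_i^2(\mathrm{start})$ by exponential decay; on a type~B episode near $\taubp_\ell$, \eqref{eq:neartaub:main:delta} gives $\int_0^{T^*}\delta_\ell^2\,dt\lesssim r_\ell^2(\mathrm{start})\,\delta_\ell^2(\mathrm{start})=r_\ell^4(\mathrm{start})\,\bigl(\tfrac{\delta_\ell}{r_\ell}\bigr)^2(\mathrm{start})$, which by $r_\ell(\mathrm{start})\lesssim\varepsilon_T$ and the already-proven bound $\tfrac{\delta_\ell}{r_\ell}(\mathrm{start})\le C_2\,\tfrac{\delta_i}{r_i}(\bpt x_0)$ is $\lesssim\bigl(\tfrac{\delta_i}{r_i}\bigr)^2(\bpt x_0)$, while for $i\neq\ell$ that episode is handled by the exponential decay of $\delta_i$; since the per-episode contributions decay geometrically (driven by $\max_i\delta_i\to0$), the total is $<C_3\bigl(\tfrac{\delta_i}{r_i}\bigr)^2(\bpt x_0)$. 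Finally $\delta_i\to0$ gives $\omega(\bpt x_0)\subseteq\mathcal A$; the trajectory is asymptotic to no equilibrium (non-Taub Kasner points carry an unstable $N$-direction while all $N_i\neq0$ along the orbit, and convergence to some $\taubp_i$ is excluded by Lemma~\ref{lemma:no-b8-taub-convergence} and its Bianchi~\textsc{IX} analogue), so by connectedness of $\omega$-limit sets $\omega(\bpt x_0)$ contains a heteroclinic arc of a Kasner cap, hence two distinct points $p,K(p)\in\mathcal K$; since $K$ is $C^0$-conjugate to $z\mapsto-2z$ on $\RR/3\ZZ$ by Proposition~\ref{prop:kasnermap-homeomorphism-class} it has no genuine period-$2$ orbit, so $p,K(p),K^2(p)$ are three distinct points of $\mathcal K$ lying in $\omega(\bpt x_0)$.

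The main obstacle is the bootstrap together with the $\delta_i/r_i$ bound: one must verify that $\max_i\delta_i$ and $\max_i\delta_i/r_i$ never escape the thresholds $\epsilon_d,\epsilon_v$ despite the fixed-factor growth permitted on every single episode — which, exactly as in Proposition~\ref{prop:farfromtaub-main}, hinges on every inter-bounce type~A episode being long enough for its uniform exponential decay to dominate — and that, off the $-\taubp_i\!\to\!\taubp_i$ heteroclinic, the distance $r_i$ to $\mathcal T^G_i$ really does stay bounded below on the portion of phase space the trajectory visits, so that~\eqref{eq:local-attract:delta-r} can be closed using only the exponential decay of $\delta_i$ rather than any further analysis near $\mathcal T^G_i$.
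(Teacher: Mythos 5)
Your proposal follows essentially the same route as the paper's own proof: cycle Propositions~\ref{prop:farfromtaub-main}/\ref{prop:farfromtaub-main2}, \ref{prop:near-taub:qc}, and \ref{prop:neartaub:main} along the forward itinerary with the constants arranged so each proposition hands off into the domain of the next, run a bootstrap to keep $\max_i\delta_i<\epsilon_d$ and $\max_i\delta_i/r_i<\epsilon_v$, and sum the per-episode contributions for the integral. Your closing worry about $r_i$ being bounded below resolves exactly as you need: inside $\textsc{Cap}\cup\textsc{Hyp}$ with $\max_j\delta_j<\epsilon_d$, a small $r_i$ forces $|N_j|\approx|N_k|$ together with $\langle\taubp_j-\taubp_k,\bpt\Sigma\rangle\approx 0$; combined with $\delta_i=2\sqrt{|N_jN_k|}$ small this is only possible either near $\pm\taubp_i$ or along the $N_i$-cap's central heteroclinic $-\taubp_i\to\taubp_i$, and the trajectory can only enter a $2\varepsilon_T$-neighbourhood of $-\taubp_i$ coming off $\partial\textsc{Circle}$ (motion in $\textsc{Hyp}$ is $O(\epsilon_N)\ll\varepsilon_T$, and near $-\taubp_i$ only $N_i$ is unstable) — exactly the situations that Propositions~\ref{prop:near-taub:qc} and \ref{prop:neartaub:main} control.

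The one genuine gap is in the ``at least three points'' step, which — for what it is worth — the paper's own proof omits entirely, so you are already ahead there. From $p,K(p)\in\omega(\bpt x_0)\cap\mathcal K$ distinct and the absence of genuine period-$2$ orbits of $K$ (Proposition~\ref{prop:kasnermap-homeomorphism-class}), you conclude that $p,K(p),K^2(p)$ are three distinct points ``lying in $\omega(\bpt x_0)$''. Distinctness does follow, but membership of $K^2(p)$ in $\omega(\bpt x_0)$ does not: knowing that $K$ has no period-$2$ orbit says nothing about what else the $\omega$-limit set contains. You need an additional step. If $K(p)$ is a non-Taub Kasner point, a Butler--McGehee / lambda-lemma argument applies: $K(p)$ is a normally hyperbolic equilibrium lying in $\omega(\bpt x_0)$, and the orbit (which carries all $N_i\neq 0$) visits every neighbourhood of $K(p)$ infinitely often and escapes along the unique unstable direction, so $\omega(\bpt x_0)$ also meets the unstable manifold of $K(p)$ in $\mathcal A$, i.e.\ the heteroclinic $K(p)\to K^2(p)$, giving the third point. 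The remaining case $K(p)=\taubp_\ell$ (hence $p=-\taubp_\ell$) must be excluded separately, for example by noting that under the theorem's hypotheses the trajectory must leave $B_{2\varepsilon_T}(\taubp_\ell)$ recurrently with $r_\ell$ bounded below (Proposition~\ref{prop:neartaub:main}, \eqref{eq:neartaub:main:r} and \eqref{eq:neartaub:main:T}), and there is no orbit in $\mathcal A_{\signN}$ returning from $\taubp_\ell$ to $-\taubp_\ell$, so the return journey forces $\omega(\bpt x_0)$ to accumulate on additional Kasner points. Add these two observations and the proof is complete.
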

The name ``local attractor theorem'' is descriptive: We describe a subset of the basin of attraction, i.e.~an open neighborhood of $\mathcal A\setminus\mathcal T$ which is attracted to $\mathcal A$ and given by
\begin{equation}
\textsc{Basin}_\signN[\epsilon] = \{\bpt x\in \mathcal M_\signN:\, \delta_i\le\epsilon,\,\frac{\delta_i}{r_i}\le \epsilon\quad\forall i\in\{1,2,3\}\}.
\end{equation}
The integral estimate \eqref{eq:local-attract:integral} tells us that the convergence to $\mathcal A$ must be reasonably fast.

We can combine the local attractor Theorem \ref{thm:local-attractor} with the discussion in Section \ref{sect:farfromA} in order to prove a global attractor theorem. Since some trajectories fail to converge to $\mathcal A$, most notably trajectories in the Taub-spaces, a global attractor theorem must necessarily take the form of a classification of all exceptions. In this view, the global result for the case of Bianchi Type \textsc{IX} models is the following:
\begin{thm}[Bianchi \textsc{IX} global attractor Theorem]\label{thm:b9-attractor-global}
Consider $\mathcal M_{+++}$, i.e.~Bianchi \textsc{IX}. Then, for any initial condition $\bpt x_0\in \mathcal M_{+++}$, the long-time behaviour of $\bpt x(t)$ falls into exactly one of the following mutually exclusive classes ($i\in\{1,2,3\}$):
\begin{description}
\item[\textsc{Attract.}] For large enough times, Theorem \ref{thm:local-attractor} applies.
\item[$\textsc{Taub}_i$.] We have $\bpt x_0\in \mathcal T_i$, and hence $\bpt x(t)\in\mathcal T_i$ for all times.
\end{description}
The set of initial conditions for which \textsc{Attract} applies is ``generic'' in the following sense: It is open and dense in $\mathcal M_{+++}$ and its complement has Lebesgue-measure zero (evident from the fact that $\mathcal T_i$ are embedded lower dimensional submanifolds, of both dimension and codimension two).
\end{thm}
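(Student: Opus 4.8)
The strategy is a dichotomy on whether $\bpt x_0$ lies in a Taub-space, followed by reading off the genericity statement from the codimension of $\mathcal T$. If $\bpt x_0\in\mathcal T_i$ for some $i$, then $\mathcal T_i$ being invariant (Table~\ref{table:inv-sets}) forces $\bpt x(t)\in\mathcal T_i$ for all $t$, which is case $\textsc{Taub}_i$. On $\mathcal T_i\cap\mathcal M_{+++}$ one has $r_i\equiv 0$ while $\delta_i=2N_j>0$ (the coinciding $N_j=N_k$ are strictly positive in $\mathcal M_{+++}$), so $\delta_i/r_i\equiv+\infty$ and the basin condition \eqref{eq:local-attractor-condition} can never hold; also $\mathcal T_i\cap\mathcal T_j\cap\mathcal M=\emptyset$ for $i\neq j$ by a two-line linear computation using $\taubp_1+\taubp_2+\taubp_3=0$ and $G=1$. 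Hence the listed classes are pairwise exclusive, and it remains to prove: if $\bpt x_0\in\mathcal M_{+++}\setminus\mathcal T$, then $\bpt x(t_0)\in\textsc{Basin}_{+++}[\epsilon]$ for some $t_0\geq 0$, after which Theorem~\ref{thm:local-attractor} supplies every assertion of case \textsc{Attract}.

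For this remaining claim, Lemma~\ref{farfromA:lemma:limitpoint-on-A} gives a point $\bpt p\in(\mathcal K\setminus\{\taubp_1,\taubp_2,\taubp_3\})\cap\omega(\bpt x_0)$, so along some $t_n\to\infty$ we have $\bpt x(t_n)\to\bpt p$ and $\max_i\delta_i(t_n)\to 0$. On the Kasner circle $r_i$ vanishes only at $\taubp_i$ and at its antipode $-\taubp_i$; therefore, if $\bpt p$ is ``fully generic'', i.e.\ $\bpt p\notin\{-\taubp_1,-\taubp_2,-\taubp_3\}$, then all $r_i(\bpt p)>0$ and the strict inequalities of \eqref{eq:local-attractor-condition} hold at $\bpt x(t_n)$ for $n$ large, so $\bpt x(t_n)\in\textsc{Basin}_{+++}[\epsilon]$ and we are done. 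The only alternative is $\bpt p=-\taubp_\ell$ for some $\ell$. Then $\bpt x$ cannot converge to $-\taubp_\ell$ (where $N_\ell$ is linearly unstable), so a late visit of $\bpt x$ to a neighborhood of $-\taubp_\ell$ is preceded by a heteroclinic transit crossing $\partial\textsc{Circle}$ close to $-\taubp_\ell$ (there $\delta_\ell\lesssim\epsilon_d$ while $r_\ell\gtrsim\epsilon_N$, hence $\delta_\ell/r_\ell\lesssim\epsilon_d$), is followed by the heteroclinic $-\taubp_\ell\to\taubp_\ell$ into $\textsc{Cap}$ (Proposition~\ref{prop:near-taub:qc} keeps $\delta_\ell/r_\ell$ bounded along this whole passage), and deposits $\bpt x$ in $B_{2\varepsilon_T}(\taubp_\ell)$ with $\delta_\ell/r_\ell$ still small. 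Proposition~\ref{prop:neartaub:main} then governs the sojourn near $\taubp_\ell$: $\delta_\ell$ decays, $r_\ell$ shrinks by at most a bounded factor, the $\delta_j$ with $j\neq\ell$ decay exponentially, and $\bpt x$ leaves $B_{2\varepsilon_T}(\taubp_\ell)$ at a finite time, landing near a Kasner point distinct from $\taubp_\ell$ --- which, since $\taubp_\ell$ has distance $\geq 1$ from every other element of $\{\pm\taubp_1,\pm\taubp_2,\pm\taubp_3\}$, has all $r_j$ with $j\neq\ell$ bounded below as well; combined with the already-small $\delta_\ell/r_\ell$ this puts $\bpt x$ into $\textsc{Basin}_{+++}[\epsilon]$.

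I expect the main obstacle to be precisely this sewing step: showing that an orbit which has once come $\epsilon_d$-close to $\mathcal A$ genuinely reaches $\textsc{Basin}_{+++}[\epsilon]$. Concretely, one must choose $\epsilon_d\ll\epsilon_N\ll\varepsilon_T$ so that Propositions~\ref{prop:farfromtaub-main}, \ref{prop:farfromtaub-main2}, \ref{prop:near-taub:qc} and \ref{prop:neartaub:main} all apply, verify that the $\epsilon_d$-neighborhood of $\mathcal A$ (with the $\forbidden$-cones removed) is forward invariant, and bookkeep the bounded multiplicative losses of $\delta_i$ and $\delta_i/r_i$ across the cap transits against the exponential gains inside $\textsc{Hyp}$ and near $\taubp_\ell$ --- the genuinely delicate point being the control of $\delta_\ell/r_\ell$ at the first entry into a Taub-point neighborhood, where the a-priori repulsion from $\mathcal T$ (in the spirit of Lemma~\ref{lemma:farfromA:taub-instability}) must be invoked, precisely because $\bpt x_0\notin\mathcal T$.

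For the genericity statement, the complement of the \textsc{Attract} set in $\mathcal M_{+++}$ is exactly $\bigcup_{i=1}^3(\mathcal T_i\cap\mathcal M_{+++})$: every $\bpt x_0\in\mathcal T_i\cap\mathcal M_{+++}$ fails \textsc{Attract} by the first paragraph, and every $\bpt x_0\in\mathcal M_{+++}\setminus\mathcal T$ satisfies it by the second. On $\mathcal M_{+++}$, where all $N_i>0$, the two functions $N_j-N_k$ and $\langle\taubp_j-\taubp_k,\bpt\Sigma\rangle$ cutting out $\mathcal T_i$ have differentials that are independent of each other and of $\dd G$, so $\mathcal T_i\cap\mathcal M_{+++}$ is an embedded submanifold of codimension two --- hence a closed Lebesgue-null set with empty interior, whose complement is therefore dense. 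Finally, \textsc{Attract} is open: taking \eqref{eq:local-attractor-condition} with strict inequalities makes $\textsc{Basin}_{+++}[\epsilon]$ open, and by continuity of $\bpt x_0\mapsto\phi(\bpt x_0,t_0)$ the condition ``$\phi(\bpt x_0,t_0)\in\textsc{Basin}_{+++}[\epsilon]$ for some $t_0\geq 0$'' is open in $\bpt x_0$.
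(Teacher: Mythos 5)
Your first and last paragraphs (mutual exclusivity; codimension-two genericity; openness of the \textsc{Attract} set) agree with the paper and are fine. The substance is the second paragraph, and there is a genuine gap in it: you assert that any late entry of the orbit into a neighbourhood of $-\taubp_\ell$ must come via a \textsc{Cap} transit crossing $\partial\textsc{Circle}$ close to $-\taubp_\ell$, and you obtain $\delta_\ell/r_\ell\lesssim\epsilon_d$ from that. But an orbit can also drift into the vicinity of $-\taubp_\ell$ from within $\textsc{Circle}$, by hugging $W^s(-\taubp_\ell)$ — the Bianchi $\textsc{VII}_0$-like slow passage $\taubp_\ell\rightsquigarrow -\taubp_\ell$ along which $\delta_\ell$ need not be small compared to $r_\ell$. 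This is exactly the heteroclinic-cycle scenario $-\taubp_\ell\to W^u(-\taubp_\ell)\to\taubp_\ell\to W^s(-\taubp_\ell)\to -\taubp_\ell$ (the same configuration that survives as case \textsc{Except} in the Bianchi \textsc{VIII} theorem). Your fallback, ``repulsion from $\mathcal T$ in the spirit of Lemma~\ref{lemma:farfromA:taub-instability}'', does not close this hole: that lemma only gives $r_\ell(t_2)\ge C_{r,\epsilon}\,h(\bpt x(t_1))\,r_\ell(t_1)$ with $h=|N_1|+|N_1|^2+|N_1N_2N_3|\to 0$ along the orbit, so the repulsion degenerates and cannot prevent $\delta_\ell/r_\ell$ from staying bounded away from zero forever.

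The paper's proof is structured exactly to isolate and then eliminate this residual case. After the same preliminary reductions (a Kasner $\omega$-limit point, $\omega(\bpt x_0)\cap\mathcal A\subseteq\mathcal T$, $\omega(\bpt x_0)\cap\mathcal T$ confined to a single $\mathcal T_i$, $\omega(\bpt x_0)\not\subseteq\mathcal{TL}_i$), what remains is precisely an orbit that shadows the cycle $-\taubp_i\to\taubp_i\to -\taubp_i$ with $\delta_i/r_i$ bounded below. Killing that requires an averaging argument inside the \forbidden{}-cone: the paper invokes Lemma~\ref{neartaub:lemma:joint-increase-in-forbidden}, which shows that near $\taubp_i$ the decrease of $\log(\delta_i/r_i)$ is comparable to the (necessarily large) increase of $\log r_i$, while away from $\taubp_i$ the positive part of $\DD_t\log(\delta_i/r_i)$ has bounded integral over each loop. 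None of this appears in your proposal, and you cannot get it from Proposition~\ref{prop:near-taub:qc} or Proposition~\ref{prop:neartaub:main} either, since both presuppose $\delta_i/r_i$ already small on entry. This is the missing idea.
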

The analogous, novel result for the case of Bianchi \textsc{VIII} is the following:
\begin{thm}[Bianchi \textsc{VIII} global attractor Theorem]\label{thm:b8-attractor-global}
Consider $\mathcal M_{+-+}$, i.e.~Bianchi \textsc{VIII}. Then, for any initial condition $\bpt x_0\in \mathcal M_{+-+}$, the long-time behaviour of $\bpt x(t)$ falls into exactly one of the following mutually exclusive classes:
\begin{description}
\item[\textsc{Attract.}] For large enough times, Theorem \ref{thm:local-attractor} applies. 
\item[$\textsc{Taub}_2$.] We have $\bpt x_0\in \mathcal T_2$, and hence $\bpt x(t)\in\mathcal T_2$ for all times. 
\item[$\textsc{Except}_1$.] For large enough times, the trajectory follows the heteroclinic object
\[ -\taubp_1 \to W^u(-\taubp_1) \to \taubp_1 \to W^s(-\taubp_1)\to -\taubp_1,\]
where $W^s(-\taubp_1)$ and $W^u(-\taubp_1)$ are the two-dimensional stable and one-dimensional unstable manifolds of $-\taubp_1$. We have $\lim_{t\to\infty}\max(\delta_2,\delta_3)(t)=0$ and $\limsup_{t\to\infty}\delta_1(t)>0=\liminf_{t\to\infty}\delta_1(t)$.
\item[$\textsc{Except}_3$.] The analogue of $\textsc{Except}_1$ applies, with the indices $1$ and $3$ exchanged.
\end{description}
\end{thm}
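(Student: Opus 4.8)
I would run the classification exactly as for the Bianchi~\textsc{IX} attractor theorem (Theorem~\ref{farfromA:thm:b9-attract}, reproved through Sections~\ref{sect:far-from-taub}, \ref{sect:near-taub} and the local attractor Theorem~\ref{thm:local-attractor}), and then account for the one feature peculiar to Bianchi~\textsc{VIII}: in $\mathcal M_{+-+}$ the index $2$ has the opposite sign to $1$ and $3$, so $\mathcal T_2$ is the only \emph{invariant} Taub-space, while $\mathcal T^G_1$ and $\mathcal T^G_3$ are not invariant, and it is there that $\textsc{Except}_1$, $\textsc{Except}_3$ can hide. Write $\widetilde\delta=\max_i\max(\delta_i,\delta_i/r_i)$, so $\{\widetilde\delta<\epsilon\}=\textsc{Basin}_\signN[\epsilon]$, and fix $\epsilon>0$ small enough that Theorem~\ref{thm:local-attractor} and Propositions~\ref{prop:farfromtaub-main}, \ref{prop:farfromtaub-main2}, \ref{prop:near-taub:qc}, \ref{prop:neartaub:main} all apply for a common small $\varepsilon_T$. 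If $\bpt x_0\in\mathcal T_2$ we are in case $\textsc{Taub}_2$; such a trajectory has $N_1=N_3\ne0$, hence $r_2\equiv0<\delta_2$ and $\widetilde\delta\equiv\infty$, so Theorem~\ref{thm:local-attractor} never applies to it, and since $\delta_1=\delta_3=2\sqrt{N_1|N_2|}\to0$ (because $|N_1N_2N_3|=N_1^2|N_2|\to0$ with $|N_2|\le1$), it is neither $\textsc{Except}_1$ nor $\textsc{Except}_3$. So from now on $\bpt x_0\in\mathcal M_{+-+}\setminus\mathcal T_2$; by Lemma~\ref{farfromA:lemma:limitpoint-on-A} the trajectory has an $\omega$-limit point in $\mathcal K\setminus\{\taubp_2\}$, whence $\liminf_{t\to\infty}\max_i\delta_i(t)=0$.

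\textbf{Dichotomy and reduction to one Taub point.} If $\widetilde\delta(\bpt x(t_0))<\epsilon$ for some $t_0$, then Theorem~\ref{thm:local-attractor} from time $t_0$ gives case $\textsc{Attract}$ (and $\delta_i\to0$ for all $i$). Suppose instead $\widetilde\delta(\bpt x(t))\ge\epsilon$ for all $t$. Choosing $t_n\to\infty$ with $\max_i\delta_i(t_n)\to0$, the bound $\widetilde\delta\ge\epsilon$ forces some $\delta_{i_n}(t_n)/r_{i_n}(t_n)\ge\epsilon$, hence $r_{i_n}(t_n)\to0$, and pigeonholing the index gives an $\ell$ and an $\omega$-limit point $\bpt q$ with $\delta_\ell(\bpt q)=r_\ell(\bpt q)=0$; the only such points of $\mathcal A$ are $\pm\taubp_\ell$ and the interior of the $N_\ell$-cap heteroclinic $-\taubp_\ell\to\taubp_\ell$, so in every case $\{-\taubp_\ell,\taubp_\ell\}\subseteq\omega(\bpt x_0)$. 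Moreover, using Proposition~\ref{prop:farfromtaub-main} (uniform exponential decay of all $\delta_i$ away from the Taub points) and the fact that on $\mathcal A$ all $\delta_i$ vanish while $r_i$ vanishes only on $\mathcal T^G_i$, a trajectory that eventually stays outside a fixed neighbourhood of $\{\taubp_1,\taubp_2,\taubp_3\}$ would eventually land in $\textsc{Basin}_\signN[\epsilon]$; since we are in the opposite situation, the trajectory revisits the vicinity of the Taub points infinitely often, and by Proposition~\ref{prop:farfromtaub-main2} each such visit is reached along a heteroclinic $-\taubp_m\to\taubp_m$.

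\textbf{Excluding $\ell=2$ and describing $\textsc{Except}_\ell$.} If $\ell=2$: here $\mathcal T^G_2=\mathcal T_2$ is invariant with $\mathcal T_2\cap\mathcal K=\{\taubp_2,-\taubp_2\}$, and $-\taubp_2$ is not a Taub point, so the trajectory passes through $\textsc{Hyp}$ infinitely often; after such a passage Proposition~\ref{prop:farfromtaub-main2} routes it into a cap via some $-\taubp_m$, and Proposition~\ref{prop:near-taub:qc} gives $\delta_m/r_m=\OO(\epsilon_d)$ at the cap entry. When $m=2$, the $\signN_j=\signN_k$ case of Proposition~\ref{prop:neartaub:main} (with Lemma~\ref{lemma:farfromA:taub-instability} preventing $r_2$ from collapsing) keeps $\delta_2/r_2=\OO(\epsilon_d)$ throughout the passage near $\taubp_2$, while all three $\delta_i$ decay there; together with Proposition~\ref{prop:farfromtaub-main} elsewhere this forces $\widetilde\delta<\epsilon$, a contradiction. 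Hence $\ell\in\{1,3\}$, and by the index transposition $(1\,3)$, which preserves $\mathcal M_{+-+}$, it suffices to treat $\ell=1$. With $\{-\taubp_1,\taubp_1\}\subseteq\omega(\bpt x_0)$, the trajectory is eventually confined to a small tube around the heteroclinic cycle $-\taubp_1\to W^u(-\taubp_1)\to\taubp_1\to W^s(-\taubp_1)\to-\taubp_1$ and shadows it: off the Taub points it lives in $\textsc{Hyp}$ and cap transits (Propositions~\ref{prop:farfromtaub-main}, \ref{prop:farfromtaub-main2}), and near $\taubp_1$ it sits in the $\forbidden$-cone around $\mathcal T^G_1$ (so the smallness hypothesis $\delta_1/r_1<\epsilon_v$ of Proposition~\ref{prop:neartaub:main} fails), a region it cannot leave without entering the basin, by the ``morally backwards invariant'' structure discussed in Section~\ref{sect:near-taub}. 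Along the cycle $N_1\to0$ near $\taubp_1$ and $N_2,N_3$ decay away from $-\taubp_1$, so $\delta_2=2\sqrt{|N_1N_3|}$ and $\delta_3=2\sqrt{|N_1N_2|}$ tend to $0$, giving $\lim_t\max(\delta_2,\delta_3)(t)=0$, while Lemma~\ref{lemma:no-b8-taub-convergence} forbids $\bpt x(t)\to\taubp_1$, so $\delta_1$ cannot settle; tracking $\delta_1$ and $r_1$ through one period of the cycle (from \eqref{eq:ode2-delta} along $W^u(-\taubp_1)$ and the $r_1$-equations of Section~\ref{sect:polar-coords}) then yields the borderline behaviour $\liminf_t\delta_1(t)=0<\limsup_t\delta_1(t)$. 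This is $\textsc{Except}_1$. Since $\textsc{Attract}$ has all $\delta_i\to0$, $\textsc{Except}_1$ has $\delta_2,\delta_3\to0$ but $\delta_1\not\to0$, and $\textsc{Except}_3$ has $\delta_1,\delta_2\to0$ but $\delta_3\not\to0$, the four classes are pairwise incompatible, so the classification is exhaustive and mutually exclusive.

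\textbf{Expected main obstacle.} The two steps I expect to need real work are: (i) excluding $\ell=2$ --- the Bianchi-\textsc{IX}-type control ($r_2$ not collapsing, $\delta_2/r_2$ decaying, all $\delta_i$ decaying) has to be propagated over the entire transit $-\taubp_2\to\taubp_2$, not merely at $\taubp_2$, and then shown to restore $\widetilde\delta<\epsilon$; and (ii) the precise description inside $\textsc{Except}_\ell$ --- the shadowing of the heteroclinic cycle and, above all, the assertion $\liminf_t\delta_1(t)=0<\limsup_t\delta_1(t)$ --- which lives exactly in the $\forbidden$-cone regime that the rest of the paper is arranged to avoid; this is where Section~\ref{sect:neartaub:forbidden-cones}, and in particular Lemma~\ref{lemma:no-b8-taub-convergence}, together with a dedicated estimate showing that the marginal net change of $\delta_1$ along one period forces oscillation rather than convergence, would be needed.
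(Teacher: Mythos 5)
Your overall scaffolding — the dichotomy on $\widetilde\delta$, the reduction to a single Taub index, and the role of Lemma~\ref{lemma:no-b8-taub-convergence} versus Lemma~\ref{lemma:farfromA:taub-instability} — matches the paper's proof, which defers to Theorem~\ref{thm:b9-attractor-global} and makes exactly the substitutions you identify. The genuine gap is in your exclusion of $\ell=2$.

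Your argument there runs through Propositions~\ref{prop:farfromtaub-main2}, \ref{prop:near-taub:qc}, \ref{prop:neartaub:main}: Proposition~\ref{prop:farfromtaub-main2} routes the trajectory through $-\taubp_2 \to N_2$-cap$\to \taubp_2$, Proposition~\ref{prop:near-taub:qc} gives $\delta_2/r_2 = \OO(\epsilon_d)$ at the cap entry, and Proposition~\ref{prop:neartaub:main} propagates this. But every one of these propositions assumes $\max_i \delta_i < \epsilon_d$ along the relevant piece of trajectory (this is what $\textsc{Circle}\cup\textsc{Cap}$ means), and that is precisely what is \emph{not} available a priori: you are trying to derive a contradiction from $\widetilde\delta\ge\epsilon$ for all $t$, and in particular the trajectory is in the $\forbidden$-cone near $\taubp_\ell$, where $\delta_\ell$ need not be small (indeed, $\textsc{Except}_1$ asserts $\limsup\delta_1 > 0$). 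Once $\delta_\ell \gtrsim \epsilon_d$, the trajectory is not in $\textsc{Circle}\cup\textsc{Cap}$, Proposition~\ref{prop:farfromtaub-main} and hence \ref{prop:farfromtaub-main2} cease to apply, and your chain breaks. The circularity becomes visible when you notice that this chain, if it were sound, contains nothing specific to $\ell=2$: Propositions~\ref{prop:near-taub:qc} and~\ref{prop:neartaub:main} have both Bianchi~\textsc{VIII}- and \textsc{IX}-like versions, so the same reasoning would exclude $\ell=1,3$ as well, making $\textsc{Except}$ vacuous — which the theorem deliberately does not claim. The paper's actual tool is Lemma~\ref{neartaub:lemma:joint-increase-in-forbidden}, which works \emph{inside} the $\forbidden$-cone and without a small-$\delta$ hypothesis: it shows $\delta_\ell/r_\ell$ must drop by a factor comparable to the (large, forced) increase of $r_\ell$ during each passage near $\taubp_\ell$. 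Crucially that lemma is only proved for $\signN_j = \signN_k$, which in $\mathcal M_{+-+}$ holds for $\ell=2$ (since $\signN_1=\signN_3$) but fails for $\ell=1,3$. That asymmetry, not the applicability of Propositions~\ref{prop:near-taub:qc}/\ref{prop:neartaub:main}, is what singles out $\ell=2$ for exclusion and leaves $\textsc{Except}_1$, $\textsc{Except}_3$ open.

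Two smaller points. First, pigeonholing gives \emph{some} $\ell$ with $\{-\taubp_\ell,\taubp_\ell\}\subset\omega(\bpt x_0)$, but does not by itself exclude $\omega(\bpt x_0)$ meeting several Taub spaces; the paper rules this out by showing that a heteroclinic in $\omega(\bpt x_0)$ connecting two Taub spaces would force $\omega(\bpt x_0)$ to meet $\textsc{Basin}$ near some $-\taubp_m$ (where $\delta_m/r_m$ vanishes on the connecting orbit), contradicting $\widetilde\delta\ge\epsilon$. Second, the properties of $\textsc{Except}_1$ — shadowing the cycle, $\max(\delta_2,\delta_3)\to0$, and $0=\liminf\delta_1<\limsup\delta_1$ — follow from the structure $\omega(\bpt x_0)\cap\mathcal A\subseteq\mathcal T_1\cap\mathcal A$ together with Lemma~\ref{lemma:no-b8-taub-convergence}; you assert this but, as you say yourself, do not derive it.
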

This theorem should be read in conjunction with the following, the proof of which will be deferred until Section \ref{sect:volume-form}, page \pageref{proof:thm:b8-attractor-global-genericity}:
\begin{hthm}{\ref{thm:b8-attractor-global-genericity}}[Bianchi \textsc{VIII} global attractor Theorem genericity]
In Theorem \ref{thm:b8-attractor-global}, 
the set of initial conditions $\bpt x_0$ for which \textsc{Attract} applies is ``generic'' in the following sense: It is open and dense in $\mathcal M_{+-+}$ and its complement has Lebesgue-measure zero. 
\end{hthm}
\begin{question}\label{quest:except}
It is currently unknown, whether the case \textsc{Except} in Bianchi \textsc{VIII} is possible at all.

We expect that solutions in $\textsc{Except}_1$ actually converge to the heteroclinic cycle, where $\taubp_1 \to -\taubp_1$ is realized by the unique connection in $\mathcal T_1^G$, i.e.~$\lim_{t\to\infty}r_1(t)=0$, instead of following any other heteroclinic in $W^u(\taubp_1)\cap W^s(-\taubp_1)$.

For topological reasons, we expect that the set of initial conditions, where \textsc{Except} applies, is nonempty, and is of dimension and codimension two.
\end{question}
\paragraph{Proof of the attractor Theorems.}
The remainder of this section is devoted to proving these theorems. We begin with the local attractor result:
\begin{proof}[Proof of the local attractor Theorem \ref{thm:local-attractor}]
We already gave a rough outline of the proof at the end of Section \ref{sect:farfromA:attract}, page \pageref{paragraph:farfromA:sketch}. We now have all ingredients to complete this program.

We apply Propositions \ref{prop:farfromtaub-main2}, \ref{prop:near-taub:qc} and \ref{prop:neartaub:main}. If the constants have been arranged appropriately, then each of these propositions describes a piece of the trajectory, which ends in the domain of the (cyclically) next proposition, and at least one of the three propositions is applicable.
``Appropriate'' means at least $\varepsilon_T^{\ref{prop:neartaub:main}}> C_T^{\ref{prop:farfromtaub-main2}}\varepsilon_T^{\ref{prop:farfromtaub-main2}}$.

Each $\delta_i$ shrinks by an arbitrarily large factor during each time interval, where Proposition \ref{prop:farfromtaub-main2} applies, and can at most grow by a bounded factor in each remaining time-interval, which directly yields \eqref{eq:local-attract:delta}, \eqref{eq:local-attract:delta-lim} (``arbitrary'' if we adjust $\epsilon$ in \eqref{eq:local-attractor-condition}). 

In order to see the estimates \eqref{eq:local-attract:delta-r} and \eqref{eq:local-attract:delta-r-lim}, we note that each $r_i$ can change only by a factor bounded by $\varepsilon_T^{-1}$ in the region, where \ref{prop:farfromtaub-main2} applies; however, the $\delta_i$ must shrink by an arbitrarily large factor, and the quotient $\frac{\delta_i}{r_i}$ is directly controlled in the regions where Propositions \ref{prop:near-taub:qc} and \ref{prop:neartaub:main} apply.

The remaining estimate \eqref{eq:local-attract:integral} follows by integration: The contribution away from the Taub-spaces can be trivially bounded by Proposition \ref{prop:farfromtaub-main}, eq.~\eqref{eq:nearA:exponential}, and Proposition \ref{prop:neartaub:main}, eq.~\eqref{eq:neartaub:main:delta}, bounds the contribution near the Taub-spaces by $\frac{\delta^2}{r^2}$. The long-time behaviour of $\frac{\delta_i}{r_i}$ is controlled by Proposition \ref{prop:near-taub:qc}. 
\end{proof}
Next, we prove the global result for Bianchi \textsc{IX}. Theorem \ref{thm:b9-attractor-global} is trivially equivalent to Theorem \ref{farfromA:thm:b9-attract}, so we could cite \cite{ringstrom2001bianchi} instead; the novel results are contained in the local Theorem \ref{thm:local-attractor}. The following is a novel proof of the global part of the result, which avoids Lemma \ref{lemma:farfromtaub:no-delta-increase-near-taub}:
\begin{proof}[Proof of the global Bianchi \textsc{IX} attractor Theorem \ref{thm:b9-attractor-global}]
Suppose that there is an initial condition $\bpt x_0\in\mathcal M_{+++}$, such that neither \textsc{Attract} nor \textsc{Taub} applies. We know that $\omega(\bpt x_0)\cap \mathcal A \subseteq \mathcal T$ (because otherwise \textsc{Attract} would apply). By Lemma \ref{farfromA:lemma:limitpoint-on-A}, there must exist at least one $\omega$-limit point $\bpt y\in\omega(\bpt x_0)\cap \mathcal K$. The $\omega$-limit set $\omega(\bpt x_0)$ must be connected or unbounded (this holds for general dynamical systems).

Now consider $\omega(\bpt x_0)\cap \mathcal T$. We claim that $\omega(\bpt x_0)\cap \mathcal T\subseteq \mathcal T_i$ for some $i\in\{1,2,3\}$: If it was otherwise, there would need to exist a heteroclinic orbit $\gamma\subseteq \omega(\bpt x_0)$ which connects two Taub-spaces (since $|N_1N_2N_3|\to 0$ we can apply Lemma \ref{lemma:farfromA:b7}). Without loss of generality, such a heteroclinic orbit can only connect $\{\bpt x: \bpt \Sigma=\taubp_1\}$ to $-\taubp_2$ with $\gamma\subseteq \mathcal M_{0++}$. Along $\gamma$, $\delta_2$ and $\delta_3$ and hence also $\frac{\delta_2}{r_2}$ and $\frac{\delta_3}{r_3}$ vanish; near the end of this heteroclinic, $\delta_1$ and $\frac{\delta_1}{r_1}$ must become arbitrarily small. In other words, we have some $\bpt p\in\gamma$ near $-\taubp_2$ such that $\frac{\delta_1}{r_1}(\bpt p)<\widetilde \epsilon/3$.
Hence, for $\epsilon_2=\epsilon_2(\bpt p)>0$ small enough, $B_{\epsilon}(\bpt p)\cap \mathcal M_{+++}\subseteq\textsc{Basin}_{+++}[\widetilde \epsilon]$; this contradicts our assumptions $\bpt p\in\omega(\bpt x_0)$ and $\omega(\bpt x_0)\cap \textsc{Basin}_{+++}[\widetilde \epsilon]=\emptyset$.

We also know by Lemma \ref{lemma:farfromA:taub-instability} that $\omega(\bpt x_0)$ cannot be contained in a Taub-Line $\mathcal{TL}_i=\{\bpt x: \bpt \Sigma(\bpt x)=\taubp_i\}$. Therefore, the only remaining case is that the trajectory $\phi(\bpt x_0, \cdot)$ follows the heteroclinic cycle
\[
-\taubp_i \to W^u(-\taubp_i) \to \taubp_i \to W^s(-\taubp_i)\to -\taubp_i,
\]
where $W^u(-\taubp_i)$ is the one-dimensional unstable manifold of $-\taubp_i$ and $W^s(-\taubp_i)$ is its two-dimensional stable manifold.

We now could use Lemma \ref{lemma:farfromtaub:no-delta-increase-near-taub} to finish our proof, since it prevents $\taubp_i\to W^s(-\taubp_i)$.
However, we prefer to avoid relying on the subtle averaging arguments needed to prove Lemma \ref{lemma:farfromtaub:no-delta-increase-near-taub}, substituting them by the simpler Lemma \ref{neartaub:lemma:joint-increase-in-forbidden}.
Assume without loss of generality that our trajectory follows this dynamics with $i=1$ and hence has $\frac{\delta_1}{r_1}\ge \widetilde \epsilon$ for all sufficiently large times, while $\frac{\delta_2}{r_2}\to 0$ and $\frac{\delta_3}{r_3}\to 0$. We will follow $\frac{\delta_1}{r_1}$ and show that $\frac{\delta_1}{r_1}\to 0$, which contradicts our assumptions. 

We first estimate the variation of $\frac{\delta_1}{r_1}$ away from $\taubp_1$. Using
\[
\DD_t\log \frac{\delta_1}{r} \le C|N_1N_2N_3|^{\frac 2 3} + C |N_1|,
\]
we can conclude that $\int \max(0,\DD_t\log\frac{\delta_1}{r_1})\dd t$ is bounded along every loop. However, by Lemma \ref{neartaub:lemma:joint-increase-in-forbidden}, $\frac{\delta_1}{r_1}$ must decrease by an arbitrarily large factor near $\taubp_1$ along every loop, contradicting the assumptions$\frac{\delta_1}{r_1}>\widetilde \epsilon$. This is because Lemma \ref{neartaub:lemma:joint-increase-in-forbidden} states that $\frac{\delta_1}{r_1}$ must decrease by a comparable factor as the increase of $r_1$, which  must increase by an arbitrarily large factor near $\taubp_1$, since we know that $r_1$ must eventually become larger than, say, $0.1$. 
\end{proof}

\begin{proof}[Proof of the global Bianchi \textsc{VIII} attractor Theorem \ref{thm:b8-attractor-global}]
The proof works like the above proof of Theorem \ref{thm:b9-attractor-global}: We use Lemma \ref{lemma:no-b8-taub-convergence} in order to prevent convergence to $\taubp_1$ and $\taubp_3$ (instead of Lemma \ref{lemma:farfromA:taub-instability}, which is still used near $\mathcal {TL}_2$). 

Preventing convergence to the heteroclinic cycle near $\mathcal T_2$ works as before. This leaves us with the possibility of convergence to one of the two remaining heteroclinic loops described in \textsc{Except}. Since we cannot exclude this case, we allow it in the conclusions of the theorem.
\end{proof}

\section{Phase-Space Volume and Integral Estimates}\label{sect:volume-form}
This section is devoted to proving the last three main Theorems of this work. The first one is the already stated genericity of the attracting case in Theorem \ref{thm:b8-attractor-global}:
\begin{thm}[Bianchi \textsc{VIII} global attractor genericity Theorem]\label{thm:b8-attractor-global-genericity}
In Theorem \ref{thm:b8-attractor-global}, 
the set of initial conditions $\bpt x_0$ for which \textsc{Attract} applies is ``generic'' in the following sense: It is open and dense in $\mathcal M_{+-+}$ and its complement has Lebesgue-measure zero. 
\end{thm}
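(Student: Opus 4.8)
The plan is to recognise the \textsc{Attract} set as the complement of the three exceptional classes, to prove openness directly, and to dispose of the exceptional classes by a volume argument; density is then automatic, since a Lebesgue-null set on a manifold has empty interior.

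\emph{Openness and reduction.} By Theorem \ref{thm:b8-attractor-global} every $\bpt x_0\in\mathcal M_{+-+}$ lies in exactly one of \textsc{Attract}, $\textsc{Taub}_2$, $\textsc{Except}_1$, $\textsc{Except}_3$, so the \textsc{Attract} set is $\mathcal M_{+-+}\setminus(\mathcal T_2\cup\textsc{Except}_1\cup\textsc{Except}_3)$. Pick $\epsilon'\in(0,\epsilon)$ small enough that $\textsc{Basin}_{+-+}[\epsilon']$ is forward invariant and stays inside the region $\{\delta_i<\epsilon,\ \delta_i/r_i<\epsilon\ \forall i\}$ where Theorem \ref{thm:local-attractor} applies; this is possible by the bounds \eqref{eq:local-attract:delta}, \eqref{eq:local-attract:delta-r}. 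Then \textsc{Attract} holds for $\bpt x_0$ iff $\phi(\bpt x_0,T)\in\textsc{Basin}_{+-+}[\epsilon']$ for some $T\ge 0$, hence the \textsc{Attract} set equals $\bigcup_{T\ge 0}\phi(\cdot,T)^{-1}\bigl(\textsc{Basin}_{+-+}[\epsilon']\bigr)$, a union of open sets. It remains to show $\mathcal T_2\cup\textsc{Except}_1\cup\textsc{Except}_3$ is Lebesgue-null. The set $\textsc{Taub}_2=\mathcal T_2\cap\mathcal M_{+-+}$ is cut out of $\mathcal M_{+-+}$ by the two independent equations $N_1=N_3$ and $\langle\taubp_1-\taubp_3,\bpt\Sigma\rangle=0$, so it is an embedded submanifold of codimension two and is null.

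\emph{The \textsc{Except} classes via the expanding volume form.} I would introduce the flow-expanded volume form $\omega_4$ of \eqref{eq:omega5-def}, \eqref{eq:omega4-def}: a smooth, nowhere-vanishing volume form on $\mathcal M$ (so $\omega_4$-null sets are exactly Lebesgue-null sets), whose transported density $J_t>0$, defined by $\phi(\cdot,t)^*\omega_4=J_t\,\omega_4$, satisfies $J_t\ge 1$ for $t\ge 0$, is nondecreasing in $t$, and tends to $+\infty$ as $t\to\infty$ along every trajectory with $|N_1N_2N_3|\ne 0$ — the last point because $\log J_t$ is, up to a bounded correction, proportional to $\int_0^t\Sigma^2\,\dd s$, which diverges by Lemma \ref{farfrom-A:lemma:uniform}. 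This gives the dichotomy that any forward-invariant $A\subseteq\{|N_1N_2N_3|\ne 0\}$ has $\omega_4(A)\in\{0,\infty\}$: from $\phi(A,t)\subseteq A$ one has $\omega_4(A)\ge\omega_4(\phi(A,t))=\int_A J_t\,\omega_4$, and if $0<\omega_4(A)<\infty$ then monotone convergence yields $\lim_{t\to\infty}\int_A J_t\,\omega_4=\infty$, a contradiction.

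\emph{Finiteness and conclusion.} The set $E:=\textsc{Except}_1\cup\textsc{Except}_3\subseteq\mathcal M_{+-+}\subseteq\{|N_1N_2N_3|\ne 0\}$ is flow-invariant, since ``following the heteroclinic cycle for all large times'' is a tail condition; so by the dichotomy it suffices to check $\omega_4(E)<\infty$. By the description in Theorem \ref{thm:b8-attractor-global}, every $\bpt x_0\in\textsc{Except}_1$ has a forward trajectory eventually contained in any given neighbourhood of the \emph{fixed compact} heteroclinic cycle $-\taubp_1\to W^u(-\taubp_1)\to\taubp_1\to W^s(-\taubp_1)\to-\taubp_1$, and similarly for $\textsc{Except}_3$. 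Fix a bounded open $V$ with compact closure in $\mathcal M$ containing both cycles, and write $E=\bigcup_{n\in\NN}A_n$, $A_n=\{\bpt x\in E:\phi(\bpt x,t)\in V\ \forall t\ge n\}$, an increasing sequence. Then $\phi(A_n,n)\subseteq V$ and $J_n\ge 1$, so $\omega_4(A_n)\le\int_{A_n}J_n\,\omega_4=\omega_4(\phi(A_n,n))\le\omega_4(V)<\infty$, hence $\omega_4(E)=\lim_n\omega_4(A_n)\le\omega_4(V)<\infty$ and therefore $\omega_4(E)=0$. Combined with the Taub part, $\mathcal M_{+-+}\setminus\textsc{Attract}$ is Lebesgue-null, so the \textsc{Attract} set is open, dense and of full measure.

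\emph{Main obstacle.} The assembly above is routine once the volume form is available; the real content is the construction of $\omega_4$ and the verification of its three properties — smoothness and positivity on all of $\mathcal M$, the monotone pointwise expansion $J_t\ge 1$, and the divergence $J_t\to\infty$ along trajectories with $|N_1N_2N_3|\ne 0$ — together with extracting from Theorem \ref{thm:b8-attractor-global} that ``following the heteroclinic object'' really confines the forward trajectory to a bounded set $V$ that can be chosen independently of $\bpt x_0$ (which rests on the heteroclinic cycles at $\taubp_1$ and $\taubp_3$ being fixed compact subsets of $\mathcal M$).
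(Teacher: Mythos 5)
Your high-level strategy---use the expanding volume form together with the zero-or-infinity dichotomy for forward-invariant sets, then show the exceptional set has finite volume---is the same as the paper's, which implements it on a Poincar\'e section via Lemma~\ref{lemma:vol-dichotomy} rather than directly on the flow. But there is a fatal gap in your finiteness step, and a smaller error in the dichotomy step.

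The fatal gap: you assert $\omega_4(V)<\infty$ for a bounded open $V$ with compact closure in $\mathcal M$ that contains the heteroclinic cycles. This is false. The density of $\omega_4$ is, up to a smooth bounded factor, $1/|N_1N_2N_3|$; equivalently $\omega_4$ is Lebesgue measure in the $(\Sigma_\pm,\beta_i)$-coordinates with $\beta_i=-\log|N_i|$. The heteroclinic cycle for $\textsc{Except}_1$ lies in $\mathcal A$, where at least two of the $N_i$ vanish, i.e.\ where at least two $\beta_i=+\infty$; any open $V\subseteq\mathcal M$ containing it therefore has $\omega_4(V)=+\infty$, and the chain $\omega_4(A_n)\le\int_{A_n}J_n\,\omega_4=\omega_4(\phi(A_n,n))\le\omega_4(V)$ proves nothing. (Note also that by Theorem~\ref{thm:b8-attractor-global} the $\textsc{Except}_1$ orbits have $\delta_2,\delta_3\to 0$, so they approach $\mathcal A$ and \emph{every} suitable $V$ must contain a punctured neighborhood of a stretch of $\mathcal A$.) Finiteness is only obtained by exploiting the \emph{specific geometry} of the exceptional set near $\taubp_1$: Theorem~\ref{thm:local-attractor} forces $\delta_1 > \widetilde\epsilon\, r_1$ along $\textsc{Except}_1$, i.e.\ $\beta_2+\beta_3 \le C + C|\log|\Sigma_-||$ on a transversal, and it is precisely the convergence of the resulting integral $\int |\log|\Sigma_-||^2\,\dd\Sigma_-$ over this cusp-shaped region---display~\eqref{eq:bad-finite-measure} in the paper---that gives a finite bound. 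The ``main obstacle'' you flag at the end (confining the orbit to a bounded set) is not the real one; boundedness of orbits is easy. The real obstacle is that bounded neighborhoods of $\mathcal A$ have infinite $\omega_4$-volume.

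The smaller error: your justification of $J_t\to\infty$ is inverted. One has $\log J_t = 2\int_0^t N^2\,\dd s = 2\int_0^t (1-\Sigma^2)\,\dd s$, which is not ``up to a bounded correction proportional to $\int_0^t\Sigma^2\,\dd s$''. Divergence of $\int\Sigma^2$ (which is what Lemma~\ref{farfrom-A:lemma:uniform} controls) does not imply divergence of $\int(1-\Sigma^2)$. For $\textsc{Except}$ trajectories the conclusion is attainable by a different argument---each transit of the $N_1$-cap contributes a uniform $O(1)$ amount to $\int N^2$, and the orbit makes infinitely many transits---but you would have to say this. The paper's Poincar\'e formulation avoids the issue entirely: since $N^2>\epsilon$ on the section and in a neighborhood of it, the return-map volume expansion factor $\lambda(\bpt x, T_S(\bpt x))$ is uniformly bounded below by some $q>1$, which is all Lemma~\ref{lemma:vol-dichotomy} needs.

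The openness and density reductions, and the dimensional argument for $\mathcal T_2$, are fine and match the paper.
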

The second theorem of this section answers affirmatively the ``locality'' part of the longstanding BKL conjecture for spatially homogeneous Bianchi class A vacuum spacetimes, for measure theoretic notions of genericity, and can be considered the main result of this work:
\begin{thm}[Almost sure formation of particle horizons]\label{thm:horizon-formation}
For Lebesgue almost every initial condition in $\mathcal M_{\pm\pm\pm}$ (with respect to the induced measure from $\RR^5=\{(\Sigma_+,\Sigma_-, N_1,N_2,N_3)\}$), the following holds:
\begin{equation}
\int_0^\infty \max_i \delta_i (t)\dd t = 2\int_{0}^\infty \max_{i\neq j}\sqrt{|N_iN_j|}(t)<\infty \qquad \text{for Lebesgue a.e. }\,\,\bpt x_0\in \mathcal M_{\pm\pm\pm}.
\end{equation}
This means that almost every Bianchi \textsc{VIII} and \textsc{IX} vacuum spacetime forms particle horizons towards the big bang singularity. This physical interpretation is described in Section \ref{sect:gr-phys-interpret}, Lemma \ref{lemma:horizon-integral} or \cite{heinzle2009future}.
\end{thm}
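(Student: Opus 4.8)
The plan is to string together three ingredients: the global attractor theorems (to throw away a null set of initial data and land in $\textsc{Basin}_\signN[\epsilon]$), the expanding phase–space volume form $\omega_4$ of this section (to show that, off another null set, the trajectory eventually satisfies $\delta_i<r_i^4$ for all $i$), and finally the essentially exponential decay estimates of Sections~\ref{sect:far-from-taub} and~\ref{sect:near-taub} (to bound $\int_0^\infty\max_i\delta_i\,dt$ era by era).

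\emph{Step 1 (reduction to the basin).} By Theorem~\ref{thm:b9-attractor-global} in Bianchi \textsc{IX} and by Theorems~\ref{thm:b8-attractor-global} and~\ref{thm:b8-attractor-global-genericity} in Bianchi \textsc{VIII}, the set of $\bpt x_0\in\mathcal M_{\pm\pm\pm}$ for which case \textsc{Attract} fails has Lebesgue measure zero. For any other $\bpt x_0$ there is $t_0$ with $\phi(\bpt x_0,t_0)\in\textsc{Basin}_\signN[\epsilon]$; since the $\delta_i$ are bounded and solutions exist for all times, $\int_0^{t_0}\max_i\delta_i\,dt<\infty$, so the finiteness of the horizon integral is unaffected by replacing $\bpt x_0$ with $\phi(\bpt x_0,t_0)$. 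Hence it suffices to treat Lebesgue-a.e.\ $\bpt x_0\in\textsc{Basin}_\signN[\epsilon]$, and we may freely use all conclusions of the local attractor Theorem~\ref{thm:local-attractor}, in particular that all $\delta_i$ and $\delta_i/r_i$ tend to $0$.

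\emph{Step 2 (eventually $\delta_i<r_i^4$).} Let $\phi_t:=\phi(\,\cdot\,,t)$ and let $B=\{\bpt x\in\textsc{Basin}_\signN[\epsilon]:\ \delta_i(\bpt x)\ge r_i(\bpt x)^4\text{ for some }i\}$. First I would check by elementary calculation that $\omega_4(B)<\infty$: the constraint $\delta_i/r_i\le\epsilon$ on $\textsc{Basin}$ confines $B$ to a bounded neighbourhood of $\mathcal A$, and near the circle $\mathcal A\cap\mathcal T_i^G$ (and near the Taub-points $\taubp_i$) the region $\{\delta_i\ge r_i^4\}$ is so thin — $\delta_i$, $r_i$ are homogeneous of degree one in the transverse variables — that the $\omega_4$-integral converges; the exponent $4$ is what makes this work. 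Since $\omega_4$ is (essentially uniformly) expanded by the flow, $\phi_{-t}$ contracts $\omega_4$ at an essentially exponential rate, so $\sum_{n\ge T}\omega_4\!\big(\bigcup_{s\in[n,n+1]}\phi_{-s}(B)\big)\to 0$ as $T\to\infty$. Any $\bpt x_0\in\textsc{Basin}$ whose trajectory fails to satisfy $\delta_i<r_i^4$ for all $i$ at all large times must have $\phi(\bpt x_0,t)\in B$ for a sequence $t\to\infty$, i.e.\ $\bpt x_0\in\bigcap_T\bigcup_{s\ge T}\phi_{-s}(B)$; this set therefore has $\omega_4$-measure zero, hence Lebesgue measure zero. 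Replacing $\bpt x_0$ once more by some $\phi(\bpt x_0,t_1)$, we may assume $\delta_i(\bpt x(t))<r_i(\bpt x(t))^4$ for all $i$ and all $t\ge 0$, still with $\bpt x_0\in\textsc{Basin}_\signN[\epsilon]$.

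\emph{Step 3 (summing the eras).} Decompose $[0,\infty)$ into alternating pieces as in the proof of Theorem~\ref{thm:local-attractor}, using Propositions~\ref{prop:farfromtaub-main}, \ref{prop:farfromtaub-main2}, \ref{prop:near-taub:qc} and~\ref{prop:neartaub:main}: ``far'' intervals on which $\bpt x$ stays in $\textsc{Cap}\cup\textsc{Hyp}$, and ``near'' intervals $J_k=[t^{(k)},\cdot]$ on which $\bpt x\subseteq B_{2\varepsilon_T}(\taubp_{\ell_k})$. On a far interval $\max_i\delta_i$ decays essentially uniformly exponentially by Proposition~\ref{prop:farfromtaub-main}, eq.~\eqref{eq:nearA:exponential}, so its contribution to the integral is $\le C\cdot(\text{entry value})$; moreover, since each full era contains a stretch where Proposition~\ref{prop:farfromtaub-main2} applies (across which $\delta$ shrinks by a large factor) and $\delta$ can at most grow by a bounded factor on the ``near'' intervals, the values $a_k:=\max_i\delta_i(t^{(k)})$ decay geometrically in $k$. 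On $J_k$, Proposition~\ref{prop:neartaub:main} gives $\delta_{\ell_k}(t)\le C_0\exp(-C_e^{-1}r_{\ell_k}^{-2}(t^{(k)})(t-t^{(k)}))\,\delta_{\ell_k}(t^{(k)})$ and the $r$-independent exponential decay of $(|N_{\ell_k}|,\delta_j,\delta_k)$ from~\eqref{eq:neartaub:main:others}; integrating, the contribution of the whole stay near $\taubp_{\ell_k}$ is bounded by the quantity of order $\delta_{\ell_k}/r_{\ell_k}^2$ at entry, which, by the bound $\delta_{\ell_k}<r_{\ell_k}^4$ from Step~2, is $\le\sqrt{\delta_{\ell_k}(t^{(k)})}\le\sqrt{a_k}$. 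Summing the far- and near-contributions,
\[
\int_0^\infty\max_i\delta_i(t)\,dt\ \le\ C\sum_{k}\big(a_k+\sqrt{a_k}\big)\ <\ \infty ,
\]
because $a_k$, and hence $\sqrt{a_k}$, decays geometrically. The translation of $I<\infty$ into particle-horizon formation is Lemma~\ref{lemma:horizon-integral} of Section~\ref{sect:gr-phys-interpret}.

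\emph{Expected main obstacle.} Everything of substance sits in Step~2: constructing $\omega_4$ and verifying that the flow expands it with an essentially uniform rate — plausible from the Hamiltonian origin of the Wainwright--Hsu equations, but it must be made precise, including the behaviour near $\mathcal A$ and near $\mathcal T$ — and establishing $\omega_4(B)<\infty$, whose only delicate point is integrability near $\mathcal A\cap\mathcal T_i^G$ and near the Taub-points. The exponent $4$ is chosen with essentially no slack: it is simultaneously what makes that integral converge and what makes $\delta/r^2<\sqrt{\delta}$ in Step~3. The Borel--Cantelli-type passage from $\omega_4(B)<\infty$ to ``$\delta_i<r_i^4$ eventually, almost everywhere'' is then routine once the uniform backward contraction of $\omega_4$ is in hand, but one must be careful that it really is the backward images of a single finite-$\omega_4$-measure set that are being summed.
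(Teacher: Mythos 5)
Your overall strategy is the paper's: reduce to $\textsc{Basin}$ (Step~1), use the expanding volume form and a Borel--Cantelli argument to show $\delta_i<r_i^4$ eventually, a.e.\ (Step~2), then sum the era-by-era contributions using $\delta/r^2<\sqrt{\delta}$ (Step~3). Steps~1 and~3 match the paper essentially verbatim; Step~3 is Lemma~\ref{lemma:badrecurrent-small-int}. The real deviation, and the place where your argument has a gap the paper avoids, is Step~2.

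The paper does not define its ``bad'' set in the full phase space and does not run Borel--Cantelli over the time-$t$ flow with $\omega_4$. It works on a fixed three-dimensional Poincar\'e section $S$ near the heteroclinic $-\taubp_1\to\taubp_1$, given by $N_1=h=\mathrm{const}$, with the induced form $\omega_3=\iota_f\omega_4$ and the discrete return map $\Phi_S$ (Lemmas~\ref{lemma:badrecurrent-zeroset}, \ref{lemma:badrecurrent-small-int}). The set $\textsc{Bad}=\{\bpt x\in S:\delta_1>|\Sigma_-|^4\}$ has finite $\omega_3$-measure precisely because $N_1$ (equivalently $\beta_1$) is frozen on $S$, so that only a three-dimensional integral in $(\Sigma_-,\beta_2,\beta_3)$ remains, as in~\eqref{eq:bad-finite-measure}. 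Your $B=\{\bpt x\in\textsc{Basin}:\delta_i\ge r_i^4\text{ for some }i\}$ is a four-dimensional set with \emph{no constraint on} $\beta_1=-\log|N_1|$; near $\taubp_1$ the set $\textsc{Basin}$ allows $\beta_1$ to range over an unbounded interval, and $\omega_4$ is Lebesgue in the $\beta$-variables, so the claim $\omega_4(B)<\infty$ is unsubstantiated and, as written, I believe false. ``$B$ sits in a bounded neighbourhood of $\mathcal A$'' is a statement in the Euclidean $N$-coordinates and is irrelevant here: $\omega_4$ assigns \emph{infinite} mass to every Euclidean neighbourhood of $\mathcal A$.

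The second problem — which you flag yourself as the expected obstacle — is the uniform backward contraction. You need $\omega_4(\phi_{-n}(B))$ to decay geometrically in $n$, i.e.\ a uniform lower bound on the time-$1$ expansion factor $\lambda(\cdot,1)$. But $\DD_t\log\lambda=2N^2$ and $N^2$ vanishes on the Kasner circle $\mathcal K$, so near $\mathcal K$ the time-$1$ map expands by a factor arbitrarily close to (in Bianchi \textsc{IX}, even below) $1$; $\sum_n\omega_4\bigl(\phi_{-n}(B)\bigr)$ therefore has no reason to converge. The Poincar\'e section dissolves exactly this: the return map always traverses the $N_1$-cap where $N^2$ is bounded below, so $q=\inf_{\bpt x\in S}\lambda(\bpt x,T_S(\bpt x))>1$ uniformly and $|\Phi_S^{-k}(\textsc{Bad})|_{\omega_3}\le q^{-k}|\textsc{Bad}|_{\omega_3}$ gives the Borel--Cantelli summand directly. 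So your skeleton is right, but Step~2 should be run on the section with $\omega_3$ and the discrete return map, not in phase space with $\omega_4$ and the time-$1$ map; passing to the section, fixing $N_1$, and computing with $\omega_3$ removes both gaps at once.
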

This result about Lebesgure a.e.~solutions immediately raises the question for counterexamples:
\begin{question}
It is currently not known, whether there exist any solutions, which are attracted to $\mathcal A$ and have infinite integral $\int_0^\infty\delta_i(t)=\infty$. 

\begin{center}Do such solutions exist? Is it possible to describe an example of such a solution?\end{center}

\noindent We very strongly expect that such solutions \emph{do} exist, for reasons which will be explained in future work.
\end{question}

The third and last theorem of this section strengthens and extends the previous result:
\begin{thm}[$L^p$ estimates for the generalized localization integral]\label{thm:horizon-formation-alpha-p}
Let either $\alpha\in (0,2)$ and $p\in (0,1)$ such that $\alpha p > 2p-1$, i.e.~$p< \frac{1}{2-\alpha}$.
Let $M\subset \mathcal M_{\pm\pm\pm}$ be a compact subset such that Theorem \ref{thm:local-attractor} holds for any initial condition in $\bpt x_0\in M$. Then $I_\alpha \in L^p(M)$, where
\begin{equation}
I_\alpha (\bpt x_0) := \int_0^\infty \max_i \delta_i^\alpha(t)\dd t,
\end{equation}
i.e., using $\phi$ for the flow to \eqref{eq:ode} and $\dd^4 \bpt x$ for the (four dimensional) Lebesgue measure on $\mathcal M_{\pm\pm\pm}$,
\[
\int_{M} \left[\int_0^\infty \delta_i^\alpha(\phi(\bpt x, t))\dd t\right]^p \dd^4 \bpt x<\infty\qquad\forall\,i\in\{1,2,3\}.
\]
If instead $\alpha\ge 2$, we already know from Theorem \ref{thm:local-attractor} that $I_\alpha \in L^\infty(M)$.
\end{thm}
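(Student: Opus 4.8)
The plan is to reduce the generalized localization integral to a sum along the orbit, and then to control that sum in $L^p(M)$ by a distribution-function estimate powered by the expanding volume form $\omega_4$ of Section \ref{sect:volume-form}.

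\textbf{Orbit decomposition of $I_\alpha$.} Since Theorem \ref{thm:local-attractor} applies throughout $M$, every orbit starting in $M$ is, after a bounded transient, governed by the alternation of the regimes of Propositions \ref{prop:farfromtaub-main}, \ref{prop:farfromtaub-main2}, \ref{prop:near-taub:qc} and \ref{prop:neartaub:main}: stretches away from the Taub points on which all $\delta_j$ decay uniformly exponentially, interleaved with near-Taub passages on which the $\delta_i$ associated to the approached $\taubp_i$ obeys $\delta_i(t_2)\le C\exp(-c\,r_i^{2}(t_1)(t_2-t_1))\,\delta_i(t_1)$ while the two others decay uniformly. Integrating $\delta_j^\alpha$ against these bounds and absorbing the away-from-Taub stretches into the geometrically larger adjacent entry values, one obtains a bound of the form
\[
I_\alpha(\bpt x_0)\;\le\; C(\alpha)\,\delta^\alpha(\bpt x_0)\;+\;C(\alpha)\sum_{k\ge 1}\frac{\delta_k^\alpha}{r_k^{2}},
\]
where $\delta_k,r_k$ are the values of $\delta_{i_k},r_{i_k}$ at the entry time $\tau_k$ of the $k$-th near-Taub passage (the one following the heteroclinic $-\taubp_{i_k}\to\taubp_{i_k}$). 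The case $\alpha\ge 2$ is then immediate: $\delta^\alpha\le(C_1\epsilon)^{\alpha-2}\delta^2$ together with \eqref{eq:local-attract:integral} gives $I_\alpha\in L^\infty(M)$. For $\alpha<2$ the point is that, although by \eqref{eq:local-attractor-condition} and \eqref{eq:local-attract:delta-r} each summand is $\le C\epsilon^\alpha\, r_k^{\alpha-2}$ and the $\delta_k$ decay geometrically from passage to passage, $r_k$ can be small — precisely when the orbit has followed the heteroclinic $-\taubp_{i_k}\to\taubp_{i_k}$ closely, i.e.\ has come near the generalized Taub-space $\mathcal T_{i_k}^G=\{\,r_{i_k}=0\,\}$ — and then the factor $r_k^{\alpha-2}$ is large.

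\textbf{Smallness of the deep sets.} The rareness of such deep passages is what the volume form $\omega_4$ quantifies. An elementary calculation, parallel to the estimate $\omega_4(\{\delta>r^4\})<\infty$ used for Theorem \ref{thm:horizon-formation}, shows that the deep shells
\[
D_\rho\;=\;\{\,\bpt x\ \text{in a near-Taub passage}:\ r_i(\bpt x)\le\rho,\ \delta_i(\bpt x)\le C\epsilon\,r_i(\bpt x)\,\}
\]
satisfy $\omega_4(D_\rho)\le C\rho^{\kappa}$ for a fixed $\kappa>0$ determined by the homogeneity of $r_i$ and $\delta_i$ in $(\Sigma_-,N_j,N_k)$. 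By Propositions \ref{prop:near-taub:qc} and \ref{prop:neartaub:main} the quantity $r_i$ changes by at most a bounded factor along any single passage, so a passage of depth $r_k\le\rho$ forces the orbit to meet $D_{C\rho}$; and by Propositions \ref{prop:farfromtaub-main2} and \ref{prop:near-taub:qc} a given orbit meets $D_\rho$ at most a bounded number of times per Kasner era, with the associated quotient $\delta_i/r_i$ (and hence the contribution to $I_\alpha$) decaying geometrically from era to era.

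\textbf{The $L^p$ bound via the distribution function.} Finally apply the layer-cake identity $\|I_\alpha\|_{L^p(M)}^{p}=p\int_0^\infty s^{p-1}\,\mathrm{Leb}(\{\bpt x_0\in M:I_\alpha(\bpt x_0)>s\})\,ds$. Summing the dyadic depth scales in the first display, $I_\alpha(\bpt x_0)>s$ forces the orbit of $\bpt x_0$ to meet $D_\rho$ for some $\rho$ with $\rho^{\alpha-2}\gtrsim s$, i.e.\ $\rho\lesssim s^{-1/(2-\alpha)}$. Because $\omega_4$ is expanded by the flow, the $\omega_4$-measure of the set of initial conditions in $M$ whose orbit ever visits a set of finite $\omega_4$-measure is controlled by that measure — this is where the uniformity estimates on the volume expansion of Section \ref{sect:volume-form} enter, converting "visits $D_\rho$ at some time" into a bound with the same scaling as $\omega_4(D_\rho)$ itself — and since $\omega_4$ is comparable to Lebesgue measure on the compact set $M$, one gets $\mathrm{Leb}(\{I_\alpha>s\}\cap M)\le C\,s^{-1/(2-\alpha)}$. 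Inserting this, $\|I_\alpha\|_{L^p(M)}^{p}\le C+C\int_1^\infty s^{\,p-1-1/(2-\alpha)}\,ds$, which is finite exactly when $p-1-\tfrac1{2-\alpha}<-1$, i.e.\ $p<\tfrac1{2-\alpha}$; this is the hypothesis $\alpha p>2p-1$.

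\textbf{Main obstacle.} The delicate step is the interface between the last two paragraphs: turning the \emph{a priori divergent} series $\sum_k\delta_k^\alpha r_k^{-2}$, together with the measures of the deep shells $D_\rho$ at all scales and the volume expansion accumulated between passages, into the single sharp tail bound $\mathrm{Leb}(\{I_\alpha>s\})\lesssim s^{-1/(2-\alpha)}$. This requires (i) the uniformity estimates of Section \ref{sect:volume-form} to bound "ever visits $D_\rho$" rather than "visits $D_\rho$ at a fixed time", and (ii) careful bookkeeping of the depth scales so that the geometric decay of the $\delta_k$ between Kasner eras exactly offsets the mild growth $r_k^{\alpha-2}$ of an individual deep contribution. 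The arithmetic threshold $\alpha p=2p-1$ is the break-even point of this bookkeeping, and verifying that the exponent $\kappa$ in $\omega_4(D_\rho)\le C\rho^{\kappa}$ (after the volume-expansion loss) lands exactly on it is the main computation to be done.
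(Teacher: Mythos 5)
Your route is genuinely different from the paper's, and the gap you flag in your final paragraph is real, though fillable. The paper never passes through a distribution-function estimate. After the same first step (decompose $I_\alpha$ into era contributions of order $\delta_k^\alpha/r_k^2$ on a Poincar\'e section $S_1$), it uses $p<1$ to apply subadditivity $\left(\sum_n a_n\right)^p\le\sum_n a_n^p$, then on each era applies H\"older with a free exponent $s\in(2p-1,\alpha p)$ to split $\left[\delta_i^\alpha/r_i^2\right]^p = \left(\delta_i/r_i\right)^s\cdot\delta_i^{\alpha p - s}r_i^{s-2p}$. The first factor is a sup that decays geometrically in the era index $n$ (by \eqref{eq:local-attract:delta-r}); the second factor has a uniform-in-$n$ $\omega_3$-integral over $S_1$, obtained by pushing $C$ forward by $\Phi^n$ (expansion gives $\int_C g\circ\Phi^n\,\omega_3\le\int_{S_1}g\,\omega_3$) and then computing the explicit integral in $(\Sigma_-,\beta_2,\beta_3)$-coordinates, which converges precisely when $\alpha p - s>0$ and $s-2p>-1$. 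The two open inequalities on $s$ pin down the admissible range $p<\frac{1}{2-\alpha}$. In other words, the geometric decay is isolated in the H\"older factor $(\delta_i/r_i)^s$, and the volume expansion is used only \emph{qualitatively} (to flatten the era sum onto $S_1$), never as a $q^{-k}$ summation.

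Your layer-cake argument would also close, but not quite as written, and the omitted work is exactly where the threshold arithmetic has to be done. The step ``$I_\alpha>s$ forces a visit to $D_\rho$ with $\rho\lesssim s^{-1/(2-\alpha)}$'' is a contrapositive estimate, not a largest-term extraction: one must observe that if $r_k\ge\rho$ for all $k$, then $\sum_k \delta_k^\alpha/r_k^2 \le \rho^{\alpha-2}\sum_k(\delta_k/r_k)^\alpha \le C'\rho^{\alpha-2}$, using the uniform geometric decay of $\delta_k/r_k$ over $M$ from \eqref{eq:local-attract:delta-r} to get $C'$ independent of $\bpt x_0$; that is the one place where the geometric decay is indispensable. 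Next, ``orbit ever visits $D_\rho$'' is bounded by $\sum_{k\ge 0}\omega_3\bigl(\Phi^{-k}(D_\rho\cap S_1)\bigr)\le\frac{q}{q-1}\omega_3(D_\rho\cap S_1)$, so the expansion summation is over a \emph{fixed} $\rho$ — your text correctly fixes a single $\rho$, which is what keeps this from needing any relation between $q$ and the decay rate. Finally the shell measure is $\omega_3(D_\rho\cap S_1)\lesssim\rho|\log\rho|^2$ (exponent $\kappa=1$ up to logs, not some other $\kappa$), and the log factors are subdominant in $\int_1^\infty s^{p-1-1/(2-\alpha)}|\log s|^2\,\dd s$. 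So your approach works; the trade-off is that the paper's H\"older trick makes every step mechanical while your route requires these two pieces of bookkeeping. A small consolation: your argument does not use $p<1$, so in principle it also handles $1\le p<\frac{1}{2-\alpha}$ when $\alpha>1$, a regime the paper's statement does not claim.
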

Theorem \ref{thm:horizon-formation-alpha-p} makes a much stronger claim than Theorem \ref{thm:horizon-formation}, even for $\alpha=1$: Local $L^p$-integrability is a sufficient condition for a.e.~finiteness, but very much not necessary. On the other hand, we are aware of no immediate physical interpretation of Theorem \ref{thm:horizon-formation-alpha-p}. 

The proof of Theorem \ref{thm:horizon-formation-alpha-p} won't rely on Theorem \ref{thm:horizon-formation}. Even though the proof of Theorem \ref{thm:horizon-formation} is therefore entirely optional, we nevertheless choose to state and prove Theorem \ref{thm:horizon-formation} separately, because we view it as more important, and can give a more geometric proof than for Theorem \ref{thm:horizon-formation-alpha-p}.
\begin{question}
Unfortunately, the case $\alpha=1=p=1$, i.e.~$I_1 \in L^1_{\textrm{loc}}$, is maddeningly out of reach of Theorem \ref{thm:horizon-formation-alpha-p}, which only provides $I_1\in L^{1-\epsilon}_{\textrm{loc}}$ for any $\epsilon>0$. An extension to $\alpha=p=1$ would imply \emph{finite expectation} of the particle horizon integral.

Is it possible to say something about $\alpha=p=1$? Maybe for special compact subsets $M\subset \mathcal M_{\pm\pm\pm}$?
\end{question}

\paragraph{Outline of the proofs.}
Let us now give a short overview over the remainder of this section. Our primary tool will be a volume-form $\omega_4$, which is \emph{expanded} under the flow $\phi$ of \eqref{eq:ode}. An alternative description would be to say that we construct a density function, such that $\phi$ is volume-expanding. The volume-form will be constructed and discussed in Section \ref{sect:volume-construct}. We will use some very basic facts from the intersection of differential forms, measure theory and dynamical systems theory, which are given in Appendix \ref{sect:general-vol-app} for the convenience of the reader.

We will use this expanding volume-form in order to prove our three Theorems in Section \ref{sect:volume:proofs}.

\subsection{Volume Expansion}\label{sect:volume-construct}
This section studies the evolution of phase-space volumes. Using logarithmic coordinates $\beta_i=-\log|N_i|$, the equations differential equations \eqref{eq:ode} \emph{without} the constraint \eqref{eq:constraint} yield the remarkably simple and controllable formula $\DD_t \omega_5 = 2N^2 \omega_5$ for the evolution of the five-dimensional Lebesgue-measure $\omega_5$ (with respect to $\Sigma_\pm, \beta_i$). 
This formula shows that the flow $\phi$ expands the volume $\omega_5$. Such a volume expansion is impossible for systems living on a manifold with finite volume; it is possible with logarithmic coordinates because these coordinates have pushed the attractor to infinity, and typical solutions escape to infinity in these coordinates. 

\paragraph{Volume expansion for the extended system (without constraint $G=1$).}
Consider coordinates $\beta_i$ given by
\[\begin{aligned}
\beta_i &= -\log |N_i|&                  N_i &= \signN_i e^{-\beta_i}\\
\dd \beta_i &= -\frac{\dd N_i}{N_i}& \dd N_i &= -\signN_i e^{-\beta_i}\dd\beta_i\\
\partial_{\beta_i} &= -N_i \partial_{N_i} & \partial_{N_i}&=-\signN_i e^{\beta_i}\partial_{\beta_i},
\end{aligned}\]
and consider the Lebesgue-measure with respect to the $\beta_i$ coordinates:
\begin{equation}\label{eq:omega5-def}
\omega_5 = |\dd \Sigma_+ \land \dd \Sigma_- \land \dd \beta_1 \land \dd \beta_2 \land \dd \beta_3|,
\end{equation}
which is given in $N_i$ coordinates by
\[\begin{aligned}
\omega_5 &= \left|\frac{-1}{N_1N_2N_3}\dd \Sigma_+ \land \dd \Sigma_- \land \dd N_1 \land \dd N_2 \land \dd N_3\right|.
\end{aligned}\]
Let $\lambda(\bpt x, t)$ denote the volume expansion for $\phi(\bpt x, t)$, i.e.
\[
\phi^*(\bpt x, t)\omega_5 = \lambda(\bpt x, t)\omega_5,
\]
where $\phi^*$ is the pull-back acting on differential forms. Hence, in $(\bpt \Sigma,\bpt \beta)$-coordinates, $\lambda(\bpt x, t)=\det \partial_x \phi(\bpt x, t)$, and, with $f$ denoting the vectorfield corresponding to \eqref{eq:ode}:
\[\begin{aligned}
\DD_t \phi^*(\bpt x, t)\omega_5 &= \left[\mathrm{tr}\,\partial_x \partial_t \phi(\bpt x, t)\right] \phi^*(\bpt x, t)\omega_5 \\
&= \left[\partial_{\Sigma_+}f_{\Sigma_+}+\ldots+\partial_{\beta_3}f_{\beta_3}\right] \lambda(\bpt x, t)\omega_5=2N^2(\phi(\bpt x, t))\lambda(\bpt x, t)\omega_5\\
\lambda(\bpt x, t) &= \exp\left[ 2\int_0^t N^2(\phi(\bpt x, s))\dd s\right].
\end{aligned}\]
The volume is really expanding: In most of the phase space, $N^2>0$, and always $N^2> -4|N_1N_2N_3|^{\frac 2 3}$, which has bounded integral.

\paragraph{Volume expansion on $\mathcal M$.}
We are not really interested in the behaviour of $\phi$ on $\RR^5$, and the measure $\omega_5$. Instead, we are interested in dynamics and measures on the set $\mathcal M=\{\bpt x\in \RR^5:\, G(\bpt x)=1\}$. We can get an induced measure on $\mathcal M$ by choosing a vectorfield $X:\RR^5\to T\RR^5$ such that $\DD_X G =1$ in a neighborhood of $\mathcal M$. Then we set
\begin{equation}\label{eq:omega4-def}
\begin{multlined}
\omega_4=\iota_{X}\omega_5,\\
\text{i.e.}\qquad\omega_4[X_1,\ldots, X_4]=\omega_5[X,X_1,\ldots X_4]\qquad\text{for}\quad X_1,\ldots, X_4\in T\mathcal M.
\end{multlined}\end{equation}
This induced volume is independent of the choice of $X$ (as long as $\DD_X G=1$), and fulfills (see Section \ref{sect:general-vol-app})
\[
\phi^*(\bpt x, t)\omega_4 = \lambda(\bpt x, t)\omega_4 = \frac{\phi^*(\bpt x, t)\omega_5}{\omega_5}\omega_4.
\]

\paragraph{Volume expansion for Poincar\'e-maps.}
We consider the volume-form $\omega_3=\iota_f \omega_4$, where $f$ is the vectorfield \eqref{eq:ode}. By invariance of $f$ under $\phi$, we again have $\phi^*(\bpt x, t)\omega_3=\lambda(\bpt x, t)\omega_3$. 

Let $U\subseteq \mathcal M$ open and $T:U\to \RR$ differentiable. Then the map $\Phi_T:U\to \mathcal M$ with $\Phi_T(\bpt x)=\phi(\bpt x, T(\bpt x))$ has $\Phi_T^*\omega_3=\lambda(\bpt x, T(\bpt x))\omega_3|U$ (also see Section \ref{sect:general-vol-app}).

This especially holds when $S\subset M$ is a Poincar\'e-section and $\Phi_S=\Phi_{T_S}$ is the corresponding Poincar\'e-map, i.e.~when $S\subset \mathcal M$ is a submanifold of codimension one, which is transverse to $f$, and $T=T_S(\bpt x)= \inf\{t>0:\,\phi(\bpt x, t)\in S\}$.

If $S\subseteq \mathcal M$ is a Poincar\'e-section and $K\subseteq S$ is a set with $|K|_{\omega_3}=\int_{S}\mathrm{id}_{K}\omega_3 =0$, then, by Fubini's Theorem, $|\phi(K, \RR)|_{\omega_4}=0$. The measure $\omega_4$ is absolutely bi-continuous with respect to the ordinary Lebesgue measure, i.e.~the notions of sets of measure zero coincide for the ordinary Lebesgue-measure and $\omega_4$.

This especially applies to the boundary $\partial S$ with $|\partial S|=0$. Therefore, if $S_0,S$ are two Poincar\'e-sections and $K\subseteq S_0\subseteq M$ is a set of initial conditions, such that $T_S(\bpt x)<\infty$ for almost every $\bpt x\in K$, then
\[
\int_{\Phi_S(K)\subseteq S} f(\bpt x)\omega_3 = \int_{K\subseteq S_0}f(\Phi_S^{-1}(\bpt x))\lambda(\bpt x, T_S(\bpt x))\omega_3,
\]
for any $L^\infty$ function $f$. This is because the map $\Phi_S$ is, by assumption, sufficiently smooth almost everywhere. 

Our primary source of ``almost everywhere'' statements is the following:
\begin{lemma}\label{lemma:vol-dichotomy}
Let $S\subset \mathcal M$ be a Poincar\'e-section with $N^2>\epsilon>0$, and let $K\subseteq S \subset \mathcal M$ be forward invariant, i.e. $\Phi_S$ is well-defined in $K$ and $\Phi_S(K)\subseteq K$. 

Then either $|K|_{\omega_3}=0$ or $|K|_{\omega_3}=\infty$. If $|K|_{\omega_3}=0$, then $|\phi(K,\RR)|_{\omega_4}=0$.
\end{lemma}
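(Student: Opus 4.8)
The plan is to exploit the volume expansion $\Phi_S^*\omega_3 = \lambda(\bpt x, T_S(\bpt x))\,\omega_3|_K$ together with a uniform lower bound on the expansion factor. First I would observe that on a Poincar\'e-section $S$ with $N^2 > \epsilon$, the return time $T_S$ is bounded below by some $\tau_0 = \tau_0(\epsilon) > 0$: indeed, the vectorfield $f$ is bounded on the relevant (bounded) part of phase space and $S$ is transverse to $f$, so a trajectory needs a definite amount of time to leave $S$ and come back. Consequently, using $N^2(\phi(\bpt x,s)) > -4|N_1N_2N_3|^{2/3}(\phi(\bpt x,s))$ and the fact that $|N_1N_2N_3|$ is non-increasing with $\int_0^\infty |N_1N_2N_3|^{2/3}\,\dd s$ controlled (or, more simply, since $S$ lies in the region $N^2>\epsilon$ and $N^2$ varies continuously, the trajectory spends a definite fraction of each return interval in $\{N^2 > \epsilon/2\}$), one gets
\[
\lambda(\bpt x, T_S(\bpt x)) = \exp\left[2\int_0^{T_S(\bpt x)} N^2(\phi(\bpt x, s))\,\dd s\right] \ge \kappa
\]
for some constant $\kappa > 1$ independent of $\bpt x \in K$. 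The cleanest route is to shrink $S$ if necessary, or simply to note $N^2 > \epsilon$ holds in a full neighborhood of $S$ in $\mathcal M$ by continuity, so that $\int_0^{T_S} N^2\,\dd s \ge \epsilon \cdot (\text{time spent near }S) \ge \epsilon\,\tau_1 > 0$.

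Next I would run the standard argument for expanding maps. Suppose $0 < |K|_{\omega_3} < \infty$. Since $\Phi_S(K) \subseteq K$ and $\Phi_S$ is (almost everywhere) a diffeomorphism onto its image with Jacobian $\lambda(\bpt x, T_S(\bpt x)) \ge \kappa$, the change-of-variables formula gives
\[
|K|_{\omega_3} \ge |\Phi_S(K)|_{\omega_3} = \int_{K} \lambda(\bpt x, T_S(\bpt x))\,\omega_3 \ge \kappa\,|K|_{\omega_3}.
\]
With $\kappa > 1$ and $|K|_{\omega_3} \in (0,\infty)$ this is a contradiction; hence $|K|_{\omega_3}$ is either $0$ or $\infty$. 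The inequality $|\Phi_S(K)|_{\omega_3} \le |K|_{\omega_3}$ uses only $\Phi_S(K)\subseteq K$, and the change-of-variables step is legitimate because $\Phi_S$ is smooth off a null set (the return time function is smooth where $\phi(\cdot, T_S(\cdot))$ meets $S$ transversally, and the exceptional set, where it is tangent, has $\omega_3$-measure zero as noted in the paragraph preceding the lemma). For the final clause, if $|K|_{\omega_3} = 0$ then, since $\phi(K,\RR) = \phi(K, [0,\infty))$ up to the pre-$S$ piece which is itself swept out from a null set, Fubini's theorem in the form already cited ($|\phi(K,\RR)|_{\omega_4} = 0$ whenever $|K|_{\omega_3} = 0$) gives the conclusion directly.

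The main obstacle I anticipate is the regularity bookkeeping for $\Phi_S$ and the return time $T_S$: one must be careful that $T_S$ is finite and smooth $\omega_3$-almost everywhere on $K$ (forward invariance of $K$ under $\Phi_S$ is exactly the hypothesis that makes $T_S < \infty$ on $K$, but smoothness requires transversality, which can fail on a set that still must be shown negligible), and that the sub-additivity/change-of-variables identities survive on such a set. The cleanest way to handle this is to invoke the general measure-theoretic facts collected in Appendix \ref{sect:general-vol-app}, which already package $\Phi_T^*\omega_3 = \lambda(\bpt x, T(\bpt x))\,\omega_3$ for differentiable $T$, together with the observation (made just before the lemma) that the tangency locus lies in $\partial S$-type sets of measure zero. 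Everything else is the elementary expanding-map dichotomy.
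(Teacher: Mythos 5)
Your proposal is correct and follows essentially the same route as the paper: a uniform lower bound $q>1$ on the Poincar\'e-return expansion factor $\lambda$, forward invariance giving $|K|_{\omega_3}\ge |\Phi_S(K)|_{\omega_3}=\int_K\lambda\,\omega_3\ge q\,|K|_{\omega_3}$, hence the dichotomy, with the Fubini argument supplying the final clause; your extra remarks on the regularity of $T_S$ and $\Phi_S$ spell out bookkeeping the paper leaves implicit. One caution on the ``cleanest route'' you offer for $q>1$: the step $\int_0^{T_S}N^2\,\dd s\ge\epsilon\cdot(\text{time near }S)$ is not literally valid, because $N^2$ can go negative away from $S$ (it is only bounded below by $-4|N_1N_2N_3|^{2/3}$), so the positive contribution near $S$ does not by itself dominate. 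The correct version, which you in fact already set up in the same paragraph, is to control the negative contribution by $\int |N_1N_2N_3|^{2/3}\,\dd t$ via the exponential decay of Lemma~\ref{farfrom-A:lemma:uniform}, and to observe that for the sections actually used (small sections close to $\mathcal A$) this negative term can be made smaller than the guaranteed positive $\ge\epsilon\tau_1$ near $S$; do not let the shortcut stand without that piece.
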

\begin{proof}
Since $\Phi_S(K)\subseteq K$, we have
\[|K|_{\omega_3}\ge |\Phi_S(K)|_{\omega_3} = \int_S \mathrm{id}_{\Phi_S(K)}\omega_3 = \int_S \mathrm{id}_K \phi_S^*\omega_3 = \int_S \mathrm{id}_K \lambda \omega_3 \ge q |K|_{\omega_3},\]
where $q=\inf_{\bpt x \in K}\lambda(\bpt x, T_S(\bpt x))>1$. Therefore, either $|K|_{\omega_3}=0$ or $|K|_{\omega_3}=\infty$.
If $|K|_{\omega_3}=0$ then this implies $|\phi(K,[-h,h])|_{\omega_4}=0$ for small $h>0$ by Fubini's Theorem and hence $|\phi(K,\RR)|_{\omega_4}=0$.
\end{proof}
%
%
%
%
\subsection{Proofs of the main Theorems}\label{sect:volume:proofs}
We will now use the $\omega_3$-expansion between Poincar\'e-sections in order give proofs of the main results.
%

%
\begin{proof}[Proof of Theorem \ref{thm:b8-attractor-global-genericity} (Bianchi \textsc{VIII} global attractor genericity)]\label{proof:thm:b8-attractor-global-genericity}
\textsc{Attract} holds for an open set of initial conditions (by virtue of continuity of the flow). Since \textsc{Taub} can only happen on an embedded submanifold of lower dimension, it suffices to prove that \textsc{Except} happens only for a set of initial conditions with Lebesgue measure zero.

Let \textsc{Except} also denote the set of initial conditions for which the case \textsc{Except} holds. Without loss of generality, we will consider the case $\mathcal M_{+-+}$ and trajectories bouncing between $+\taubp_1$ and $-\taubp_1$.

We choose a small Poincar\'e-section 
\[S\subseteq\{\bpt x\in\mathcal M:\,N_1 = \textrm{const}=h,\, r_1 \le \epsilon,\, \delta_1\le \epsilon\},\]
intersecting the unique heteroclinic $W^u(-\taubp_1)$, which connects $-\taubp_1$ to $\taubp_1$, near $\taubp_1$, such that $S$ is a graph over $(\Sigma_-, N_2, N_3)$. Let $T_S:S\to (0,\infty]$ be the (partially defined) recurrence time and $\Phi:S\to S$ with $\Phi(\bpt x)= \phi(\bpt x, T(\bpt x))$ be the (partially defined) Poincar\'e-map to $S$.

\textsc{Except} is (by definition) invariant under the flow and $\Phi(\textsc{Except}\cap S)\subseteq \textsc{Except}\cap S$. Hence $\textsc{Except}\cap S$ has either vanishing or infinite $\omega_3$-volume, by Lemma \ref{lemma:vol-dichotomy}. Therefore, it suffices to prove $|\textsc{Except}\cap S|_{\omega_3}<\infty$.

Recalling the local attractor Theorem \ref{thm:local-attractor}, we can note $\delta_1 > \widetilde \epsilon r_1$ for every $\bpt x\in \textsc{Except}\cap S$. This allows us to estimate
\begin{subequations}
\label{eq:bad-finite-measure}\begin{align}
|\textsc{Except}\cap S|_{\omega_3} &\le \left|\{\bpt x\in S:\, \delta_1 > \widetilde \epsilon r_1\}\right|_{\omega_3}
\le \left|\{\bpt x\in S:\, \delta_1 > \widetilde \epsilon |\Sigma_-|\}\right|_{\omega_3}\\
&\le \left|\{\bpt x\in S:\,\beta_2+\beta_3 > C+C|\log|\Sigma_-||\}\right|_{\omega_3}\\
&= \int_{\{\bpt x\in S:\, \beta_2+\beta_3 > C+C|\log|\Sigma_-||\}} \omega_3[\ldots] \ |\dd \Sigma_-\land \dd \beta_2 \land\dd \beta_3|\label{eq:bad-finite-measure:last-s}\\
&\le C \int_{\{\bpt x\in S:\, \beta_2+\beta_3 > C+C|\log|\Sigma_-||\}}|\dd \Sigma_-\land \dd \beta_2 \land\dd \beta_3|\label{eq:bad-finite-measure:last-m1}\\
&\le C+ C \int\left|\log |\Sigma_-|\right|^2 \dd \Sigma_- <\infty\label{eq:bad-finite-measure:last},
\end{align}
\end{subequations}
where we integrated $\beta_2,\beta_3$, using $\beta_2,\beta_3\ge 0$ from \eqref{eq:bad-finite-measure:last-m1} to \eqref{eq:bad-finite-measure:last}. In order to go from \eqref{eq:bad-finite-measure:last-s} to \eqref{eq:bad-finite-measure:last-m1}, we used that (in $S$) $N_1=\mathrm{const}=h$ and $\Sigma_+=\Sigma_+(\Sigma_-, \beta_2,\beta_3)$ and 
\[\begin{aligned}
\omega_3 &= |\dd \Sigma_-\land \dd \beta_2 \land \dd \beta_3|\\
&\qquad\cdot\,\left|\omega_5\left[f,
|\partial_{\beta_1} G|^{-1} \partial_{\beta_1},
\partial_{\Sigma_-}+\partial_{\Sigma_-}\Sigma_+\partial_{\Sigma_-}, 
\partial_{\beta_2}+\partial_{\beta_2}\Sigma_+\partial_{\Sigma_+},
\partial_{\beta_3}+\partial_{\beta_3}\Sigma_+\partial_{\Sigma_+}\right]\right|\\
&\le C |\dd \Sigma_-\land \dd \beta_2 \land \dd \beta_3|.
\end{aligned}\]
This can be seen by noting $\partial_{\beta_1} G  = 2 N_1^2 - 2 N_1N_- \ge C>0$ if $\epsilon>0$ is small enough and likewise $f_{\Sigma_+}> C >0$, $|\partial_{\Sigma_-, \beta_2,\beta_3}\Sigma_+| \le \mathcal O(\epsilon)$.
\end{proof}
We will now give the first, more geometric proof of Theorem \ref{thm:horizon-formation}. Readers who prefer the arithmetic variant can skip ahead to the proof of Theorem \ref{thm:horizon-formation-alpha-p}, page \pageref{proof:thm:horizon-formation-alpha-p}. The theorem follows from the following two Lemmas:

\begin{lemma}\label{lemma:badrecurrent-zeroset}
Fix $i\in\{1,2,3\}$, without loss of generality $i=1$. 
Consider a small Poincar\'e-section $S$ as in the proof of Theorem \ref{thm:b8-attractor-global-genericity}, i.e. $S\subseteq \mathcal M_{\signN}$ intersecting the unique heteroclinic connecting $-\taubp_1$ to $\taubp_1$,  near 
$\taubp_1$ such that Proposition \ref{prop:neartaub:main} holds and $S$ is a smooth graph over $\Sigma_-, N_2, N_3$.

Denote
\[\begin{aligned}
\textsc{Bad}&=\{\bpt x\in S:\, \delta_1 > |\Sigma_-|^4\}\\
\textsc{BadRecurrent}&=
\{\bpt x\in S:\, \bpt \Phi_S^k(x)\in\textsc{Bad}\, \text{for infinitely many $k$}\},
\end{aligned}\]
where $\Phi_S$ is the Poincar\'e-map to $S$. Then
\(
\left|\textsc{BadRecurrent}\right|_{\omega_3}=0.
\)
\end{lemma}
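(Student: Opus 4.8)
The plan is to apply the volume-dichotomy Lemma \ref{lemma:vol-dichotomy} to a suitable forward-invariant subset of $S$ and show it has finite $\omega_3$-measure, hence zero. First I would observe that $\textsc{BadRecurrent}$ is \emph{not itself} forward-invariant under $\Phi_S$, so the dichotomy cannot be applied directly; instead, I would pass to the set of points whose \emph{entire} forward $\Phi_S$-orbit lies in $\textsc{Bad}$, or rather to a ``tail'' construction. Concretely, write $\textsc{BadRecurrent} = \bigcap_{m\ge 0}\bigcup_{k\ge m}\Phi_S^{-k}(\textsc{Bad})$. Each set $K_m := \bigcup_{k\ge m}\Phi_S^{-k}(\textsc{Bad})$ is forward-invariant in the sense that $\Phi_S(K_m)\subseteq K_{m-1}$, and $\textsc{BadRecurrent}\subseteq K_m$ for every $m$. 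The cleanest route: show $|\textsc{Bad}|_{\omega_3}<\infty$ (this is the analogue of the estimate \eqref{eq:bad-finite-measure}, replacing the cone condition $\delta_1>\widetilde\epsilon r_1$ by $\delta_1>|\Sigma_-|^4$, which via $\beta_2+\beta_3 > C + 8|\log|\Sigma_-||$ and integrating out $\beta_2,\beta_3\ge 0$ gives $\le C + C\int |\log|\Sigma_-||^2\,\dd\Sigma_- <\infty$), and then use the volume expansion $\lambda(\bpt x, T_S(\bpt x))\ge q>1$ on $S$ (valid since $N^2>\epsilon$ near $+\taubp_1$ is \emph{not} automatic — one must choose $S$ so that the return passes through a region of definite $N^2$; this is where the Poincaré-section placement matters, exactly as in Lemma \ref{lemma:vol-dichotomy}).

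The key steps, in order: (1) Verify $|\textsc{Bad}|_{\omega_3}<\infty$ by the computation sketched above, using the Jacobian bound $\omega_3 \le C\,|\dd\Sigma_-\wedge\dd\beta_2\wedge\dd\beta_3|$ on $S$ established in the proof of Theorem \ref{thm:b8-attractor-global-genericity} and the change of variables $\delta_1 = 2\sqrt{|N_2N_3|} = 2e^{-(\beta_2+\beta_3)/2}$. (2) For the recurrence part, note $\textsc{BadRecurrent}\subseteq \Phi_S^{-k}(\textsc{Bad})$ for arbitrarily large $k$; since $\Phi_S$ expands $\omega_3$ by a factor $\ge q^j$ after $j$ returns (composing the per-return expansion), we get $|\Phi_S^{-k}(\textsc{Bad})\cap A|_{\omega_3}\le q^{-k}|\textsc{Bad}|_{\omega_3}\to 0$ for any fixed bounded piece $A$ — but one must be careful because $\Phi_S$ is only partially defined and the expansion factor degenerates far from $+\taubp_1$. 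The honest way: restrict to points for which the local attractor Theorem \ref{thm:local-attractor} holds, so that by \eqref{eq:local-attract:delta-r-lim} the trajectory returns to the neighborhood of $+\taubp_1$ infinitely often and each such return contributes a definite expansion factor $q>1$; then $|\textsc{BadRecurrent}|_{\omega_3}\le \liminf_k q^{-k}|\textsc{Bad}|_{\omega_3} = 0$. (3) Points where Theorem \ref{thm:local-attractor} fails eventually form, by Theorems \ref{thm:b9-attractor-global}/\ref{thm:b8-attractor-global} and Theorem \ref{thm:b8-attractor-global-genericity}, a set of measure zero (contained in $\mathcal T$ or in \textsc{Except}), hence contribute nothing.

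The main obstacle I anticipate is step (2): making rigorous that ``infinitely many returns to $\textsc{Bad}$'' composes into an unbounded expansion factor, despite $\Phi_S$ being only partially defined and the return time $T_S$ being unbounded. The clean fix is to choose $S$ small enough and located (as in the proof of Theorem \ref{thm:b8-attractor-global-genericity}) so that $N_1 = h = \mathrm{const}$ on $S$ and the constraint forces a lower bound on $N^2$ somewhere along each return loop — so each application of $\Phi_S$ multiplies $\omega_3$ by at least a fixed $q>1$ uniformly on the domain of $\Phi_S$. Then the argument $|\textsc{BadRecurrent}|_{\omega_3} = |\Phi_S^k(\Phi_S^{-k}(\textsc{BadRecurrent}))|_{\omega_3}\ge q^k |\Phi_S^{-k}(\textsc{BadRecurrent})|_{\omega_3}\ge q^k |\textsc{BadRecurrent}|_{\omega_3}$ forces $|\textsc{BadRecurrent}|_{\omega_3}\in\{0,\infty\}$ just as in Lemma \ref{lemma:vol-dichotomy}, and finiteness follows from $\textsc{BadRecurrent}\subseteq\textsc{Bad}$ together with step (1). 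A secondary subtlety is that $\textsc{Bad}$ as defined compares $\delta_1$ to $|\Sigma_-|^4$ rather than $r_1^4$; but on $S$ (near $+\taubp_1$, with $\delta_1/r_1$ small) one has $r_1 \le |\Sigma_-| + \delta_1 \le 2|\Sigma_-|$, so $\delta_1 > r_1^4$ is implied by $\delta_1 > 16|\Sigma_-|^4$, and the logarithmic integral still converges — this is a harmless constant adjustment.
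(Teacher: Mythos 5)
Your overall plan---bound $|\textsc{Bad}|_{\omega_3}<\infty$ by the change-of-variables estimate from Theorem~\ref{thm:b8-attractor-global-genericity}, then use the per-return volume expansion $q>1$---matches the paper's. But the assembly has several genuine errors. First, your opening observation that $\textsc{BadRecurrent}$ is \emph{not} forward-invariant is wrong. If $\Phi_S^k(\bpt x)\in\textsc{Bad}$ for infinitely many $k$, then all iterates $\Phi_S^j(\bpt x)$ are defined and $\Phi_S^k(\Phi_S(\bpt x))=\Phi_S^{k+1}(\bpt x)\in\textsc{Bad}$ for infinitely many $k$, so $\Phi_S(\textsc{BadRecurrent})\subseteq\textsc{BadRecurrent}$ (and in fact it is fully invariant). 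This is not a side remark: your ``main obstacle'' paragraph then applies $|\textsc{BadRecurrent}|=|\Phi_S^k(\Phi_S^{-k}(\textsc{BadRecurrent}))|\ge q^k|\textsc{BadRecurrent}|$, which is precisely the forward-invariance you said was missing. Second, step (2) uses the false containment $\textsc{BadRecurrent}\subseteq\Phi_S^{-k}(\textsc{Bad})$: a point of $\textsc{BadRecurrent}$ hits $\textsc{Bad}$ infinitely often but can be outside $\textsc{Bad}$ at any particular $k$. The correct containment is $\textsc{BadRecurrent}\subseteq\bigcup_{j\ge n}\Phi_S^{-j}(\textsc{Bad})$ for every $n$, which replaces your single term by a tail sum,
\[
|\textsc{BadRecurrent}|_{\omega_3}\le\sum_{j\ge n}\bigl|\Phi_S^{-j}(\textsc{Bad})\bigr|_{\omega_3}\le\sum_{j\ge n}q^{-j}|\textsc{Bad}|_{\omega_3}\longrightarrow 0\quad\text{as }n\to\infty,
\]
and this Borel--Cantelli-style tail estimate is in fact the paper's entire argument; it yields measure zero directly and does not invoke Lemma~\ref{lemma:vol-dichotomy} at all.

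Third, your finiteness step ``$\textsc{BadRecurrent}\subseteq\textsc{Bad}$'' is also false, for the same reason: a point of $\textsc{BadRecurrent}$ need not itself lie in $\textsc{Bad}$. So the dichotomy-based variant, as written, does not close; repairing it requires exactly the tail-sum bound above, at which point the dichotomy is redundant and you may as well read off measure zero directly. Finally, the asserted inequality $r_1\le|\Sigma_-|+\delta_1$ is false---$r_1^2=\Sigma_-^2+N_-^2$ and $|N_-|$ is not controlled by $\delta_1$---but also unnecessary: $\textsc{Bad}$ is defined directly via $|\Sigma_-|$, and the only comparison that matters (for Lemma~\ref{lemma:badrecurrent-small-int}) is the always-true $r_1\ge|\Sigma_-|$, which gives $\textsc{Bad}^{\,c}\subseteq\{\delta_1\le r_1^4\}$.
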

\begin{lemma}\label{lemma:badrecurrent-small-int}
Consider the setting of Lemma \ref{lemma:badrecurrent-zeroset}.

Let $\bpt x_0\in S\cap \textsc{Basin}$ such that $\bpt x_0\not\in \textsc{BadRecurrent}$. Then $\int_0^\infty \delta_1(\phi(\bpt x_0, t))\dd t<\infty$.
\end{lemma}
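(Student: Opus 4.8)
The plan is to decompose the forward orbit of $\bpt x_0$ into its successive \emph{Kasner eras} --- the maximal time intervals $[a_k,b_k]$ on which $\bpt x(t)\in B_{2\varepsilon_T}(\taubp_1)$ --- together with the complementary \emph{transit} intervals, and to bound the contribution of each piece to $\int_0^\infty\delta_1\,\dd t$ separately. Since $\bpt x_0\in\textsc{Basin}$, the local attractor Theorem \ref{thm:local-attractor} applies, so $\delta_1(t)\to 0$, $\tfrac{\delta_1}{r_1}(t)\to 0$, the orbit is eventually confined to $\textsc{Cap}\cup\textsc{Circle}$, and since $\bpt x_0\notin\textsc{BadRecurrent}$ there is $k_0$ with $\Phi_S^{k}(\bpt x_0)\notin\textsc{Bad}$ for all $k\ge k_0$; the initial finite piece of the orbit contributes a finite amount and may be ignored.

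First I would handle the transit intervals. On such an interval the orbit lies in $\textsc{Cap}$ (bounded duration by Lemma \ref{lemma:farfromtaub-cap}, $\delta_1$ changing by a bounded factor) or in $\textsc{Circle}\setminus B_{2\varepsilon_T}(\taubp_1)$, possibly entering $B_{2\varepsilon_T}(\taubp_m)$ for $m\in\{2,3\}$. In $\textsc{Circle}$ away from $\taubp_1$ the disc $\{|\bpt\Sigma-\tfrac12\taubp_1|<\tfrac12\}$ on which $\delta_1$ would increase --- see \eqref{eq:ode2-delta} --- meets the Kasner circle only at $\taubp_1$, so there $\DD_t\log\delta_1\le -\mu(\varepsilon_T)<0$, and near $\taubp_m$ ($m\ne 1$) the estimate \eqref{eq:neartaub:main:others} of Proposition \ref{prop:neartaub:main} with index $m$ gives $\delta_1$ essentially exponentially decreasing as well; hence $\int_{\text{transit}}\delta_1\,\dd t\le C\,\delta_1(\text{start of transit})$. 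Moreover, by Proposition \ref{prop:farfromtaub-main2} the orbit can re-enter $B_{2\varepsilon_T}(\taubp_1)$ only after passing once more near $-\taubp_1$ and then following the heteroclinic $-\taubp_1\to\taubp_1$, so each transit between two eras has duration bounded below and, crucially, contracts $\delta_1$ by at least a fixed factor $\theta_*<1$ (near $-\taubp_1$ one has $\DD_t\log\delta_1\approx -2$).

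Next come the era intervals. Fix $k\ge k_0$ and let $\tau_k\in[a_k,b_k]$ be the crossing of $S$. Since $\Phi_S^{k}(\bpt x_0)\notin\textsc{Bad}$ we have $\delta_1(\tau_k)\le|\Sigma_-(\tau_k)|^4\le r_1(\tau_k)^4$, hence $\tfrac{\delta_1}{r_1}(\tau_k)\le r_1(\tau_k)^3\le(2\varepsilon_T)^3$, far below $\epsilon_v$; together with Proposition \ref{prop:near-taub:qc} along the incoming heteroclinic (which already yields $\tfrac{\delta_1}{r_1}(a_k)\le C\epsilon_d<\epsilon_v$) this lets me invoke Proposition \ref{prop:neartaub:main} on all of $[a_k,b_k]$. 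Then \eqref{eq:neartaub:main:q}--\eqref{eq:neartaub:main:r} show that $\delta_1$, $r_1$, $\tfrac{\delta_1}{r_1}$ vary only by bounded factors inside $B_{2\varepsilon_T}(\taubp_1)$ (and $[a_k,\tau_k]$, an exponential approach to $\taubp_1$ along $W^u(-\taubp_1)$, has duration bounded in terms of $\varepsilon_T$), while integrating \eqref{eq:neartaub:main:delta} from $\tau_k$ onward gives $\int_{a_k}^{b_k}\delta_1\,\dd t\le C\,r_1(\tau_k)^2\delta_1(\tau_k)$. Using $\delta_1\le r_1^4$, i.e.\ $r_1^2\ge\sqrt{\delta_1}$, together with $r_1\le 1$, one has $r_1^2\delta_1\le \delta_1/r_1^2\le\sqrt{\delta_1}$, so the $k$-th era contributes at most $C\sqrt{\delta_1(\tau_k)}$. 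Finally, \eqref{eq:neartaub:main:delta} caps the growth of $\delta_1$ across each era by the fixed factor $C_0$, while each intervening transit multiplies $\delta_1$ by at most $\theta_*$; if the parameters are chosen so that $C_0\theta_*<1$ (legitimate, since $\theta_*$ shrinks as $\epsilon_d/\epsilon_N$ does, a longer stay near $-\taubp_1$ meaning more decay), the values $\delta_1(a_k)$, hence $\delta_1(\tau_k)$ and the transit-start values, decay geometrically in $k$; summing the transit bound $C\,\delta_1(\cdot)$ and the era bound $C\sqrt{\delta_1(\tau_k)}$ over the resulting geometric series gives $\int_0^\infty\delta_1\,\dd t<\infty$.

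The main obstacle will be making the last two points quantitative: that re-entry into $B_{2\varepsilon_T}(\taubp_1)$ genuinely forces a full passage near $-\taubp_1$ with a \emph{uniform} contraction $\theta_*<1$ of $\delta_1$ across the transit (so that the net era-plus-transit factor is $<1$ and the geometric decay is honest), and that Proposition \ref{prop:neartaub:main} really governs \emph{every} late era. This is exactly where both hypotheses enter: $\bpt x_0\in\textsc{Basin}$ keeps $\tfrac{\delta_1}{r_1}$ below $\epsilon_v$ so the near-$\taubp_1$ growth is only $C_0$, and $\bpt x_0\notin\textsc{BadRecurrent}$ upgrades this at the $S$-crossings to $\delta_1\le r_1^4$, which is what converts the otherwise merely bounded per-era contribution $r_1^2\delta_1$ into the summable quantity $\sqrt{\delta_1}$.
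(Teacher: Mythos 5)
Your overall route coincides with the paper's: decompose the forward orbit by the Poincar\'e returns to $S$ (your eras and transits are a slightly finer cut of the same intervals), bound each era's contribution via Proposition~\ref{prop:neartaub:main}, convert it to $\sqrt{\delta_1(\tau_k)}$ using the $\textsc{Bad}$ exclusion, and sum a geometric series because $\delta_1$ contracts by a fixed factor per return. Your handling of the transit intervals, your re-verification that Proposition~\ref{prop:neartaub:main} applies at the start of each era via Proposition~\ref{prop:near-taub:qc} and $\bpt x_0\in\textsc{Basin}$, and your argument for the per-return contraction of $\delta_1$ are all sound and match what the paper uses implicitly.

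There is, however, one substantive error in the per-era integral bound, and it is worth flagging because it silently makes the hypothesis $\bpt x_0\notin\textsc{BadRecurrent}$ redundant in your write-up. You integrate \eqref{eq:neartaub:main:delta} as printed and obtain $\int_{a_k}^{b_k}\delta_1\,\dd t\le C\,r_1^2(\tau_k)\,\delta_1(\tau_k)$. But the rate in \eqref{eq:neartaub:main:delta} is misprinted: the proof of that inequality shows $\DD_t\log\delta_1\le -0.1\,r_1^2+\mu$ with $\int\mu\le\log C_0$, and combined with \eqref{eq:neartaub:main:r} this yields $\delta_1(t_2)\le C_0\,\delta_1(t_1)\exp\bigl(-c\,r_1^{2}(t_1)\,(t_2-t_1)\bigr)$ --- the rate is proportional to $r_1^2(t_1)$, \emph{not} to $1/r_1^2(t_1)$. (Near the Taub line $r_1\approx0$ the decay must be slow, not fast; only the former is consistent with that.) Integrating the corrected bound gives the per-era contribution $C\,\delta_1(\tau_k)/r_1^2(\tau_k)$, which is exactly the quantity the paper's strategy discussion and the proof of this Lemma use. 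With the bound you wrote, $r_1^2\delta_1\le\delta_1\le\sqrt{\delta_1}$ holds trivially on any $\textsc{Basin}$ orbit, so $\textsc{BadRecurrent}$ would never enter --- you would have proved that \emph{every} orbit in $\textsc{Basin}$ has finite integral, which is strictly stronger than Theorem~\ref{thm:horizon-formation} and not what is claimed (indeed the paper explicitly leaves open whether non-Taub orbits with infinite integral exist). Your passing chain $r_1^2\delta_1\le\delta_1/r_1^2\le\sqrt{\delta_1}$ happens to route through the correct quantity, so the final numerical bound $C\sqrt{\delta_1(\tau_k)}$ per era is right; but the role of $\delta_1<r_1^4$ is to tame the correct contribution $\delta_1/r_1^2$, and you should state the per-era bound with the $r_1^{-2}$ in the right place.
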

\begin{proof}[Proof of Theorem \ref{thm:horizon-formation} (almost sure formation of particle horizons)]
Follows trivially from Lemma \ref{lemma:badrecurrent-small-int} and Lemma \ref{lemma:badrecurrent-zeroset}.
\end{proof}
\begin{proof}[Proof of Lemma \ref{lemma:badrecurrent-small-int}]
Let $\bpt x_0\not\in \textsc{BadRecurrent}$, and let $(T_n)_{n\in\NN}\subseteq \RR$ and $(\bpt x_n)_{n\in\NN}\subseteq S$ be the recurrence times and points in $S$, i.e.~$\bpt x_{n+1}=\Phi_S(\bpt x_n)=\phi(\bpt x_{n}, T_n)$, $T_n = T_S(\bpt x_{n-1})$. By assumptions, we have some $N>0$ such that, for all $n\in\NN$, $\bpt x_{N+k}\not\in\textsc{Bad}$. We can estimate
\[\begin{aligned}
\int_0^\infty\delta_1(\phi(\bpt x_0, t))\dd t &= \int_{0}^{T_n}\delta_1(\phi(\bpt x_0, t))\dd t + \sum_{n=N}^{\infty}\int_{0}^{T_n}\delta_1(\phi(\bpt x_{n}, t ))\dd t\\
&\le C(\bpt x_0) + \sum_{n=N}^{\infty}\int_{0}^{T_n}C\exp\{-\frac{C}{r_1^2(\bpt x_n)}\}\delta_1(\bpt x_n)\dd t\\
& \le C(\bpt x_0)+ \sum_{n=N}^{\infty}C\frac{\delta_1}{r_1^2}(\bpt x_n)\\
& \le C(\bpt x_0)+ \sum_{n=N}^{\infty}C\sqrt{\delta_1}(\bpt x_n) <\infty,
\end{aligned}\]
where we used Proposition \ref{prop:neartaub:main} and $\delta_1 < r_1^4$ and the fact that $\delta_1(\bpt x_{n+1})< \frac 1 2 \delta_1(\bpt x_n)$.
\end{proof}
\begin{proof}[Proof of Lemma \ref{lemma:badrecurrent-zeroset}]
Analogously to the proof of Theorem \ref{thm:b8-attractor-global-genericity}, we have $|\textsc{Bad}|_{\omega_3}<\infty$. Then, we can write
\[\begin{aligned}
\left|\textsc{BadRecurrent}\right|_{\omega_3}&=\left|\bigcap_{n\in\NN}\bigcup_{k\ge n} \Phi^{-k}(\textsc{Bad})\right|_{\omega_3}\\
&\le \lim_{n\to 0} \sum_{k\ge n}\left|\Phi^{-k}(\textsc{Bad})\right|_{\omega_3} 
\le \lim_{n\to 0} \sum_{k\ge n}q^{-k}\left|\textsc{Bad}\right|_{\omega_3}<\infty,
\end{aligned}\]
where $q=\inf \{\lambda(\bpt x, T_S(\bpt x)): \bpt x\in S\}>1$.
\end{proof}
\begin{proof}[Proof of Theorem \ref{thm:horizon-formation-alpha-p} ($L^p$ estimates for the generalized localization integral)]\label{proof:thm:horizon-formation-alpha-p}
The claim for $\alpha\in [2,\infty)$ follows trivially from Theorem \ref{thm:local-attractor}. For the other case, we restrict our attention without loss of generality to $\delta_1$. Furthermore, we can without loss of generality assume that we start with $C\subseteq S$ for some Poincar\'e section and estimate the integral with respect to $\omega_3$. Recall the construction in the proof of Theorem \ref{thm:b8-attractor-global-genericity} with a Poincar\'e-section $S_1$ near the heteroclinic $-\taubp_1\to \taubp_1$. We can estimate, for some positive $s\in (2p-1, \alpha p)$: 
\begin{subequations}
\begin{align*}
\int_{\bpt x\in C} \biggl[\int_0^\infty \delta_i^\alpha(&\phi(\bpt x, t))\dd t\biggr]^p \omega_3
= \int_{\bpt x\in C} \left[\sum_{n} \int_{T_{n}}^{T_{n+1}}\delta_i^\alpha(\phi(\bpt x, t))\dd t\right]^p \omega_3\\
&\le \sum_{n} \int_{\bpt x\in C} \left[\int_{T_{n}}^{T_{n+1}}\delta_i^\alpha(\phi(\bpt x, t))\dd t\right]^p \omega_3\numberthis\label{eq:foobar:p}\\
&\le C\sum_{n} \int_{\bpt x\in C} \left[\frac{\delta_i^\alpha}{r_{i}^{2}}(T_n)\right]^p \omega_3\\
&= C\sum_{n} \int_{\bpt x\in C}  \left[ \delta_i^{\alpha p -s} r_i^{-2p +s} \left(\frac{\delta_i}{r_i}\right)^{s}\right](T_n)\omega_3\\
&\le C\sum_{n}\, \sup_{\bpt x\in C} \left(\frac{\delta_i}{r_i}\right)^{s}(T_n)\,\,\cdot\,\,  \int_{\bpt x\in C} \left[\delta_i^{\alpha p -s} r_i^{-2p +s}\right] (T_n)\omega_3.\numberthis\label{eq:foobar:hoelder}
\end{align*}
\end{subequations}

We have used $p<1$ in order to split the integral in \eqref{eq:foobar:p} and the H\"older inequality in \eqref{eq:foobar:hoelder}.
 We continue the estimates by noting that $\sup_{\bpt x\in C}\frac {\delta_i} {r_i}(T_n)$ decreases exponentially in $n$. Hence we only need to bound the second factor. This can be done by using $\alpha p -s>0$ and $-2p+s > -1$ in order to see
\[\begin{aligned}
\int_{\bpt x\in C} \left[\delta_i^{\alpha p -s} r_i^{-2p +s}\right] (T_n)\omega_3 
&\le C \int_{S_1} \left[\delta_i^{\alpha p -s} r_i^{-2p +s}\right] \omega_3\\
&\le C \int_{-0.1}^{0.1} \left[\int_{\beta_2,\beta_3\ge 0} e^{-\frac{(\beta_2+\beta_3)(\alpha p-s)}{2}}|\Sigma_-|^{-2p+s} \dd \beta_2\land\dd \beta_3  \right] \dd \Sigma_-\\
&\le C \int_{-0.1}^{0.1}|\Sigma_-|^{-2p+s} \dd \Sigma_-  < \infty.
\end{aligned}\]
\end{proof}

\section{Physical Properties of Solutions for Bianchi \textsc{VIII} and \textsc{IX}}\label{sect:gr-phys-interpret}
We will now use the results of this work in order to describe some physical properties of Bianchi spacetimes. Recall Section \ref{sect:equations}, where we describe Bianchi spacetimes in terms of the Wainwright-Hsu equations. In Section \ref{sect:append:derive-eq}

\paragraph{Bounded life-time.}
Since the mean curvature $H$ corresponds to the time-derivative of the spatial volume form $\sqrt{g_{11}g_{22}g_{33}}$, the universe described by such a metric is contracting for $H<0$. Physically, we are interested in the behaviour towards the initial (big bang) singularity; this setting is time-reversed to physical time-variables, and we should look at the behaviour of solutions for $t\to +\infty$. Since $|H|$ is at least uniformly exponentially growing, we can immediately see \( \textsc{EigenFuture} = \int_{0}^\infty \sqrt{-g_{00}}\dd t <\infty,\) that is, the universe has only a finite (eigen-) lifetime until $H$ blows up and a singularity occurs. In our coordinates, this singularity is placed at $t=+\infty$.

A priori, we cannot know whether this singularity is a physical singularity (with curvature blow-up) or just a coordinate singularity (it might be that it is just our coordinate system, which blows up, while the actual space-time remains regular). In the case of Bianchi \textsc{VIII} and \textsc{IX}, the singularity is physical (with curvature blow-up). 
It has been proven in \cite{ringstrom2001bianchi} that the singularity is physical and curvature blows up. This is done by considering the so-called Kretschmann scalar
$\kappa = \sum_{\alpha,\beta,\gamma,\delta}R_{\alpha,\beta}^{\ \ \delta,\gamma} R_{\delta,\gamma}^{\alpha,\beta}$ and showing that $\lim_{t\to\infty}\kappa(t)=\infty$. We refer to \cite{ringstrom2001bianchi} for the details.
\paragraph{Bounded spatial metric coefficients.}
Now we restrict our attention to the case of Bianchi \textsc{IX} and \textsc{VIII}, where all $\hat n_i\neq 0$. 

\noindent The coefficients $g_{ii}=\frac 1 {48}\frac {|N_i|}{H^2 |N_1N_2N_3|}$ stay bounded: We know, ssing the global attractor Theorems \ref{thm:b9-attractor-global} and \ref{thm:b8-attractor-global}, that the $|N_i|$ stay bounded. We can compute: 
\[
\DD_t \log |H^2N_1N_2N_3|= -3\Sigma^2 + 1 + 2\Sigma^2 = N^2 \ge -4|N_1N_2N_3|^{\frac 2 3},
\]
which shows that $|H^2N_1N_2N_3|$ stays bounded away from zero, using Lemma \ref{farfrom-A:lemma:uniform}. Hence all three $g_{ii}$ stay bounded for $t\to+\infty$. Indeed, since $N^2>\epsilon>0$ for large amounts of time, we can conclude $\lim_{t\to+\infty} g_{ii}(t)=0$.

\paragraph{Particle Horizons.}
Recall question of particle horizons from the introduction, and the definition of communication cones \eqref{eq:intro:comcone}, which we here adjust to match our convention that the big bang singularity is situated in the future:
\[\begin{aligned}
&\text{Singularity directed light cone of $\bpt p$:}\\
&\quad J^-(\bpt p) = \{\bpt q:\, \text{there is } \gamma:[0,1]\to M\,\,\text{with}\, \gamma(0)=\bpt p, \gamma(1)=\bpt q,\,\text{time-like future directed}\}\\
&\text{Non-singularity directed light cone of $\bpt p$:}\\
&\quad J^+(\bpt p) = \{\bpt q:\, \text{there is } \gamma:[0,1]\to M\,\,\text{with}\, \gamma(0)=\bpt p, \gamma(1)=\bpt q,\,\text{time-like past directed}\}\\
&\text{Communication cone of $\bpt p$:}&\\
&\quad \phantom{\partial}J^+(J^-(\bpt p)) = \bigcup_{\bpt q \in J^-(\bpt p)}J^+(\bpt q)\\
&\text{Cosmic horizon of $\bpt p$:}&\\
&\quad \partial J^+(J^-(\bpt p)) = \text{the topological boundary of the past communication cone.}
\end{aligned}\]
We can now relate the question of particle horizons with our estimates on $\int \sqrt{|N_iN_j|}(t) \dd t$. This gives the physical interpretation of Theorem \ref{thm:horizon-formation}. Remember that time is oriented such that the big bang singularity is situated in the future at $t=\infty$.
\begin{lemma}\label{lemma:horizon-integral}
There is a constant $C>0$, such that, for Bianchi \textsc{IX} and \textsc{VIII} vacuum spacetimes $M$, we can estimate for $\bpt p\in M$ and $t_0\ge t(\bpt p)$
\begin{equation}\label{eq:comcone-estimate}\begin{aligned}
\mathrm{diam}_h \left[J^-(\bpt p)) \cap \{\bpt q\in M: t(\bpt q)=t_0\}\right] &\le C\int_{t(\bpt p)}^{t_0}  \max_{j\neq k} \sqrt{|N_jN_k|}(t)\dd t\\
\mathrm{diam}_h \left[J^+(J^-(\bpt p)) \cap \{\bpt q\in M: t(\bpt q)=t(\bpt p)\}\right] &\le C \int_{t(\bpt p)}^\infty \max_{j\neq k} \sqrt{|N_jN_k|}(t)\dd t,
\end{aligned}\end{equation}
where the diameter is measured with the symmetry metric $h$ given on the surfaces of homogeneity $\{t=\mathrm{const}\}$ by $h=\omega_1 \otimes \omega_1 + \omega_2\otimes \omega_2 + \omega_3\otimes \omega_3$. 
Furthermore, suppose that $M$ is a spacetime corresponding to a Bianchi \textsc{VIII} or \textsc{IX} solution with $\int_0^\infty \delta_i(t)\dd t <\infty$ for all $i\in\{1,2,3\}$, as in the conclusion of Theorem \ref{thm:horizon-formation}. Use the shorthand $J^+(J^-(t_0))=J^+(J^-(\bpt p))$ for some $\bpt p\in M$ with $t(\bpt p)=t_0$.
Then the following holds:
\begin{enumerate}
\item $\lim_{t_0\to\infty} \mathrm{diam}_h\left[J^+(J^-(t_0))\cap \{\bpt q\in M: t(\bpt q)=t_0\}\right]=0$.
\item $\lim_{t_0\to\infty} \mathrm{diam}_g\left[J^+(J^-(t_0))\cap \{\bpt q\in M: t(\bpt q)=t_0\}\right]=0$.
\item For $t_0>0$ large enough, $\partial J^+(J^-(t_0))\neq \emptyset$.
\item For $t_0>0$ large enough, the communication cone is homeomorphic to $(0,1)$ times the three-dimensional unit-ball with boundary. In other words, the following manifolds with boundary are homeomorphic, where $B_1^{\RR^3}(0)$ is the three-dimensional unit-ball:
\begin{multline}\nonumber
\left[J^+(J^-(t_0)) \cap \{\bpt q\in M: t(q) > t_0\}\,,\, \partial J^+(J^-(t_0)) \cap \{\bpt q\in M: t(q) > t_0\}\right]\\
\sim \left[(0,1) \times B_1^{\RR^3}(0)\,,\, (0,1)\times \partial B_1^{\RR^3}(0)\right]\end{multline}
\end{enumerate}
\end{lemma}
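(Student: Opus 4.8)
\textbf{Proof proposal for Lemma \ref{lemma:horizon-integral}.}

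The plan is to reduce everything to a single scalar quantity: the \emph{coordinate spatial diameter} of a causal set measured in the fixed symmetry metric $h$, and to show this is controlled by $\int \max_{j\neq k}\sqrt{|N_jN_k|}\,\dd t$. First I would establish the two estimates in \eqref{eq:comcone-estimate}. A past-directed (toward the singularity) causal curve $\gamma$ from $\bpt p$ satisfies $g(\dot\gamma,\dot\gamma)\le 0$, which in the diagonal metric \eqref{eq:metric-in-wsh} means $|g_{00}|\,\dot t^2 \ge \sum_i g_{ii}(\dot\gamma^i)^2$; hence the $h$-length of the spatial projection of $\gamma$ between times $t(\bpt p)$ and $t_0$ is bounded by $\int_{t(\bpt p)}^{t_0} \sqrt{|g_{00}|/\min_i g_{ii}}\,\dd t$. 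Plugging in $g_{00}=-\tfrac14 H^{-2}$ and $g_{ii}=\tfrac1{48}H^{-2}|N_i|/|N_1N_2N_3|$, the factors of $H^{-2}$ cancel and one is left with $\sqrt{|g_{00}|/g_{ii}} = C\sqrt{|N_1N_2N_3|/|N_i|} = C\sqrt{|N_jN_k|}$, giving exactly the claimed bound on $\mathrm{diam}_h$ of $J^-(\bpt p)\cap\{t=t_0\}$. For the communication cone one composes: a point in $J^+(J^-(\bpt p))\cap\{t=t(\bpt p)\}$ is reached from some $\bpt q\in J^-(\bpt p)$ by a non-singularity-directed causal curve, and $\bpt q$ itself can lie at time $t_0$ up to $h$-distance $C\int_{t(\bpt p)}^{t_0}\max_{j\neq k}\sqrt{|N_jN_k|}$ from the spatial position of $\bpt p$; letting $t_0\to\infty$ and using that the singularity is at $t=\infty$, the triangle inequality in $h$ yields the second line of \eqref{eq:comcone-estimate} (the return leg from $\bpt q$ back to time $t(\bpt p)$ contributes the same type of integral over $[t(\bpt p),\infty)$).

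Next I would deduce the four numbered conclusions under the hypothesis $\int_0^\infty\delta_i(t)\,\dd t<\infty$ for all $i$, i.e.\ $\int_0^\infty\max_{j\neq k}\sqrt{|N_jN_k|}\,\dd t<\infty$. Conclusion (1) is immediate: the tail $\int_{t_0}^\infty \max_{j\neq k}\sqrt{|N_jN_k|}\,\dd t\to 0$ as $t_0\to\infty$, and by \eqref{eq:comcone-estimate} this bounds $\mathrm{diam}_h$ of the communication cone slice at time $t_0$. Conclusion (2) follows from (1) together with the fact, proved in the ``Bounded spatial metric coefficients'' paragraph, that $g_{ii}(t)\to 0$ as $t\to\infty$: on the slice $\{t=t_0\}$ the metric $g$ restricted to space is $\sum_i g_{ii}(t_0)\,\omega_i\otimes\omega_i$, so $\mathrm{diam}_g \le \big(\max_i\sqrt{g_{ii}(t_0)}\big)\,\mathrm{diam}_h \to 0$. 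For conclusion (3), I would argue that the communication cone $J^+(J^-(t_0))$ has, by (1), $h$-diameter strictly less than the $h$-diameter of the whole (compact, since the groups $SU(2)$ for type \textsc{IX} and the relevant quotient for type \textsc{VIII} act on compact or at least non-degenerate fibers — here one uses the standard topology of the homogeneity surfaces) spatial slice for $t_0$ large; hence it is a proper subset of the slice, so its topological boundary $\partial J^+(J^-(t_0))$ is nonempty. (If the homogeneity surface is noncompact one instead notes directly that a set of finite $h$-diameter cannot be all of it.)

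For conclusion (4), the idea is that once $t_0$ is large enough, the communication cone $J^+(J^-(t_0))$, intersected with $\{t>t_0\}$, is a ``thin'' causal diamond: its spatial cross-section at each time $t>t_0$ is, up to the flow of the homogeneous geometry, a small geodesic ball (in $h$, hence homeomorphic to $B_1^{\RR^3}(0)$), and these cross-sections vary continuously and shrink monotonically-enough in $t$ that the total region fibers over the time interval $(t_0,\infty)$ — reparametrized to $(0,1)$ — with ball fibers. Concretely I would build the homeomorphism by (i) using the estimate \eqref{eq:comcone-estimate} to realize each time-slice $J^+(J^-(t_0))\cap\{t=t'\}$ as the image under the (smooth, locally invertible for small diameters) exponential map of a convex ball in $T_{\bpt p'}$(slice), where $\bpt p'$ is the image of $\bpt p$ under the homogeneous flow; (ii) checking that the radius function $t'\mapsto \rho(t')$ is continuous and positive on $(t_0,\infty)$, which follows from continuity of the flow and of the integrals involved; and (iii) assembling these into a product structure, which needs the causal-convexity of the diamond so that the boundary $\partial J^+(J^-(t_0))\cap\{t>t_0\}$ maps to $(0,1)\times\partial B_1^{\RR^3}(0)$. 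I expect \textbf{the main obstacle to be precisely this step (4)}: turning the quantitative smallness of the diameter into a clean statement that the communication cone is globally a topological product $(0,1)\times B^3$, as opposed to merely having small ball-like slices. The subtlety is ruling out that the boundary of the communication cone develops caustics, self-intersections, or fails to be a manifold — one must use that for $\varepsilon$-small regions the relevant exponential maps and the causal structure are uniformly close to those of a flat model, and invoke a stability/openness argument (continuity of the flow $\phi$ and of the metric coefficients, plus compactness of a suitable parameter range) to transfer the product structure from the model to $M$. The remaining calculations — verifying the constants, checking that the fiberwise exponential maps are diffeomorphisms on the relevant balls, and confirming continuity of $\rho(t')$ — are routine given the estimates already in hand.
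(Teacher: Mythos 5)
Your proof of the two estimates in \eqref{eq:comcone-estimate}, and of conclusions (1)--(3), tracks the paper's own argument essentially verbatim: the bound $|\dot\gamma^i|\le\sqrt{-g_{00}g^{ii}}=C\sqrt{|N_jN_k|}$ on the coordinate speed of causal curves, integration in $t$, the triangle inequality for the communication cone, the vanishing tail of the integral for (1), boundedness of the $g_{ii}$ for (2), and propriety of the slice for (3). The only place you genuinely diverge is (4), which you flag as the main obstacle and attack with exponential maps, causal-convexity, caustic-avoidance, and a stability/openness argument. The paper dispatches (4) in a single sentence by observing that the \emph{injectivity radius of the hypersurfaces in the fixed metric $h$ is independent of $t$}, because $h$ itself is time-independent. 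That one observation supplies exactly what your longer argument is trying to build: once every cross-section of the communication cone has $h$-diameter below the fixed $\mathrm{inj}_h$, each cross-section sits inside a single exponential chart of $h$, where the causal structure is uniformly tame, and no caustic or self-intersection can occur; the product structure over $(t_0,\infty)\cong(0,1)$ is then immediate from continuity of the slice family. You do cite the exponential map, but by not isolating the time-independence of $\mathrm{inj}_h$ you leave yourself needing the stability argument that the paper avoids. The paper also derives (3) from (4), whereas you prove (3) directly from (1); both are valid, and your direct route has the small advantage of not depending on the topological claim (4).
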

\begin{proof}
Any time-like singularity-directed curve $\gamma$ starting in $\bpt p$ must fulfill $|\DD_t \gamma_i| \le \sqrt{-g_{00} g^{ii}}=\sqrt{12}\sqrt{\widetilde N_j \widetilde N_k}$. It is clear that the $h$-length of such a curve must be bounded by $C\max_{j\neq k} \int_{t(\bpt p)}^\infty \sqrt{\widetilde N_j \widetilde N_k}(t)\dd t$ (parametrized over the time $t$ corresponding to \eqref{eq:ode-from-gr}), if the curve only accesses times later than $t_0$. This proves the estimate on $J^-(\bpt p)$. The estimate on $J^+(J^-(\bpt p))$ follows.

If $\int_{t_0}^{\infty}\max_{j\neq k}\sqrt{|N_jN_k|}(t)\dd t <\infty$, then $\lim_{t_0\to\infty}\int_{t_0}^{\infty}\max_{j\neq k}\sqrt{|N_jN_k|}(t)\dd t=0$, which proves $(1)$. $(2)$ follows because the metric coefficients $g_{ij}$ are bounded. $(4)$ follows because the injectivity radius $\mathrm{inj}_h(t)$ of the hypersurfaces of homogeneity is independent of the time $t$, if we measure it with respect to the (time-independent) $h$-metric. $(3)$ follows trivially from $(4)$.
\end{proof}
Hence, Theorem \ref{thm:horizon-formation} really shows that almost every Bianchi \textsc{VIII} and \textsc{IX} vacuum solution forms particle horizons.

\begin{figure}[hbt]
\centering
\begin{subfigure}[t]{0.48\textwidth}
\centering
\includegraphics[width=\textwidth]{./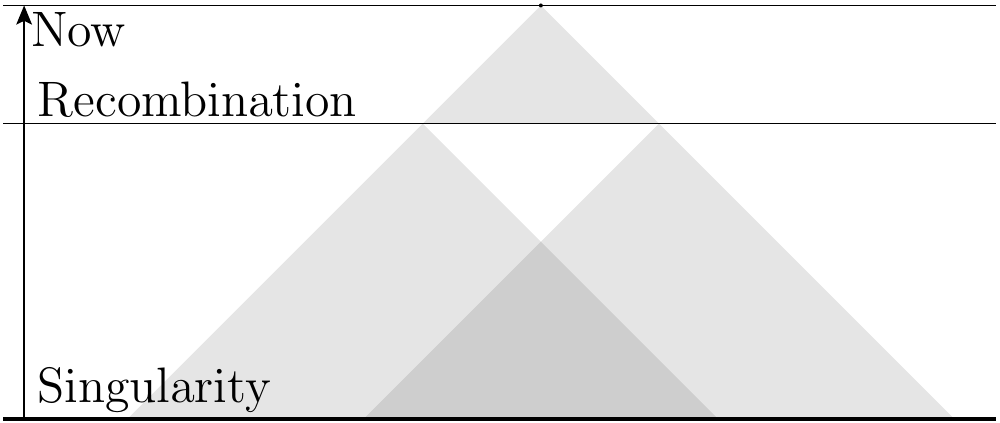}
\caption{A sketched space-time, where homogeneity of the observable universe could be explained by mixing between the big bang and recombination.}
\end{subfigure}~~
\begin{subfigure}[t]{0.48\textwidth}
\centering
\includegraphics[width=\textwidth]{./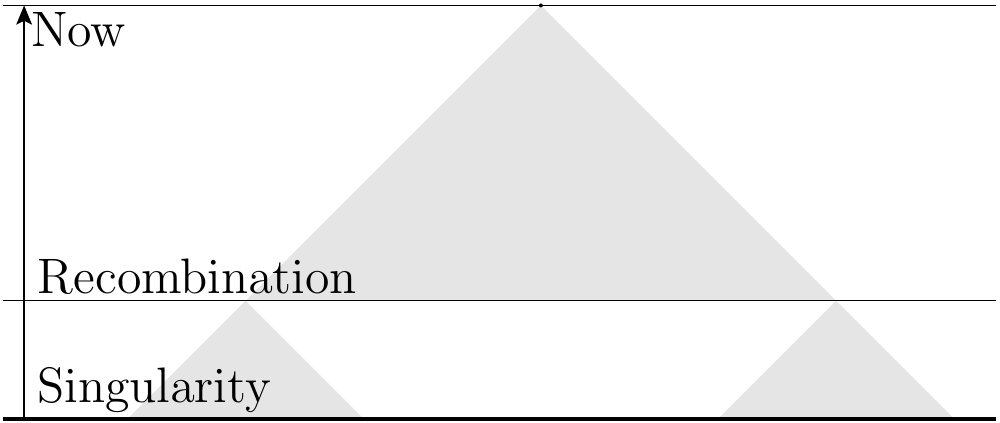}
\caption{A sketched space-time, where homogeneity of the observable universe cannot be explained by mixing between the big bang and recombination.}
\end{subfigure}
\caption{Whether observed homogeneity could be explained by mixing depends on the relation of the conformal distance between recombination and singularity versus the conformal distance between the present time and recombination.}\label{fig:recomb-cone}
\end{figure}
In the introduction, we promised to give some more details on the relation of the formation of particle horizons to homogenization of the universe by mixing. Astronomical observations show that the universe appears to be mostly homogeneous at large scales. However, optical astronomical observations go back only to the recombination, the moment where the primordial plasma condensed to a gas and became transparent to light. Hence, there is only a bounded region of space-time in our past, which is optically accessible. A possible explanation for the observed homogeneity, c.f.~e.g.~\cite{misner1969mixmaster}, might be that the universe, i.e.~matter, radiation, etc, mixed in the time-frame between the initial singularity and recombination. This is only possible if the outer parts of our optical past light-cone have a joint causal past (see Figure \ref{fig:recomb-cone}). This is a reason that the formation of particle horizons does not necessarily spell doom for attempts to explain the observed homogeneity through mixing 
(apart from the actual universe not beeing a homogeneous, anisotropic vacuum spacetime of Bianchi type $\textsc{VIII}$ or $\textsc{IX}$).

Indeed, the same problem is present in the ``standard model'' of cosmology, which is a (homogeneous isotropic) FLRW model. There, the observed homogeneity is typically explained by inflation, i.e.~one postulates a phase of rapid expansion that increases the conformal distance between recombination and singularity, mediated by an exotic, hitherto unobserved matter field.

\newpage
\appendix
\section{Appendix}
\subsection{Glossary of Equations and Notations}\label{sect:eq-cheat-sheet}
For easier reference, we compressed the most frequently referenced equations and notations on two pages. 
The Wainwright-Hsu equations are given by:
\begin{align*}
 N_i' &=  -(\Sigma^2 +2 \langle \taubp_i,\bpt\Sigma\rangle)N_i  \tag{\ref{eq:ode-ni}}\\
 &= -\left(\left|\bpt\Sigma+\taubp_i\right|^2 -1\right)N_i \tag{\ref{eq:ode2-ni}}\\
\bpt\Sigma' &= N^2 \bpt\Sigma + 2 \threemat{\taubp_1}{\taubp_2}{\taubp_3}{\taubp_3}{\taubp_1}{\taubp_2}[\bpt N, \bpt N],\quad\text{where}\tag{\ref{eq:ode2-sigma}}\\
\taubp_1 &= (-1,0) \qquad \taubp_2 = \left(\frac 1 2, -\frac 1 2 \sqrt{3}\right) \qquad \taubp_3 = \left(\frac 1 2, \frac 1 2 \sqrt{3} \right) \tag{\ref{eq:taub-def}}\\
1&\overset{!}{=} \Sigma^2+N^2 = \Sigma_+^2+\Sigma_-^2 + N_1^2+N_2^2+N_3^2-2(N_1N_2+N_2N_3+N_3N_1). \tag{\ref{eq:constraint}}
\end{align*}
Auxilliary quantities are given by:
\begin{align*}
\delta_i &= 2\sqrt{|N_jN_k|} \tag{\ref{eq:delta-r-def-delta}}\\
r_i &= \sqrt{(|N_j|-|N_k|)^2 + \frac{1}{3}\langle\taubp_j-\taubp_k, \bpt\Sigma\rangle^2} \tag{\ref{eq:delta-r-def-r}}\\
\delta_i' &= -\left(\left|\bpt\Sigma-\frac{\taubp_i}{2}\right|^2-\frac{1}{4}\right)\delta_{i}.\tag{\ref{eq:ode2-delta}}
\end{align*}
In polar coordinates, the equations around $r_1 \approx 0$ become for $N_2,N_3>0$:
\begin{align*}
r_1 &= \Sigma_-^2+ N_-^2 \quad N_-=N_3-N_2\quad N_+=N_3+N_2\\
\DD_t \log r_1&= N^2 - (\Sigma_++1) \frac{N_-^2}{r_1^2} +\sqrt{3} N_1 \frac{\Sigma_- N_-}{r_1^2} \tag{\ref{eq:neartaub-b9-q:r}}\\
&=r_1^2\sin^2\psi \frac{-\Sigma_+}{1-\Sigma_+} + N_1\, h_r\tag{\ref{eq:neartaub-b9-t:r}}\\
\DD_t \log \delta_1 &= N^2-(\Sigma_++1)\tag{\ref{eq:neartaub-b9-q:delta}}\\
&= \frac{-1}{1-\Sigma_+}r_1^2\cos^2\psi + \frac{-\Sigma_+}{1-\Sigma_+}r_1^2\sin^2\psi + N_1\,h_\delta \tag{\ref{eq:neartaub-b9-t:delta}}\\
\DD_t \log \frac{\delta_1}{r_1} &= -(\Sigma_+ + 1)\frac{\Sigma_-^2}{r_1^2} -\sqrt{3}N_1\frac{\Sigma_- N_-}{r_1^2} \tag{\ref{eq:neartaub-b9-q:delta-r}}\\
&=\frac{-1}{1-\Sigma_+}r_1^2\cos^2\psi +N_1(h_\delta-h_r)\tag{\ref{eq:neartaub-b9-t:delta-r}}\\
\psi' &=\sqrt{3}r_1 \sqrt{\sin^2\psi + \frac{\delta_1^2}{r_1^2}} - \frac{r_1^2}{1-\Sigma_+} \sin\psi \cos\psi + N_1\sin\psi\, h_{\psi},\tag{\ref{eq:neartaub-b9-t:psi}}
\end{align*}
\noindent
where the terms $|h_r|,|h_\delta|, |h_\psi|$ are bounded (if $|N_i|$, $\Sigma_+<0$, and $\Sigma_-$ are bounded) and given in \eqref{eq:neartaub-b9-t-h}, page \pageref{eq:neartaub-b9-t-h}.

In polar coordinates, the equations around $r_1 \approx 0$ become for $N_2>0$ , $N_3<0$:
\begin{align*}
r_1 &= \Sigma_-^2+ N_-^2 \quad N_-=N_2+N_3\quad N_+=N_2-N_3\\
\DD_t \log r_1&= N^2 - (\Sigma_++1) \frac{N_-^2}{r_1^2} +\sqrt{3} N_1 \frac{\Sigma_- N_+}{r_1^2} \tag{\ref{eq:neartaub-b8-q:r}}\\
&=\frac{-\Sigma_+}{1-\Sigma_+}r_1^2\sin^2\psi +\delta_1^2\frac{\cos^2\psi-\Sigma_+}{1-\Sigma_+} + N_1\, h_r\tag{\ref{eq:neartaub-b8-t:r}}\\
\DD_t \log \delta_1 &= N^2-(\Sigma_++1)\tag{\ref{eq:neartaub-b9-q:delta}}\\
&=\frac{-1}{1-\Sigma_+}r_1^2\cos^2\psi + \frac{-\Sigma_+}{1-\Sigma_+}r_1^2\sin^2\psi +\frac{-\Sigma_+}{1-\Sigma_+}\delta_1^2+ N_1\,h_\delta \tag{\ref{eq:neartaub-b8-t:delta}}\\
\DD_t \log \frac{\delta_1}{r_1} &= -(\Sigma_+ + 1)\frac{\Sigma_-^2}{r_1^2} -\sqrt{3}N_1\frac{\Sigma_- N_+}{r_1^2} \tag{\ref{eq:neartaub-b8-q:delta-r}}\\
&=\frac{-1}{1-\Sigma_+}r_1^2\cos^2\psi - \delta_1^2\frac{\sin^2\psi}{1-\Sigma_+} + N_1(h_\delta-h_r)\tag{\ref{eq:neartaub-b8-t:delta-r}}\\
\psi' &=\sqrt{3}r_1 \sqrt{\cos^2\psi + \frac{\delta_1^2}{r_1^2}} - \frac{r_1^2+\delta_1^2}{1-\Sigma_+} \cos\psi \sin\psi + N_1\cos\psi\, h_{\psi},\tag{\ref{eq:neartaub-b8-t:psi}}
\end{align*}
\noindent
where the terms $|h_r|,|h_\delta|, |h_\psi|$ are bounded (if $|N_i|$, $\Sigma_+<0$, $\Sigma_-$, and $\frac{\delta_1}{r_1}$ are bounded) and given in \eqref{eq:neartaub-b8-t-h}, page \pageref{eq:neartaub-b8-t-h}.

We use $\mathcal M= \{\bpt x\in \RR^5:\,G(\bpt x)=1\}$, and $\mathcal M_{\signN}\subset \mathcal M$ to denote the signs of the three $N_i$, with $\signN\in \{+,-,0\}^3$. If we use $\pm$ in subscripts, the repeated occurences are unrelated, such that $\mathcal M_{\pm\pm\pm}=\{\bpt x\in \mathcal M:\,\text{all three $N_i\neq 0$}\}$. We use the notation $\mathcal T_i = \{\bpt x\in\mathcal M:\, \langle\taubp_j \bpt \Sigma\rangle=\langle\taubp_k,\Sigma\rangle,\,N_j=N_k\}$ for the Taub-spaces, where $i,j,k$ are a permutation of $\{1,2,3\}$. 

We frequently use the following subsets of $\mathcal M$ (with the obvious definition for subscripts $\signN\in \{+,-,0\}^3$ ):
\[\begin{aligned}
\textsc{Basin}[\epsilon]&=\{\bpt x\in\mathcal M: \max_i \frac{\delta_i}{r_i}<\epsilon, \max_i \delta_i < \epsilon\}\\
{\textsc{Cap}}[\epsilon_N, \epsilon_d]&=\{\bpt x\in \mathcal M: \max|N_i|\ge \epsilon_N, \max_i\delta_i \le \epsilon_d\}\\
{\textsc{Circle}}[\epsilon_N, \epsilon_d]&= \{\bpt x\in \mathcal M: \max |N_i|\le\epsilon_N,\, \max_i\delta_i \le \epsilon_d\}\\
{\textsc{Hyp}}[\varepsilon_\taubp, \epsilon_N, \epsilon_d]&= {\textsc{Circle}}[\epsilon_N, \epsilon_d] \setminus \left[B_{\varepsilon_\taubp}(\taubp_1)\cup B_{\varepsilon_\taubp}(\taubp_2)\cup B_{\varepsilon_\taubp}(\taubp_3) \right].
\end{aligned}.
\]

%
%
%
%
%
\subsection{Properties of the Kasner Map}\label{sect:appendix-kasner-map}
We deferred a detailed discussion of the 
Kasner-map $K$ in Section \ref{sect:lower-b-types}, especially the proof of Proposition \ref{prop:kasnermap-homeomorphism-class}.
We will first give a simple proof of Proposition \ref{prop:kasnermap-homeomorphism-class}, and then discuss classical ways of describing the Kasner map.
\begin{figure}[hbpt]
        \begin{subfigure}[b]{0.45\textwidth}
        \centering
        \includegraphics[width=\textwidth]{./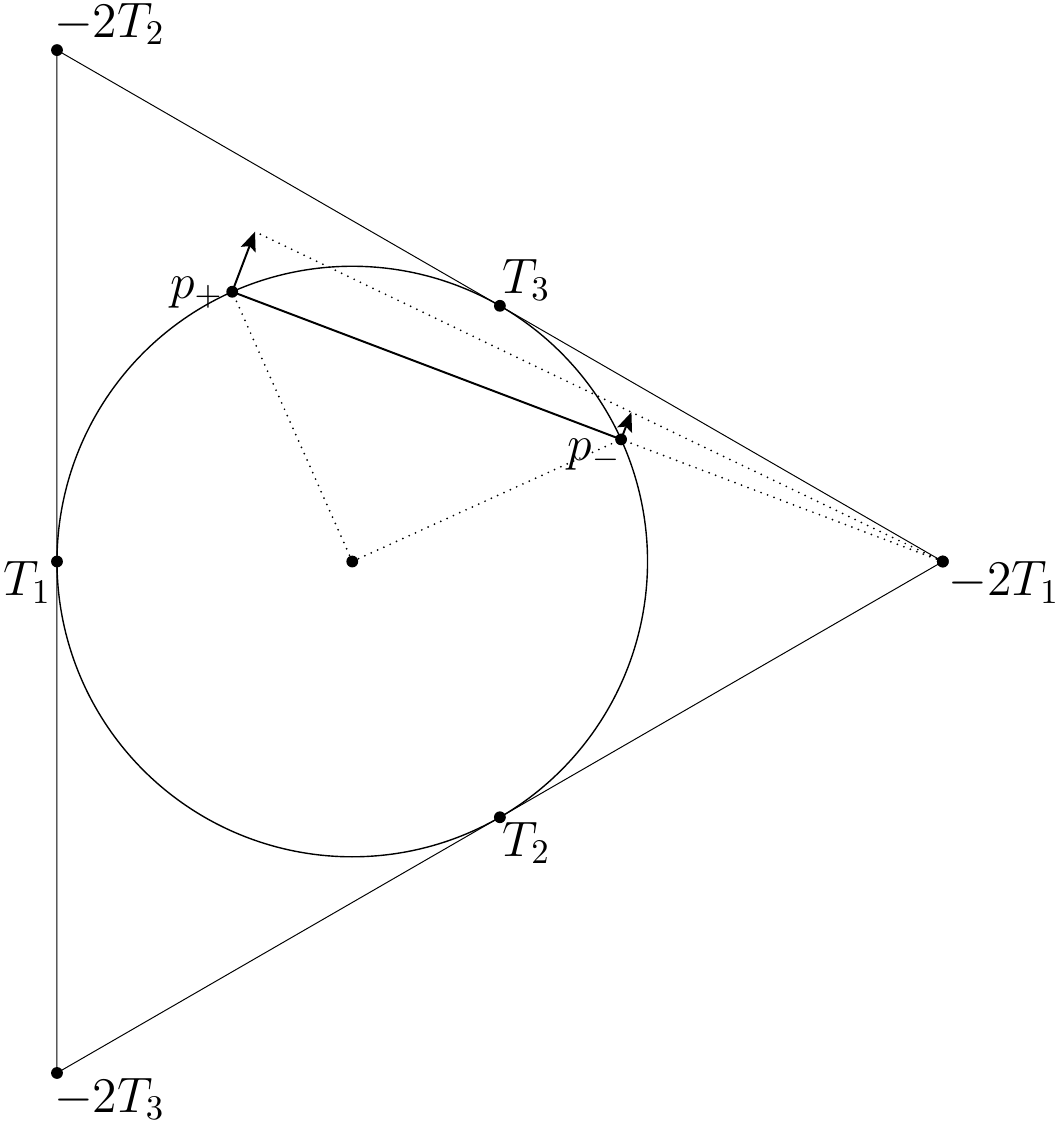}
        \caption{A graphical proof of Lemma \ref{lemma:kasnermap-expansion}.} \label{fig:angle-stuff}
        \end{subfigure}
        ~~%
        \begin{subfigure}[b]{0.45\textwidth}
                \centering
                \includegraphics[width=\textwidth]{./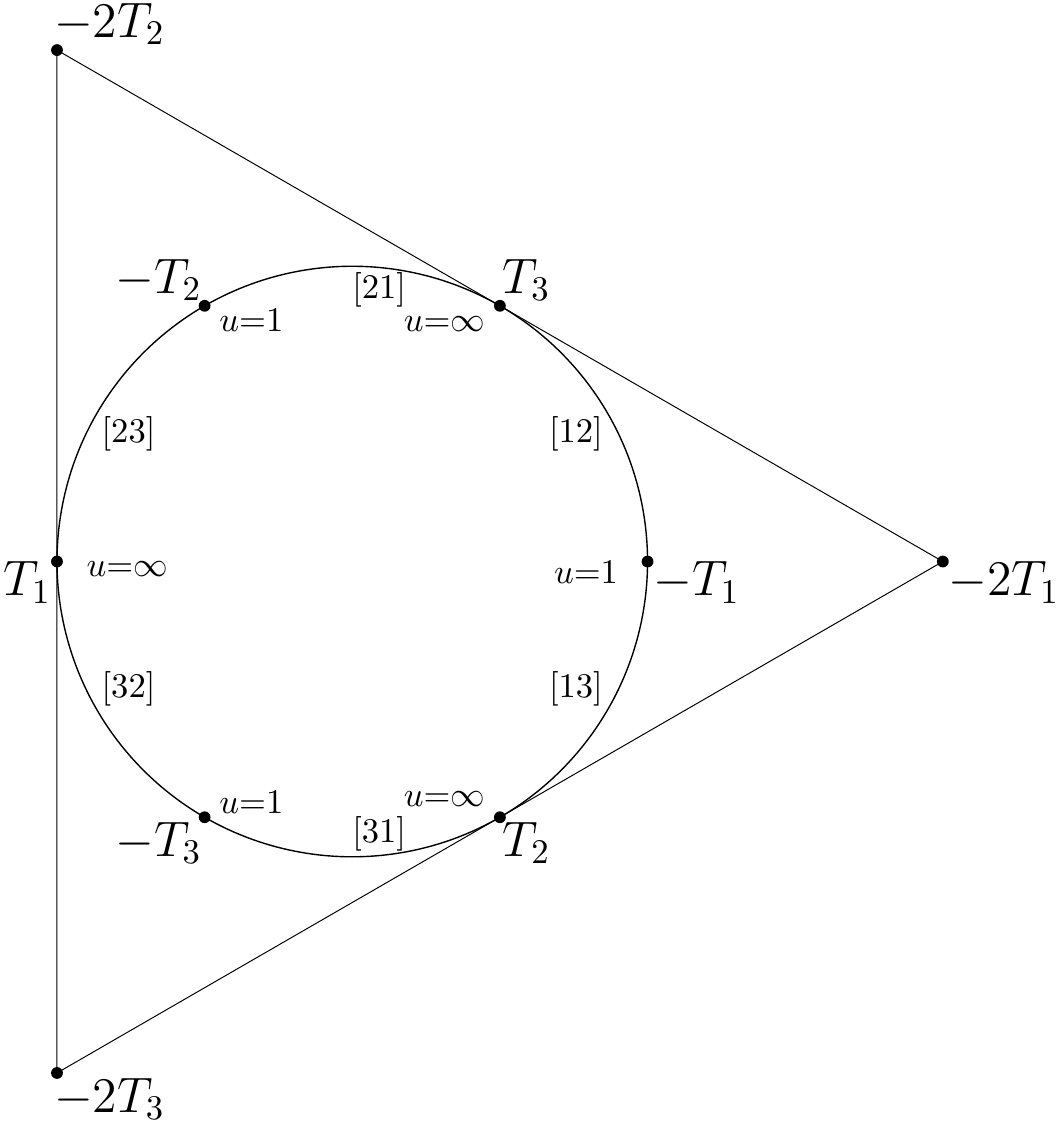}
                 \caption{The Kasner-parameter in the six segments.}
                \label{fig:seg-labels}
        \end{subfigure}
        \caption{Expansion of the Kasner-map}        
\end{figure}
\paragraph{Expansion of the Kasner-map.}
We can see from Figure \ref{fig:double-cover} (p.~\pageref{fig:double-cover}) that the Kasner-map is a double-cover, has three fixed points and reverses orientation. From Figure \ref{fig:angle-stuff}, we can see that $K$ is non-uniformly expanding:

\begin{lemma}\label{lemma:kasnermap-expansion}
Consider the vectorfield $\partial_{\mathcal K}(\bpt p)= (-\Sigma_-(\bpt p), +\Sigma_+(\bpt p))$. Assume without loss of generality that $\bpt p_-$ is such that the Kasner-map $\bpt p_+=K(\bpt p_-)$ proceeds via the $N_1$-cap, i.e.~$d(\bpt p_-, -\taubp_1)<1$ (see Figure \ref{fig:kasner-segments}).

Then the Kasner-map $K$ is differentiable at $\bpt p_-$ and we have
\[
K'(\bpt p_-)= -\frac{|\bpt p_+ +2\taubp_1| }{|\bpt p_- +2\taubp_1|}<-1,\qquad\text{where}\qquad K'(\bpt p'):\quad K_*\partial_{\mathcal K}(\bpt p_-)=K'(\bpt p_-)\partial_{\mathcal K}(\bpt p_+).
\]
\end{lemma}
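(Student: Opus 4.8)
\textbf{Proof plan for Lemma \ref{lemma:kasnermap-expansion}.}

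The plan is to compute $K'(\bpt p_-)$ directly from the explicit geometric description of the Kasner map given just before Proposition \ref{prop:kasnermap-homeomorphism-class}: $K(\bpt p_-)$ is the second intersection point of the line through $\bpt p_-$ and $-2\taubp_1$ with the Kasner circle $\mathcal K = \{\Sigma^2 = 1\}$. First I would fix the point $Z = -2\taubp_1$ (which has $|Z| = 2$, so it lies outside $\mathcal K$), and parametrize the pencil of lines through $Z$. A line through $Z$ that meets $\mathcal K$ in two points $\bpt p_-, \bpt p_+$ defines the correspondence $\bpt p_- \mapsto \bpt p_+$; differentiability of $K$ at $\bpt p_-$ (away from the Taub points, where the line is tangent) is immediate since the two intersection points of a secant depend smoothly on the line as long as the line is transverse to $\mathcal K$, which holds whenever $\bpt p_\pm \neq \taubp_1$.

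The key computation is a standard fact about the "second intersection" map of a pencil of secants through a fixed exterior point $Z$ of a circle. Write $\bpt p_\pm = Z + s_\pm \, \bvec u$ for a common unit direction $\bvec u$, where $s_- , s_+$ are the two roots of the quadratic $|Z + s\bvec u|^2 = 1$, i.e.\ $s^2 + 2 s \langle Z, \bvec u\rangle + (|Z|^2 - 1) = 0$; hence $s_- s_+ = |Z|^2 - 1 = $ const (the power of the point $Z$) and $s_- + s_+ = -2\langle Z,\bvec u\rangle$. Now vary $\bvec u$ along the pencil: the tangent vectors to $\mathcal K$ at $\bpt p_\pm$ are $\partial_{\mathcal K}(\bpt p_\pm)$, and I would show that the differential of the secant map sends $\partial_{\mathcal K}(\bpt p_-)$ to a multiple of $\partial_{\mathcal K}(\bpt p_+)$ with factor $\pm s_+ / s_-$ up to a sign coming from orientation reversal; since $|s_\pm| = |\bpt p_\pm - Z| = |\bpt p_\pm + 2\taubp_1|$, this yields exactly
\[
K'(\bpt p_-) = -\frac{|\bpt p_+ + 2\taubp_1|}{|\bpt p_- + 2\taubp_1|}.
\]
Concretely this can be extracted either by implicit differentiation of the system "$\bpt p_\pm$ on $\mathcal K$, $Z, \bpt p_-, \bpt p_+$ collinear" with respect to arc length along $\mathcal K$ at $\bpt p_-$, or, more slickly, by the classical projective fact that the cross-ratio / the product $s_- s_+$ is constant, so $\delta s_+ / s_+ = -\delta s_- / s_-$, combined with the relation between $\delta s_\pm$ and the arc-length variations $\delta \theta_\pm$ at $\bpt p_\pm$ (which is where the ratio $|\bpt p_+ - Z| / |\bpt p_- - Z|$ versus its reciprocal gets sorted out). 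The sign is negative because, as already noted from Figure \ref{fig:double-cover}, $K$ reverses orientation.

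Finally, the inequality $K'(\bpt p_-) < -1$ reduces to $|\bpt p_+ + 2\taubp_1| > |\bpt p_- + 2\taubp_1|$, i.e.\ $\bpt p_+$ is the intersection point \emph{farther} from $-2\taubp_1$; but that is precisely the defining property of $\bpt p_+ = K(\bpt p_-)$ in the geometric description (the $\alpha$-limit is the one nearer to $-2\taubp_1$, the $\omega$-limit the one farther away), so the inequality is immediate. I expect the main obstacle to be purely bookkeeping: getting the orientation and the sign of $K'$ right, and verifying that the arc-length-to-parameter conversions at the two endpoints combine to give the stated ratio rather than its square or its reciprocal; all of this can be checked on the concrete parametrization $\bpt p = (\cos\theta, \sin\theta)$ with $Z = (2,0)$ and then invoked in general by the permutation/rotation equivariance of the setup.
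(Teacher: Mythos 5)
Your plan is correct and arrives at the same formula, but the key computation you single out as the ``slick'' version is genuinely different from the paper's. The paper differentiates the collinearity relation $\bpt p_+ + 2\taubp_1 = \lambda(\bpt p_- + 2\taubp_1)$, projects the velocity vectors onto the complement of the secant direction $\bvec v$ (where the $\lambda'$-term drops, giving ratio $\lambda$ on the projections), and then upgrades this to the ratio of full speeds by showing that the secant meets $\mathcal K$ at equal angles at both endpoints, via the antisymmetry $\langle \bpt p_-, J\bpt p_+\rangle = -\langle J\bpt p_-,\bpt p_+\rangle$. Your power-of-the-point route bypasses the equal-angles step entirely: from $s_+s_- = |Z|^2-1 = \text{const}$ one gets $\dot s_+/s_+ = -\dot s_-/s_- =: a$, and then $|\dot{\bpt p}_\pm|^2 = \dot s_\pm^2 + s_\pm^2|\dot{\bvec u}|^2 = s_\pm^2\bigl(a^2 + |\dot{\bvec u}|^2\bigr)$ factorizes on the nose, since $\bvec u\perp\dot{\bvec u}$. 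So the ratio of speeds is $|s_+|/|s_-|$ with no further trigonometry, which is a cleaner bookkeeping than the paper's projection argument; the ``implicit differentiation of the system'' alternative you mention first would essentially reproduce the paper. The sign and the inequality $|\bpt p_++2\taubp_1|>|\bpt p_-+2\taubp_1|$ follow exactly as you say.

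One small slip: you write that the secant fails to be transverse ``whenever $\bpt p_\pm = \taubp_1$,'' but for the pencil through $Z = -2\taubp_1$ the tangency points are $\taubp_2$ and $\taubp_3$ (the discriminant vanishes exactly there), not $\taubp_1$; indeed $\bpt p_- = -\taubp_1$ maps to $\bpt p_+ = \taubp_1$ along a perfectly transverse secant. The lemma's hypothesis $d(\bpt p_-, -\taubp_1) < 1$ (strict) already excludes $\taubp_2,\taubp_3$, so the argument is unaffected, but the identification of the degenerate points should be corrected.
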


\begin{proof}[Proof of Lemma \ref{lemma:kasnermap-expansion}]
Informally, differentiability is evident from the construction in Figure \ref{fig:angle-stuff}. The component of $\partial_{\mathcal K}$ which is normal to the line through $\bpt p_+$, $\bpt p_-$ and $-\taubp_1$ gets just elongated by a factor $\lambda =\frac{|\bpt p_++2\taubp_1|}{|\bpt p_-+2\taubp_1|}$. The angle between this line and the Kasner-circle, i.e.~$\partial_{\mathcal K}$, is constant; therefore, the length of the component tangent to $\mathcal K$ must get elongated by the same factor. The negative sign is evident from Figure \ref{fig:angle-stuff}.

Formally, the relation between $\bpt p_+$ and $\bpt p_-$ is described by
\[
|\bpt p_-|=|\bpt p_+|=1\qquad \bpt p_+ +2 \taubp_1= \lambda(\bpt p_-+2\taubp_1).
\]
Setting $\bpt p_-=\bpt p_-(t)$, we obtain a differentiable function $\lambda=\lambda(t)$ by the implicit function theorem (as long as $\langle\bpt p_+, \bpt p_-+2\taubp_1\rangle \neq 0$) and obtain (where we use $'$ to denote derivatives with respect to $t$):
\[
\bpt p_+' = \lambda'(\bpt p_-+2\taubp_1)+\lambda\bpt p_-'.
\]
Assuming $\bpt p_+\neq \bpt p_-$, we can set
\[\bpt v = \frac{\bpt p_++2\taubp_1}{|\bpt p_++2\taubp_1|}= \frac{\bpt p_-+2\taubp_1}{|\bpt p_-+2\taubp_1|}=\frac{\bpt p_+-\bpt p_-}{|\bpt p_+-\bpt p_-|},\]
and compute the projection to the normal component of $\bpt v$:
\((1-\bpt v \bpt v^T)\bpt p_+' = \lambda (1-\bpt v \bpt v^T)\bpt p_-',\) 
since the vector coefficient of $\lambda'$ is parallel to $\bpt v$. Now the vectors $\bpt p_+'$ and $\bpt p_-'$ are tangent to the Kasner-circle; letting $J(\Sigma_+,\Sigma_-)=(-\Sigma_-, \Sigma_+)$ be the unit rotation we can see that
\( \bpt p_+' = \pm |\bpt p_+'|J\bpt p_+ \) and \( \bpt p_-' = \pm |\bpt p_-'|J\bpt p_- \). 
Hence
\[\begin{aligned}
|(1-\bpt v \bpt v^T)\bpt p_+'|^2&=\left(1-\langle v, J\bpt p_+\rangle^2\right)|\bpt p_+'|^2 
&= \left(1-\frac{\langle\bpt p_-, J\bpt p_+\rangle^2}{|\bpt p_+-\bpt p_-|^2}\right)|\bpt p_+'|^2\\
|(1-\bpt v \bpt v^T)\bpt p_-'|^2&=\left(1-\langle v, J\bpt p_-\rangle^2\right)|\bpt p_-'|^2 
&= \left(1-\frac{\langle\bpt p_+, J\bpt p_-\rangle^2}{|\bpt p_+-\bpt p_-|^2}\right)|\bpt p_-'|^2.
\end{aligned}\]
By antisymmetry of the matrix $J$, i.e.~$\langle\bpt p_-, J\bpt p_+\rangle=-\langle J\bpt p_-,\bpt p_+\rangle$, we therefore have
\(
|\bpt p_+'| = \lambda |\bpt p_-'|.
\)
\end{proof}
\paragraph{Symbolic Description.}
For a given $\bpt p_0\in\mathcal K$, we can symbolically encode the trajectory $(\bpt p_n)_{n\in\NN}$ (with $\bpt p_{n+1}=K(\bpt p_n)$) under the Kasner-map. The easiest way to do so is to encode it by $(s_n)_{n\in\NN}\in \{1,2,3\}^\NN$, where $s_n = i$ if $\bpt p_n\to\bpt p_{n+1}$ occurs via the $|N_i|>0$-cap. Then $(s_n)_{n\in\NN}$ has the property that no symbol repeats, i.e.~$s_{n+1}\neq s_n$ for all $n\in\NN$. We have, however, an ambiguity if $\bpt p_N=\taubp_i$ for some $N>0$. If this occurs, then also all later points have $\bpt p_{N+n}=\taubp_i$. We chose to allow both encodings $\bpt p_N=j$ and $\bpt p_N=k$, as long as the property that no symbol repeats is preserved. Factoring out this ambiguity gives us a map
\[\begin{aligned}
\Psi: \mathcal K &\to \{(s_{n})_{n\in\NN}\{1,2,3\}^\NN:\, \text{no sumbol repeats} \}\, / \, \{(*\overline{ij}) = (*\overline{ji})\},\\
\Psi(p_0)&= (s_n)_{n\in\NN},\qquad\text{such that}\  d(\bpt p_n, -\taubp_{s_n})\le 1\ \text{and no symbol repeats},
\end{aligned}\]
where $*$ stands for an arbitrary initial piece and $\overline{jk}$ stands for a periodic tail $(*\overline{jk})=(*jkjkjk\ldots)$.
This map $\Psi$ is continuous (since the Kasner-map is continuous), where we endow the target space $\{1,2,3\}^\NN/\sim$ with the quotient topology of the product topology. Note that, by construction, $\Psi$ semiconjugates the Kasner-map $K$ to the sifht-map $\sigma$:
\[
\Psi\circ K=\sigma\circ \Psi,\quad\text{where}\quad \sigma: (s_0s_1s_2\ldots)\to (s_1s_2\ldots).
\]

In order to see that $\Psi$ is a homeomorphism, we construct a continuous inverse. 
Denote the three segments of $\mathcal K$ as $\mathcal K_i = \{\bpt p\in\mathcal K:\, d(\bpt p, -\taubp_i)\le 1\}$. We can construct inverse maps $K^{-1}_{ij}:\mathcal K_j \to \mathcal K_i$, such that $K\circ K^{-1}_{ij}:\mathcal K_i \to \mathcal K_i = \mathrm{id}$. Then we get an inverse map
\[
\Psi^{-1}: (s_{n})_{n\in\NN} \to \bigcap_{\ell\in\NN}K^{-1}_{s_0 s_1}K^{-1}_{s_1s_2}\ldots K^{-1}_{s_{\ell}s_{\ell+1}}(\mathcal K_{s_{\ell+1}}).
\]
We now need to show that $\Psi^{-1}((s_n)_{n\in\NN})=\{\bpt p\}$ is a single point, which depends continuously on $(s_n)_{n\in\NN}$, and is actually the inverse of $\Psi$.

We first consider a sequence $(s_n)_{n\in\NN}$ which does not end up in a Taub-point. 

In order to see that $\Psi^{-1}((s_n)_{n\in\NN})$ is nonempty, note that it is the intersection of a descending sequence of nonempty compact sets. In order to see that it contains only a single point, note that $K_{ij}^{-1}$ is (nonuniformly) contracting by Lemma \ref{lemma:kasnermap-expansion}; hence the length $\textsc{len}_{\ell}=|K^{-1}_{s_0 s_1}K^{-1}_{s_1s_2}\ldots K^{-1}_{s_{\ell}s_{\ell+1}}(\mathcal K_{s_{\ell+1}})|$ is decreasing. The length $\textsc{len}_{\ell}$ also cannot converge to some $\textsc{len}_{\infty}$ as $\ell\to\infty$, since we have $|K^{-1}_{ij}(I)|< |I|$ for any interval $I$ with $|I|>0$.
In order to see that $\Psi^{-1}$ is continuous, we need to show that $\mathrm{diam}\,\Psi^{-1}\{(r_n)_{n\in\NN}:\,r_n=s_n\, \forall n\le N\} \to 0$ as $N\to\infty$. This also follows from the previous argument of decreasing lengths.

Next, we consider a sequence $(s_n)_{n\in\NN}$ which does end up in a Taub-point $\taubp_i$ at $n=N$. Let $(s'_n)_{n\in\NN}$ denote the other representative of $(s_n)_{n\in\NN}$, i.e. we changed $(s_0,\ldots,s_{N-1},\overline{jk})\leftrightarrow ((s_0,\ldots,s_{N-1},\overline{kj}))$. The previous arguments about single-valuedness and continuity still apply for each of the two representatives; we only need to show that $\Psi^{-1}$ coincides for both. This is obvious.

It is also obvious by our construction that $\Psi$ and $\Psi^{-1}$ are inverse to each other. Hence $\Psi$ and $\Psi^{-1}$ $C^0$-conjugate $K$ to the shift-map
\[
\Psi\circ K\circ\Psi^{-1}=\sigma: (s_0s_1s_2\ldots)\to (s_1s_2\ldots).
\]
The exact same arguments apply in order to conjugate the map $D:\RR/3\ZZ\to\RR/3\ZZ$, $D:[x]_{3\ZZ}\to [-2x]_{3\ZZ}$ to the same shift, where we replaced $\taubp_i=i$ and $\mathcal K_i = [{i+1}, {i+2}]$. Hence, the Kasner-map is $C^0$ conjugate to $D$:
\begin{hproposition}{\ref{prop:kasnermap-homeomorphism-class}}
There exists a homeomorphism $\psi:\mathcal K \to \RR/3\ZZ$, such that $\psi(\taubp_i)=\left[i\right]_{3\ZZ}$ and
\[
\psi(K(\bpt p))=[-2 \psi(\bpt p)]_{3\ZZ}\qquad\forall \bpt p\in\mathcal K.
\]
\end{hproposition}
\paragraph{Kasner Eras and Epochs.}
\def\epoch{\ensuremath{\textsc{S}}}
\def\era{\ensuremath{\textsc{L}}}
A useful and customary description of the symbolic dynamics of $K$ is obtained by distinguishing between ``small'' bounces around a Taub-point, called ``Kasner epochs'' and denoted by the letter \epoch{} in this work, and ``long'' bounces, called ``Kasner eras'' and denoted by the letter \era{} in this work. The \epoch{} and \era{} encoding of an orbit can be obtained from previous $\{1,2,3\}$-encoding by the map
\[\textsc{EpochEra}: \{s\in \{1,2,3\}^\NN:\, s_n\neq s_{n+1}\}\to \{(s_0s_1|r_0r_1\ldots):\, s_0,s_1 \in\{1,2,3\},\,r_n\in\{\epoch, \era\}\}\]
\[(s_0s_1s_2\ldots) \to (s_0s_1| r_0r_1\ldots),\quad\text{where}\quad\left\{\begin{aligned}
r_n &= \epoch \quad\text{if $s_{n}=s_{n+2}$,}\\
r_n &= \era \quad\text{if $s_{n}\neq s_{n+2}$.}
\end{aligned}\right.\]
We can remember the value of $s_n$ as a subindex of $r_n$, such that e.g.
\[
(132121213231\ldots)\to(13|\era_1\era_3 \era_2 \epoch_1 \epoch_2 \epoch_1 \epoch_2 \era_1 \era_3 \epoch_2 \era_3 *_1\ldots).
\]
Then the Kasner-map becomes
\[ (s_0s_1|r_0r_1r_2r_3\ldots)\to \left\{\begin{aligned}
(s_1 i|r_1r_2\ldots) &\quad \text{if $s_0=i$ and $r_0=\epoch$}\\
(s_1 i|r_1r_2\ldots) &\quad \text{if $s_0\neq i\neq s_1$ and $r_0=\era$.}
\end{aligned}\right.\]
Note that the first two indices in $\{1,2,3\}$ describe in which of the six symmetric segments of $\mathcal K$ a point lies, see Figure \ref{fig:seg-labels}.

Another customary way of writing such sequences is to write every symbol \era{} as a semicolon ``;'' and abbreviate the \epoch{} symbols in between by just their number, such that the previous example becomes
\[
(13|\era_1 \era_3 \era_2 \epoch_1 \epoch_2 \epoch_1 \epoch_2 \era_1 \era_3 \epoch_2 \era_3 *_1\ldots)\to 
(13|0;0; 0; 4 ; 0 ; 1 ; \ldots).
\]
\paragraph{The Kasner-Parameter.}
There exists an explicit coordinate transformation, related to the continued fraction expansion, which realizes the conjugacy to the shift-space. This is done via the so-called \emph{Kasner-parameter} $u$, and is the most standard way of discussing the Kasner-map. 

In order to introduce the Kasner-parameter, it is useful to use the coordinates which make the permutation symmetry of the indices more apparent. This is done via
\[\Sigma_i = 2\langle \taubp_i, \bpt \Sigma\rangle \qquad
\Sigma_+ = -\frac 1 2 \Sigma_1 \qquad
\Sigma_- = \frac{1}{2\sqrt{3}}(\Sigma_3-\Sigma_2).\]
These variables are constrained by $\Sigma_1+\Sigma_2+\Sigma_3=0$ and have $\Sigma_1^2+\Sigma_2^2+\Sigma_3^2 =6(\Sigma_+^2+\Sigma_-^2)$.

The parametrization of $\mathcal K$ depends not only on a real parameter $u\in\RR$, but also on a permutation $(i,j,k)$ of $\{1,2,3\}$ and is given by $\bpt \Sigma=\bpt\Sigma(u, (ijk))$ such that
\[\begin{aligned}
\Sigma_i &= -1 + 3\frac{-u}{u^2+u+1} &
\Sigma_j &= -1 + 3\frac{u+1}{u^2+u+1} &
\Sigma_k &= -1 + 3\frac{u^2+u}{u^2+u+1}.
\end{aligned}\]
We can immediately observe that $\langle \taubp_1+\taubp_2+\taubp_3, \bpt \Sigma\rangle=0$, as it should be; hence, the above really defines a function $\psi:\RR\times \textsc{sym}_3\to \RR^2$, where $\textsc{sym}_3$ is the set of permutations of $\{1,2,3\}$. 
 Direct calculation shows that $\Sigma_1^2+\Sigma_2^2+\Sigma_3^2=6$ for all $u\in\RR$.
Hence, the $u$ coordinates do actually parametrize the Kasner circle. We have the noteworthy symmetry properties
\[\begin{array}{rlll}
\bpt \Sigma\left(u, (ijk)\right)
&=\bpt \Sigma\left(u^{-1}, (ikj)\right) 
&=\bpt \Sigma\left(-(u+1), (jik)\right)
&= \bpt\Sigma\left(\frac{-1}{u+1}, (jki)\right)\\
&= \bpt\Sigma\left(-\frac{u}{u+1}, (kji)\right)
&= \bpt\Sigma\left(-\frac{u+1}{u}, (kij)\right).&
\end{array}\]
We can use the symmetry to normalize $u$ to a value $u\in[1,\infty]$, thus giving a parametrization of the Kasner circle as in Figure \ref{fig:seg-labels}.

\paragraph{The Kasner-map in $u$-coordinates.}
We consider without loss of generality a heteroclinic orbit $\gamma\subseteq \mathcal M_{+00}$. In the $\bpt \Sigma$-projection, this heteroclinic orbit is a straight line through $-2\taubp_1$; hence, the quotient $\frac{\Sigma_-}{2-\Sigma_+}$ stays constant. It is given by 
\[
\frac{\Sigma_-}{2-\Sigma_+}(u)=\sqrt{3}\frac{u^2-1}{4(u^2+u+1)-(u^2+4u+1)} = \frac{\sqrt{3}}{3}\frac{u^2-1}{u^2+1}.
\]
If we write the $\alpha$-limit of $\gamma$ with respect to the $123$ permutation, then we must have $u\in [0,\infty]$ (because $N_1$ would not be unstable otherwise). Then the Kasner-map must be given by $(u,(123))\to (-u,123)$. Assume that $u\in[2,\infty]$; then we can renormalize $K(u)$, such that $K(u, 123)= (u-1, 213)$. If instead $u\in [1,2]$, then we can renormalize such that $K(u,123)=\left(\frac{1}{u-1},231\right)$. Applying symmetrical arguments for the other caps yields the following way of writing the Kasner map:
\[\begin{gathered}
\mathcal K = \textsc{sym}_3\times [1,\infty] \,\big/\, \left\{ (1, ijk)=(1,ikj), (\infty, ijk)=(\infty, jik) \right\}\\
K:\mathcal K\to \mathcal K\quad  (u, ijk)\to 
\left\{\begin{aligned}
(u-1, jik) &\qquad \text{if $u\in [2,\infty]$}\\
\left(\frac{1}{u-1}, jki\right)&\qquad \text{if $u\in [1,2]$}.
\end{aligned}\right.
\end{gathered}\]
Note that the Kasner-map is actually well-defined and continuous at $u=2$, due to the identification $(1,ijk)=(1,ikj)$. It is also well-defined at $u=1$, since the identifications $(1,ijk)=(1,ikj)$ and $(\infty, ijk)=(\infty, jik)$ are compatible. It is continuous at $u=1$, since we use the usual compactification at $u=\infty$, such that a neighborhood basis of $(\infty, ijk)$ is given by $\{([R,\infty], ijk)\cup ([R,\infty], jik)\}_{R \gg 0}$. Likewise, the Kasner-map is well-defined and continuous at the three fixed-points $u=\infty$, due to the identification $(\infty, ijk)=(\infty, jik)$ and the compactification at $u=\infty$.
\paragraph{Symbolic description in $u$-coordinates.}
The two local inverses of the Kasner-map in $u$-coordinates are given by
\[\begin{aligned}
\epoch: (u,ijk)&\to (u+1, jik)\\
\era: (u,ijk) &\to \left(1+\frac{1}{u}, kij\right)
\end{aligned}\]
Using the same construction as in the proof of Proposition \ref{prop:kasnermap-homeomorphism-class}, we see that the inverse coding map $\textsc{Cfe}^{-1}$ given by the continued fraction expansion
\[
\textsc{Cfe}^{-1}:(ij|a_0;a_1;a_2;\ldots)\to \left(1+a_0 + \frac{1}{1+a_1+\frac{1}{1+a_2+\ldots}}\,,\, ijk\right),
\]
and the Kasner-map is given by
\[
\textsc{Cfe}\circ K\circ \textsc{Cfe}^{-1}: (ij|a_0;a_1;a_2;\ldots)\to\left\{\begin{array}{ll}
(ji\,|\,a_0-1;a_1;a_2;\ldots) &\quad\text{if $a_0>0$}\\
(jk\,|\,a_1;a_2;\ldots) &\quad\text{if $a_0=0$}.
\end{array}\right.
\]

\subsection{Transformation of Volumes on Manifolds}\label{sect:general-vol-app}
This section contains some basic facts about the transformation of volumes under flows.

\paragraph{Volume Transformation.}
Let $M$ be an $n$-dimensional differentiable manifold.
In the language of differential forms, we can write the transformation law for a non-vanishing volume-form $\omega$ under a diffeomorphism $\Phi:M\to M$ as
\[\begin{aligned}
\lambda(\bpt x) = \frac{\Phi^*\omega(\bpt x)}{\omega(\bpt x)}\\
\mathrm{vol}_\omega(\Phi(U)) = \int_{\Phi(U)}\omega = \int_U \Phi^*\omega = \int_U \lambda \omega.
\end{aligned}\]
Let $X_1,\ldots, X_n$ be a frame, i.e.~a set of vectorfields which form a basis of $TM$. Then can write the density of $\omega$ as $\rho=\omega[X_1,\ldots, X_n]$ and see $\omega[Y_1,\ldots, Y_n]= \rho \det (a_{ij})$, where $Y_i = \sum_j a_{ij}X_j$. This allows us to write 
\[
\lambda(\bpt x) \rho(\bpt x) = \Phi^*\omega [X_1,\ldots X_n] = \omega[\Phi_*X_1,\ldots \Phi_*X_n] = \rho(\Phi(\bpt x))\det J,
\]
where $J$ is the jacobian, i.e.~the matrix of $\DD_x \Phi(\bpt x):T_{\bpt x}M\to T_{\Phi(\bpt x)}M$ with respect to the basis $X_1(\bpt x),\ldots X_n(\bpt x)$ and $X_1(\Phi(\bpt x)),\ldots X_n(\Phi(\bpt x))$, i.e. $J=J_{ij}$ with $\Phi_*X_i(\Phi(\bpt x)) = \DD_x \Phi(\bpt x)\cdot X_i(\bpt x) = \sum_{j}J_{ij} X_j(\bpt x)$.

\paragraph{Volume Transformation under flows.}
We study the behaviour of a volume-form $\omega$ under a flow $\phi:M\times \RR\to M$ corresponding to a vectorfield $f:M\to TM$; we are interested in $\lambda(\bpt x, t)=\frac{\phi^*(\bpt x, t)\omega}{\omega}$.

Given a set of coordinates $x_1,\ldots x_n$ and the vectorfield $f=f_1 \partial_1+\ldots f_n\partial_n$ and volume-form $\omega=\rho \dd x_1\land \ldots \land \dd x_n$ we can therefore write, using $\DD_x\phi$ is a shorthand for the Jacobian with respect to the basis $\partial_1,\ldots, \partial_n$:
\[\begin{aligned}
\lambda(\bpt x, t)&=\frac{\phi^*(\bpt x, t)\omega}{\omega} = \frac{\rho(\phi(\bpt x, t))}{\rho(\bpt x)}\det \DD_x \phi(\bpt x, t)\\
\frac{\dd}{\dd t} \lambda(\bpt x, t)& = \frac{(\DD_f \rho)(\phi(\bpt x, t))}{\rho(\bpt x)}\det \DD_x \phi(\bpt x, t) + \frac{\rho(\phi(\bpt x, t))}{\rho(\bpt x)}\DD_t \det \DD_x \phi(\bpt x, t)\\
&= \lambda(\bpt x, t)\DD_f \log \rho(\phi(\bpt x, t))+ \frac{\rho(\phi(\bpt x, t))}{\rho(\bpt x)}\mathrm{tr}\left[(\DD_t\DD_x\phi(\bpt x, t))\DD_x\phi(\bpt x, t)^{-1}\right] \det \DD_x \phi(\bpt x, t)\\
&= \lambda(\bpt x, t)\left(\DD_f \log \rho(\phi(\bpt x, t)) + \mathrm{tr}\DD_x f(\phi(\bpt x, t))\DD_x\phi(\bpt x, t)^{-1}\right)\\
&= \lambda(\bpt x, t)\left(\DD_f \log \rho(\phi(\bpt x, t)) + \mathrm{tr}\partial_x f(\phi(\bpt x, t))\right)\\
&= \lambda(\bpt x, t)\left(\DD_f \log \rho(\phi(\bpt x, t)) + \sum_i (\partial_i f_i)(\phi(\bpt x, t))\right),
\end{aligned}\]
where we used the general formula for differentiable families $A:\RR\to \RR^{n\times n}$ of invertible matrices:
\[\begin{aligned}
\DD_{t|t=0} \det A(t) &= \DD_{t|t=0}\det A(t)A(0)^{-1}A(0)\\ 
&=\det A(0) \DD_{t|t=0}\det A(t)A(0)^{-1} = \det A(0)\mathrm{tr}\,\DD_{t|t=0}A(t)A(0)^{-1}.
\end{aligned}\]

\paragraph{Restriction to Iso-Surfaces.}
Let $\Phi:M\to M$ be a diffeomorphism on an $n$-dimensional manifold $M$ with volume form $\omega$, which gets transported by $\lambda= \frac{\Phi^*\omega}{\omega}$.
Suppose that $G:M\to \RR$ is a preserved quantity under $\Phi$, i.e.~$G(\Phi(\bpt x))=G(\bpt x)$ and suppose that $1$ is a regular value of $G$. Now we are interested in the isosurface $M_1 = \{\bpt x\in M: G(\bpt x) =1\}$ and in a volume form $\omega_0$ on $M_1$, which gets transported by the same $\lambda$. This can be realized by choosing some vectorfield $X:M_1\to TM$ with $\DD_X G=\mathrm{const}=1$ and setting $\omega_1=\iota_X \omega$. Since $G$ is preserved, we have $\Phi_*X-X \in T M_1$ and hence for a basis $X_1,\ldots,X_{n-1}$ of $TM_1$:
\[\begin{aligned}
\lambda \iota_X \omega[X_1,\ldots X_n] &= \lambda\omega[X,X_1,\ldots,X_n]=\omega[\Phi_*X,\Phi_*X_1,\ldots,\Phi_*X_{n-1}]\\
&=\omega[X,\Phi_*X_1,\ldots,\Phi_*X_{n-1}]+\omega[\Phi_*X-X,\Phi_*X_1,\ldots,\Phi_*X_{n-1}]\\
&=\Phi^*\omega[X_1,\ldots,X_{n-1}].
\end{aligned}\]
It is clear that the induced volume $\omega_1$ does not depend on the choice of $X$, up to possibly a sign.
\paragraph{Contraction with invariant vectorfields.}
Let $\Phi:M\to M$ be a diffeomorphism on an $n$-dimensional manifold $M$ with volume form $\omega$, which gets transported by $\lambda= \frac{\Phi^*\omega}{\omega}$.
Suppose that the vectorfield $Y:M\to TM$ is preserved quantity under $\Phi$, i.e.~$\Phi_*X=X$. Then the contractoed volume-form $\omega_1=\iota_Y\omega$ has $\Phi^*\omega_1=\lambda\omega_1$:
\[\begin{aligned}
\Phi^*\omega_1[X_2,\ldots,X_n]&=\omega[Y, \Phi_*X_2,\ldots,\Phi_*X_n]=\omega[\Phi_*Y,\Phi_*X_2,\ldots,\Phi_*X_n]\\
&=\lambda \omega[Y,X_2,\ldots,X_n]=\lambda\omega_1[X_2,\ldots,X_n].
\end{aligned}\]
Suppose that $\phi:M\times \RR\to M$ is a flow, $U\subseteq M$ is open and $T:U\to \RR$ is smooth. Consider $\Phi:\bpt x\to \phi(\bpt x, T(\bpt x))$. Then $\Phi_* X = \phi(\cdot, T(\cdot))_* X + Y\DD_X T$, and hence
\[\begin{aligned}
\Phi^*\omega_1[X_2,\ldots,X_n]=\omega[Y, \Phi_*X_2+ Y\DD_{X_2} T,\ldots,\Phi_*X_n+Y\DD_{X_n} T]=\lambda \omega_1[X_2,\ldots,X_n].
\end{aligned}\]
\subsection{Derivation of the Wainwright-Hsu equations}\label{sect:append:derive-eq}
The goal of this section is to connect the Einstein field equations of general relativity to the Wainwright-Hsu equations \eqref{eq:ode} discussed in this work.

\subsubsection{Spatially Homogeneous Spacetimes}\label{sect:homogeneous}
We are interested in spatially homogeneous spacetimes. We assume that $(M^4,g)$ is a Lorentz-manifold, and we have a symmetry adapted co-frame: $\{\omega_1,\omega_2,\omega_3, \dd t\}$, corresponding to a frame of vectorfields $\{e_0,e_1,e_2,e_3\}$, such that $e_1,e_2,e_3$ are Killing, i.e.~the metric depends only on $t$. We thus assume that the metric has the form
\[
g = g_{00}(t)\dd t\otimes \dd t + g_{11}(t)\omega_1\otimes \omega_1 + g_{22}(t)\omega_2\otimes \omega_2 + g_{33}(t)\omega_3\otimes \omega_3,
\]
where $g_{00}<0$ and the other three $g_{ii}>0$. The spatial homogeneity is described by the commutators of the three Killing fields $e_1,e_2,e_3$; we assume that it is given (for positive permutations $(i,j,k)$ of $(1,2,3)$) by:
\[
[e_i, e_j] = \gamma_{ij}^k e_k=\hat n_k e_k
\qquad
\dd \omega_i = -\hat n_i \omega_j\land \omega_k,
\]
where $\hat n_i \in \{-1, 0, +1\}$ describe the Bianchi type of the surfaces $\{t=\textrm{const}\}$ of spatial homogeneity.

\paragraph{General equations for the Christoffel symbols.}
The general equations for Christoffel symbols are given by:
\[\begin{aligned}
\nabla_{e_i}e_j &= \sum_k \Gamma_{ij}^k e_k\\
[e_i, e_j] &= \nabla_{e_i}e_j - \nabla_{e_j}e_i\quad\Rightarrow\quad
\Gamma_{ij}^k -\Gamma_{ji}^k = \gamma_{ij}^k\\
\DD_{e_i} g(e_j, e_k) &= \partial_{e_i}g_{jk} 
=\sum_{\ell} g_{\ell k}\Gamma_{ij}^{\ell} + g_{\ell j}\Gamma_{ik}^\ell\\
\partial_{e_i}g_{jk} + \partial_{e_j}g_{ik} - \partial_{e_k}g_{ij} 
&=\sum_{\ell} g_{\ell k} \Gamma_{ij}^\ell + g_{\ell j}\Gamma_{ik}^\ell 
+g_{\ell k} \Gamma_{ji}^\ell + g_{\ell i}\Gamma_{jk}^\ell 
- g_{\ell i} \Gamma_{kj}^\ell - g_{\ell j}\Gamma_{ki}^\ell\\
&=\sum_{\ell} g_{\ell i}\gamma^{\ell}_{jk} + g_{\ell j}\gamma_{ik}^\ell + g_{\ell k}\gamma_{ji}^\ell + 2 g_{\ell k}\Gamma_{ij}^\ell
\end{aligned}\]
We can solve this for the Christoffel symbols by multiplying with the inverse metric:
\newcommand\myrebind[6]{
\gdef\iA{#1}\gdef\iB{#2}\gdef\iC{#3}\gdef\iD{#4}\gdef\iE{#5}\gdef\iF{#6}
}
\myrebind{j}{i}{k}{\ell}{n}{*}
\[\begin{aligned}
\Gamma_{ij}^k&= 
\frac 1 2 g^{k \ell} \left(
\partial_{e_i}g_{j\ell} + \partial_{e_j}g_{i\ell} - \partial_{e_\ell}g_{ij} -\sum_{n} g_{n i}\gamma^{n}_{j\ell} - g_{n j}\gamma_{i\ell}^n - g_{n \ell}\gamma_{ji}^n\right)\\
&=\frac 1 2 g^{k k} \left(
\partial_{e_i}g_{jk} + \partial_{e_j}g_{ik} - \partial_{e_k}g_{ij} - g_{i i}\gamma^{i}_{jk} - g_{j j}\gamma_{ik}^j - g_{k k}\gamma_{ji}^k\right),
\end{aligned}\]
where we used the fact that the metric is diagonal in the last equation.
\paragraph{Christoffel symbols for spatially homogeneous space times.}
If we insert the indices into this equation, we obtain up to index permutations the following non-vanishing Christoffel symbols:
\[\begin{aligned}
\nabla_{e_0}e_0 &= \Gamma_{00}^0 e_0 & \Gamma_{00}^0 &= \phantom{-}\frac 1 2 g^{00}\partial_{e_0} g_{00}\\
\nabla_{e_1}e_2 &= \Gamma_{12}^3 e_3 & \Gamma_{12}^3 &= -\frac 1 2 g^{33}\left(\;g_{11}\gamma^{1}_{23}+ g_{22}\gamma_{13}^2 + g_{33}\gamma_{21}^3\right)\\
& & &=\phantom{-}\frac 1 2 g^{33}\left(g_{11}\hat n_1 \;- g_{22}\hat n_2 \;- g_{33}\hat n_3\right)\\
\nabla_{e_0}e_1 &= \Gamma_{01}^1 e_1 = \Gamma_{10}^1 e_1 & \Gamma_{10}^1 &=\phantom{-}\frac 1 2 g^{11}\partial_{e_0}g_{11}\\
\nabla_{e_1}e_1 &= \Gamma_{11}^0 e_0 & \Gamma_{11}^0 &= -\frac 1 2 g^{00}\partial_{e_0}g_{11}
\end{aligned}\]
\paragraph{Extrinsic Curvature of surfaces of homogeneity.} %
The Weingarten-map $K_{i}^{\ j}$ and second fundamental form $K_{ij}$ of the surfaces of spatial homogeneity given, up to permutation, by:
\[\begin{aligned} 
K(e_1)&=\nabla_{e_1}\frac{1}{\sqrt{-g_{00}}}e_0 =  \sqrt{-g^{00}}\Gamma_{10}^1 e_{1}
\qquad K_i^{ j}= \delta_{i}^{j}\sqrt{-g^{00}}\Gamma_{i0}^i\\
K_{ij}&= g(e_i, \nabla_{e_j}\sqrt{-g^{00}}e_0)= \sqrt{-g_{00}}\Gamma_{ji}^0 =\sum_{k} g_{k i}\sqrt{-g^{00}}\Gamma_{j0}^k,\quad{\text{i.e.,}}\\
\Gamma_{10}^1 &= \sqrt{-g_{00}} K_1^1, \qquad\qquad \Gamma_{11}^0 = \sqrt{-g^{00}}g_{11}K_{1}^1
\end{aligned}\]

The extrinsic curvature corresponds to the normalized time-derivative of the spatial coefficients of the metric:
\[\begin{aligned}
\sqrt{-g^{00}}\nabla_{e_0} \sqrt{g_{ii}} &= \sqrt{-g^{00}}\frac 1 2 \sqrt{g_{ii}}g^{jj}\nabla_{e_0}g_{kk}=\sqrt{-g^{00}}\Gamma_{i0}^i \sqrt{g_{ii}} = K_{i}^i.
\end{aligned}\]
\paragraph{Riemannian Curvature.}
The Riemannian curvature tensor is given by the equation
\[\begin{aligned}
 \sum_{\ell}R_{ijk}^{\ \ \ \ell} e_{\ell} &= \left(\nabla_{e_i}\nabla_{e_j}-\nabla_{e_j}\nabla_{e_i} -\nabla_{[e_i,e_j]}\right) e_k\\
 &=\sum_{\ell, n}\left(\Gamma_{jk}^n\Gamma_{in}^\ell - \Gamma_{ik}^n\Gamma_{jn}^\ell + \nabla_{e_i}\Gamma_{jk}^\ell - \nabla_{e_j}\Gamma_{in}^\ell - \gamma_{ij}^n\Gamma_{nk}^\ell\right)e_{\ell}.
\end{aligned}\]
If we lower the last index, we have
\(
R_{ijk\ell}=g((\nabla_{e_i}\nabla_{e_j}-\nabla_{e_j}\nabla_{e_i}-\nabla_{[e_i,e_j]}) e_k, e_\ell).
\) 
Together with $(\nabla_{e_i}\nabla_{e_j}-\nabla_{e_j}\nabla_{e_i}-\nabla_{[e_i,e_j]}) g(e_k,e_\ell)=0$, this makes apparent the anti-symmetries
\(
R_{ijk\ell}=-R_{jik\ell}=R_{ji\ell k}=-R_{ij\ell k}.
\)

Inserting the indices gives us the following potentially non-vanishing terms of the Riemann tensor, up to permutation and anti-symmetry, which are relevant for the Ricci curvature:
\[\begin{aligned}
R_{121}^{\ \ \ 2} &=  \Gamma_{21}^3\Gamma_{13}^2 - \Gamma_{11}^0\Gamma_{20}^2  - \gamma_{12}^3\Gamma_{31}^2\\
R_{010}^{\ \ \ 1} &=   \Gamma_{10}^1\Gamma_{01}^1 - \Gamma_{00}^0\Gamma_{10}^1 + \nabla_{e_0}\Gamma_{10}^1.
\end{aligned}\]
Raising and using $\widetilde R$ for the intrinsic curvature of the surfaces of homogeneity gives us
\[\begin{aligned}
R_{12}^{\ \ 1 2} &=  g^{11}\Gamma_{21}^3\Gamma_{13}^2- \gamma_{12}^3\Gamma_{31}^2g^{11} - g^{11}\Gamma_{11}^0\Gamma_{20}^2  \\
&= \widetilde{R}_{12}^{\ \ 1 2} -K_{1}^1 K_2^2\\
R_{01}^{\ \ 0 1} &=   g^{00}\Gamma_{10}^1\Gamma_{01}^1 - g^{00}\Gamma_{00}^0\Gamma_{10}^1 + g^{00}\nabla_{e_0}\Gamma_{10}^1\\
&= -K_1^1 K_1^1 + \sqrt{-g^{00}}\Gamma_{00}^0 K_1^1 +g^{00} \nabla_{e_0} \sqrt{-g_{00}}K_1^1\\
&=-K_1^1 K_1^1 -\sqrt{-g^{00}}\nabla_{e_0}K_1^1.
\end{aligned}\]
Setting
\[\begin{aligned}
n_i&=\hat n_i \sqrt{g^{11}g^{22}g^{33}}g_{ii} = \hat n_i \widetilde n_i,\quad\text{i.e.}\quad
\widetilde n_i = \sqrt{g^{11}g^{22}g^{33}}g_{ii},
\end{aligned}\]
the spatial curvature is given by
\[\begin{aligned}
\widetilde R_{12}^{\ \ 12} &=g^{11}\Gamma_{21}^3\Gamma_{13}^2- g^{11}\gamma_{12}^3\Gamma_{31}^2
= g^{11}\left(\Gamma_{21}^3\Gamma_{13}^2 - \Gamma_{12}^3\Gamma_{31}^2 + \Gamma_{21}^3\Gamma_{31}^2\right)\\
&=\frac 1 4\left(-n_1^2 -n_2^2 +3n_3^2 +2n_1n_2 -2n_2n_3 -2n_3n_1\right)\\
\widetilde R_{1}^{\ 1}&= \widetilde R_{12}^{\ \ 12}+ \widetilde R_{13}^{\ \ 13}
=\frac 1 2\left(-n_1^2 + n_2^2+n_3^2 -2 n_2n_3 \right)\\
\widetilde R &= \widetilde R_{1}^{\ 1} + \widetilde R_{2}^{\ 2} + \widetilde R_{3}^{\ 3}
= \frac 1 2(n_1^2+n_2^2+n_3^2 -2(n_1n_2+n_2n_3+n_3n_1)).
\end{aligned}\]
\subsubsection{The Einstein Field equation}\label{sect:gr-efe}
The Einstein equations in vacuum state that the space-time is Ricci-flat, i.e.~
\[\begin{aligned}
R_{0}^{\ 0} &= R_{01}^{\ \ 01}+R_{02}^{\ \ 02}+R_{02}^{\ \ 02} =0\\
R_{1}^{\ 1} &= R_{01}^{\ \ 01}+R_{12}^{\ \ 12}+R_{13}^{\ \ 13} =0\\
R_{2}^{\ 2} &= R_{02}^{\ \ 02}+R_{21}^{\ \ 21}+R_{23}^{\ \ 23} =0\\
R_{3}^{\ 3} &= R_{03}^{\ \ 03}+R_{31}^{\ \ 31}+R_{32}^{\ \ 32} =0.
\end{aligned}\]
Adding the last three equations and subtracting the first gives us an equation which does not contain time-derivatives of $K$ and hence is a constraint equation, called the ``Gauss constraint''. It is given by
\[\begin{aligned}
0 &= R_{12}^{\ \ 12} +R_{23}^{\ \ 23} + R_{31}^{\ \ 31} = \frac 1 2 \widetilde R - K_{1}^1K_2^2 - K_2^2K_3^3 - K_3^3K_1^1.
\end{aligned}\]
The evolution equations are given by
\[\begin{aligned}
\sqrt{-g^{00}}\nabla_{e_0}\widetilde n_i &= \sqrt{-g^{00}} \nabla_{e_0} \sqrt{g_{ii}g^{jj}g^{kk}}
= \sqrt{-g^{00}} (\Gamma_{i0}^i -\Gamma_{j0}^j -\Gamma_{k0}^k)\widetilde n_i= (K_i^i -K_j^j - K_k^k)\widetilde n_i,
\end{aligned}\]
and
\[\begin{aligned}
R_{0i}^{\ \ 0i} &= -R_{ij}^{\ \ ij}-R_{ik}^{\ \ ik}=R_{jk}^{\ \ jk}\\
\sqrt{-g^{00}}\nabla_{e_0} K_i^i &= -K_i^iK_i^i + K_{j}^j K_k^k  - \widetilde R_{jk}^{\ \ jk}.
\end{aligned}\]

\paragraph{Trace-free formulation.}
It is useful to split the variables into their trace and trace-free parts:
\[\begin{aligned}
H &= \frac 1 3(K_1^1+K_2^2+K_3^3),\qquad\text{i.e.~$H$ is the mean curvature of $\{t=\mathrm{const}\}$}\\ 
\sigma_i &= K_i^i -H\\
\frac 1 6 \widetilde R &= \frac 1 3\left(\widetilde R_{12}^{\ \ 12}+\widetilde R_{23}^{\ \ 23}+\widetilde R_{31}^{\ \ 31}\right)
= \frac 1 {12}\left(n_1^2+n_2^2+n_3^2 -2(n_1n_2+n_2n_3+n_3n_1)\right)\\
s_i &= \widetilde R_{jk}^{\ \ jk}-\frac 1 6 \widetilde R
= \frac 1 {3}\left(2n_i^2 - n_j^2 -  n_k^2  - n_in_j +2 n_jn_k - n_kn_i \right),
\end{aligned}\]
where we note that $(\sigma_1+\sigma_2+\sigma_3)^2 = 0=\sigma_1^2+\sigma_2^2+\sigma_3^2 +2(\sigma_1\sigma_2+\sigma_2\sigma_3+\sigma_3\sigma_1)$.
We then obtain
\[\begin{aligned}
0 & = \frac 1 2 \widetilde R - (\sigma_1+H)(\sigma_2+H) - (\sigma_2+H)(\sigma_3+H) - (\sigma_1+H)(\sigma_3+H)\\
&= \frac 1 2 \widetilde R - 3H^2 - (\sigma_1\sigma_2+\sigma_2\sigma_3+\sigma_3\sigma_1)\\
&= \frac 1 2 \widetilde R - 3H^2 +\frac 1 2 (\sigma_1^2+\sigma_2^2+\sigma_3^2)\\
\sqrt{-g^{00}}\nabla_{e_0}\widetilde n_i &= (2\sigma_i-H)\widetilde n_i\\
\sqrt{-g^{00}}\nabla_{e_0} H &= \frac 1 3 \Big[\;-(\sigma_1+H)^2 -(\sigma_2+H)^2 -(\sigma_3+H)^2\\
&\qquad+(\sigma_1+H)(\sigma_2+H) +(\sigma_2+H)(\sigma_3+H) + (\sigma_3+H)(\sigma_1+H) \\
&\qquad- \widetilde R_{23}^{\ \ 23} -  \widetilde R_{13}^{\ \ 13} -  \widetilde R_{12}^{\ \ 12}\Big]\\
&=-\frac 1 2(\sigma_1^2+\sigma_2^2+\sigma_3^2) -\frac 1 6 \widetilde R
= -\frac 1 3(\sigma_1^2+\sigma_2^2+\sigma_3^2) - H^2 \\
\sqrt{-g^{00}}\nabla_{e_0} \sigma_1 &= -(\sigma_1+H)^2 + (\sigma_2+H)(\sigma_3+H) - \widetilde R_{23}^{\ \ 23}  - \sqrt{-g^{00}}\nabla_{e_0}H\\
&=\sigma_2\sigma_3-\sigma_1^2 +(\sigma_2+\sigma_3-2\sigma_1)H - \widetilde R_{23}^{\ \ 23}  - \sqrt{-g^{00}}\nabla_{e_0}H\\
&=(\sigma_1+\sigma_3)(\sigma_1+\sigma_2)-\sigma_1^2 -3\sigma_1H   +\frac 1 2 (\sigma_1^2+\sigma_2^2+\sigma_3^2) +  \frac 1 6 \widetilde R - \widetilde R_{23}^{\ \ 23}\\
&= -3\sigma_1H -s_i.
\end{aligned}\]
\paragraph{Hubble Normalization.}
We can further simplify by Hubble-normalizing to $\overline{\Sigma}_i = \frac{\sigma_i} H$ and $\overline{N}_i = \frac{n_i} H$, and introducing a shorthand for ${\Sigma}^2$ and ${N}^2$
\[\begin{aligned}
\overline{N}_i &= \widetilde{\overline{N}}_i \signN_i\\
{\Sigma}^2 &= \frac 1 6\left({\Sigma}_1^2+\Sigma_2^2+\Sigma_3^2\right)\\
 N^2&= \frac 1 {12} \left[\overline N_1^2+\overline N_2^2+\overline N_3^2-2(\overline N_1\overline N_2+\overline N_2\overline N_3+\overline N_3\overline N_1)\right]= \frac 1 6 \widetilde R H^{-2}\\
 S_i &= -\frac 1 3 \left[-2 \overline N_1^2 + \overline N_2^2 + \overline N_3^2 +\overline N_1\overline N_2 -2\overline N_2\overline N_3 +\overline N_3\overline N_1 \right]= s_i H^{-2}\end{aligned}\]
which yields equations
\[\begin{aligned}
1 &=  \Sigma^2+ N^2\\
\sqrt{-g^{00}}\nabla_{e_0} H &=-\left(2\Sigma^2 +1 \right)H^2\\
\sqrt{-g^{00}}\nabla_{e_0}\widetilde{\overline N}_i &= (2 \Sigma_i - 1 + 2 \Sigma^2 +1)\widetilde{\overline N}_iH=2(\Sigma^2+ \Sigma_i)\widetilde{\overline N}_i H\\
\sqrt{-g^{00}}\nabla_{e_0} \overline \Sigma_1  &=
-3\overline \Sigma_1 H - s_i H^{-1} + (2\overline \Sigma^2+1)H = 2( \Sigma^2-1)\overline\Sigma_1H - S_iH.
\end{aligned}\]
\paragraph{The Wainwright-Hsu equations as used in this work.}
Assuming $H<0$ for an initial surface (which can be obtained if $H\neq 0$ by choosing the direction of the unit normal $\sqrt{-g^{00}}e_0$ and reverting the direction of time), we can set $\sqrt{-g^{00}} = -\frac 1 2 H$. Setting $\widetilde N_i = \sqrt{12}\,\widetilde{\overline N}_i$, we obtain the variant of the Wainwright-Hsu equations used in this work, \eqref{eq:ode-from-gr}, corresponding to the metric \eqref{eq:metric-in-wsh}.

\bibliography{./literaturlist}
\bibliographystyle{alpha}%
%
\end{document}